\newtheorem{theorem}{Theorem}[section]
\newtheorem{lemma}[theorem]{Lemma}
\newtheorem{corollary}[theorem]{Corollary}
\newtheorem{proposition}[theorem]{Proposition}
\newtheorem{claim}{Claim}
\newtheorem{definition}{Definition}
\def\boxit#1{\vbox{\hrule\hbox{\vrule\kern4pt
  \vbox{\kern1pt#1\kern1pt}
\kern2pt\vrule}\hrule}}
\newcommand{\lig}{locally interval graph}
\newcommand{\msm}{maximal strong module}
\newcommand{\lp}[1]{\ensuremath{{\mathtt{lp}(#1)}}}
\newcommand{\rp}[1]{\ensuremath{{\mathtt{rp}(#1)}}}
\newcommand{\lpp}[2]{\ensuremath{{\mathtt{lp{#1}}(#2)}}}
\newcommand{\rpp}[2]{\ensuremath{{\mathtt{rp{#1}}(#2)}}}
\newcommand{\lint}[1]{\ensuremath{{\mathtt{left}(#1)}}}
\newcommand{\rint}[1]{\ensuremath{{\mathtt{right}(#1)}}}
\newcommand{\head}[1]{\ensuremath{{\mathtt{last}(#1)}}}
\newcommand{\tail}[1]{\ensuremath{{\mathtt{first}(#1)}}}
\newcommand{\stpath}[2]{$#1$-$#2$ path}
\newcommand{\stsep}[2]{$#1$-$#2$ separator}
\newcommand{\cG}{\ensuremath{{\cal G}}}
\newcommand{\cF}{\ensuremath{{\cal F}}}
\newcommand{\oo}{\ensuremath{T}}
\newcommand{\ec}{\ensuremath{E_{\text{c}}}}
\newcommand{\ecc}{\ensuremath{E_{\text{cc}}}}
\newcommand{\oc}{\ensuremath{T_{\text{c}}}}
\newcommand{\occ}{\ensuremath{T_{\text{cc}}}}
\newcommand{\lc}{\ensuremath{L_{\text{c}}}}
\newcommand{\lcc}{\ensuremath{L_{\text{cc}}}}
\newcommand{\og}[1]{\ensuremath{\phi(#1)}}
\newcommand{\hv}[1]{\ensuremath{H\langle #1\rangle}}
\newcommand{\misp}{minimum interval supergraph}
\newcommand{\misb}{maximum spanning interval subgraph}
\newcommand{\miib}{maximum induced interval subgraph}
\newcommand{\comment}[1]{\hfill $\setminus\!\!\setminus$ {\em #1}}
\newcommand{\myref}[1]{\hyperref[can:#1]{\tt C{#1}}}
\title{Linear Recognition of Almost Interval Graphs\thanks{Dedicated
 to Jianer Chen on the occasion of his 60th birthday.}  }
\author{{\sc Yixin Cao}\thanks{Department of Computing, Hong Kong
    Polytechnic University, Hong Kong, China,
    \href{mailto:yixin.cao@polyu.edu.hk}{\tt yixin.cao@polyu.edu.hk}.
    Work done in part while the author was at Institute for Computer
    Science and Control, Hungarian Academy of Sciences, when he was
    supported by the European Research Council (ERC) under the grant
    280152 and the Hungarian Scientific Research Fund (OTKA) under the
    grant NK105645.  }}
\date{}
\begin{document}
\maketitle

\begin{abstract}
  Let $\mbox{interval} + k v$, $\mbox{interval} + k e$, and
  $\mbox{interval} - k e$ denote the classes of graphs that can be
  obtained from some interval graph by adding $k$ vertices, adding $k$
  edges, and deleting $k$ edges, respectively.  When $k$ is small,
  these graph classes are called almost interval graphs.  They are
  well motivated from computational biology, where the data ought to
  be represented by an interval graph while we can only expect an
  almost interval graph for the best.  For any fixed $k$, we give
  linear-time algorithms for recognizing all these classes, and in the
  case of membership, our algorithms provide also a specific interval
  graph as evidence.  When $k$ is part of the input, these problems
  are also known as graph modification problems, all NP-complete.  Our
  results imply that they are fixed-parameter tractable parameterized
  by $k$, thereby resolving the long-standing open problem on the
  parameterized complexity of recognizing $\mbox{interval}+ k e$,
  first asked by Bodlaender et al.\ [Bioinformatics, 11:49--57, 1995].
  Moreover, our algorithms for recognizing $\mbox{interval}+ k v$ and
  $\mbox{interval}- k e$ run in times $O(6^k \cdot (n + m))$ and
  $O(8^k \cdot (n + m))$, (where $n$ and $m$ stand for the numbers of
  vertices and edges respectively in the input graph,) significantly
  improving the $O(k^{2k}\cdot n^3m)$-time algorithm of Heggernes et
  al.~[STOC 2007] and the $O(10^k \cdot n^9)$-time algorithm of Cao
  and Marx [SODA 2014] respectively.
\end{abstract}

\thispagestyle{empty}
\setcounter{page}{0}
\newpage
 \section{Introduction}\label{sec:introduction}
A graph is an \emph{interval graph} if its vertices can be assigned to
intervals on the real line such that there is an edge between two
vertices if and only if their corresponding intervals intersect.  This
set of intervals is called an \emph{interval model} for the graph.
The study of interval graphs has been closely associated with
(computational) biology
\cite{benzer-59-topology-genetic-structure,waterman-86-Interval-graphs-and-DNA}.
For example, in {\em physical mapping} of DNA, which asks for
reconstructing the relative positions of clones along the target DNA
based on their pairwise overlap information
\cite{lander-88-genomic-mapping,arratia-91-genomic-mapping}, the input
data can be easily represented by a graph, where each clone is a
vertex, and two clones are adjacent if and only if they overlap
\cite{waterman-86-Interval-graphs-and-DNA,zhang-94-physical-mapping,karp-93-mapping-genomes},
hence an interval graph.  A wealth of literature has been devoted to
algorithms on interval graphs, which include a series of linear-time
recognition algorithms
\cite{booth-76-test-c1p,mohring-85-interval-algorithmic-aspects,korte-89-recognizing-interval-graphs,hsu-92-simple-test-interval-graphs,hsu-99-recognizing-interval-graphs,
  habib-00-LBFS-and-partition-refinement,
  corneil-09-lbfs-strucuture-and-interval-recognition}.  Ironically,
however, these recognition algorithms are never used as they are
intended to be.  Biologists never need to roll up their sleeves and
feed their data into any recognition algorithm before claiming the
answer is ``NO'' with full confidence, i.e., their data would not give
an interval graph though they ought to.  The reason is that biological
data, obtained by mainly experimental methods, are destined to be
flawed.

More often than not, biologists are also confident that their data,
though not perfect, are of reasonably good quality: there are only few
errors hidden in the data \cite{lander-88-genomic-mapping}.  This
leads us naturally to consider graphs that are not interval graphs,
but close to one in some sense.  We say that a graph is an
\emph{almost interval graph} if it can be obtained from an interval
graph by a small amount of modifications; it may or may not be an
interval graph itself.  Different applications are afflicted with
different types of errors, e.g., there might be outliers,
false-positive overlaps, and/or false-negative overlaps.  We can
accordingly define different measures for \emph{closeness}.  For any
given nonnegative integer $k$, we use $\mbox{interval}+ k v$,
$\mbox{interval}+ k e$, and $\mbox{interval} - k e$ to denote the
classes of graphs that can be obtained from some interval graph by
adding at most $k$ vertices, adding at most $k$ edges, and deleting at
most $k$ edges, respectively.\footnote{Here we use ``at most'' instead
  of ``precisely'' for both practical and theoretical reasons.
  Practically, this formulation is more natural for aforementioned
  applications, where less modifications are preferred.
  Theoretically, it allows all classes fully contain interval graphs
  itself; in particular, we allow interval$+ ke$ and interval$- ke$ to
  contain graphs with no edge and cliques, respectively.  As a matter
  of fact, one can show that except the trivial cases (i.e., the input
  graph has less than $k$ vertices, $k$ edges, or $k$ missing edges),
  if a graph can be made from an interval graph $G'$ by $k'$
  operations, where $k'<k$, then we can also obtain it from another
  interval graph $G$ by exactly $k$ operations of the same type.}  We
remark that this definition can be easily generalized to any
\emph{hereditary} graph class (i.e., closed under taking induced
subgraphs).  Interval graphs and all other graph classes to be
mentioned in this paper are hereditary
\cite{golumbic-2004-perfect-graphs,brandstadt-99-graph-classes,spinrad-03-efficient-graph-representations}.

The first task is of course to efficiently decide whether a given
graph is an almost interval graph or not, and more importantly,
identify an object interval graph if one exists.  Computationally,
finding an object interval graph is equivalent to pinpointing the few
but crucial errors in the data.  For any fixed $k$, this can be
trivially done in polynomial time: given a graph $G$ on $n$ vertices,
we can in $n^{O(k)}$ time try every subset of $k$ vertices, edges, or
missing edges of $G$.  Such an algorithm is nevertheless inefficient
even for very small $k$, as $n$ is usually large.  The main results of
this paper are linear-time recognition algorithms for all three
classes of almost interval graphs.
\begin{theorem}\label{thm:main}
  Let $k$ be any fixed nonnegative integer.  Given a graph $G$ on $n$
  vertices and $m$ edges, the membership of $G$ in each of
  $\mbox{interval} + k v$, $\mbox{interval} + k e$, and
  $\mbox{interval} - k e$ can be decided in $O(n+m)$ time.  Moreover,
  in case of affirmative, an object interval graph can be produced in
  the same time.
\end{theorem}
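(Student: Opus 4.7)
The plan is to combine bounded search-tree branching on small obstructions with a structural analysis of the instances that survive it.

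Begin by running a standard linear-time interval graph recognition algorithm on $G$; if $G$ is an interval graph, output it. Otherwise, by the Lekkerkerker--Boland classification the minimal forbidden induced subgraphs of interval graphs fall into two families: a finite family of \emph{small} configurations of bounded size (the long claw, whipping top, $\overline{C_6}$, and a few others, together with $C_4$ and $C_5$), and the long holes $C_\ell$ with $\ell \geq 6$. These two families are attacked in succession. The atomic modification is fixed by the variant: delete a vertex for $\mbox{interval} + k v$, delete an edge for $\mbox{interval} + k e$, and add a missing edge for $\mbox{interval} - k e$.

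Phase 1 (small obstructions). Repeatedly locate a small minimal forbidden induced subgraph $F$ in the current graph and branch over the constantly many atomic modifications inside $F$ that destroy it, decrementing $k$ in each branch. Since $|V(F)|$ is bounded, the branching factor is a constant $c$, giving a search tree of size $c^k$; for fixed $k$ this is a constant number of leaves. At each node we invoke a linear-time subroutine that either returns a small forbidden induced subgraph or certifies none remains, so the total running time over the tree is $c^k \cdot O(n+m) = O(n+m)$.

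Phase 2 (residuals). After Phase 1 the current graph contains no small forbidden induced subgraph, so the remaining obstructions are all long holes. The essential new ingredient, advertised by the paper as ``combinatorial and algorithmic characterizations on graphs free of small non-interval graphs'' and hinted at by the keyword \emph{normal Helly circular-arc graphs}, is a structure theorem saying that in such a graph the non-interval part is confined to a small number of modules that individually behave like normal Helly circular-arc graphs. Via modular decomposition these modules are isolated in linear time, and inside each one the task of hitting every long hole reduces to a simple choice on a circular arrangement, computable in linear time without enumerating any particular hole. The proper interval variants run along the same lines, with Wegner's simpler forbidden-subgraph list in place of Lekkerkerker--Boland's.

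The main obstacle is Phase 2: proving a structure theorem for small-obstruction-free graphs strong enough to localise all long holes inside a bounded number of circular-arc-like modules, and then showing that the optimal local modification within each module can be computed in linear time and merged across modules into a global optimum of size at most $k$. Phase 1 is by comparison routine once the linear-time detection subroutine is in place; its only subtleties are that, for $\mbox{interval} - k e$, edge additions can create new small forbidden subgraphs so detection must be re-run after every branch step, and that one must guarantee the modifications accumulated along a branch do not violate the structural invariants on which Phase 2 depends.
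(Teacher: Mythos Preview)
Your high-level two-phase outline matches the paper in spirit, but there are two concrete gaps.

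First, the linear-time subroutine you posit in Phase~1, ``return a small forbidden induced subgraph or certify none remains,'' is precisely the step the paper argues cannot exist under standard hardness assumptions (Section~1.5): detecting a subgraph in $\cF_{LI}$ in linear time would yield linear-time claw detection and $O(n^2)$ triangle detection. The paper's workaround (Theorem~1.8) is subtly different: in linear time it either finds a subgraph in $\cF_{LI}$ \emph{or} builds an olive-ring/caterpillar clique decomposition of the prime quotient. The second alternative does \emph{not} certify that the graph is free of small forbidden subgraphs; it only gives enough structure to proceed. This relaxation is what makes Phase~1 linear.

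Second, and more seriously, your Phase~2 assumes that once small obstructions are gone ``the remaining obstructions are all long holes.'' That is false: the large caws $\dag_d$ and $\ddag_d$ (Fig.~1(f,g)) are infinite families of chordal minimal forbidden subgraphs, and they survive Phase~1. Handling them is the bulk of the paper's technical work. The paper introduces the \emph{frame} $(s:c_1,c_2:l,b_1;b_d,r)$ of a large caw, shows how to find a minimal frame in linear time using the olive-ring decomposition, and then argues that an optimum solution either modifies one of the $\le 7$ frame vertices/edges or separates $b_1$ from $b_d$ through the inner vertices $IN(F)$ by a minimum (vertex or edge) cut; this yields an $8$-way branch for vertex deletion, an $O(k)$-way branch for edge deletion, and a $6$-way branch for completion. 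Only \emph{after} all caws are destroyed does the graph admit a clique hole decomposition, at which point the long holes are dispatched in one shot (Lemma~2.1). Your proposal conflates these two stages and omits the frame/separator analysis entirely, which is where the actual difficulty lies.
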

Thm.~\ref{thm:main} extends the line of linear-time algorithms for
recognizing interval graphs.  In the running times of all the three
algorithms, needless to say, the constants hidden by big-Oh rely on
$k$.  Since all the problems are NP-hard when $k$, instead of being
constant, is part of the input
\cite{lewis-80-node-deletion-np,kashiwabara-79-interval-completion,goldberg-95-interval-edge-deletion},
the dependence on $k$ is necessarily super-polynomial (assuming
P$\ne$NP).  Now that the linear dependence on the graph size is
already optimum, we would like to minimize the factor of $k$.  We are
thus brought into the framework of parameterized computation.  Recall
that a problem, associated with some parameter, is {\em
  fixed-parameter tractable (FPT)} if it admits a polynomial-time
algorithm where the exponent on the input size ($n + m$ in this paper)
is a global constant independent of the parameter
\cite{downey-fellows-99}.  From the lens of parameterized computation,
the recognition of almost interval graphs is conventionally defined as
graph modification problems, where the parameter is $k$, and the task
is to transform a graph to an interval graph by at most $k$
modifications \cite{cai-96-hereditary-graph-modification}.  For the
classes $\mbox{interval} + k v$, $\mbox{interval} + k e$, and
$\mbox{interval} - k e$, the modifications are vertex deletions, edge
deletions, and completions (i.e., edge additions) respectively, which
are the most commonly considered on hereditary graph classes.  The
parameterized problems are accordingly named {\sc interval vertex
  deletion}, {\sc interval edge deletion}, and {\sc interval
  completion}.  Our results can then be more specifically stated as:

\begin{theorem}\label{thm:main-2}
  Given a graph on $n$ vertices and $m$ edges and a nonnegative
  parameter $k$, the problems \textsc{interval vertex deletion},
  \textsc{ interval edge deletion}, and \textsc{interval completion}
  can be solved in time $O(8^k \cdot (n+m))$, $k^{O(k)} \cdot (n+m)$,
  and $O(6^k \cdot (n+m))$, respectively.
\end{theorem}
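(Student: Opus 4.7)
The plan is to derive Theorem~\ref{thm:main-2} directly from the linear-time recognition algorithms of Theorem~\ref{thm:main} via a standard bounded search tree argument, but with the crucial extra work of showing that the branching can be carried out so that the total cost is linear per leaf of the search tree. Throughout, I will rely on the Lekkerkerker--Boland characterization: the obstructions to being an interval graph are the holes $C_\ell$ ($\ell\ge 4$) together with a short list of small chordal graphs containing an asteroidal triple (long claws, whipping tops, and the three types of nets), all on at most $8$ vertices. Vertex deletions, edge deletions, and edge additions on the input graph correspond exactly to hitting, respectively, a vertex, an edge, or a non-edge of every forbidden induced subgraph.

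First, for \textsc{interval vertex deletion}, I would branch exhaustively on any small forbidden induced subgraph $H$ of size at most $8$: for each vertex $v\in V(H)$, recursively delete $v$ with parameter $k-1$. The branching vector $(1,1,\ldots,1)$ of length $|V(H)|\le 8$ yields an $O(8^k)$ search tree. Once no small obstruction remains, the only possible obstructions are arbitrarily long holes or ``long'' Lekkerkerker--Boland structures; here I would invoke the structural analysis developed in the paper for graphs free of small non-interval subgraphs (the combinatorial part alluded to in the abstract, combining modular decomposition with properties of \misp{}s/\misb{}s) to either produce an interval graph in linear time or locate a bounded-size witness set on which one more branching step succeeds. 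For \textsc{interval completion}, the same scheme works with branching on the $\le 6$ non-edges of a smallest obstruction (since $C_4$ has two non-edges but a careful enumeration of which fill edge to include from the short obstructions gives a $6$-way branch); at the leaves one is left with a chordal-completion-like problem on a graph free of small obstructions, again solvable via the small-obstruction-free structure theorems. For \textsc{interval edge deletion} both chordality and AT-freeness must be restored by deletions; branching on edges of small obstructions and on induced holes of bounded length gives the weaker $k^{O(k)}$ factor, essentially because in a long small-obstruction-free hole there may be up to $k$ candidate edges to break.

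The main obstacle, and the place where most of the work of Theorem~\ref{thm:main} is actually used, is not the branching itself but making each node of the search tree run in amortized linear time. A naive implementation recomputes obstructions after every modification and pays $\Omega(n+m)$ per internal node, giving $c^k\cdot(n+m)$ only at the leaves but a super-linear blow-up in total. I would therefore design an incremental data structure that, given an initial detection of obstructions in the root in $O(n+m)$ time, updates the list of small forbidden induced subgraphs in $O(1)$ amortized time per local modification (deletion of a vertex, of an edge, or insertion of a fill edge). Because each modification touches only vertices within distance $2$ of the modified vertex or edge, and because the degree of any vertex appearing in a ``dense'' region can be charged off against the $k$ modifications made in its neighbourhood, one can bound the total update cost over one root-to-leaf path by $O(n+m)$. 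Summing over the $c^k$ leaves then gives the bounds $O(8^k(n+m))$, $k^{O(k)}(n+m)$, and $O(6^k(n+m))$, respectively. Finally, verification of a candidate solution at each leaf is done in linear time using a standard interval-graph recognition algorithm such as \cite{corneil-09-lbfs-strucuture-and-interval-recognition}, which also outputs the interval model required by the ``moreover'' part of Theorem~\ref{thm:main}.
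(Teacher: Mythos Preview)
Your proposal has two genuine gaps.

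\textbf{Disposal of large obstructions is missing.} You correctly branch on the bounded-size subgraphs in $\cF_{LI}$, but you never explain how the unbounded obstructions---the $\dag_d$ and $\ddag_d$ families and long holes---are handled with bounded branching. Saying you would ``invoke the structural analysis developed in the paper'' is not a proof; that analysis \emph{is} the content of Theorem~\ref{thm:main-2}. In the paper the work is done as follows: after branching away all subgraphs in $\cF_{LI}$, one builds an olive-ring/caterpillar clique decomposition of the prime quotient (Theorem~\ref{thm:decompose-lig}), extracts a \emph{minimal frame} $(s:c_1,c_2:l,l_b;r_b,r)$ of a remaining large caw, and argues that any optimum solution either touches one of the $\le 7$ frame vertices (resp.\ one of $5$ specified non-edges for completion) or can be assumed to cut/fill at a ``thinnest place'' between $l_b$ and $r_b$---a minimum $(l_b,r_b)$-separator computable in linear time. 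That gives the $8$-way, $6$-way, and $O(k)$-way branchings for vertex deletion, completion, and edge deletion respectively. Your outline never reaches this point, so neither the FPT claim for edge deletion nor the specific constants $8$ and $6$ are established.

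\textbf{The linear-time mechanism you describe is not the paper's and is likely unsound.} You propose an incremental data structure that maintains small forbidden induced subgraphs in amortized $O(1)$ per modification. The paper explicitly argues in Section~\ref{sec:hardness-find-fis} that even \emph{static} detection of a subgraph in $\cF_{LI}$ is unlikely to be linear-time (via reductions from triangle/claw detection), so a dynamic $O(1)$-amortized detector is not credible without a new idea, and the ``distance-$2$ locality plus charging'' sketch you give does not bound the work: a single edge change can create or destroy $\Omega(n)$ copies of, say, a net. The paper does \emph{not} amortize across search-tree nodes; instead, each node runs the full $O(n+m)$ procedure \textbf{decompose} of Fig.~\ref{fig:recognize-lig}, which \emph{either} returns some subgraph in $\cF_{LI}$ \emph{or} returns the olive-ring decomposition. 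This relaxed alternative is exactly what sidesteps the hardness barrier, and multiplying by the $c^k$ nodes gives the stated bounds.
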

In particular, we show that \textsc{interval edge deletion} is FPT,
thereby resolving a long-standing open problem first asked by
Bodlaender et al.~\cite{bodlaender-95-fpt-computational-biology}.
Further, our algorithms for \textsc{interval vertex deletion} and
\textsc{interval completion} significantly improve the $O(k^{2k}\cdot
n^3m)$-time algorithm of Heggernes et
al.~\cite{villanger-09-interval-completion} and the $O(10^k \cdot
n^9)$-time algorithm of Cao and Marx \cite{cao-14-interval-deletion},
respectively.  We remark that it can also be derived an $O(m n +
n^2)$-time approximation algorithm of ratio 8 for the minimum interval
vertex deletion problem.

We feel obliged to point out that computational biologists cannot
claim all credit for the discovery and further study of interval
graphs.  Independent of \cite{benzer-59-topology-genetic-structure},
Haj{\'o}s \cite{hajos-57-interval-graphs} formulated the class of
interval graphs out of nothing but coffee.  Since its inception in
1950s, its natural structure earns itself a position in many other
applications, among which the most cited ones include jobs scheduling
in industrial engineering \cite{bar-noy-01-resource-allocation},
temporal reasoning \cite{golumbic-93-temporal-reasoning}, and
seriation in archeology \cite{kendall-69-seriation-1}.  All these
applications involve some temporal structure, which is understandable:
before the final invention of time traveling vehicles, a graph
representing relationship of temporal activities has to be an interval
graph.  With errors involved, almost interval graphs arise naturally.

\subsection{Notation}\label{sec:graph-classes}
All graphs discussed in this paper shall always be undirected and
simple.  The \emph{order} $|G|$ and \emph{size} $||G||$ of a graph $G$
are defined to be the cardinalities of its vertex set $V(G)$ and its
edge set $E(G)$ respectively.  We assume without loss of generality
that $G$ is connected and nontrivial (containing at least two
vertices); thus $|G| = O(||G||)$.  We sometimes use the customary
notation $v\in G$ to mean $v\in V(G)$, and $u\sim v$ to mean $uv\in
E(G)$.  The {degree} of a vertex $v$ is denoted by $d(v)$.  A vertex
$v$ is \emph{simplicial} if $N[v]$ induces a clique; let $SI(G)$
denote the set of simplicial vertices of $G$.  The length of a path or
a cycle is defined to be the number of edges in it.  Standard
graph-theoretical and algorithmic terminology can be found in
\cite{diestel-10,golumbic-2004-perfect-graphs}.

\begin{figure*}[t] %lb.tex
  \centering
  \begin{subfigure}[b]{0.18\textwidth}
    \centering    \includegraphics{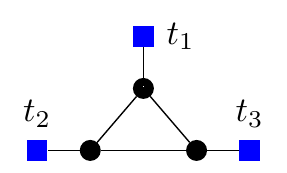} 
    \caption{net}
    \label{fig:net}
  \end{subfigure}%  
  \,
  \begin{subfigure}[b]{0.18\textwidth}
    \centering    \includegraphics{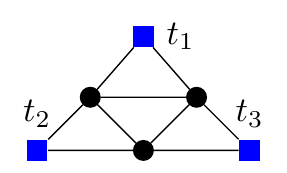} 
    \caption{sun}
    \label{fig:tent}
  \end{subfigure}%  
  \begin{subfigure}[b]{0.18\textwidth}
    \centering    \includegraphics{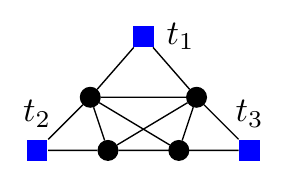} 
    \caption{rising sun}
    \label{fig:tent}
  \end{subfigure}%  
  \,
  \begin{subfigure}[b]{0.19\textwidth}
    \centering    \includegraphics{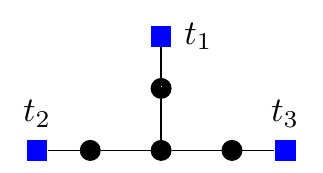} 
    \caption{long claw}
    \label{fig:long-claw}
  \end{subfigure}%  
  \,
  \begin{subfigure}[b]{0.18\textwidth}
    \centering    \includegraphics{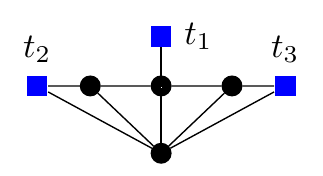} 
    \caption{whipping top}
    \label{fig:top}
  \end{subfigure}%  

  \begin{subfigure}[b]{0.4\textwidth}
    \centering    \includegraphics{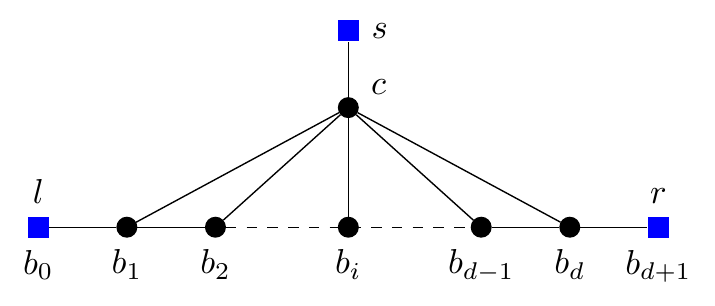} 
    \caption{$\dag_d$ ($s:c,c:l,B,r$)$\quad$ ($d = |B| \geq 3$)}
    \label{fig:dag}
  \end{subfigure}%  
  \qquad   \quad
  \begin{subfigure}[b]{0.4\textwidth}
    \centering    \includegraphics{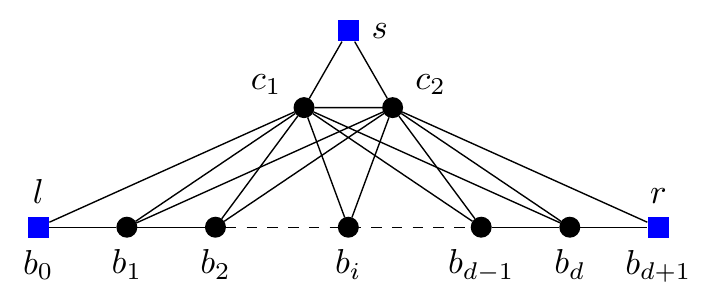} 
    \caption{$\ddag_d$ ($s:c_1,c_2:l,B,r$)$\quad$ ($d = |B| \geq 3$)}
    \label{fig:ddag}
  \end{subfigure}%  
  \caption{Minimal chordal asteroidal witnesses (squares vertices make
    the triple).}
  \label{fig:at}
\end{figure*}
A cycle induced by $d$ vertices, where $d\ge 4$, is called a
\emph{$d$-hole}, or simply a {\em hole} if $d$ is irrelevant.  In
other words, a hole is an induced cycle that is not a triangle.  A
graph is \emph{chordal} if it contains no holes.  Lekkerkerker and
Boland \cite{lekkerkerker-62-interval-graphs} showed that a graph is
an interval graph if and only if it is chordal and does not contain a
structure called \emph{asteroidal triple} (\emph{at} for short), i.e.,
three vertices such that each pair of them is connected by a path
avoiding neighbors of the third one.  They went further to list all
minimal chordal graphs that contain an at.  These graphs, reproduced
in Fig.~\ref{fig:at}, are called \emph{chordal asteroidal witnesses}
(caws for short).  

Let $\cF_I$ denote the set of minimal forbidden induced subgraphs of
interval graphs, i.e., all holes and caws.  Let $\cF_{LI}$ be the set
\{net, sun, rising sun, long claw, whipping top, $4$-hole, $5$-hole\}
(see the first row of Fig.~\ref{fig:at}).  An important ingredient of
our algorithms is a comprehensive study of the following graph class.
Clearly, $\cF_{LI}\subset \cF_I$, and thus all interval graphs satisfy
this definition. 
\begin{definition}
  \emph{Locally interval graphs} are defined by forbidding all
  subgraphs in $\cF_{LI}$.
\end{definition} 

An \emph{induced interval subgraph} of $G$ is an interval subgraph
induced by a set $U\subseteq V(G)$ of vertices.  An interval graph
$\underline G$ (resp., $\widehat G$) is called a \emph{spanning
  interval subgraph} (resp., an \emph{interval supergraph}) of $G$ if
it has the same vertex set as $G$ and $E(\underline G)\subseteq E(G)$
(resp., $E(G)\subseteq E(\widehat G)$).  An {induced interval subgraph
  $G[U]$} (resp., a spanning interval subgraph $\underline G$ or an
interval supergraph $\widehat G$) of $G$ is maximum (resp., maximum or
minimum) if $|U|$ (resp., $||\underline G||$ or $||\widehat G||$) is
maximum (resp., maximum or minimum) among all induced interval
subgraphs (resp., spanning interval subgraphs or interval supergraphs)
of $G$; in other words, the number of modifications $V_-:=
V(G)\setminus U$ (resp., $E_-:= E(G)\setminus E(\underline G)$ or
$E_+:= E(\widehat G)\setminus E(G)$) is minimum.

A subset $M$ of vertices forms a \emph{module} of $G$ if all vertices
in $M$ have the same neighborhood outside of $M$.  In other words, for
any pair of vertices $u,v \in M$, a vertex $x \not\in M$ is adjacent
to $u$ if and only if it is adjacent to $v$ as well.  The set $V(G)$
and all singleton vertex sets are modules, called \emph{trivial}.  A
graph on less than three vertices has only trivial modules, while a
graph on three vertices always has a nontrivial module.  A graph on at
least four vertices is \emph{prime} if it contains only trivial
modules, e.g., all holes of length at least five and all caws are
prime.  Two disjoint modules are either nonadjacent or completely
adjacent.  Given any partition $\{M_1,\dots,M_p\}$ of $V(G)$ such that
$M_i$ for every $1\le i\le p$ is a module of $G$, we can associate a
\emph{quotient graph} $Q$, where each vertex represents a module of
$G$, and for any pair of distinct $i, j$ with $1\le i,j\le p$, the
$i$th and $j$th vertices of $Q$ are adjacent if and only if $M_i$ and
$M_j$ are adjacent in $G$.  From $Q$ and $G[M_i]$ for all $1\le i\le
p$ (their total sizes are bounded by $O(||G||)$), the original graph
$G$ can be easily and efficiently retrieved.

\subsection{Our major results}\label{sec:major-results}
We state here the major results of this paper (besides
Thms.~\ref{thm:main} and \ref{thm:main-2}) that are of independent
interest.  Our first result is a straightforward observation on
modules of \lig s and interval graphs.
\begin{proposition}\label{lem:lig-and-modules}
  Let \cG\ be the class of interval graphs or the class of \lig s.  A
  graph $G$ is in \cG\ if and only if a quotient graph $Q$ of $G$ is
  in \cG\ and
  \begin{enumerate}[(1)]
  \item every non-simplicial vertex of $Q$ represents a clique
    module; and
  \item in any pair of adjacent vertices of $Q$, at least one
    represents a clique module.
  \end{enumerate}
\end{proposition}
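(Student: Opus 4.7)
Both directions pivot on the $C_4$, which is the smallest obstruction in $\cF_{LI}\subseteq\cF_I$.

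\textbf{Forward direction.} The plan is to observe that $Q$ is isomorphic to the induced subgraph of $G$ obtained by picking one representative from each module $M_i$; since $\cG$ (the class of interval graphs or \lig s) is hereditary, $Q\in\cG$ follows immediately. For (1), I argue by contraposition: suppose a non-simplicial $v\in V(Q)$ represents a non-clique $M_v$; pick non-adjacent $a,b\in M_v$ and non-adjacent $u_1,u_2\in N_Q(v)$, and then for any $y_i\in M_{u_i}$ the set $\{a,b,y_1,y_2\}$ induces a $C_4$ in $G$ (adjacent modules are completely joined). For (2), two adjacent modules both of which are non-cliques yield an induced $C_4$ by exactly the same pattern. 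Since $C_4\in\cF_{LI}\subseteq\cF_I$, either conclusion contradicts $G\in\cG$.

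\textbf{Backward direction, prime case.} Assume $Q\in\cG$ together with (1) and (2), and suppose for contradiction that $G$ has an induced copy $H$ of some element of the minimal forbidden family $\cF$ (namely $\cF_I$ or $\cF_{LI}$). For every module $M_i$, the intersection $V(H)\cap M_i$ is a module of $H$, because each vertex of $V(H)\setminus M_i$ is adjacent to all or none of $M_i$ by the module property of $M_i$ in $G$. The key combinatorial input is that every member of $\cF$ other than the $4$-hole is prime: holes of length at least five are classically prime, and primality of the caws in Fig.~\ref{fig:at} can be verified by direct inspection. Hence for $H\neq C_4$ each intersection $V(H)\cap M_i$ is a singleton; then $V(H)$ embeds injectively into $V(Q)$ and the copy of $H$ survives as an induced subgraph of $Q$, contradicting $Q\in\cG$.

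\textbf{The $C_4$ case and main obstacle.} The only non-trivial modules of the $4$-hole are its two diagonal (non-adjacent) pairs, so adjacent vertices of the $C_4$ cannot share a module in $G$. I would then split into cases on which diagonals are merged: merging both diagonals produces two adjacent non-clique modules and violates (2); merging exactly one diagonal produces a non-clique module whose $Q$-vertex has two non-adjacent neighbours (the other two vertices of the $C_4$), making it non-simplicial and violating (1); merging neither leaves an induced $C_4$ in $Q$, again a contradiction. The sub-case where all of $V(H)$ lies inside a single $M_i$ is precluded by $G[M_i]\in\cG$, a necessary sub-hypothesis handled by applying the proposition inductively down the modular decomposition. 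The main obstacle is thus the primality claim for the caws in Fig.~\ref{fig:at}; once that is in hand the rest of the argument reduces to the short case analysis above.
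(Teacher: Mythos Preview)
Your approach is essentially the paper's: both pivot on the observation that the $4$-hole is the only non-prime graph in $\cF_I$ (and in $\cF_{LI}$), so any other forbidden subgraph meeting several modules descends to $Q$, while a $4$-hole straddling modules forces a violation of (1) or (2). Your forward direction and your $C_4$ case split match the paper's sketch almost verbatim.

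One point worth flagging: you correctly notice that the backward direction, read for a \emph{fixed} quotient $Q$, needs the extra hypothesis $G[M_i]\in\cG$ for every module $M_i$. As literally stated, the proposition is false in that reading (e.g., let $G$ be a net together with a universal vertex $u$, and take the partition $\{\{u\},V(\text{net})\}$: then $Q=K_2\in\cG$, both vertices of $Q$ are simplicial, and (2) holds because $\{u\}$ is a clique module, yet $G\notin\cG$). The paper's own proof sketch glosses over this case and simply records that a prime forbidden subgraph may be ``fully contained in some module''; the missing condition is restored explicitly only in the subsequent corollary on maximal strong modules. Your proposed fix---carrying the hypothesis down the modular decomposition inductively---is exactly the right idea, but it amounts to adding the missing clause rather than deriving it, so you should state the amended proposition before invoking induction. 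Alternatively, if one reads ``a quotient graph'' as ``every quotient graph,'' the backward direction becomes trivial (take the singleton partition), and only your forward-direction argument is needed.
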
 
Our second major result comprises of a set of theorems.  They
characterize the minimum modifications with respect to modules of the
input graph.  Note that after replacing a module $M$ by another
subgraph, we add edges between every vertex in the new subgraph to
$N(M)$.
\begin{theorem}\label{thm:modules-induced-subgraph}
  Let $G[U]$ be a \miib\ of graph $G$.  For any module $M$ of $G$
  intersecting $U$, the set $M\cap U$ is a module of $G[U]$, and if
  $G$ is $4$-hole-free, then replacing $G[M\cap U]$ by any \miib\ of
  $G[M]$ in $G[U]$ gives a \miib\ of $G$.
\end{theorem}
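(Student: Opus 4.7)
The plan is to handle the two assertions of the theorem separately.

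For the first assertion, I would verify directly from the definition of a module. Take any $u, v \in M \cap U$ and any $x \in U \setminus (M \cap U)$. Since $x \in U$ but $x \notin M \cap U$, we have $x \notin M$, so the fact that $M$ is a module of $G$ yields $x \sim u \Leftrightarrow x \sim v$ in $G$, and hence also in $G[U]$.

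For the second assertion, let $G[U']$ be any \miib\ of $G[M]$ and set $U^{\ast} := (U \setminus M) \cup U'$. Since $G[M \cap U]$ is an induced interval subgraph of $G[M]$ (being an induced subgraph of the interval graph $G[U]$), the maximality of $G[U']$ gives $|U'| \ge |M \cap U|$, whence $|U^{\ast}| \ge |U|$. It therefore suffices to prove that $G[U^{\ast}]$ is itself an interval graph: the optimality of $|U|$ will then force $|U^{\ast}| = |U|$, so that $G[U^{\ast}]$ is a \miib\ of $G$.

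To show $G[U^{\ast}]$ is interval, I would argue by contradiction. Suppose $G[U^{\ast}]$ contains an induced copy of some $F \in \cF_I$ (a hole or a caw) on a vertex set $W$, and let $W_1 = W \cap U'$ and $W_2 = W \setminus U'$. Because $U' \subseteq M$ and $M$ is a module of $G$, $U'$ is a module of $G[U^{\ast}]$, so $W_1$ is a module of $F$. The trivial-module cases are immediate: if $W_1 = \emptyset$ then $F$ lies in $G[U \setminus M] \subseteq G[U]$; if $W_1 = W$ then $F$ lies in $G[U']$; and if $|W_1| = 1$, then replacing its unique element by any vertex of $M \cap U$ (which exists by hypothesis) yields a copy of $F$ in $G[U]$, since both $U'$ and $M \cap U$ are modules with identical neighborhoods outside $M$. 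Each subcase contradicts $G[U]$ or $G[U']$ being interval.

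The main obstacle is the remaining case, where $W_1$ is a nontrivial proper module of $F$. The crucial combinatorial input is that the only non-prime graph in $\cF_I$ is the 4-hole: every hole of length at least five and every caw in Fig.~\ref{fig:at} is prime (a fact worth stating as a preparatory observation, verifiable by inspection of the neighborhoods). Hence $F$ must be a 4-hole $abcd$ with $W_1$ equal to one of its diagonal pairs, say $\{a, c\}$, and $W_2 = \{b, d\}$; but then the same four vertices induce a 4-hole in $G$ itself, contradicting 4-hole-freeness. This is precisely the step where the hypothesis on $G$ is used, and it clarifies why the hypothesis is needed in the statement: without it, replacing a non-clique, non-singleton module could merge two copies of $C_4$ around the module into a genuine $C_4$ in $G[U^{\ast}]$ that has no preimage in $G[U]$.
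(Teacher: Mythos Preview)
Your proposal is correct and follows essentially the same approach the paper outlines in Section~2.1: the key observation is that the $4$-hole is the only non-prime graph in $\cF_I$, so any forbidden induced subgraph meeting a module in more than one vertex is either entirely inside the module or is a $4$-hole, the latter being ruled out by hypothesis. Your handling of the single-vertex replacement case and the counting argument for $|U^\ast|\ge |U|$ match the paper's intended reasoning.
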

\begin{theorem}\label{thm:modules-subgraph}
Let $G$ be a $4$-hole-free graph.  There is a \misb\ $\underline{G}$
of $G$ such that the following hold for every module $M$ of $G$:
  \begin{inparaenum}[\itshape i)]
  \item $M$ is a module of $\underline{G}$; and
  \item replacing $\underline{G}[M]$ by any \misb\ of $G[M]$ in
    $\underline{G}$ gives a \misb\ of $G$.
  \end{inparaenum}
\end{theorem}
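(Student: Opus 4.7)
The plan is to produce $\underline{G}$ by an extremal argument mirroring the proof of Theorem~\ref{thm:modules-induced-subgraph}, and then to deduce (ii) from (i) via Proposition~\ref{lem:lig-and-modules}. Among all misbs of $G$, I choose $\underline{G}$ so as to maximize
\[
\Phi(\underline{G}) = \sum_{M} \bigl|\{v\in N_G(M) : v\text{ is adjacent in }\underline{G}\text{ to every vertex of }M\}\bigr|,
\]
where the sum ranges over all nontrivial modules $M$ of $G$. This is the analogue of the extremal choice used in the induced setting.

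For (i), suppose for contradiction that some module $M$ of $G$ is not a module of $\underline{G}$, so there are $u,w\in M$ and $v\in N_G(M)$ with $uv\in \underline{G}$ and $wv\notin \underline{G}$. Because $u$ and $w$ are symmetric in $G$, the transposition $u\leftrightarrow w$ is a $G$-automorphism, and applying it to $\underline{G}$ yields an isomorphic misb $\underline{G}^{\pi}$. Starting from an interval representation of $\underline{G}$, I apply a sequence of local ``relocations'' that reposition $M$-intervals within their slot in order to add missing edges between $M$ and $N_G(M)$. The key sub-lemma is that any such relocation either preserves the edge count or can be paired with a compensating move so that the net edge count is preserved and $\Phi$ strictly increases; this uses $4$-hole-freeness crucially, since any edge that would be irreparably lost in the process would imply the existence of an induced $4$-cycle in $G$ through $u$, $w$, and two external vertices witnessing the discrepancy.

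For (ii), consider the partition of $V(\underline{G})$ into $M$ and the singletons $\{v\}$ for $v\notin M$; its quotient $Q$ contains a distinguished vertex $m$ representing $M$. By Proposition~\ref{lem:lig-and-modules} applied to the interval graph $\underline{G}$ with this partition, $Q$ is itself an interval graph, and either $\underline{G}[M]$ is a clique or $m$ is simplicial in $Q$. In the former case $G[M]\supseteq \underline{G}[M]$ is necessarily a clique, so its only misb is $\underline{G}[M]$ itself and (ii) is trivial. Otherwise $N_G(M)$ induces a clique in $\underline{G}$; by the Helly property their intervals share a common point $p$, and standard manipulations of the representation allow us to locate $M$'s intervals within a small window around $p$ contained in every $N_G(M)$-interval. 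Substituting this window by an interval representation of any misb $H$ of $G[M]$ produces an interval representation of the graph $\underline{G}'$ obtained from $\underline{G}$ by replacing $\underline{G}[M]$ with $H$. Hence $\underline{G}'$ is an interval graph and a spanning subgraph of $G$ with $||\underline{G}||+||H||-||\underline{G}[M]||$ edges; maximality of $\underline{G}$ forces $||H||=||\underline{G}[M]||$, so $\underline{G}'$ is itself a misb.

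The principal obstacle is the local-relocation bookkeeping in (i). A single transposition $u\leftrightarrow w$ in isolation does not change $\Phi$, since it merely relabels the same adjacency pattern, so strict improvement requires combining several swaps and interval-shifts in a coordinated way. The crux is proving, via careful case analysis on the interval representation, that $4$-hole-freeness of $G$ forbids the obstructing configurations that would prevent such a coordinated improvement. A secondary subtlety in (ii) is the standard adjustment placing $M$'s slot inside $\bigcap_{v\in N_G(M)} I_v$; this is routine once $m$'s simpliciality in $Q$ is in hand, but requires translating, extending, and shifting intervals without disturbing any adjacency outside $M$.
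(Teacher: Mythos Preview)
Your argument for (ii) is essentially sound and close to the paper's: once $M$ is a module of $\underline{G}$ and $N_G(M)$ is a clique in $\underline{G}$, the Helly-and-project manoeuvre lets you swap in any \misb\ of $G[M]$, and edge-counting forces $\underline{G}[M]$ itself to be a \misb\ of $G[M]$.

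The real problem is (i). You set up a potential $\Phi$ and then admit that a transposition $u\leftrightarrow w$ does nothing to it; the rest (``combining several swaps and interval-shifts in a coordinated way'', ``careful case analysis'') is a description of work to be done, not a proof. There is no concrete construction of the improved \misb, no verification that the modified graph stays a spanning subgraph of $G$, and no edge-count comparison. As written this is a gap, not a technicality.

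The paper avoids the potential-function route entirely and proceeds constructively. Fix a \misb\ $\underline{G}$ and a module $M$. If $G[M]$ is a clique one checks directly that $M$ stays a clique module in \emph{every} \misb. Otherwise $4$-hole-freeness forces $N_G(M)$ to be a clique in $G$, and the paper first shows that for every component $C$ of $\underline{G}[M]$ the set $N_{\underline{G}}(C)$ is a clique in $\underline{G}$: if $x,y\in N_{\underline{G}}(C)$ are nonadjacent, extend every interval of $N_{\underline{G}}(C)$ to cover a point between $I_x$ and $I_y$; all new edges lie inside the clique $N_G(M)$, so one obtains a strictly larger spanning interval subgraph of $G$, a contradiction. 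Now take the component $C$ with $|N_{\underline{G}}(C)|$ maximum, set $N=N_{\underline{G}}(C)$, pick a point $\ell$ in $\bigcap_{u\in N}I_u$ that also meets $\bigcup_{v\in C}I_v$, and project all of $\{I_v:v\in M\}$ into $[\ell-\epsilon,\ell+\epsilon]$. The resulting interval graph $\underline{G}'$ has $M$ as a module fully joined to $N$; since every component of $\underline{G}[M]$ had at most $|N|$ outside neighbours, $\|\underline{G}'\|\ge\|\underline{G}\|$, so $\underline{G}'$ is again a \misb. This directly yields (i) without any extremal bookkeeping, and the same projection step gives (ii) immediately. If you want to salvage your approach, the step you are missing is exactly this ``neighbourhood of each component is a clique, so project to the best one'' argument; once you have it, $\Phi$ is superfluous.
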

\begin{theorem}\label{thm:modules-supergraph}
  For any graph $G$, there is a \misp\ $\widehat G$ of $G$ such that
  the following hold for every module $M$ of $G$:
  \begin{inparaenum}[\itshape i)]
  \item $M$ is a module of $\widehat G$; and
  \item if $\widehat G[M]$ is not a clique, then replacing
    $\widehat{G}[M]$ by any \misp\ of $G[M]$ in $\widehat{G}$ gives a
    \misp\ of $G$.
  \end{inparaenum}
\end{theorem}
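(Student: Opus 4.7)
My plan is to start from an arbitrary \misp\ $\widehat G$ of $G$, fix an interval representation of it, and massage the representation module-by-module until every module of $G$ has become a module of $\widehat G$, without ever increasing $||\widehat G||$.

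The atomic step handles one module at a time. Given a module $M$ of $G$, I choose $v^{*}\in M$ minimising the external fill $|N_{\widehat G}(v^{*})\setminus(M\cup N_G(M))|$ and rebuild the intervals of $M$ inside the slot $I_{v^{*}}$ so that (a)~the induced subgraph $\widehat G[M]$ is preserved, and (b)~every $u\in M$ acquires $N_{\widehat G}(v^{*})\setminus M$ as its external neighbourhood. Step (b) makes $M$ a module of the modified graph $\widehat G'$, while by the choice of $v^{*}$ the external fill cannot rise and (a) preserves the internal fill, so $\widehat G'$ is still a \misp. Iterating this move bottom-up along the modular decomposition tree of $G$ handles every strong module, and since any module of $G$ is a union of strong modules sharing a common series or parallel parent in the tree, property~(i) follows for every module of $G$ simultaneously.

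Condition~(ii) then follows from minimality plus a Helly-style observation. Minimality immediately gives that $\widehat G[M]$ must itself be a \misp\ of $G[M]$ whenever it is not a clique, since otherwise swapping in a true \misp\ of $G[M]$ would strictly decrease $||\widehat G||$. Moreover, non-cliqueness of $\widehat G[M]$ supplies a pair of disjoint intervals $I_{u_1},I_{u_2}$ with $u_1,u_2\in M$, forcing every external neighbour of $M$ to have an interval that covers the gap; iterating this argument to the extremes of $M$ shows that every external neighbour's interval contains the entire slot occupied by $M$. Any other \misp\ of $G[M]$ can therefore be redrawn inside that slot without affecting external adjacencies, giving~(ii).

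The main obstacle I foresee is realising step~(b) faithfully: a naive order-preserving rescaling of $\{I_u\}_{u\in M}$ into $I_{v^{*}}$ need not make each $u\in M$ adjacent to every external neighbour of $v^{*}$, because an external interval could clip only part of $I_{v^{*}}$ and miss the rescaled copy of $I_u$. I plan to resolve this by first pre-processing the representation so that, for every $x\in N_{\widehat G}(v^{*})$, the interval $I_x$ either contains $I_{v^{*}}$ entirely or is disjoint from the portion of $I_{v^{*}}$ reserved for $M$. This adjustment is the delicate bookkeeping step and ultimately rests on the minimality of $\widehat G$ to rule out configurations in which the rescaling would either create an extra fill edge or fail to preserve the external neighbourhood of $v^{*}$.
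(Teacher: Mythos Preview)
Your high-level plan---project $M$ into a small region of the interval model and compare edge counts---is the right one, but the atomic step you describe does not in general produce an interval graph, and the ``preprocessing'' you invoke to rescue it cannot be carried out.

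The difficulty is your choice of target neighbourhood. You want every $u\in M$ to acquire $N_{\widehat G}(v^{*})\setminus M$ as its external neighbourhood, but this set need not be a clique of $\widehat G$: the interval $I_{v^{*}}$ may bridge two disjoint intervals $I_x,I_y$ with $x,y\notin M$ and $x\not\sim y$. If in addition $\widehat G[M]$ is not a clique---say $u_1\not\sim u_2$ in $M$---then the graph you are aiming for contains the $4$-hole $u_1\,x\,u_2\,y\,u_1$ and is therefore not interval at all. Your preprocessing cannot remove this obstruction: extending $I_x$ to contain all of $I_{v^{*}}$ would create the new fill edge $xy$, which you have not budgeted for; and if $x\in N_G(M)$ you may not instead push $I_x$ away from the portion of $I_{v^{*}}$ reserved for $M$, since the result would cease to be a supergraph of $G$. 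Minimality of $\widehat G$ does not forbid such configurations.

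The paper repairs exactly this point. Instead of minimising over vertices of $M$, it minimises over \emph{points}: with $p=\min_{v\in M}\mathtt{right}(v)$ and $q=\max_{v\in M}\mathtt{left}(v)$ (the non-clique case is $p<q$), one picks $\ell\in[p,q]$ with $|K_\ell\setminus M|$ smallest and projects all of $\{I_v:v\in M\}$ into $[\ell-\epsilon,\ell+\epsilon]$. The new external neighbourhood of $M$ is then $K_\ell\setminus M$, which is automatically a clique because every such interval contains the point $\ell$; hence the modified model really is an interval model and no preprocessing is needed. Since every $v\in M$ satisfies $I_v\cap[p,q]\neq\emptyset$, each $v$ already has at least $|K_\ell\setminus M|$ external neighbours in $\widehat G$, so the fill does not rise; a short case analysis using the assumed bad vertex $z\in N_{\widehat G}(M)\setminus\bigcap_{v\in M}N_{\widehat G}(v)$ shows it strictly drops, yielding the contradiction. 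This argument in fact gives the stronger conclusion that \emph{every} \misp\ of $G$ already preserves every connected module of $G$, so no bottom-up iteration along the modular decomposition tree is required and there is no danger of one module's repair undoing another's.
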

These results hold regardless of $k$, and thus can be used for any
algorithmic approach, e.g., Thm.~\ref{thm:modules-supergraph} has
already been used in \cite{bliznets-14-interval-completion}.  We
remark that there has been a long relationship between modules and
interval graphs.  Indeed, the algorithm of
\cite{hsu-99-recognizing-interval-graphs}, based on a characterization
of prime interval graphs by Hsu~\cite{hsu-95-recognition-cag}, is
arguably the simplest among all known recognition algorithms for
interval graphs.

Let ${\cal K}$ be a connected graph whose vertices, called
\emph{bags}, are the set of all maximal cliques of $G$.  We say that
$\cal K$ is a \emph{clique decomposition} of $G$ if for any $v \in G$,
the set of bags containing $v$ induces a connected subgraph of ${\cal
  K}$.  A \emph{caterpillar} is a tree that consists of a main path
and all other vertices are leaves connected to it.  An \emph{olive
  ring} is a uni-cyclic graph that consists of a hole (called the main
cycle) and all other vertices are pendant (having degree $1$) and
connected to this hole.  The deletion of any edge from the main cycle
of an {olive ring} results in a caterpillar.  Our third result is on
the clique decomposition of prime \lig s.
\begin{theorem}\label{thm:characterization-lig}
  A prime \lig\ $G$ has a clique decomposition that is either a
  caterpillar when it is chordal; or an olive ring otherwise.  This
  decomposition can be constructed in $O(||G||)$ time.
\end{theorem}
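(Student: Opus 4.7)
The plan is to split on whether $G$ is chordal and construct the required clique decomposition explicitly in each case, using the forbidden subgraphs in $\cF_{LI}$ to bar unwanted configurations.

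First, assume $G$ is chordal. Then $G$ has a clique tree $T$, which can be built in $O(||G||)$ time by a standard technique such as MCS or LexBFS. I plan to show that $T$ can be chosen as a caterpillar. Suppose for contradiction that some internal node $K$ of a clique tree has three neighbors $K_1, K_2, K_3$ each lying on a branch of length at least two from $K$. In each branch I pick a vertex $v_i \in K_i \setminus K$ and a vertex $w_i$ in some clique strictly beyond $K_i$. Together with one or two vertices drawn from $K$ (whose existence is guaranteed because primality forbids $K$ from being a singleton), I expect to isolate either a long claw or a whipping top, both of which lie in $\cF_{LI}$. The core technical step is to verify the required (non)adjacencies via the defining property of a clique tree: vertices in cliques that are separated by $K$ in $T$ have no common neighbor outside $K$. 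Once no such forbidden branching exists, contracting leaf bags exposes a spine and yields the caterpillar.

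Second, assume $G$ has a hole, and let $C = v_0 v_1 \cdots v_{d-1} v_0$ be a longest one. Since $4$-holes and $5$-holes are forbidden, $d \ge 6$. The core claim is that for any $u \notin V(C)$ the set $N(u) \cap V(C)$ is contained in the vertex set of a single edge of $C$. Indeed, any adjacency pattern where $u \sim v_i$ and $u \sim v_j$ with $|i-j| \bmod d \ge 2$ but $u \not\sim v_{i+1}$ either introduces a shorter induced cycle (violating $4$- or $5$-hole freeness) or produces a net, sun, rising sun, or whipping top against the rest of $C$; a full covering $V(C) \subseteq N(u)$ would make $u$ twin with a $C$-vertex, contradicting primality. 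Consequently, every outside vertex either extends a maximal clique sitting on an edge of $C$, or is a pendant attached to a unique maximal clique on $C$. The maximal cliques containing an edge of $C$ form the main cycle of the clique decomposition, and all remaining maximal cliques attach as pendants, yielding the olive ring.

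Third, the algorithmic construction. In the chordal case a clique tree is built in linear time, and the above structural argument either certifies it is a caterpillar or shows how to reroute the leaf bags accordingly. In the non-chordal case, a shortest hole can be located in $O(||G||)$ time (e.g.\ from the Tarjan--Yannakakis chordality test by walking around the first detected non-chord), and then partitioning each remaining vertex according to its (constrained) adjacency pattern on $C$ takes $O(||G||)$ time. The main obstacle I anticipate is the non-chordal case analysis: one has to enumerate all the ways in which a non-$C$ vertex could attach to $C$ without belonging to an edge-clique or being a pendant, and for each one pinpoint the exact member of $\cF_{LI}$ it produces, invoking primality at a few spots to rule out twins. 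Once that enumeration is complete, the olive ring decomposition is read off directly.
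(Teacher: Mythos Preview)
Your non-chordal argument rests on a claim that is false. You assert that for the longest hole $C$ and any $u\notin V(C)$, the set $N(u)\cap V(C)$ is contained in the vertex set of a single edge of $C$. Take a hole $v_0v_1\cdots v_{d-1}v_0$ with $d\ge 7$ and add a vertex $u$ adjacent to $v_0,v_1,v_2,v_3$. This graph is prime (no two vertices have the same closed neighbourhood, and one checks there is no larger module), it contains none of the seven graphs in $\cF_{LI}$, and $C$ is still a longest hole (any induced cycle through $u$ has at most $d-2$ vertices). Yet $u$ has four consecutive neighbours on $C$. In fact the paper shows only that $N(u)\cap V(C)$ is \emph{consecutive} and of size at most $|C|-3$; nothing stronger is available. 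Your subsequent picture---maximal cliques sitting on single edges of $C$ with everything else pendant---therefore collapses: a single outside vertex can stretch across many bags of the main cycle, and the maximal cliques need not be indexed by edges of $C$ at all. (Your side remark that $V(C)\subseteq N(u)$ would make $u$ a twin of a hole vertex is also wrong: hole vertices have only two neighbours on $C$.) Finally, finding a \emph{longest} hole in linear time is not something the Tarjan--Yannakakis test gives you; that test produces an arbitrary hole.

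The paper takes a completely different route in the non-chordal case. It fixes a hole $H$ with a carefully chosen basepoint $h_0$, orients every edge leaving $N[h_0]$ as clockwise or counterclockwise, and builds an auxiliary graph $\mho(G)$ by duplicating $N[h_0]$ into a ``left'' and a ``right'' copy so that the circular structure unrolls into a linear one. It then proves that $\mho(G)$ is chordal and that $\mho(G-SI(G))$ is an interval graph (or else a member of $\cF_{LI}$ is found), reads simplicial vertices of $G$ off $\mho(G)$, constructs a clique \emph{path} decomposition of $\mho(G-SI(G))$, folds it back into a hole decomposition of $G-SI(G)$, and finally reinserts the simplicial vertices to obtain the olive ring. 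In the chordal case the paper does not argue directly on a clique tree either; it proves (via the shallow-terminal theorem) that $G-SI(G)$ is an interval graph, takes its clique path, and hangs the simplicial vertices on as leaves. Your chordal sketch might be salvageable, but as written it conflates ``some clique tree is a caterpillar'' with ``every clique tree is'', and the step ``contracting leaf bags exposes a spine'' needs an actual argument.
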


Indeed, given a prime graph $G$ that does not have such a
decomposition, our algorithm is able to identify a subgraph of $G$ in
$\cF_{LI}$.  The following statement is stronger than
Thm.~\ref{thm:characterization-lig} and implies it.
\begin{theorem}\label{thm:decompose-lig}
  Given a prime graph $G$, we can in $O(||G||)$ time either build an
  olive-ring/caterpillar decomposition for $G$ or find a subgraph of
  $G$ in $\cF_{LI}$.  
\end{theorem}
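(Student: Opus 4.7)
The plan is to combine a linear-time chordality test with two structural subroutines, each of which either produces the required decomposition or returns a member of $\cF_{LI}$.

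First, run LexBFS on $G$ in $O(||G||)$ time, producing a candidate perfect elimination ordering together with, whenever the ordering fails chordality verification, an induced hole $H$. If such an $H$ exists with $|V(H)| \in \{4, 5\}$, return $H$ and halt. This handles the easy cases in linear time and reduces us to two subcases in which $G$ is either chordal or contains a hole of length at least $6$.

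\textbf{Chordal subcase.} Using the elimination ordering, enumerate the maximal cliques of $G$ and build a clique tree $\mathcal{K}$ in $O(||G||)$ time. Compute the spine of $\mathcal{K}$ by iteratively removing pendant bags; then $\mathcal{K}$ is a caterpillar if and only if the spine is a path. If it is, return $\mathcal{K}$. Otherwise the spine contains a bag $b$ with three neighbors $b_1, b_2, b_3$, each of which leads into a subtree still containing a non-pendant bag. A standard clique-tree argument then selects, from each direction $i$, a vertex $a_i$ with $a_i \notin N[a_j]$ for $j \neq i$, so that $\{a_1, a_2, a_3\}$ is an asteroidal triple and the induced subgraph on the three connecting paths is a caw of $G$. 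Primality is now the crucial lever: because $G$ has no nontrivial module, every internal vertex in this caw can be pruned while preserving the at, and this pruning collapses the caw into one of the five small members of $\cF_{LI}$---net, sun, rising sun, long claw, or whipping top---which we then output.

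\textbf{Non-chordal subcase.} Let $H = h_0 h_1 \cdots h_{d-1} h_0$ with $d \geq 6$; we attempt to build an olive ring using $H$ as the main cycle. For each $v \notin V(H)$ we compute $N(v) \cap V(H)$, spending $O(||G||)$ time in total. We require this set to be either empty, a single vertex, or two consecutive vertices of $H$, and we require two adjacent external vertices to share a common pendant bag compatible with the cyclic order on $H$. Any violation is transformed, in constant time, into a member of $\cF_{LI}$: two non-consecutive neighbors of some $v$ on $H$ produce a short cycle that is either a $4$- or $5$-hole or, after primality-based local shortening, one of the small caws; clashing external vertices likewise yield a long claw, whipping top, or net by local inspection. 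Once all checks pass, we assemble the olive ring by placing each external vertex into its unique pendant bag.

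The main obstacle is the chordal subcase: turning a generic three-way branching of the clique tree into a specific small member of $\cF_{LI}$ rather than a larger caw such as $\dag_d$ or $\ddag_d$. Primality is the key, but making the reduction both constructive and linear-time requires a careful one-pass preprocessing of the spine in which, for every spine node and every spine direction, we record the nearest non-pendant bag and a representative vertex in it. With this information, extracting the small forbidden subgraph on detecting a branching costs only $O(1)$ additional local inspection, keeping the total running time within the $O(||G||)$ budget.
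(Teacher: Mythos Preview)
Your proposal has genuine gaps in both subcases.

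\textbf{Chordal subcase.} The central claim---that primality lets you ``collapse'' any caw found at a clique-tree branching into one of the five small caws in $\cF_{LI}$---is false. The large caws $\dag_d$ and $\ddag_d$ are themselves prime for every $d\ge 3$, so a prime chordal graph can perfectly well contain a $\dag_{17}$ and no subgraph in $\cF_{LI}$ whatsoever. In that situation the theorem requires you to output a caterpillar decomposition, not a forbidden subgraph, and your procedure produces neither. A second problem: clique trees are not unique, so the particular clique tree you build may fail to be a caterpillar even when a caterpillar decomposition exists. The paper sidesteps both issues by first stripping the simplicial vertices: it tests whether $G-SI(G)$ is an interval graph (a clique \emph{path}), and if so grafts $SI(G)$ back on as leaves; if not, the caw found in $G-SI(G)$ has a shallow terminal that is non-simplicial in $G$, and Theorem~\ref{thm:1} converts precisely that situation into a subgraph in $\cF_{LI}$.

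\textbf{Non-chordal subcase.} You have misread what an olive-ring decomposition is. Its bags are the maximal cliques of $G$, arranged so that the main cycle is a hole \emph{of bags} (a clique hole decomposition of $G-SI(G)$); the hole $H$ in $G$ is only a witness that such a cyclic structure is necessary, not the cycle itself. Consequently your restriction that each $v\notin V(H)$ touch at most two consecutive hole vertices is far too strong: in a locally interval graph a vertex may be adjacent to any number up to $|H|-3$ consecutive hole vertices (cf.\ Theorem~\ref{thm:2}), and such vertices belong on the main cycle, not in pendant bags. The paper's route is to build the auxiliary graph $\mho(G)$, show it is chordal and that $\mho(G-SI(G))$ is interval (or else extract a member of $\cF_{LI}$ via Theorem~\ref{thm:3}), and then fold a clique path of $\mho(G-SI(G))$ into the hole decomposition (Lemma~\ref{thm:build-hole-decomposition}) before re-attaching $SI(G)$ (Theorem~\ref{lem:extend-decomposition}). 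Nothing in your sketch approximates this construction.
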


In addition to the above listed concrete results, our algorithms also
suggest a meta approach for designing fixed-parameter algorithms for
vertex deletion problems (where modules are trivially preserved): {\em
  If the object graph class can be characterized by a set of forbidden
  induced subgraphs of which only a finite number are not prime, then
  we may break them first and then use divide-and-conquer, i.e., solve
  the quotient graph and subgraphs induced by modules
  individually.\footnote{ This approach has been widely used in graph
    algorithms.  Based on the quotient graph is taken care of first or
    last, it is known as the top-down way or bottom-up way along the
    modular decomposition tree of the graph \cite{habib-10-survey-md}.
    A notational remark: In this paper we will refrain from using the
    name ``modular decomposition tree'' to avoid confusion with clique
    decomposition.}  }This extends the result of
Cai~\cite{cai-96-hereditary-graph-modification}, and might also be
applicable to some edge modification problems, on which, however, the
preservation of modules needs to be checked case by case.  The main
advantage of this approach is that it enables us to concentrate on
prime graphs and use their structural properties.

\subsection{Motivation and background}
The aforementioned physical mapping of DNA is a central problem in
computational biology
\cite{lander-88-genomic-mapping,arratia-91-genomic-mapping}.  In a
utopia where experimental data were \emph{perfect}, they should define
an interval graph.  Then the problem is equivalent to constructing an
interval model for the graph, which can be done in linear time.  In
the real world we live, however, data are always inconsistent and
contaminated by a few but crucial errors, which have to be detected
and fixed.  In particular, on the detection of false-positive errors
that correspond to fake edges, Goldberg et
al.~\cite{goldberg-95-interval-edge-deletion} formulated the minimum
interval edge deletion problem and showed its NP-hardness.  Likewise,
the deletion of vertices can be used to formulate the detection of
outliers (i.e., elements participating in many false overlaps, both
positive and negative), and the minimum interval vertex deletion
problem is long known to be NP-hard
\cite{krishnamoorthy-79-node-deletion,lewis-80-node-deletion-np}.

Solving the minimum interval vertex deletion problem and the minimum
interval edge deletion problem is equivalent to finding the maximum
induced interval subgraph
\cite{erdos-89-chordal-interval-subgraphs,bliznets-13-max-chordal-interval-subgraphs}
and the maximum spanning interval subgraph
\cite{domotor-13-maximum-interval-subgraph} respectively.  In light of
the importance of interval graphs, it is not surprising that some
natural combinatorial problems can be formulated as, or
computationally reduced to the {interval deletion} problems.  For
instance, Narayanaswamy and Subashini \cite{narayanaswamy-13-d-cos-r}
recently solved the maximum {consecutive ones sub-matrix} problem and
the minimum {convex bipartite deletion} problem by a reduction to
minimum {interval vertex deletion}.  Oum et
al.~\cite{oum-14-vertex-partition-cliquewidth} showed that an induced
interval subgraph can be used to find a special branch decomposition,
which can be in turn used to devise FPT algorithms for a large number
of problems, namely, locally checkable vertex subset and vertex
partitioning problems.  They both used our previous algorithm
\cite{cao-14-interval-deletion} as a subroutine, and thus will benefit
from an improved algorithm directly.

The minimum {interval completion} problem is also a classic NP-hard
problem
\cite{kashiwabara-79-interval-completion,yannakakis-81-minimum-fill-in}.
Besides computational biology, its most important application should
be sparse matrix computations
\cite{tarjan-75-graph-theory-and-gaussian-elimination}.  The {profile
  method} is an extension of the bandwidth method
\cite{rose-72-sparse-matrix,papadimitriou-76-bandwidth}, and their
purpose is to minimize the storage used during Gaussian elimination
for a symmetric sparse matrix.  Both methods attempt to reorder the
rows and columns of the input matrix such that all elimination are
limited within a band or an envelope around the main diagonal, while
all entries outside are always zeroes during the whole computation.
Therefore, we only need to store the elements in the band or envelop,
whose sizes are accordingly called the \emph{bandwidth} and
\emph{profile} \cite{george-81-sparse-positive-definite}.  Rose
\cite{rose-72-sparse-matrix} correlated bandwidth with graphs.  Tarjan
\cite{tarjan-75-graph-theory-and-gaussian-elimination} showed that a
symmetric matrix has a reordering such that its profile coincides with
non-zero entries if and only if it defines an interval graph (there is
an edge between vertices $i$ and $j$ if and only if the $i,j$-element
is non-zero), and finding the minimum profile is equivalent to solving
the minimum interval completion problem.

A very similar problem is the minimum {pathwidth} problem, which also
asks for an interval supergraph $\widehat G$ of $G$ but the objective
is to minimize the size of the maximum clique in $\widehat G$.  This
problem was also known to be NP-hard \cite{kashiwabara-79-pathwidth}.
In light of the hardness of both problems, people turned to finding
minimal interval completions, which can be viewed as a relaxation of
both of them.  Ohtsuki et
al.~\cite{ohtsuki-81-minimal-interval-completion} designed an
algorithm that finds a minimal interval completion in $O(|G|\cdot
||G||)$ time.  Very recently, Crespelle and Todinca
\cite{crespelle-13-minimal-interval-completion} proposed an improved
algorithm that runs in $O(|G|^2)$ time.  This is the best known, and
it remains open to develop a linear-time algorithm for finding a
minimal interval completion.  See also Heggernes et al.\
\cite{heggernes-07-characterize-minimal-interval-completion} for a
characterization of minimal interval completions.  

M{\"o}hring \cite{mohring-96-triangulating-at-free} showed that if a
graph is free of ats, then any minimal chordal supergraph of it is an
interval graph.  The converse was later shown to be true as well
\cite{corneil-97-at-free}.  Since the minimum chordal completion
problem (also known as \emph{minimum fill-in}) is known to be NP-hard
on at-free graphs \cite{arnborg-87-complexity-fill-in}, the minimum
interval completion problem remains NP-hard on at-free graphs.  Other
graph classes on which the minimum interval completion problem remains
NP-hard include chordal graphs
\cite{peng-06-interval-completion-on-choral-graphs}, permutation
graphs \cite{bodlaender-95-widths-on-permutation}, and cocomparability
graphs \cite{habib-94-widths-on-cocomparability}.  On the positive
side, see \cite{kloks-97-at-free-to-interval} for some polynomial
solvable special cases.

\subsection{Graph modification problems and their fixed-parameter tractability}\label{sec:gmp-fpt}
Many classical graph-theoretic problems can be formulated as graph
modification problems to specific graph classes.  For example, Garey
and Johnson \cite[section A1.2]{GJ79} listed 18 NP-complete graph
modification problems (two of which are indeed large collections of
problems; see also
\cite{lewis-80-node-deletion-np,yannakakis-79-connected-maximum-subgraph}).
Graph modification problems are also among the earliest problems whose
parameterized complexity were considered, e.g., Kaplan et
al.~\cite{kaplan-99-chordal-completion} and
Cai~\cite{cai-96-hereditary-graph-modification} devised FPT algorithms
for completion problems to chordal graphs and related graphs.  Indeed,
since the graph modification problems are a natural computational
method for detecting few errors in experimental data, they were an
important motivation behind parameterized computation.  In the special
case when the desired graph class \cG\ can be characterized by a
finite number of forbidden (induced) subgraphs, their fixed-parameter
tractability follows from a basic bounded search tree algorithm
\cite{cai-96-hereditary-graph-modification}.  However, many important
graph classes, e.g., forests, bipartite graphs, and chordal graphs,
have minimal obstructions of arbitrarily large size (cycles, odd
cycles, and holes, respectively).  It is much more challenging to
obtain fixed-parameter tractability results for such classes.

Besides holes, $\cF_I$ has another infinite set of obstructions
(caws), which is far less understood
\cite{corneil-97-at-free,kratsch-06-find-at}.  Since adding or
deleting a single edge is sufficient to fix an arbitrarily large caw,
the modification problems to interval graphs are more complicated than
chordal graphs.  Their fixed-parameterized tractability were
frequently posed as important open problems
\cite{kaplan-99-chordal-completion,downey-fellows-99,bodlaender-95-fpt-computational-biology}.
Only after about two decades were \textsc{interval completion} and
\textsc{interval vertex deletion} shown to be FPT
\cite{villanger-09-interval-completion,cao-14-interval-deletion}.
Both algorithms use a {\em two-phase approach}, where the first phase
breaks all (problem-specifically) {\em small} forbidden induced
subgraphs and the second one takes care of the remaining ones with the
help of combinatorial properties that hold only in graphs without
those small subgraphs.  Nevertheless, neither approach of
\cite{villanger-09-interval-completion,cao-14-interval-deletion}
generalizes to \textsc{interval edge deletion} in a natural way, whose
parameterized complexity remained open to date.  Moreover, both
algorithms of
\cite{villanger-09-interval-completion,cao-14-interval-deletion}
suffer from high time complexity.

In passing let us point out that the vertex deletion version can be
considered as the most robust variant, as it encompasses both edge
modifications in the following sense: if a graph $G$ can be made an
interval graph by $k_-$ edge deletions and $k_+$ edge additions, then
it can also be made an interval graph by at most $k_-+k_+$ vertex
deletions (e.g., one vertex from each added/deleted edge).  In other
words, the graph class $\mbox{interval} + k v$ contains both classes
$\mbox{interval} + k e$ and $\mbox{interval} - k e$.  The similar fact
holds for all hereditary graph classes.  On the other hand,
$\mbox{interval} + k e$ and $\mbox{interval} - k e$ are incomparable
in general, e.g., a $6$-hole is in interval$+ 1 e$ and a $K_{2,3}$ is
in interval$- 1 e$ but not the other way.

\subsection{Efficient detection of (small) forbidden induced
  subgraphs}
\label{sec:hardness-find-fis}
As said, if the object graph class has only a finite number of
forbidden induced subgraphs, then the modification problem is
trivially FPT.  This observation can be extended to a family of
forbidden induced subgraphs that, though infinite, can be detected in
polynomial time and destroyed by a bounded number of ways; the most
remarkable example is chordal completion
\cite{kaplan-99-chordal-completion,cai-96-hereditary-graph-modification}.
For the purpose of contrast, let us call this {\em one-phase
  approach}.  In carrying out the aforementioned two-phase approach,
one usually focuses on the second phase, on the ground that the first
phase {\em seems to be the same} as the one-phase approach.  This
ground is, nevertheless, shaky: more often than not, algorithms based
on the one-phase approach run in linear time, but all previous
algorithms
\cite{villanger-09-interval-completion,villanger-13-pivd,cao-14-interval-deletion}
based on this two-phase approach have high polynomial factors in their
running times, which are mainly determined by the time required to
detect \emph{small} forbidden induced subgraphs in the first phase.
As we will see, the detection of a \emph{small} forbidden induced
subgraph is usually far more demanding than an arbitrary one.

Kratsch et al.~\cite{kratsch-06-certifying-interval-and-permutation}
presented a linear-time algorithm for detecting a hole or an at from a
non-interval graph.  It first calls the hole-detection algorithm of
Tarjan and Yannakakis \cite{tarjan-85-certifyig-chordal-recognition},
which either returns a hole, or reduces to finding an at in a chordal
graph.  The additional chordal condition for the detection of an at is
crucial: we do not know how to find an at in a general graph in linear
time.  The best known recognition algorithm for at-free graphs takes
$O(|G|^{2.82})$ time \cite{kohler-04-recognize-at-free}, and Kratsch
and Spinrad \cite{kratsch-06-find-at} showed that this algorithm can
be used to find an at in the same time if the graph contains one.  A
more important result of \cite{kratsch-06-find-at} is that recognizing
at-free graphs is at least as difficulty as finding a triangle.  The
detection of an at cannot be easier than the recognition of at-free
graphs, and hence a linear-time algorithm for it is very unlikely to
exist.  (See also
\cite{spinrad-03-efficient-graph-representations}.\footnote{The
  detection of triangles is a fundamental computational problem and
  has been extensively studied.  However, the best algorithms are the
  trivial ones, using either enumeration or fast matrix
  multiplication.  Recall that the current fastest algorithm for
  matrix multiplication takes $O(|G|^{\omega})$ time, where
  $\omega<2.3727$ \cite{williams-12-matrix-multiplication}.
  Spinrad listed an $o(|G|^3)$-time combinatorial algorithm for
  detecting triangles as an open problem \cite[Open problem 8.1, page
  101]{spinrad-03-efficient-graph-representations}.  In the same work
  he also conjectured that it is computationally equivalent to
  (0,1)-matrix multiplication verification problem.  Recall that in
  matrix multiplication verification problem, we are given three
  matrices $A$, $B$, and $C$, and asked whether $A\times B = C$ or
  not.  See also \cite{williams-10-subcubic-equivalence}.})  When an
at is detected, the algorithm of Kratsch et
al.~\cite{kratsch-06-certifying-interval-and-permutation} also
provides in the same time a witness for it.  This witness, although
unnecessarily minimal itself, can be used to easily retrieve a minimal
one, i.e., a caw (see also \cite{lindzey-13-find-forbidden-subgraphs}
for another approach).

Obviously, for any hereditary graph class, the detection of a
forbidden induced subgraph is never easier than the recognition of
this graph class.  On the other hand, we have seen that the detection
of a hole, an at with witness, and a subgraph in $\cF_I$ can be done
in the same asymptotic time as the recognition of chordal graphs,
at-free graphs, and interval graphs, respectively.  From these
examples one may surmise that the requirement of explicit evidence
does not seem to pose an extra burden to the recognition algorithms.
This is known to be true for almost all polynomial-recognizable graph
classes with known characterization by forbidden induced subgraphs.

However, it changes drastically when the evidence is further required
to have a small or minimum number of vertices.  The most famous
example should be the detection of cycles: while an arbitrary cycle
can be trivially found in linear time, the detection of a shortest
cycle, which includes the triangle-detection as a special case, is
very unlikely to be done in linear time.  Even finding a short cycle
in linear time seems to be out of the question (see, e.g.,
\cite{itai-78-minimum-circuit}).  Assuming that triangles cannot be
detected in linear time, we can also rule out the possibility of
linear-time detection of a minimum subgraph in $\cF_I$ or a shortest
hole.  Let $G'$ be the graph obtained by subdividing a graph $G$
(i.e., for each edge $uv\in E(G)$, adding a new vertex $x$, connecting
it to both $u$ and $v$, and deleting $u v$), then $G$ contains a
triangle if and only if the minimum subgraph of $G'$ in $\cF_I$ is a
$6$-hole.  Since $G'$ has $|G| + ||G||$ vertices and $2||G||$ edges,
an linear-time algorithm for finding a minimum subgraph in $\cF_I$ can
be used to detect a triangle in linear time.  With a similar
reduction, we can show that a linear-time algorithm for detecting
subgraphs in $\cF_{LI}$---recall that they are small graphs in
$\cF_I$---is unlikely to exist, as it can be used to detect a claw in
linear time, and further to detect a triangle in $O(|G|^2)$ time,
which would have groundbreaking consequence (see \cite[Open problem
8.3, page 103]{spinrad-03-efficient-graph-representations}).  Similar
phenomenon has been observed in detecting minimum Tucker submatrices,
i.e., a minimal matrix that does not have consecutive-ones property
\cite{blin-12-minimum-tucker-submatrices} and shortest even holes
\cite{chudnovsky-05-detect-even-holes}. \footnote{In the published
  version of the paper \cite{chudnovsky-05-detect-even-holes}, the
  algorithm is stated as detecting an arbitrary even hole and it was
  asked as an open problem for an algorithm that finds a shortest one.
  But according to Seymour (private communication), the authors later
  observed that the return of their algorithm has to be the shortest.}

Another crucial step of our algorithm is to find all simplicial
vertices of a graph.  Again, it is unlikely to be done in linear time:
Kratsch and Spinrad \cite{kratsch-06-find-at} showed that counting the
number of simplicial vertices is already at least as hard as detecting
a triangle.  Indeed, there is even no known algorithm that can detect
a single simplicial vertex in linear time.  The only known way of
finding a simplicial vertex is either enumerating all vertices or
using fast matrix multiplication.  Kloks et al.\
\cite{kloks-00-finding-simplicial-vertices} showed that in the same
time one can actually list all simplicial vertices.  This is the best
known in general graphs.  See also \cite[open problems 4.3 and
4.4]{woeginger-08-open-problems}.

\subsection{Main challenges and our techniques}
\label{sec:techniques}
We describe here the main challenges and intuitions behind the
techniques that we use to address them.  They can be roughly put into
two categories: for the linear dependence on the graph size and for
the smaller exponential dependence on the parameter.  Also sketched
here is why known techniques from previous work will not suffice.  We
basically take the two-phase approach, subgraphs in $\cF_{LI}$ first
and then the rest (large ones).  We say that caws and holes in
$\cF_{LI}$ are \emph{small} and \emph{short} respectively; other caws,
namely, $\dag$s and $\ddag$s, are \emph{large}, and holes of length
six or more are \emph{long}.  It is worth noting that the thresholds
are chosen by structural properties instead of sizes.

\paragraph{Linear dependence on the graph size.}  The biggest
challenge is surely the efficient detection of a subgraph in
$\cF_{LI}$, or more specifically, the detection of a short hole or
small caw.  As explained above, we do not expect a linear-time
algorithm for this task.  Instead, we relax it to the following:
either find a subgraph in $\cF_{LI}$ or build a structural
decomposition (Thm.~\ref{thm:decompose-lig}) that is sufficient for
the second phase.  For the disposal of large forbidden induced
subgraphs in the second phase, the algorithm of
\cite{cao-14-interval-deletion} breaks long holes first, and then
large caws in a chordal graph.  There is no clear way to implement
this tactic in linear time: the disposal of holes introduces a factor
$|G|$, while finding a caw gives another factor $||G||$.  Neither of
them seems to be improvable to $o(|G|)$.  We are thus forced to
consider an alternative approach, i.e., we may have to deal with large
caws in a non-chordal graph.  Hence completely new techniques are
required.  Overcoming these two difficulties enables us to deliver
linear-time algorithms.

\paragraph{Exponential dependence on the parameter.}  To claim the
fixed-parameter tractability of \textsc{interval edge deletion} and
better dependence on $k$ for \textsc{interval completion}, we still
have some major concerns to address.  Since fixing holes by edge
additions is well understood, the algorithm of Heggernes et
al.~\cite{villanger-09-interval-completion} for \textsc{interval
  completion} assumes the input graph to be chordal, and focuses on
the disposal of caws.  However, holes pose a nontrivial challenge to
us in the deletion problems, and thus the techniques of
\cite{villanger-09-interval-completion} do not apply.  On the other
hand, the algorithm in \cite{cao-14-interval-deletion} heavily relies
on the fact that the deletion of vertices leaves an induced subgraph.
Essentially, it looks for a minimum set of vertices intersecting all
subgraphs in $\cF_I$, so called \emph{hitting set}.  Deleting any
vertex from a subgraph in $\cF_I$ breaks this subgraph once and for
all, but adding/deleting an edge to break an erstwhile subgraph in
$\cF_I$ might introduce new one(s).  As a result, the ``hitting set''
observation does not apply to edge modifications problems.
\begin{itemize}
\item The first difficulty that presented itself at this point is on
  the preservation of modules, which is trivial for vertex deletions,
  but not true for edge modifications in general.  Simple examples
  tell us that not all \misb s and \misp s preserve all modules.  What
  we do here is to identify appropriate technical conditions, under
  which there exists some \misb\ or \misp\ that preserves all modules,
  and make them satisfied at the onset of the second phase.
\item The other difficulty is why it suffices to consider a bounded
  number of modifications to fix a special caw, for which we need to
  argue that most possible modifications are local to it and can be
  decided locally.  In \cite{cao-14-interval-deletion}, we studied in
  a chordal graph with no small caws, how a caw interacts with others;
  similar arguments are obviously inapplicable to edge variations.
  Even for vertex deletions, as we had make a compromise to work on
  non-chordal graph, we need a new argument that does not assume the
  chordality.
\end{itemize}

 \section{Outline}\label{sec:outline}
The purpose of this section is to describe the main steps of our
algorithm at a high level.  A quotient graph $Q$ is isomorphic to an
induced subgraph of $G$, e.g., we can pick an arbitrary vertex from
each module $M$ of the module partition and take the induced subgraph.
Therefore, whenever a forbidden induced subgraph of $Q$ is detected,
it can be translated into a forbidden induced subgraph of $G$
directly.

\subsection{Maximal strong modules}
Behind Prop.~\ref{lem:lig-and-modules} and
Thms.~\ref{thm:modules-induced-subgraph}-\ref{thm:modules-supergraph}
is a very simple observation: $4$-holes are the only non-prime graph
in $\cF_{I}$ and $\cF_{LI}$.  Note that for any induced subgraph $X$
intersecting a module $M$, their intersection $V(X)\cap M$ is a
(possibly trivial) module of $X$.  Therefore, if $X$ is prime and
$V(X)\not\subseteq M$, then it intersects $M$ by at most one vertex.
Fix any module partition and accordingly a quotient graph $Q$.  If
$X$ is in $\cF_{I}$ or $\cF_{LI}$ but not a $4$-hole, then $X$ either
contains at most one vertex from each module, thus isomorphic to an
induced subgraph of $Q$, or is fully contained in some module from the
given partition.  On the other hand, a $4$-hole may contain precisely
two vertices of a module $M$, and then the other two vertices must be
neighbors of this module.  We have two cases: the other two vertices
belong to the same module $M'$ that is adjacent to $M$, or they belong
to two different (nonadjacent) modules.  In other words, either two
non-clique modules are adjacent, or a non-clique module is not
simplicial in $Q$.  This concludes Prop.~\ref{lem:lig-and-modules}.

However, Prop.~\ref{lem:lig-and-modules} has no direct algorithmic
use: a graph might have an exponential number of modules and quotient
graphs.  A module $M$ is \emph{strong} if for every other module $M'$
that intersects $M$, one of $M$ and $M'$ is a proper subset of the
other.  All trivial modules are strong.  We say that a strong module
$M$, different from $V(G)$, is \emph{maximal} if the only strong
module properly containing $M$ is $V(G)$.  Using definition it is easy
to verify that maximal strong modules of $G$ are disjoint and every
vertex $v$ of $G$ appears in one of them.  Therefore, they partition
$V(G)$, and define a special {quotient graph} $Q$.  If $G$ is not
connected, then each maximal strong module is a component of it, and
$Q$ has no edge.  Recall that the complement graph of $G$ is defined
on the same vertex set $V(G)$, where a pair of vertices $u$ and $v$ is
adjacent if and only if $u\not\sim v$ in $G$.  Thus, the complement of
$G$ has the same set of modules as $G$; in particular, if it is not
connected, then its components are the maximal strong modules of $G$,
and hence $Q$ is complete.  If both the graph $G$ and its complement
are connected, then $Q$ must be prime
\cite{gallai-67-transitive-orientation}.  Note that this is the only
case that a quotient graph can be prime; in other words, a prime
quotient graph must be defined by maximal strong modules.  

Hereafter, the quotient graph $Q$ is always decided by maximal strong
modules of $G$; when $G$ itself is prime, they are isomorphic.  There
are at most $|G|$ maximal strong modules, which can be found in linear
time \cite{habib-10-survey-md}.  Therefore, the following corollary of
Prop.~\ref{lem:lig-and-modules} will be more useful for algorithmic
purpose.  Recall that a vertex $v$ is \emph{universal} in $G$ if $N[v]
= V(G)$.  It is easy to verify that a prime graph is necessarily
connected, and its simplicial vertices are pairwise nonadjacent.
\begin{corollary}\label{lem:lig-and-max-modules}
  Let \cG\ be the class of interval graphs or the class of \lig s.  A
  graph $G$ having no universal vertices is in \cG\ if and only if
  \begin{enumerate}[(1)]
  \item the quotient graph $Q$ decided by maximal strong modules of
    $G$ is in \cG\ but not a clique;
  \item $G[M]\in \cG$ for every module $M$ represented by a simplicial
    vertex of $Q$; and
  \item $G[M]$ is a clique for every module $M$ represented by a
    non-simplicial vertex of $Q$.
  \end{enumerate}
\end{corollary}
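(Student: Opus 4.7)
The plan is to derive Corollary~\ref{lem:lig-and-max-modules} from Proposition~\ref{lem:lig-and-modules} applied to the canonical quotient graph $Q$ determined by the maximal strong modules of $G$, using two structural facts recalled in the preceding text: (i) by Gallai's trichotomy, $Q$ is prime exactly when both $G$ and its complement are connected, is edgeless when $G$ is disconnected, and is a clique when the complement of $G$ is disconnected; and (ii) in a prime graph the simplicial vertices are pairwise non-adjacent.

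For the forward direction I would assume $G \in \cG$ has no universal vertex. Since $Q$ is isomorphic to the induced subgraph obtained by choosing one representative from each maximal strong module and $\cG$ is hereditary, $Q \in \cG$ and $G[M] \in \cG$ for every module $M$; this establishes condition~(2) and the ``$Q \in \cG$'' half of condition~(1). Condition~(3) is a verbatim restatement of item~(1) of Proposition~\ref{lem:lig-and-modules}. It remains to show $Q$ is not a clique, and this is where the hypothesis of no universal vertex is used: were $Q$ a clique on $p\ge 2$ vertices, $G$ would be the join of $G[M_1],\ldots,G[M_p]$, and the absence of universal vertices would force each $G[M_i]$ to contain a non-edge; picking such non-edges from $M_1$ and $M_2$ exhibits an induced $4$-hole, a forbidden subgraph of both $\cF_I$ and $\cF_{LI}$, contradicting $G \in \cG$.

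For the backward direction I would apply Proposition~\ref{lem:lig-and-modules} to the same $Q$. Condition~(1) yields $Q \in \cG$, and condition~(3) yields item~(1) of the proposition. The delicate step is item~(2): every edge of $Q$ must have at least one clique-module endpoint. I would split on the primality of $Q$. If $Q$ is prime, two non-clique---hence, by condition~(3), simplicial---vertices of $Q$ cannot be adjacent, by fact~(ii). If $Q$ is not prime, condition~(1) already excludes the clique case, so by fact~(i) $G$ must be disconnected and $Q$ edgeless, making item~(2) vacuously true. Finally, $G[M] \in \cG$ for every module $M$---by condition~(2) for simplicial modules and trivially for the cliques supplied by condition~(3)---so Proposition~\ref{lem:lig-and-modules} concludes $G \in \cG$.

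The main obstacle I expect is the case split on the primality of $Q$ in the backward direction: in the non-prime case the ``simplicial vertices are pairwise nonadjacent'' property does \emph{not} apply, and one must notice that condition~(1)'s exclusion of cliques forces the non-prime case to be the edgeless one, so the potentially problematic edges do not exist at all. Once this trichotomy is in place, the rest is a direct translation between the corollary's conditions and those of Proposition~\ref{lem:lig-and-modules}.
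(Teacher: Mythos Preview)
Your proposal is correct and follows exactly the route the paper intends: the corollary is presented there without a separate proof, as a direct consequence of Proposition~\ref{lem:lig-and-modules} together with Gallai's trichotomy for the quotient by maximal strong modules and the observation that simplicial vertices of a prime graph are pairwise nonadjacent---precisely the two structural facts you invoke. Your explicit inclusion of the requirement $G[M]\in\cG$ in the backward direction (needed to rule out forbidden subgraphs lying entirely inside a module) is a welcome clarification, since it is implicit in the paper's discussion preceding Proposition~\ref{lem:lig-and-modules} but not spelled out in its statement.
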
 

Every parameterized modification problem has an equivalent
optimization version, which asks for a minimum set of modifications;
the resulting interval graph is called an optimum solution to this
problem.  Clearly, a graph $G$ is in the class interval$+ k v$,
interval$+ k e$, or interval$- k e$ if and only if the minimum number
of vertex deletions, edge deletions, or edge additions respectively
that transform $G$ into an interval graph is no more than $k$.
Although the recognition/modification problems we are working on do
not explicitly ask for an optimum solution, an optimum one will serve
our purpose.  We have stated in
Thms.~\ref{thm:modules-induced-subgraph}-\ref{thm:modules-supergraph}
that there are always optimum solutions well aligned with modules of
the input graph.  Again, for algorithmic purpose, the following
variations formulated on maximal strong modules are more convenient
for our divide-and-conquer approach.  As we will see shortly, they are
indeed equivalent to
Thms.~\ref{thm:modules-induced-subgraph}-\ref{thm:modules-supergraph}
respectively.
\begin{theorem}\label{thm:modules-induced-subgraph+}
  Let $G$ be a graph of which every $4$-hole is contained in some
  maximal strong module, and let $G[U]$ be a \miib\ of $G$.  For any
  maximal strong module $M$ of $G$ intersecting $U$, the set $M\cap U$
  is a module of $G[U]$, and replacing $G[M\cap U]$ by any \miib\ of
  $G[M]$ in $G[U]$ gives a \miib\ of $G$.
\end{theorem}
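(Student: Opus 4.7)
The plan is to establish the theorem by an inductive-style reduction that exploits the hypothesis on $4$-holes, along the same lines as Theorem~\ref{thm:modules-induced-subgraph} but adapted to maximal strong modules. Part (i)---that $M\cap U$ is a module of $G[U]$---is essentially free and I would dispatch it first: for any $u,v\in M\cap U$ and any $x\in U\setminus M$, the adjacencies $xu$ and $xv$ agree in $G$ because $M$ is a module of $G$, and therefore they also agree in the induced subgraph $G[U]$.

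For part (ii), let $W\subseteq M$ be such that $G[W]$ is a \miib\ of $G[M]$, and set $U':=(U\setminus M)\cup W$. Two things must be checked: that $|U'|\ge|U|$, and that $G[U']$ is an interval graph. The size inequality is immediate from the maximality of $W$, since $G[U\cap M]$ is an induced interval subgraph of $G[M]$ and hence $|W|\ge|U\cap M|$; combined with $U'\setminus W = U\setminus M$, this gives $|U'|\ge|U|$. Once $G[U']$ is known to be interval, the maximality of $U$ forces $|U'|=|U|$, which makes $G[U']$ a \miib\ of $G$.

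The main obstacle, as expected, is proving $G[U']$ interval, for which I would argue by contradiction: suppose some $X\in\cF_I$ is induced in $G[U']$, and split on the isomorphism type of $X$. If $X$ is prime (which covers every element of $\cF_I$ except $4$-holes), then $|V(X)\cap M|\le 1$; if this intersection is empty then $V(X)\subseteq U\setminus M\subseteq U$, contradicting that $G[U]$ is interval, while if $V(X)\cap M=\{x\}\subseteq W$ then I would swap $x$ for any $x'\in U\cap M$ (nonempty because $M$ intersects $U$), using that $M$ is a module of $G$ and that the other vertices of $X$ lie outside $M$ to obtain an isomorphic copy of $X$ inside $U$, again a contradiction. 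This prime-case swap is the main delicate verification, but it is ultimately a routine module computation. The $4$-hole case is exactly where the extra hypothesis is used: such an $X$ sits inside a single maximal strong module $M'$ by assumption; if $M'=M$ then $V(X)\subseteq W$ contradicts $G[W]$ being interval, and if $M'\ne M$ then $V(X)\cap M=\emptyset$, so $V(X)\subseteq U\setminus M\subseteq U$, contradicting once more that $G[U]$ is interval.

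I would close by noting the equivalence with Theorem~\ref{thm:modules-induced-subgraph} promised in the outline: inductively applying the present statement down the modular decomposition recovers the general arbitrary-module version, because inside each maximal strong module the surrounding $4$-hole localization is automatically inherited by the subproblem.
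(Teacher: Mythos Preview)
Your proof is correct and follows essentially the same approach as the paper: the key observation (stated explicitly in Section~2.1) is that $4$-holes are the only non-prime members of $\cF_I$, so any prime forbidden subgraph meets a module in at most one vertex and can be ``swapped'' back into $U$, while the $4$-hole case is handled precisely by the hypothesis together with the fact that distinct maximal strong modules are disjoint. Your closing remark on recovering Theorem~\ref{thm:modules-induced-subgraph} via the modular decomposition is in the right spirit but a bit loose; the paper treats the two theorems as essentially the same argument specialized to the two hypotheses, rather than deriving one from the other by induction.
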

\begin{theorem}\label{thm:modules-subgraph+} 
  Let $G$ be a graph of which every $4$-hole is contained in some
  maximal strong module.  There exists a \misb\ $\underline G$ of $G$
  such that every maximal strong module $M$ of $G$ is a module of
  $\underline G$, and replacing $\underline G[M]$ by any \misb\ of
  $G[M]$ in $\underline G$ gives a \misb\ of $G$.
\end{theorem}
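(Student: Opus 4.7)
My plan is to deduce Theorem \ref{thm:modules-subgraph+} from Theorem \ref{thm:modules-subgraph} via an auxiliary $4$-hole-free graph. I would let $\{M_1,\ldots,M_p\}$ be the partition of $V(G)$ into maximal strong modules and, for each $i$, fix an arbitrary \misb\ $H_i$ of $G[M_i]$. Define $\tilde G$ on the vertex set $V(G)$ by using the edges of $H_i$ inside each $M_i$ and retaining every between-module edge of $G$. A $4$-hole of $\tilde G$ lying inside some $M_i$ would be a $4$-hole of the interval (hence chordal) graph $H_i$---impossible---while a $4$-hole crossing several modules would, with between-module adjacencies unchanged from $G$, also be a $4$-hole of $G$ spanning multiple maximal strong modules, contradicting the hypothesis. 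Therefore $\tilde G$ is $4$-hole-free, and each $M_i$ remains a module of $\tilde G$ since its external neighborhood is inherited from $G$.

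Theorem \ref{thm:modules-subgraph} then applies to $\tilde G$ and delivers a \misb\ $\underline{\tilde G}$ in which every $M_i$ is preserved as a module and $\underline{\tilde G}[M_i]$ can be swapped for any \misb\ of $\tilde G[M_i]=H_i$. Because $H_i$ is already interval, $\underline{\tilde G}[M_i]=H_i$. Setting $\underline G:=\underline{\tilde G}$, the modularity clause of Theorem \ref{thm:modules-subgraph+} is immediate. The replacement clause then follows from Proposition \ref{lem:lig-and-modules}: substituting $H_i$ by any other \misb\ $H_i'$ of $G[M_i]$ preserves the external edges, leaves the total edge count unchanged (since $\|H_i'\|=\|H_i\|$), and does not alter whether $M_i$ is a clique module; the resulting graph continues to satisfy the two conditions of Proposition \ref{lem:lig-and-modules} and is therefore interval.

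The harder step is to confirm that $\underline G$ is a \misb\ of $G$ itself, not merely of $\tilde G\subseteq G$. My plan here is first to derive a structural consequence of the hypothesis: in the quotient $Q$ defined by the maximal strong modules, every non-clique $M_i$ is simplicial with only clique-module neighbors. Otherwise, two non-adjacent vertices of $M_i$ together with two vertices drawn from either a non-clique $Q$-neighbor or from two non-adjacent $Q$-neighbors of $M_i$ would induce a $4$-hole of $G$ straddling several modules, which the hypothesis forbids. Combined with the bounds $\|S[M_i]\|\le\|H_i\|$ for an arbitrary interval spanning subgraph $S$ of $G$, this structural rigidity lets one dominate $\|S\|$ by $\|\underline G\|=\sum_i\|H_i\|+\|\underline Q\|_w$, where $\|\underline Q\|_w$ measures the cross-module edges weighted by module sizes. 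The main obstacle is making the cross-module bound rigorous when $M_i$ is not a priori a module of $S$: one must symmetrize $S$ along the modular partition without losing edges or intervality, and it is exactly the localization of $4$-holes to maximal strong modules that powers this symmetrization.
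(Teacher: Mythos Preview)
Your reduction to the $4$-hole-free graph $\tilde G$ is natural, and the verification that $\tilde G$ is $4$-hole-free is essentially right (one small patch: a $4$-hole of $\tilde G$ using two vertices $a,c\in M_i$ that are nonadjacent in $H_i$ but adjacent in $G$ is not literally a $4$-hole of $G$; you must swap $a,c$ for a pair nonadjacent in $G[M_i]$, which exists precisely because $H_i$ non-clique forces $G[M_i]$ non-clique). The replacement clause via Proposition~\ref{lem:lig-and-modules} is also fine once the harder part is in place.

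The genuine gap is the one you flag: proving that $\underline{\tilde G}$ is a \misb\ of $G$ rather than only of $\tilde G\subsetneq G$. Your plan to ``symmetrize $S$ along the modular partition without losing edges or intervality'' is exactly what is required, but it cannot be carried out using only the \emph{statement} of Theorem~\ref{thm:modules-subgraph}: that theorem needs a $4$-hole-free host graph, and $G$ is not $4$-hole-free. What actually makes the symmetrization go through is the interval-model projection argument from the \emph{proof} of Theorem~\ref{thm:modules-subgraph}, whose only real use of $4$-hole-freeness is the consequence ``$N_G(M)$ is a clique whenever $G[M]$ is not.'' Once you invoke that argument for each non-clique $M_i$ (your structural observation supplies the clique-neighborhood condition), you are reproving the core lemma, not deriving Theorem~\ref{thm:modules-subgraph+} from Theorem~\ref{thm:modules-subgraph}. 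So the detour through $\tilde G$ is correct in the end but buys nothing.

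The paper's route is exactly this direct one: observe that the hypothesis of Theorem~\ref{thm:modules-subgraph+} forces every non-clique maximal strong module $M_i$ to have $N_G(M_i)$ a clique, and then run the projection argument on $G$ itself, one $M_i$ at a time. The iteration is safe because clique modules of $G$ remain clique modules in every \misb\ (the clique-module lemma), and two non-clique maximal strong modules are never adjacent under the hypothesis, so processing one cannot disturb the modularity already secured for another.
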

  We may assume without loss of
generality that the input graph contains no universal vertices.
According to Cor.~\ref{lem:lig-and-max-modules}, the condition of
Thms.~\ref{thm:modules-induced-subgraph+} and
\ref{thm:modules-subgraph+} is satisfied if
\begin{inparaenum}[(\itshape i\upshape)]
\item $Q$ is not a clique,
\item $Q$ contains no $4$-hole, and
\item every non-simplicial vertex of $Q$ represents a clique module of
  $G$.
\end{inparaenum}
In this paper cliques are required to be nonempty.  It is easy to
verify that the \miib\ or \misb\ of a graph is clique if and only if
it is a clique; thus, under the condition of
Thms.~\ref{thm:modules-induced-subgraph+} and
\ref{thm:modules-subgraph+}, a maximal strong module $M$ is a clique
of the object interval graph if and only if it is a clique of $G$.

\begin{theorem}\label{thm:modules-supergraph+}
  There is a \misp\ $\widehat G$ of $G$ such that every maximal strong
  module $M$ of $G$ is a module of $\widehat G$, and if $\widehat
  G[M]$ is not a clique, then replacing $\widehat G[M]$ by any \misp\
  of $G[M]$ in $\widehat G$ gives a \misb\ of $G$.
\end{theorem}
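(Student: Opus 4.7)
The plan is to derive Theorem \ref{thm:modules-supergraph+} as a specialization of Theorem \ref{thm:modules-supergraph}, and then verify the converse implication to justify the author's claim (made in the paragraph after Theorem \ref{thm:modules-subgraph+}) that the two statements are equivalent.

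For the forward direction, I would simply apply Theorem \ref{thm:modules-supergraph} to $G$ to obtain a \misp\ $\widehat G$ such that every module $M$ of $G$ is a module of $\widehat G$, and the local-replacement property holds whenever $\widehat G[M]$ is not a clique. Since every maximal strong module is, by definition, a module of $G$, the restricted conclusion of Theorem \ref{thm:modules-supergraph+} follows immediately.

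For the converse (needed only to justify the equivalence), I would proceed by induction on $|G|$. Apply Theorem \ref{thm:modules-supergraph+} to obtain a \misp\ $\widehat G$ preserving the maximal strong modules of $G$; then for each maximal strong module $M$ with $\widehat G[M]$ not already a clique, the induction hypothesis applied to $G[M]$ yields a \misp\ of $G[M]$ that respects every module of $G[M]$, and by the local-replacement guarantee it may be substituted into $\widehat G$. A general module $N$ of $G$ is either contained in some maximal strong module (and is then handled by the recursive call) or it consists of a union of maximal strong modules at a degenerate (series or parallel) node of the modular decomposition tree; in the latter case each such child is already preserved by $\widehat G$, so their union $N$ is a module of $\widehat G$ as well.

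The main obstacle in the entire argument is of course the content of Theorem \ref{thm:modules-supergraph} itself, which we are invoking as a black box. Were I to attack Theorem \ref{thm:modules-supergraph+} directly, the key challenge is that for edge additions (in contrast to vertex deletions) an arbitrary \misp\ need not respect modularity: two vertices $u,v \in M$ may have distinct external neighborhoods in $\widehat G$. The plan would be to take any \misp\ $\widehat G^\ast$ and symmetrize within each maximal strong module $M$ by reassigning the intervals of $M$ around the common segment induced by $N_G(M)$, and then argue that this symmetrization neither increases the total fill-in nor destroys the interval property. The delicate point is the ``not a clique'' clause: one has to verify that if $\widehat G^\ast[M]$ already forms a clique then no internal structure of $G[M]$ is recoverable from $\widehat G$, whereas if $\widehat G^\ast[M]$ is not a clique, the symmetrization can be arranged so that the interior of $M$ is independent of the exterior, enabling arbitrary \misp s of $G[M]$ to be plugged in.
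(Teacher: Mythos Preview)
Your proposal is correct and aligns with the paper's own treatment: the paper explicitly states that Theorems~\ref{thm:modules-induced-subgraph+}--\ref{thm:modules-supergraph+} are equivalent to Theorems~\ref{thm:modules-induced-subgraph}--\ref{thm:modules-supergraph}, and proves the latter in Section~\ref{sec:module-preservation} via the interval-representation lemmas (Theorem~\ref{thm:preserving-modules} on preservation of connected modules and Theorem~\ref{thm:separable-modules} on the replacement property), which is precisely the ``symmetrize intervals of $M$ around the common segment induced by $N_G(M)$'' idea you sketch in your final paragraph.

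One small remark on your converse sketch: the step ``each such child is already preserved by $\widehat G$, so their union $N$ is a module of $\widehat G$'' is not automatic. Knowing that each $M_i$ is a module of $\widehat G$ only tells you the quotient $\widehat Q$ on the partition $\{M_i\}$ is well-defined; you still need that the relevant subset of vertices of $\widehat Q$ forms a module of $\widehat Q$. This does go through case-by-case (for $Q$ complete, $\widehat Q$ is forced to be complete; for $Q$ edgeless one can choose $\widehat G$ to be a disjoint union of minimum interval supergraphs of the components, so $\widehat Q$ stays edgeless; for $Q$ prime the only such unions are trivial), but it is worth spelling out. You also do not address part~(ii) (the replacement property) for arbitrary modules in the converse, though this follows by the same inductive descent once part~(i) is in hand.
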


\subsection{Characterization and decomposition of \lig s}
Prop.~\ref{lem:lig-and-modules} reduces the main task of the first
phase, the detection of a subgraph of $G$ in $\cF_{LI}$, to two
simpler tasks, namely, \emph{finding a subgraph of $Q$ in $\cF_{LI}$}
and \emph{finding all simplicial vertices of $Q$ when it is a \lig}.
Both tasks are trivial when $Q$ is an interval graph (including
cliques and edgeless graphs), and hence we concentrate on prime
non-interval graphs.  If such a graph contains no subgraph in
$\cF_{LI}$, i.e., being a \lig, then it must contain some large caw or
some long hole.  Therefore, we start from characterizing large caws
and long holes in prime locally interval graphs.  A glance at
Fig.~\ref{fig:at} tells us that each caw contains precisely three
simplicial vertices, which form the unique at of this caw; they are
called the \emph{terminals} of this caw.\footnote{One may have noticed
  that a hole of six or more vertices also witnesses an at (e.g., any
  three pairwise nonadjacent vertices from it) and is minimal.  It,
  however, behaves quite differently from a caw, e.g., none of its
  vertices is simplicial, and it has more than one at---indeed, every
  vertex is in some at.}  Each large caw (the second row of
Fig.~\ref{fig:at}) contains a unique terminal $s$, called the
\emph{shallow terminal}, such that the deletion of $N[s]$ from this
caw leaves an induced path.

\begin{theorem}\label{thm:1}
  Let $W$ be a large caw of a prime graph $G$.  We can in $O(||G||)$
  time find a subgraph of $G$ in $\cF_{LI}$ if the shallow terminal of
  $W$ is non-simplicial in $G$.
\end{theorem}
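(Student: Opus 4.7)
The plan is to exploit the mismatch between the simplicial status of $s$ inside $W$ and in $G$. Within $W$ the shallow terminal $s$ is simplicial, so $N_W(s)$ is a clique: either the singleton $\{c\}$ when $W=\dag_d$, or the edge $\{c_1,c_2\}$ when $W=\ddag_d$. Since $s$ is non-simplicial in $G$, the set $N_G(s)$ is not a clique, and one can locate two non-adjacent vertices $x,y\in N_G(s)$ in $O(d(s)+||G[N_G(s)]||)=O(||G||)$ time by scanning the adjacency list of $s$ and looking up candidate pairs in a hash table of $N_G(s)$. As $N_W(s)$ is a clique, at least one of $x,y$ lies outside $V(W)$; rename so that $y\notin V(W)$.

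Next I would compute $N(y)\cap V(W)$ in $O(d(y)+|W|)=O(||G||)$ time and run a case analysis driven by $y$'s adjacency pattern to a bounded set of distinguished vertices of $W$: the other two terminals $l,r$, the center(s), and a constant number of representatives of the base $B$ and of the spine $W\setminus N_W[s]$. In each case the goal is to exhibit a constant-size induced subgraph of $G$ belonging to $\cF_{LI}$, built from $s$, $y$, and at most five vertices of $W$; the candidates will be, typically, a $4$- or $5$-hole, a long claw, a net, or a (rising) sun. Primality of $G$ is used to guarantee that the distinguished vertices we invoke are actually distinct and that the edges/non-edges we rely on are as expected.

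The main obstacle will be making the case analysis exhaustive and \emph{finite}: every possible attachment of $y$ to $W$ must certify some member of $\cF_{LI}$, not merely a larger caw or a longer hole. To keep the analysis finite I would establish a localization statement of the following flavor: walking along the spine of $W$ starting near $s$, within a constant number of steps the adjacency pattern of $y$ must either \emph{break}---directly yielding a short hole or a long claw with $s$ and two or three spine vertices---or become \emph{regular}, in which case, together with $s$ and one vertex of the base $B$, it completes a net, a sun, or a short hole. Once such a localization is in place, the algorithm enumerates a constant number of candidate configurations and tests each in constant time, comfortably fitting within the $O(||G||)$ budget.
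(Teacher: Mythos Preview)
Your opening move---pick two nonadjacent neighbours $x,y$ of $s$ and analyse one of them against $W$---is natural, but it does not close. The problematic configuration is when $y\sim c_1$, $y\sim c_2$, and the neighbours of $y$ on $B^+$ form a single long consecutive block $b_p\cdots b_q$ with $q-p\ge 2$. Then $y$ simply plays the role of a new center: $(s:y,y:b_{p-1},b_p\cdots b_q,b_{q+1})$ is another \emph{large} caw, and no member of $\cF_{LI}$ materialises from $s$, $y$, and a bounded window of $W$. Your ``localization statement'' asserts that the pattern must break or become regular within $O(1)$ steps, but this case shows it need not; you only trade one large caw for another. Bringing $x$ back in does not automatically help either: if $x$ is also adjacent to both centers and its block on $B$ is disjoint from $y$'s, the obvious cycle through $s,x,B,y$ is a hole of length $\ge 6$, again not in $\cF_{LI}$.

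You also misplace the use of primality. It is not there to keep distinguished vertices of $W$ distinct---they already are. In the paper's argument primality does real work \emph{after} all of $N(s)$ has been processed: one iteratively shrinks $W$ so that every neighbour of $s$ touching the current $B^+$ is in fact a common neighbour of it, collecting these into a clique $C\supseteq\{c_1,c_2\}$ (your pair $x,y$ alone cannot manufacture this). If the component $M$ of $s$ in $G-C$ were completely adjacent to $C$, it would be a nontrivial module; primality plus non-simpliciality of $s$ forbids this, so a BFS from $s$ in $G-C$ reaches a vertex that is either adjacent to $B^+$ or misses some $y\in C$, and \emph{that} vertex (together with its BFS predecessor, which behaves like a fresh shallow terminal) yields the bounded-size witness. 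The step outside $N[s]$ is essential; a proof that stays inside $N(s)$ cannot succeed.
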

If a prime \lig\ $G$ is chordal, then by Thm.~\ref{thm:1}, every caw
contains a simplicial vertex (its shallow terminal), and thus $G -
SI(G)$ must be an interval graph.  In a chordal graph, $SI(G)$ can be
easily found, and then a caterpillar decomposition for $G$ can be
obtained by adding $SI(G)$ to a clique path decomposition for $G -
SI(G)$ (Section~\ref{sec:olive-ring-decomposition}).  This settles the
chordal case of Thm.~\ref{thm:decompose-lig}; we may hence assume that
$G$ is not chordal and has a long hole $H$.
\begin{theorem}\label{thm:2}
  Let $H$ be a hole of a prime graph $G$.  We can in $O(||G||)$ time
  find a subgraph of $G$ in $\cF_{LI}$ if there exists a vertex $v$
  satisfying one of the following:
  \begin{inparaenum}[(1)]
  \item the neighbors of $v$ in $H$ are not consecutive;
  \item $v$ is adjacent to $|H| - 2$ or more vertices in $H$; and
  \item $v$ is non-simplicial and nonadjacent to $H$.
  \end{inparaenum}
\end{theorem}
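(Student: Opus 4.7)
The plan is to handle each of the three conditions constructively, in each case producing an explicit subgraph of $G$ in $\cF_{LI}$ together with a linear-time routine that locates it. The common template is to combine $v$ with a constant-size portion of $H$ (and, when necessary, one or two extra vertices reached by a short BFS from $v$), then verify by a handful of edge checks that the resulting induced subgraph matches one of the seven graphs in $\cF_{LI}$. Since each verification uses only $O(1)$ adjacency lookups, the whole procedure runs in $O(||G||)$ time given the adjacency lists of $G$ and the hole $H$.

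For condition~(1), I focus on a shortest \emph{gap} of consecutive $v$-non-neighbors flanked by two $v$-neighbors $a,b$ on $H$. If the gap has length $g=1$, its single vertex $c$ together with $a, b, v$ induces a $4$-hole (the chord $ab$ is excluded because $a, b$ are at $H$-distance $2$); if $g=2$, the analogous five vertices induce a $5$-hole. If $g\ge 3$, the neighbors of $v$ necessarily split into at least two arcs on $H$, so either all arcs are singletons and I construct a seven-vertex long claw from $v$, $a, b$, the next gap vertex on each side, and two vertices from a second arc; or some arc has size at least two, giving an induced triangle $\{v, a, a'\}$ which I complete to a net by attaching the two $H$-neighbors of $a, a'$ outside the triangle as pendants for $a, a'$ and any vertex of a second arc as the pendant for $v$.

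For condition~(2), if $|H|\le 5$ we return $H$ itself, which already lies in $\cF_{LI}$. Otherwise $|H|\ge 6$ and $v$ misses at most two vertices of $H$; a short case analysis on how these missing vertices sit on $H$ produces either a short hole involving $v$ and four or five carefully chosen $H$-vertices, or a small caw on $H\cup\{v,w\}$ for one further vertex $w$ of $G$ furnished by primeness. For condition~(3), non-simpliciality yields non-adjacent neighbors $u_1, u_2$ of $v$ which lie outside $V(H)$, and a shortest path $P$ from $\{u_1, u_2\}$ to $V(H)$ combined with the induced $P_3 = u_1 v u_2$ and two arcs of $H$ meeting $P$ yields a chordal subgraph in which $v$ and two well-separated vertices of $H$ form an asteroidal triple; this is a caw, and if it is not already in $\cF_{LI}$ then its shallow terminal is $v$ (non-simplicial by hypothesis), so Theorem~\ref{thm:1} extracts a member of $\cF_{LI}$ in linear time.

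The step I expect to be the main obstacle is condition~(1) with $g\ge 3$, which must branch between the ``all singletons'' long-claw subcase and the ``some arc of size at least two'' net subcase; carrying out both branches uniformly while verifying the dozen or so required non-edges in $O(||G||)$ time is where the case analysis needs the most care.
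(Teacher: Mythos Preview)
Your treatment of conditions~(1) and~(2) follows the paper's case-analysis strategy in spirit, though the case splits are organized differently. Two points need correction.

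First, the subcase where $v$ is adjacent to \emph{every} vertex of $H$ (which you implicitly place under condition~(2)) cannot be settled with a single auxiliary vertex $w$. The paper handles it by letting $C$ be the set of common neighbours of $V(H)$, doing BFS from $H$ inside $G-C$ until it meets a vertex $u$ nonadjacent to some $x\in C$ (primeness guarantees this, since otherwise the component of $H$ in $G-C$ would be a nontrivial module), and then reading off a whipping top, sun, net, long claw, or $4$-hole from $x$, the first one or two vertices of the BFS path, and a handful of $H$-vertices. Up to three vertices outside $H$ can be required; ``one further vertex $w$'' is not enough.

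Second, and more substantively, your plan for condition~(3) does not go through. A single shortest path $P$ from $\{u_1,u_2\}$ to $V(H)$ connects only one of $u_1,u_2$ to $H$; the other neighbour plays no role, so there is no reason the resulting subgraph contains an asteroidal triple on $v$, let alone that it is chordal, let alone that it is a caw whose \emph{shallow} terminal is $v$. Even if an at on $v$ and two $H$-vertices emerges, the minimal caw it sits inside need not have $v$ as its shallow terminal, so your appeal to Theorem~\ref{thm:1} is unjustified. The paper's argument is both simpler and avoids $u_2$ entirely: by BFS (possibly replacing $v$ with a closer non-simplicial vertex along the search path) one obtains a single neighbour $u$ of $v$ with $u\sim V(H)$, and then branches on $|\hv{u}|$. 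If $|\hv{u}|=1$ with $\hv{u}=\{h_i\}$, the seven vertices $\{v,u,h_{i-2},\dots,h_{i+2}\}$ induce a long claw centred at $h_i$; if $|\hv{u}|=2$, six of them induce a net; if $|\hv{u}|\ge 3$, then $(v:u,u:h_{\tail{u}-1},\,h_{\tail{u}}\cdots h_{\head{u}},\,h_{\head{u}+1})$ is an explicit $\dag$-caw whose shallow terminal is $v$ \emph{by construction}, so Theorem~\ref{thm:1} applies directly. The non-simpliciality of $v$ is used only in this last branch, and the second nonadjacent neighbour is never needed.
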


If $G$ is a prime \lig, then for any vertex $h$ of the hole $H$, the
subgraph $G - N[h]$ must be chordal; otherwise, $h$ and any hole of $G
- N[h]$ will satisfy Thm.~\ref{thm:2}(3).  Therefore, combining
Thms.~\ref{thm:1} and \ref{thm:2}, we conclude that
$G - SI(G) - N[h]$ must be an interval subgraph, and has a linear
structure.  These observations inspire the definition of the auxiliary
graph $\mho(G)$ (with respect to $H$), which is the main technical
tool for analyzing prime non-chordal graphs.  Here we need a special
vertex of $H$ satisfying some \emph{local properties}, which can be
found in linear time (Section~\ref{sec:mho}).  We number vertices in
$H$ such that $h_0$ is this special vertex and define $\oo := N[h_0]$.
We designate the ordering $h_0, h_1, h_2, \cdots$ of traversing $H$ as
\emph{clockwise}, and the other \emph{counterclockwise}.  The local
properties enable us to assign a direction to each edge between $\oo$
and $\overline \oo$, i.e., $V(G)\setminus \oo$, in accordance with the
direction of $H$ itself.  We use \ec\ and \ecc\ to denote the set of
clockwise and counterclockwise edges from $\oo$, respectively;
$\{\ec,\ecc\}$ partitions $\oo\times \overline{\oo}$.

\begin{definition}\label{def:mho}
  The vertex set of $\mho(G)$ consists of $\overline{\oo}\cup L\cup
  R\cup \{w\}$, where $L$ and $R$ are distinct copies of $\oo$, i.e.,
  for each $v\in \oo$, there are a vertex $v^l$ in $L$ and another
  vertex $v^r$ in $R$, and $w$ is a new vertex distinct from $V(G)$.
  For each edge $u v\in E(G)$, we add to the edge set of $\mho(G)$
\begin{itemize}
\item an edge $u v$ if neither $u$ nor $v$ is in $\oo$;
\item two edges $u^l v^l$ and $u^r v^r$ if both $u$ and $v$ are in
  $\oo$; or
\item an edge $u v^l$ or $u v^r$ if $v\in \oo$ and $uv\in \ec$ or $uv\in
  \ecc$ respectively.
\end{itemize}
Finally, we add an edge $w v^l$ for every $\{v\in \oo: u v \in \ecc\}$.
\end{definition} 

It is easy to see that the order and size of $\mho(G)$ are upper
bounded by $2|G|$ and $2||G||$ respectively.  We will show in
Section~\ref{sec:mho} that an adjacency list representation of
$\mho(G)$ can be constructed in linear time.  The auxiliary graph
carries all structural information of $G$ useful for us and is easy to
manipulate; in particular, the new vertex $w$ is introduced to
memorize the connection between $L$ and the right end of $G - T$.
{The shape of symbol $\mho$ is a good hint for understanding the
  structure of the auxiliary graph.}  Suppose $G$ has an olive-ring
structure, then $\mho(G)$ has a caterpillar structure, which is
obtained by unfolding the olive ring as follows.  The subgraph $G -
\oo$ has a caterpillar structure, to the ends of which we append two
copies of $\oo$.  The two copies of $\oo$, namely, $L$ and $R$, are
identical, and every edge between $\oo$ and $\overline \oo$ is carried
by only one copy of it, based on it is in \ecc\ or \ec.  Furthermore,
properties stated in the following theorem allow us to fold (the
reverse of the ``unfolding'' operation) the caterpillar structure of
$\mho(G)$ back to produce the olive-ring decomposition for $G$.  Note
that $\mho(G - SI(G))$ is different from $\mho(G) - SI(\mho(G))$.
\begin{theorem}\label{thm:3}
  A vertex different from $\{w\}\cup R$ is simplicial in $\mho(G)$ if
  and only if it is derived from some simplicial vertex of $G$.
  Moreover, we can in $O(||G||)$ time find a subgraph of $G$ in
  $\cF_{LI}$ if
  \begin{inparaenum}[\itshape 1)]
  \item $\mho(G)$ is not chordal; or
  \item $\mho(G - SI(G))$ is not an interval graph.
  \end{inparaenum}
\end{theorem}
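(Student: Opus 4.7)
The plan is to prove the simplicial characterization first, then leverage it together with Thms.~\ref{thm:1} and~\ref{thm:2} to handle the two detection cases.

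\textbf{Simplicial characterization.} The key preliminary I would establish is that the local properties used to choose $h_0$ in Section~\ref{sec:mho} force every simplicial vertex $v$ of $G$ to send all its $\oo$-edges in a single direction (all in $\ec$ or all in $\ecc$); otherwise two $\oo$-neighbors on opposite sides would obstruct simpliciality. Given this, the forward implication is routine: a simplicial $v\in\overline\oo$ has its $G$-neighborhood mapped entirely into $\overline\oo\cup L$ or entirely into $\overline\oo\cup R$, and Def.~\ref{def:mho} preserves each within-neighborhood edge, so $N[v]$ remains a clique in $\mho(G)$. The case $v\in\oo$ lifted to $v^l$ is analogous, noting that $w$'s edges cohere because $w$ attaches only to $L$-copies of vertices with $\ecc$-edges. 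For the converse I would unfold the closed $\mho(G)$-neighborhood of a claimed simplicial $v^*\in\overline\oo\cup L$ edge by edge and verify it is a clique in $G$; any non-edge in $G$ between two neighbors would, by the definition, survive as a non-edge in $\mho(G)$ since no contractions occur.

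\textbf{Non-chordality of $\mho(G)$.} Let $C$ be a hole in $\mho(G)$ produced by the linear-time chordality-certifying algorithm of Tarjan and Yannakakis. I would case-split on which parts $C$ uses. If $C\subseteq\overline\oo\cup L$ or $C\subseteq\overline\oo\cup R$, folding $L$ or $R$ onto $\oo$ yields a hole of $G$ of the same length, to which Thm.~\ref{thm:2} applies: either $|C|\le 5$, so $C$ itself lifts to a subgraph in $\cF_{LI}$, or some vertex of $H$ violates one of the three conditions. Otherwise $C$ uses vertices from both $L$ and $R$, or uses $w$; in each sub-case I would splice $C$ with an arc of the original hole $H$ realising the $L$-to-$R$ transition (for $w$, the $w$-edges encode exactly such a shortcut by construction) and extract an induced cycle in $G$. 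A short such cycle lies in $\cF_{LI}$; a long one is handled by Thm.~\ref{thm:2} with $h_0$ or a neighboring vertex of $H$ witnessing the relevant condition.

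\textbf{Non-intervalness of $\mho(G-SI(G))$.} The non-chordal sub-case reduces to the previous paragraph applied to $G - SI(G)$, since any $\cF_{LI}$-subgraph found there remains one in $G$. Otherwise $\mho(G-SI(G))$ is chordal and contains an asteroidal triple; I would invoke the linear-time chordal-at detection to produce a caw $W$. Its three terminals are simplicial in $\mho(G-SI(G))$, and the simplicial characterization (applied to $G-SI(G)$) identifies them with simplicial vertices of $G-SI(G)$. Folding $W$ back to $G$ yields a caw $W'$ of $G$. If $W'$ is small, it is the desired $\cF_{LI}$-subgraph. Otherwise $W'$ is large; its shallow terminal $s$ is by construction \emph{not} simplicial in $G$ (it became simplicial in $G-SI(G)$ only after some $G$-neighbor in $SI(G)$ was deleted), so Thm.~\ref{thm:1} delivers a subgraph of $G$ in $\cF_{LI}$ in $O(||G||)$ time.

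The main obstacle is the non-chordal sub-case in which the hole straddles $L$, $R$, and possibly $w$: reconstructing a short cycle in $G$ or a Thm.~\ref{thm:2}-witness requires careful bookkeeping of how arcs of $H$ substitute for the $L$-to-$R$ and $w$-edges, and of the orientation convention on $\ec$ versus $\ecc$. Everything else is either a direct application of the earlier theorems or a routine consequence of Def.~\ref{def:mho}.
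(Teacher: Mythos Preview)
Your overall architecture matches the paper's, and the simplicial characterization is essentially right (the paper also reduces to showing the $G$-neighborhood and the $\mho(G)$-neighborhood of a vertex $x\notin\lcc$ correspond one-to-one; your converse argument implicitly uses this correspondence without stating it, but it is provable from the construction checks).

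The genuine gap is in your folding step for the non-chordal and non-interval cases. You assert that a hole $C\subseteq\overline\oo\cup L$ of $\mho(G)$ folds to a hole of $G$ of the same length, and similarly that a caw $W$ folds to a caw of $G$. This is false in general: the construction of $\mho(G)$ can \emph{delete} edges (an edge $vy\in\ecc$ with $v\in\oo$, $y\in\overline\oo$ gives $y\sim v^r$ but $y\not\sim v^l$), so two non-adjacent vertices of $C$ may become adjacent in $G$ after folding, producing a chord. The paper calls such $\{x,y\}$ a \emph{broken pair} and devotes a separate lemma (Lem.~\ref{lem:bad-pair}) to showing that any connected set containing a broken pair and either small ($\le 5$ vertices) or missed by some $h\in H$ already yields a subgraph in $\cF_{LI}$. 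The non-chordality proof (Lem.~\ref{lem:mho-is-chordal}) does \emph{not} fold $C$ directly; instead it traverses $C$ from a carefully chosen $x\in L$ toward a target vertex ($h_{-2}$ or $h^r_{-1}$), and either finds a short broken-pair witness, or assembles a new hole all of whose vertices share a common neighbor $h_a\in H$, to which Lem.~\ref{lem:non-consecutive} applies. Your case split on ``$C$ uses both $L$ and $R$'' versus not is also off: $w$ is simplicial (Prop.~\ref{lem:property-mho}) so $w\notin C$, and the paper's general case allows $C$ to meet $R$ but ignores it---the real work is in the $\overline\oo\cup L$ interaction, which you treated as the easy case.

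For the caw sub-case you have the same broken-pair issue, plus an additional one you do not address: the caw may contain $w$ as a terminal. Since $w$ has no preimage in $G$, no folding is possible; the paper handles this with an explicit case analysis (Fig.~\ref{fig:lem:negative-certificate}) that exploits the clique structure of $\occ$ to extract a net or rising sun directly. Your reasoning that the shallow terminal $s$ is non-simplicial in $G$ is correct but for the simpler reason that $s\in V(G-SI(G))$ means $s\notin SI(G)$ by definition; no ``became simplicial after deletion'' argument is needed.
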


We may assume that the graph $\mho(G)$ is chordal, whose simplicial
vertices can be identified easily.  As a result of Thm.~\ref{thm:3},
we can retrieve $SI(G)$ and obtain the graph $\mho(G - SI(G))$.  If it
is not an interval graph, then we are done with
Thm.~\ref{thm:decompose-lig}.  Otherwise, we apply the following
operation to sequentially build a hole decomposition for $G-SI(G)$ and
an olive-ring decomposition for $G$.  Noting that all holes of $G$ are
also in $G - SI(G)$, once the decomposition for $G - SI(G)$ is
produced, we can use it to find a shortest hole of $G$.  We proceed
only when this hole is long.
\begin{lemma}\label{thm:build-hole-decomposition}
  Given a clique path decomposition for $\mho(G - SI(G))$, we can in
  $O(||G||)$ time build a clique decomposition for $G - SI(G)$ that is
  a hole.  Moreover, we can find in $O(||G||)$ time a shortest hole of
  $G$.
\end{lemma}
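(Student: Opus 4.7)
The plan is to invert the unfolding operation implicit in Definition~\ref{def:mho}. Given the clique path $P$ for $\mho(G - SI(G))$, I would first locate its endpoint bags. By construction, the vertices of $L$ together with $w$ are pairwise adjacent in $\mho(G)$ ($L$ inherits the clique structure of $\oo$, and the last bullet of Definition~\ref{def:mho} attaches $w$ to $L$), so $L\cup\{w\}$ is contained in one endpoint bag of $P$; symmetrically $R$ is a clique contained in the opposite endpoint bag. There is a canonical bijection $v^l\leftrightarrow v^r$ for each $v\in \oo$.

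Next, I would fold $P$ into a cycle: delete $w$, and for every $v\in \oo$ identify the two copies $v^l$ and $v^r$ to recover the vertex $v$ of $G - SI(G)$. The two extreme bags of $P$ collapse, and the resulting cyclic sequence of bags forms a uni-cyclic clique graph with no pendants, i.e., a hole. A single sweep along $P$ implements the folding, the identification of copies, and the maximality checks in $O(||G||)$ total time.

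The main obstacle is verifying that this folded object is a valid clique decomposition of $G - SI(G)$. Two properties must be checked: (a)~each resulting bag is a maximal clique of $G - SI(G)$, which follows from the explicit edge correspondence in Definition~\ref{def:mho} once the $L/R$-copies are identified (edges in $\ec$ are glued back to their $R$-images and edges in $\ecc$ to their $L$-images, exactly restoring $E(G - SI(G))$); and (b)~for every vertex, the bags containing it form a connected arc of the cycle. For $v\in \overline{\oo}$ this is immediate because its bags all lie in the interior of $P$ and are unchanged. For $v\in \oo$, one must show that the bags of $P$ containing $v^l$ form an interval at the left end, those containing $v^r$ form an interval at the right end, and the two arcs meet seamlessly under the folding; this relies on the $\ec/\ecc$ partition of $\oo\times \overline{\oo}$ and on the local properties of the distinguished vertex $h_0$ invoked when defining $\mho(G)$.

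For the shortest hole of $G$: every hole of $G$ lies in $G - SI(G)$, because simplicial vertices cannot belong to any hole, so it suffices to find a shortest hole of $G - SI(G)$. Equipped with the cyclic clique decomposition just built, each vertex occupies a consecutive arc of bags on the cycle, and any hole corresponds to a minimal sequence of vertices whose arcs cover the cycle exactly once with only consecutive arcs overlapping. I would apply the standard greedy ``farthest-reach'' strategy: starting from an arbitrary bag, repeatedly choose a vertex whose arc extends the farthest clockwise, jump to the far end of that arc, and continue until the starting bag is revisited. The resulting cyclic sequence is a shortest hole, and since each bag and each vertex is scanned a constant number of times, this runs in $O(||G||)$ time.
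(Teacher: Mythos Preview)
Your high-level plan for the first assertion (fold the path into a cycle by identifying $v^l$ with $v^r$) is the right intuition, but the supporting claims are wrong. You assert that ``$L$ inherits the clique structure of $\oo$'' and that $L\cup\{w\}$ sits in one endpoint bag. Neither holds: $\oo=N[h_0]$ is \emph{not} a clique in $G$ (nothing forces $h_0$ to be simplicial), so $L$ is not a clique of $\mho(G')$, and $w$ is adjacent only to $\lcc=\{v^l:v\in\occ\}$, not to all of $L$. Consequently $L$ may spread over many bags at the left end of the path, and symmetrically for $R$. The paper does not ``collapse the two extreme bags''; it argues that $N[w]=\{w\}\cup\lcc$ is one end bag $K_0$, locates the last bag $K_\ell$ not containing $h_0^r$, keeps only $K_1,\dots,K_\ell$, replaces each vertex by its preimage under $\phi$, and then joins $K_1$ to $K_\ell$. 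The delicate point---exactly the obstacle you flag in (b)---is showing that for $v\in\occ$ the bags coming from $v^l$ (at the left end) and from $v^r$ (at the right end, after the $h_0^r$-bags are dropped) glue into a single arc; the paper handles this via a case analysis on how a maximal clique $K$ of $G'$ meets $\oo$ and whether the edges across lie in $\ec$ or $\ecc$.

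For the shortest hole, your single greedy sweep from an arbitrary bag is not correct. The greedy ``farthest clockwise reach'' gives a shortest hole \emph{through the chosen starting vertex}, but the global shortest hole need not contain that vertex: take a circle where the arc with maximum right endpoint in $K_0$ forces one extra step compared with a hole that starts slightly to the right of $K_0$. This is the standard failure of greedy for minimum circular arc cover. The paper fixes a bag $K_0$ and, since every hole meets $K_0$, runs the greedy from \emph{every} $u\in K_0$ in parallel (a list of pairs $(u,v)$ with $v$ the current frontier), but prunes any pair whose frontier has fallen behind the running maximum {\sf reached}; this keeps the total work to $O(\|G\|)$. Your proposal is missing both the need to try all starting vertices in $K_0$ and the pruning trick that makes this linear.
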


\begin{theorem}\label{lem:extend-decomposition}
  Given a clique hole decomposition for $G-SI(G)$, we can in
  $O(||G||)$ time construct a clique decomposition for $G$ that is an
  olive ring.
\end{theorem}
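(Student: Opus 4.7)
The plan is to extend the given clique hole decomposition ${\cal H}$ of $G-SI(G)$ by processing the simplicial vertices of $G$ one at a time, adding for each a pendant bag or, exceptionally, replacing one cycle bag. The first observation I would use is that since $G$ is prime, its simplicial vertices are pairwise nonadjacent, so for every $v\in SI(G)$ the set $N(v)$ lies entirely outside $SI(G)$ and forms a clique in $G-SI(G)$; by the Helly property of clique decompositions, there is then at least one bag $B$ of ${\cal H}$ with $N(v)\subseteq B$. For each simplicial $v$ I would locate one such $B$ and act by the following rule: if $N(v)=B$ and $B$ has not yet been modified, replace $B$ on the main cycle by $B\cup\{v\}=N[v]$ and mark it as used; otherwise attach a new pendant bag $N[v]$ to (the current version of) $B$.

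This rule is forced by the requirement that every bag of the resulting decomposition be a maximal clique of $G$. A cycle bag $B$ from ${\cal H}$ fails to be maximal in $G$ precisely when some simplicial $v$ is adjacent to all of $B$, i.e., $B\subseteq N(v)$; since $N(v)$ is itself a clique of $G-SI(G)$ and $B$ is maximal there, this forces $N(v)=B$. Hence the first simplicial $v_0$ with $N(v_0)=B$ supplies a strictly larger maximal clique $N[v_0]$ that must replace $B$, while any further simplicial vertex with the same neighborhood yields a distinct maximal clique $N[v]$ that must appear as a pendant. Conversely, every pendant $N[v]$ is a maximal clique because any vertex adjacent to all of $N[v]$ would have to be adjacent to $v$ and therefore already lie in $N(v)\subseteq N[v]$. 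The other clique-decomposition properties are routine: every simplicial vertex occupies exactly one bag; every non-simplicial $u$ occupies the same arc of cycle bags as in ${\cal H}$ (unaffected by replacements, which preserve all pre-existing elements of a bag) together with those pendants $N[v]$ for which $u\in N(v)$, and each such pendant is attached to a cycle bag that also contains $u$, so the union stays connected.

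The main obstacle is attaining the promised $O(||G||)$ running time, since we must locate a cycle bag containing $N(v)$ for every $v\in SI(G)$ without paying for the bag's full size. I would precompute for every vertex $u$ of $G-SI(G)$ the arc of cycle bags containing $u$, which is readily available from the construction of ${\cal H}$ in Lemma~\ref{thm:build-hole-decomposition}, and then find a bag containing $N(v)$ by intersecting the $|N(v)|$ arcs corresponding to the vertices of $N(v)$ in $O(|N(v)|)$ time. Summed over all simplicial vertices this is $O(||G||)$, and the pendant insertions together with at most one bag replacement per cycle position contribute only another $O(||G||)$.
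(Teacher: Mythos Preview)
Your argument is correct and establishes the theorem as stated. It differs from the paper's proof in one respect: the paper (Lemma~\ref{lem:olive-ring-decomposition}) distinguishes a third case beyond your two. When $N(s)$ is not itself a bag but coincides with the intersection $K_\ell\cap K_{\ell+1}$ of two consecutive cycle bags for some $\ell$ in the range $[p,q-1]$, the paper inserts $N[s]$ as a new bag \emph{on the main cycle} between $K_\ell$ and $K_{\ell+1}$, rather than as a pendant. Your construction collapses this case into the pendant case, which still yields a valid olive-ring clique decomposition, so Theorem~\ref{lem:extend-decomposition} follows. What is lost is the stronger conclusion the paper actually proves there: a simplicial vertex lands outside the main cycle if and only if it is a terminal of some caw whose remaining vertices all lie on the cycle. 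That characterization (equivalently, that the number of pendant bags is minimized) is what feeds Corollary~\ref{lem:hole=no-caw} and the frame-based treatment of large caws in the second phase; your simpler construction would not supply it.

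One small remark: invoking a ``Helly property'' of the hole decomposition to find a bag containing $N(v)$ deserves a word of care, since arcs on a circle do not satisfy Helly in general. It is justified here because the model has already been certified normal and Helly via Lemma~\ref{lem:one-side} during construction, but the more direct justification is simply that the clique $N(v)$ extends to some maximal clique of $G-SI(G)$, which is a bag.
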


Putting together these steps, we get the decomposition algorithm in
Fig.~\ref{fig:recognize-lig}, from which Thm.~\ref{thm:decompose-lig}
follows.  This concludes the proof of the characterization and
decomposition of prime \lig s.
\begin{figure}[h]
\setbox4=\vbox{\hsize28pc \noindent\strut
\begin{quote}
  \vspace*{-5mm} \small 

  {\bf Algorithm} {\bf decompose}($G$)
  \\
  {\sc input}: a prime graph $G$.
  \\
  {\sc output}: a caterpillar/olive-ring decomposition for $G$ or a
  subgraph of $G$ in $\cF_{LI}$.

  1 \hspace*{2ex} {\bf if} $G$ is chordal {\bf then}
  \\
  \hspace*{7ex} {\bf if} $G$ is an interval graph {\bf then return} a
  clique path decomposition for $G$;
  \\
  \hspace*{7ex} {\bf if} $G - SI(G)$ is an interval graph {\bf then
    return} a caterpillar decomposition for $G$;
  \\
  \hspace*{7ex} find a caw $W$; {\bf if} $W$ is small {\bf then
    return} $W$, {\bf else call} Thm.~\ref{thm:1};
  \\
  2 \hspace*{2ex} find a hole $H$ of $G$; build $\mho(G)$;
  \\
  3 \hspace*{2ex} {\bf if} $\mho(G)$ is not chordal {\bf then call}
  Thm.~\ref{thm:3}(1);
  \\
  4 \hspace*{2ex} find $SI(\mho(G))$ and $SI(G)$; construct $\mho(G) -
  SI(G)$;
  \\
  5 \hspace*{2ex} {\bf if} $\mho(G) - SI(G)$ is not an interval graph
  {\bf then call} Thm.~\ref{thm:3}(2);
  \\
  6 \hspace*{2ex} {\bf call} Lem.~\ref{thm:build-hole-decomposition}
  and Thm.~\ref{lem:extend-decomposition} to build an olive-ring
  decomposition for $G$.

\end{quote} \vspace*{-6mm} \strut} $$\boxit{\box4}$$
\vspace*{-9mm}
\caption{The decomposition algorithm for
  Thm.~\ref{thm:decompose-lig}.}
\label{fig:recognize-lig}
\end{figure}

\subsection{Recognition of almost interval graphs}
In lieu of general solutions, we may consider only those optimum
solutions satisfying
Thms.~\ref{thm:modules-induced-subgraph+}-\ref{thm:modules-supergraph+},
which focus us on the quotient graph $Q$ defined by maximal strong
modules of $G$.  If $Q$ is a clique, then we have either a $4$-hole or
a smaller instance (by removing all universal vertices).  Otherwise
$Q$ is prime and we call Thm.~\ref{thm:decompose-lig} with it, which
has two possible outcomes; there are only a constant number of
modifications applicable to a small caw, and thus we may assume that
the outcome is an olive-ring decomposition $\cal K$.  For the
completion problem, as holes can be easily filled, we can always
assume that the graph is chordal and $\cal K$ is a caterpillar.  With
decomposition $\cal K$, whether the input instance satisfies the
conditions of Thms.~\ref{thm:modules-induced-subgraph+} and
\ref{thm:modules-subgraph+} can be easily checked.  If some
non-simplicial vertex in $Q$ represents a non-clique module, then we
have a $4$-hole.  Otherwise, we work on all \msm s and find each of
them an optimum solution, for which it suffices to consider those
represented by simplicial vertices in $Q$.  Using definition it is
easy to verify that the resulting graph has the same set of \msm s as
$G$, and hence $Q$ remains the prime quotient of it.  With inductive
reasoning, we may assume that every simplicial vertex in $Q$
represents now an interval subgraph.  In summary, the only condition
of Cor.~\ref{lem:lig-and-max-modules} that might remain unsatisfied is
whether $Q$ itself is an interval graph.  Therefore, this section is
devoted to the disposal of $Q$, which is prime and has a
caterpillar/olive-ring decomposition $\cal K$.

Allow us to use some informality in explaining the intuition behind
the our algorithms for deletion problems.  Recall that clique path
decompositions are characteristic of interval graphs
\cite{fulkerson-65-interval-graphs}.  With a bird's-eye view, what we
have is an olive ring, while what we want is a path; it may help to
mention that the maximal cliques of the graph may change and the bags
of the latter is not necessarily a subset of the former.  Toward this
end, we need to \emph{cut the main cycle} and \emph{strip off its
  leaves} of the olive ring, and there are immediately two options
based on which action is taken first.  Interestingly, they correspond
to the disposal of holes and caws, respectively.  From $\cal K$ we can
observe that every hole $H$ of $G$ is global in the sense that it
dominates all holes.  In contrast, every caw is local, and with
diameter at most four, so it sees only a part of the main cycle.  The
structural difference of holes and caws suggests that different
techniques are required to handle them.  As explained in
Section~\ref{sec:techniques}, we strip the leaves off the olive ring
first to make it a hole.

Let ($s: c_1,c_2:l,B,r$) be a large caw in $Q$, possibly $c_1 = c_2$
(see the second row of Fig.~\ref{fig:at}).  We consider its terminals
as well as their neighbors, i.e., $\{s, c_1,c_2, l,b_1, b_d, r\}$.  It
is observed that if all of them are retained and their
adjacencies---except of $\{l, c_1\}$ and $\{c_2, r\}$, which are
adjacent in a $\ddag$ but not a $\dag$---are not changed, then in an
interval model of the object interval graph, they must be arranged in
the way depicted in Fig.~\ref{fig:fixed-frame}.  As indicated by the
dashed extensions, the interval for $c_1$ (resp., $c_2$) might or
might not extend to the left (resp., right) to intersect the interval
for $l$ (resp., $r$).  Our main observation is on the position of the
interval for $s$: it has to lie between $b_1$ and $b_d$, which are
nonadjacent---this explains why we single out net and (rising) sun
from $\dag$ and $\ddag$ respectively.  Recall that $s$ is originally
adjacent to no vertex in the \stpath{b_1}{b_d} $B$.  Therefore, we
need to delete some vertex or edge to break $B$, or add an edge to
connect $s$ to some inner vertex of $B$.

\begin{figure*}[h!]
  \centering \includegraphics{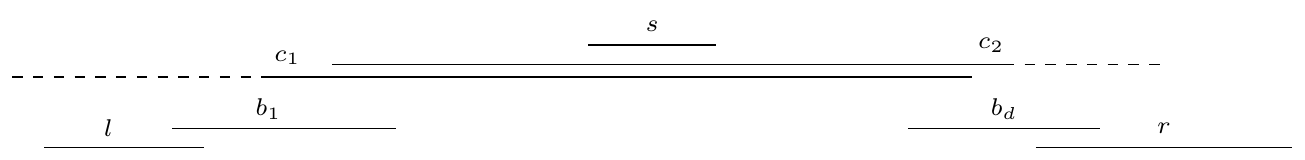}
  \caption{An interval model of an unchanged frame.}
  \label{fig:fixed-frame}
\end{figure*}

In the discussion above, what matters is only the terminals and their
neighbors, while the particular \stpath{b_1}{b_d} $B$ becomes
irrelevant.  Indeed, any induced \stpath{b_1}{b_d} in $(N(c_1)\cap
N(c_2))\setminus N(s)$ can be used in place of $B$ to give a caw of
the same type (though not necessarily the same size), which has the
same set of terminals.  A similar operation is thus needed for all of
them, and the particular base is immaterial, inspiring us to consider
the following two sets of vertices.  Of a large caw ($s:
c_1,c_2:l,B,r$), the \emph{frame} is denoted by ($s: c_1,c_2:l,b_1;
b_d,r$), and the set of \emph{inner vertices} is composed of all
vertices that can be used to make a caw with frame ($s: c_1,c_2:l,b_1;
b_d,r$); they are denoted by $F$ and $IN(F)$ respectively.  Without a
specific path $B$ in sight, it would be more convenient to use ($s:
c_1,c_2:l,l_b; r_b,r$) to denote a frame.

We can find a frame $F$ that is minimal in a sense.  Its definition,
give in Section~\ref{sec:minimal-frames}, is essentially the same as
what is used in our previous work \cite{cao-14-interval-deletion}.
The major concern here is how to find a minimal frame in linear time;
considering that the graph might still contain holes and small caws,
it is far more complicated than \cite{cao-14-interval-deletion}.  This
is achieved using the olive-ring decomposition
(Section~\ref{sec:minimal-frames}).  The rest is then devoted to the
disposal of (caws with) this minimal frame.

Consider first vertex deletions.  We show that any optimum solution
deletes either some vertex of $F$ or a minimum \stsep{l_b}{r_b} in the
subgraph induced by $IN(F)\cup\{l_b, r_b\}$.  The second case is our
main concern, for which we manage to show that any minimum
\stsep{l_b}{r_b} will suffice; it can be found in linear time.  This
case can be informally explained as follows.  All vertices in $IN(F)$
and $\{l_b, r_b, c_1, c_2\}$ must reside in a consecutive part of the
main cycle of the olive-ring decomposition, and we need to find some
``place'' in between to accommodate $s$.  We show that it suffices to
``cut any thinnest place'' between $l_b$ and $r_b$, and use this space
for $s$.  Recalling that $F$ has at most seven vertices, we have then
an $8$-way branching for disposing of this frame.

The basic idea for edge deletions is similar as vertex deletions,
i.e., we delete either one of a bounded number of edges or a minimum
edge \stsep{l_b}{r_b}, but we are now confronted with more complex
situations.  First, the assumption that no edge in $F$ is deleted does
not suffice, so instead we find a shortest \stpath{l_b}{r_b} $B$ with
all inner vertices from $IN(F)$.  If $B$ has a bounded length, then we
branch on deleting every edge in it.  Otherwise, we argue that either
one of the first or last $O(k)$ edges of $B$ is deleted, or it
suffices to find a minimum set of edges whose deletion separates $l_b$
and $r_b$ in $IN(F)$, which can also be viewed as the ``thinnest
place'' (in another sense) between $l_b$ and $r_b$.  This gives an
$O(k)$-way branching.

After all caws are destroyed as above, if $Q$ is chordal, then
problems are solved; otherwise, it has a clique hole
decomposition.\footnote{As we will see, it is a normal Helly
  circular-arc graph with no short holes.}  Since every simplicial
vertex of $Q$ represents an interval subgraph, this decomposition can
be extended to a clique hole decomposition for $G$.  As a result, all
its holes can be broken at a fell swoop in linear time, which solves
the problems.
\begin{lemma}\label{lem:hole-cover}
  Given a clique hole decomposition for graph $G$, the problems {\sc
    interval vertex deletion} and {\sc interval edge deletion} can be
  solved in time $O(||G||)$ and $k^{O(k)}\cdot ||G||$ respectively.
\end{lemma}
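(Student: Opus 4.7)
The clique hole decomposition $\mathcal{K}$ realises $G$ as the intersection graph of arcs $a(v)$ on a cycle of bags $K_1,\dots,K_h,K_1$, where $a(v)=\{K_i:v\in K_i\}$ is a contiguous subset of the cycle. There are $h$ \emph{cut positions}, one between each pair of consecutive bags; for a cut position $p$ let $S_p$ be the set of vertices whose arc contains the two bags flanking $p$. The plan for both problems is to reduce the task to choosing a cut position on $\mathcal{K}$ and performing a bounded amount of local work at that cut.

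For \textsc{interval vertex deletion} I claim the minimum deletion equals $\min_p |S_p|$. Sufficiency is immediate: removing $S_p$ turns $\mathcal{K}$ into a clique path decomposition of $G-S_p$ obtained by cutting at $p$, so $G-S_p$ is an interval graph. For necessity, every hole of $G$ wraps around $\mathcal{K}$, so any deletion set $X$ with $G-X$ interval must contain the arcs crossing some fixed cut $p$, i.e.\ $X\supseteq S_p$; otherwise the restriction of $\mathcal{K}$ to $V\setminus X$ retains the cyclic structure and harbours a hole. The values $|S_p|$ for all $p$ are computed in $O(||G||)$ time by a single sweep around the cycle that maintains a counter incremented on each arc start and decremented on each arc end. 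If $\min_p |S_p|\le k$ we output the corresponding $S_p$; otherwise no solution exists.

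For \textsc{interval edge deletion} the same cut-at-$p$ reduction is used, but the spanning vertices are no longer deleted outright. Instead, each $v\in S_p$ is assigned to one of the two sub-arcs of $a(v)$ lying on either side of $p$, and we delete exactly the edges from $v$ to those neighbours that appear only on the discarded side. After this operation the restricted arcs, together with the non-spanning arcs, realise a clique path decomposition, so the modified graph is an interval graph. Since every spanning vertex contributes at least one edge deletion (unless one side of its arc is empty, a case absorbed by the side-choice itself), $|S_p|\le k$ is necessary for the chosen cut. Enumerate the cut positions $p$ with $|S_p|\le k$, and for each branch over the $2^{|S_p|}\le 2^k$ left/right assignments of the spanning vertices, computing the cost in linear time. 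A sharper accounting---maintaining $S_p$ incrementally along the cycle, since adjacent cut positions differ only by one arc endpoint, and charging the branching work to the $O(k)$ distinct spanning sets rather than to each cut position---yields the claimed $k^{O(k)}\cdot ||G||$ bound.

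The main obstacle is the edge-deletion case: arguing that every optimum edge deletion is realised by a single cut-and-restrict on $\mathcal{K}$. A priori one might delete edges not incident to any spanning vertex, reshaping bags in the interior of $\mathcal{K}$; to rule this out I would use that $G$ is a normal Helly circular-arc graph with no short holes, so every hole winds around $\mathcal{K}$ exactly once and any edge-deletion set hitting all holes must linearise the cycle at some point. Together with the Helly property, this forces any optimum $G-F$ to admit a clique path decomposition whose bags refine those of $\mathcal{K}$ cut at some $p$ with the spanning arcs restricted, completing the reduction.
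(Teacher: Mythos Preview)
Your vertex-deletion argument is correct and is the intended one: in a normal Helly circular-arc model no three or fewer arcs cover the circle, so if after deleting $X$ every cut position is still crossed by some remaining arc, a minimal covering family of size at least four yields an induced cycle in $G-X$. Hence the optimum equals $\min_p|S_p|$, computable in one sweep around $\mathcal K$. This part matches the paper.

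For edge deletion there are two genuine gaps.

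\emph{Running time.} Even granting the reduction, you enumerate all cut positions $p$ with $|S_p|\le k$ and spend $2^{|S_p|}\cdot O(||G||)$ at each. There can be $\Theta(|G|)$ such positions, giving $2^{k}\cdot ||G||^{2}$, not $k^{O(k)}\cdot ||G||$. The claim that there are only ``$O(k)$ distinct spanning sets'' is unsupported and false in general: as $p$ moves around the cycle, $S_p$ changes at every arc endpoint, and nothing ties the number of distinct small $S_p$'s to $k$. Incremental maintenance of $S_p$ saves recomputing the set but not the $2^{|S_p|}$ branching work you perform at each position.

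\emph{Optimality of cut-and-assign.} You rightly flag this as the crux, but the sketch does not close it. That some hole edge lies in every feasible $F$ is immediate; it does not follow that an optimum $G-F$ admits an interval model obtained from $\mathcal K$ by cutting at a single position and truncating each spanning arc to one side. The maximal cliques of $G-F$ need not be restrictions of bags of $\mathcal K$, and the Helly property you invoke concerns cliques of $G$, not of the spanning subgraph $G-F$, so the ``bags refine those of $\mathcal K$ cut at some $p$'' conclusion is not available from that alone. The paper's route is different in spirit: rather than a global cut-and-assign over all positions, it establishes a local structural lemma---for any maximum spanning interval subgraph there is a vertex $h$ on a shortest hole with $N_G[h]$ disconnected in that subgraph---and organises the branching around this local separation, together with an argument that bounds the relevant choices by a function of $k$. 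Your proposal would need a result of comparable strength both to certify the form of the optimum and to eliminate the extra $||G||$ factor.
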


The ``thinnest places'' are also crucial for completions, though the
argument becomes even more delicate.  Our focus is on a \misp\
$\widehat G$ that contains no edge in $\{l c_2, c_1 r, l_b r_b, s l_b,
s r_b\}$; in particular, $l_b$ and $r_b$ remain nonadjacent in
$\widehat G$.  As said, we attend to caws only when the graph is
already chordal, which means that the clique decomposition $\cal K$ is
a caterpillar.  Therefore, $l_b$ and $r_b$ can be used to decide a
left-right relation for both the caterpillar decomposition of $G$ and
an interval model of $\widehat G$.  After adding edges, an interval
for a vertex that is to the right of $r_b$ in $G$ might intersect part
or all intervals between $l_b$ and $r_b$.  We argue that such an
interval either reaches $l_b$, or is to the right of some position
(informally speaking, the ``rightmost thinnest place'').  A symmetric
argument works for a vertex to the left of $l_b$.  As a result, we
have two points such that all structures between them is totally
decided by $F$ and $IN(F)$; in particular, it suffices to put $s$ in
any ``thinnest'' place in between.  This gives a $6$-way branching.

Putting together these steps, a high-level outline of our algorithms
is given in Fig.~\ref{fig:alg-interval-deletion}.  This concludes
Thms.~\ref{thm:main} and \ref{thm:main-2}.

\begin{figure*}[ht]
\setbox4=\vbox{\hsize28pc \noindent\strut
\begin{quote}
  \vspace*{-5mm} \small
  {\sc input}: a graph $G$ and a nonnegative integer $k$.
  \\
  {\sc output}: a set of at most $k$ modifications that transforms $G$
  into an interval graph; or ``NO.''

  0 \hspace*{2ex} {\bf if} $k<0$ {\bf then return} ``NO''; {\bf if}
  $G$ is an interval graph {\bf then return} $\emptyset$;
  \\
  1 \hspace*{2ex} [{\em only} for {\sc interval completion}] fill all
  holes of $G$;
  \\
  2 \hspace*{2ex} {\bf if} the quotient graph $Q$ defined by maximal
  strong modules of $G$ is edgeless {\bf then}
  \\
  \hspace*{7ex} solve each component individually;
  \\
  3 \hspace*{2ex} {\bf if} $Q$ is a clique {\bf then}
  \\
  \hspace*{7ex} {\bf if} there are two non-clique modules {\bf then}
  \\
  \hspace*{10ex} find a $4$-hole and {\bf branch} on disposing of it;
  \\
  \hspace*{7ex} {\bf else} solve the subgraph induced by the only
  non-clique module; \comment{$G$ is not an interval graph.}
  \\
  4 \hspace*{2ex} {\bf call decompose}($Q$);
  \\
  5 \hspace*{2ex} {\bf if } a small caw or short hole is found {\bf
    then branch} on disposing of it;
  \\
  $\setminus\!\!\setminus$ We have hereafter a caterpillar/olive-ring
  decomposition.
  \\
  6 \hspace*{2ex} {\bf if} a non-simplicial vertex of $Q$ represents a
  non-clique module {\bf then}
  \\
  \hspace*{7ex} find a $4$-hole and {\bf branch} on disposing of it;
  \\
  7 \hspace*{2ex} {\bf for each} module $M$ represented by a
  simplicial vertex of $Q$ {\bf do}
  \\
  \hspace*{7ex} solve the subgraph $G[M]$;
  \\
  8 \hspace*{2ex} {\bf if } the clique decomposition is not a hole
  {\bf then}
  \\
  \hspace*{7ex} find a minimal frame and {\bf branch} on disposing of
  it;
  \\
  9 \hspace*{2ex} [{\em not} for {\sc interval completion}] {\bf call}
  Prop.~\ref{lem:hole-cover}.

\end{quote} \vspace*{-6mm} \strut} $$\boxit{\box4}$$
\vspace*{-9mm}
\caption{Outline of our algorithms.}
\label{fig:alg-interval-deletion}
\end{figure*}

\paragraph{Organization.} The rest of the paper is organized as
follows.  Section~\ref{sec:module-preservation} relates modules to
optimum solutions of all the three problems, and proves
Thms.~\ref{thm:modules-induced-subgraph+}-\ref{thm:modules-supergraph+}
as well as
Thms.~\ref{thm:modules-induced-subgraph}-\ref{thm:modules-supergraph}.
Section~\ref{sec:forbidden-subgraphs} gives the characterization of
large caws and long holes in prime locally interval graphs, and proves
Thms.\ref{thm:1} and \ref{thm:2}.  Section~\ref{sec:olive-ring}
presents the details of decomposing prime graphs and proves
Thms.~\ref{thm:3}-\ref{lem:extend-decomposition}.
Section~\ref{sec:caw} presents the details on the disposal of large
caws.  Section~\ref{sec:algs} use all these results to complete the
algorithms.  Section~\ref{sec:remark} closes this paper by describing
some follow-up work and discussing some possible improvement and new
directions.

 \section{Modules}\label{sec:module-preservation}
This section is devoted to the proof of
Thms.~\ref{thm:modules-induced-subgraph}-\ref{thm:modules-supergraph}
and
Thms.~\ref{thm:modules-induced-subgraph+}-\ref{thm:modules-supergraph+}.
Each of these theorems comprises two assertions on modules of $G$.
The \emph{module preservation} asserts that a (maximal strong) module
$M$ (or its remnant after partial deletion) remains a module of the
optimum solution, and \emph{local optimum} asserts that the optimum
solution restricted to $G[M]$ is an optimum solution of itself, and
can be replaced by any optimum solution of $G[M]$.
Thms.~\ref{thm:modules-induced-subgraph} and
\ref{thm:modules-induced-subgraph+} (vertex deletions) turn out to be
quite straightforward.  As a matter of fact, a weaker version of them
has been proved and used in \cite{cao-14-interval-deletion}, and a
similar argument, which is based on the characterization of forbidden
induced subgraphs and the hereditary property, also works here.  On
the other hand, this approach does not seem to be adaptable to the
edge modifications problems.

Simple examples tell us that not all \misb s or \misp s preserve all
modules.  For example, consider the graph in
Fig.~\ref{fig:misp-modules} with only solid edges, which is obtained
from a $\dag$ as follows: the center $c$ is replaced by a clique of
$5$ vertices, and the shallow terminal is replaced by two nonadjacent
vertices $s_1$ and $s_2$.  A minimum completion to this graph must be
adding for each of $s_1$ and $s_2$ an edge to connect it to some
vertex at the bottom.  These two vertices do not need to be the same,
e.g., the dashed edges in Fig.~\ref{fig:misp-modules}; however,
$\{s_1, s_2\}$ is not a module of the resulting \misp.  On the one
hand, the subgraph induced by the \msm\ $\{s_1, s_2\}$ is not
connected; on the other hand, we may alternatively connect both $s_1$
and $s_2$ to the same vertex so that we obtain another \misp\ that
preserves $\{s_1, s_2\}$ as a module.  These two observations turn out
to be general: 1) if a module $M$ of a graph $G$ is not a module of
some \misp\ $\widehat G$ of $G$, then $G[M]$ \emph{must} be
disconnected, and 2) we \emph{can always} modify $\widehat G$ to
another \misp\ $\widehat G'$ of $G$ such that $M$ is a module of
$\widehat G'$.
\begin{SCfigure}[][h]
  \centering  \includegraphics{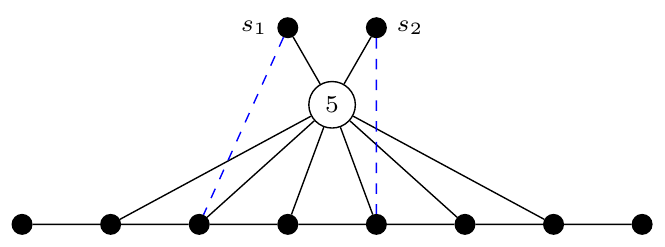} 
  \caption{The module $\{s_1, s_2\}$ is not preserved by a \misp\
    (dashed edges are added; number $5$ in a circle means a clique of
    $5$ vertices).}
  \label{fig:misp-modules}
\end{SCfigure}

To make it worse, a \misb\ may have to break some \msm s.  The
simplest example is a $4$-hole graph, which has two nontrivial
modules, but its \misb\ must be a simple path, which is prime.  Even
the connectedness does not help here.  For example, consider the graph
in Fig.~\ref{fig:misb-modules-1} (all edges, both solid and dashed),
which is obtained by completely connecting two induced paths $v_1 v_2
v_3 v_4$ and $u_1 u_2 u_3 u_4$.  A \misb\ of it has to be isomorphic
to Fig.~\ref{fig:misb-modules-1} after dashed edges deleted, which is
again prime.  Both examples contain some $4$-hole, which urges us to
study $4$-hole-free graphs.  We show that any $4$-hole-free graph has
a \misb\ that preserves all its modules.  It is worth stressing that
not all \misb s of a $4$-hole-free graph preserve all its modules,
e.g., the graph (with both solid and dashed edges) and its \misb\
(after dashed edges deleted) in Fig.~\ref{fig:misb-modules-2}.

\begin{figure*}[h]
  \centering
  \begin{subfigure}[b]{0.47\textwidth}
    \centering \includegraphics{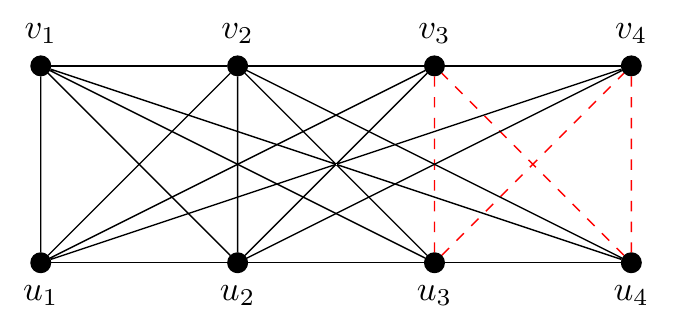} 
    \caption{The only two nontrivial modules $\{v_1,v_2,v_3,v_4\}$ and
      $\{u_1,u_2,u_3,u_4\}$, both connected, are not preserved by
      \emph{any} \misb.}
    \label{fig:misb-modules-1}
  \end{subfigure}
  \qquad
  \begin{subfigure}[b]{0.47\textwidth}
    \centering \includegraphics{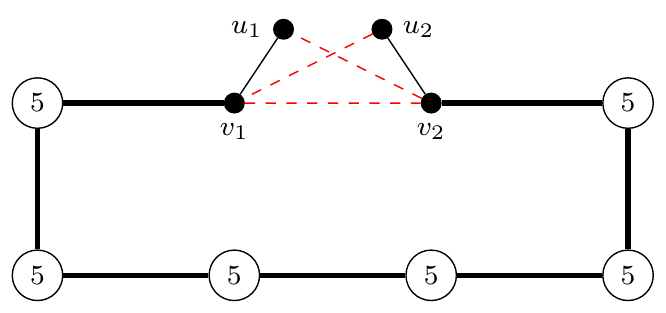} 
    \caption{The module $\{u_1,u_2\}$ is not preserved by \emph{some}
      \misb\ of a $4$-hole-free graph (number $5$ in a circle means
      a clique of $5$ vertices).}
    \label{fig:misb-modules-2}
  \end{subfigure}%  
  \caption{Modules not preserved by \misb s (dashed edges are deleted).}
  \label{fig:misb-modules}
\end{figure*}

The way we prove Thms.~\ref{thm:modules-subgraph+} and
\ref{thm:modules-supergraph+} is using interval models: we construct
an interval graph satisfying the claimed conditions by explicitly
giving an interval model for it.  For this purpose we need more
notation on interval models.  In an interval model, each vertex $v$
corresponds to a closed interval $I_v = [\lp{v}, \rp{v}]$, where
\lp{v} and \rp{v} are the left and right \emph{endpoints} of $I_v$,
respectively, and $\lp{v}< \rp{v}$.  An interval model is called
\emph{normalized} if no pair of distinct intervals in it shares an
endpoint; every interval graph has a normalized interval model.  All
interval models in this section are normalized.  For a subset $U$ of
vertices, we define $\lp{U} := \min_{v\in U} \lp{v}$ and $\rp{U} :=
\max_{v\in U} \rp{v}$.  Observe that if $U$ induces a connected
subgraph, then the interval $[\lp{U}, \rp{U}]$ is exactly the union of
$\{I_v: v\in U\}$.  Let $P$ be a set of points that are in an interval
$[\alpha,\beta]$.  By \emph{project}ing $P$ from $[\alpha,\beta]$ to
another interval $[\alpha',\beta']$ we mean the following operation:
\[
\rho\to \frac{\beta' - \alpha'}{\beta - \alpha}(\rho - \alpha) + \alpha'
\quad\text{for each } \rho\in P.
\]
In other words, each point in $[\alpha,\beta]$ is proportionally
shifted to a point in $[\alpha',\beta']$.  It is easy to verify that
all new points are in $[\alpha',\beta']$ and this operation retains
relations between every pair of points.  In particular, if we project
the endpoints of all intervals for $V(G)$, the set of new intervals
defines the same interval graph.

The following simple observation will be crucial for our arguments.
For any point $\rho$, we can find a positive value $\epsilon$ such
that the only possible endpoint of $\cal I$ in $[\rho-\epsilon,
\rho+\epsilon]$ is $\rho$.  Here the value of $\epsilon$ should be
understood as a function---depending on the interval model as well as
the point $\rho$---instead of a constant.

\subsection{Modules in \miib s}\label{sec:modules-induced-subgraph}
For any module $M$ and vertex set $U$ of $G$, the set $U\cap M$, if
not empty, is a module of the subgraph $G[U]$.  This property implies
the preservation of modules in {\em all} \miib s.  Therefore, for
Thms.~\ref{thm:modules-induced-subgraph} and
\ref{thm:modules-induced-subgraph+}, it suffices to prove their second
assertions, which follow from the following statement.  Recall that if
a $4$-hole contains precisely two vertices from some module $M$, then
neither $M$ nor $N(M)$ induces a clique.
\begin{lemma}
  Let $G[U]$ be a \miib\ of a graph $G$.  Let $M$ be a module of $G$
  such that at least one of $M$ and $N(M)$ induces a clique.  If
  $M\cap U\ne \emptyset$, then replacing $G[M\cap U]$ by any \miib\ of
  $G[M]$ in $G[U]$ gives a \miib\ of $G$.
\end{lemma}
\begin{proof}
  Suppose, for contradiction, that the new graph $G[U']$ is not an
  interval graph.  From $G[U']$ we can find a subgraph $X$ in $\cF_I$,
  which must intersect both $M$ and $V(G)\setminus M$.  Since at least
  one of $M\cap U'$ and $N(M)\cap U'$ induces a clique, $X$ contains
  exactly one vertex of $M$; let it be $x$.  By assumption, there
  exists a vertex $x'\in M\cap U$ (possibly $x'=x$); let $X' =
  X\setminus \{x\}\cup \{x'\}$.  Clearly, $X'\subseteq U$, but $G[X']$
  is isomorphic to $G[X]$, hence in $\cF_I$, contradicting that $G[U]$
  is an interval graph.
\end{proof}

\subsection{Modules in \misb s}
Before the proof of Thm.~\ref{thm:modules-subgraph+}, we show a
stronger result on clique modules.
\begin{lemma}\label{thm:clique-modules-subgraph+}
  A clique module $M$ of a graph $G$ is also a clique module of any
  \misb\ $\underline G$ of $G$.
\end{lemma}
\begin{proof}
  Let $C$ be the component of $\underline G[M]$ such that
  $|N_{\underline G}(C)|$ attains the maximum value among all
  components of $\underline G[M]$.  We modify a given normalized
  interval model ${\cal I} = \{I_v:v\in V(G)\}$ for ${\underline G}$
  as follows.  Let $\alpha = \lp{C}$ and $\beta = \rp{C}$.  For each
  $v\in M$, we set $\lpp{'}{v}$ to a distinct value in
  $(\alpha-\epsilon, \alpha)$, and set $\rpp{'}{v}$ to a distinct
  value in $(\beta, \beta+\epsilon)$.  For each $u\in V(G)\setminus
  M$, we set $I'_u = I_u$.  Let $\underline G'$ be the interval graph
  defined by ${\cal I}'$.  By construction, a vertex $u\in
  V(G)\setminus M$ is adjacent to $C$ in $\underline G'$ if and only
  if it is in $N_{\underline G}(C)$.  Since $N_{\underline
    G}(C)\subseteq N_G(M)$, we have $\underline G'\subseteq G$.  For
  each $v\in M$, it holds that
  \[
    |N_{\underline G}(v)\setminus M|\le |N_{\underline G}(C)\setminus M|
    = |N_{\underline G}(C)| = |N_{\underline G'}(v)\setminus M|.
  \]
  On the other hand, $M$ induces a clique in $\underline G'$.  They
  together imply $|\underline G|\le |\underline G'|$, while the
  equality is only attainable when $\underline G[M]$ is a clique,
  hence $C = M$, and $M$ is completely connected to $N_{\underline
    G}(C) = N_{\underline G}(M)$.  Therefore, $\underline G' =
  \underline G$, and this verifies the lemma.
  \end{proof}

\paragraph{Theorem~\ref{thm:modules-subgraph+} (restated).}
Let $G$ be a graph of which every $4$-hole is contained in some
maximal strong module.  There exists a \misb\ $\underline G$ of $G$
such that every maximal strong module $M$ of $G$ is a module of
$\underline G$, and replacing $\underline G[M]$ by any \misb\ of
$G[M]$ in $\underline G$ gives a \misb\ of $G$.
\begin{proof}
  As a consequence of Lem.~\ref{thm:clique-modules-subgraph+}, it
  suffices to consider the case when $G[M]$ is not a clique, and then
  $N_G(M)$ must induce a clique of $G$.  Let ${\cal I} = \{I_v:v\in
  V(G)\}$ be a normalized interval model for ${\underline G}$.
  \begin{claim}
    For any component $C$ of $\underline G[M]$, the set $N_{\underline
      G}(C)$ induces a clique of $\underline G$.
  \end{claim}
  \begin{proof}
    Supposing the contrary, we construct an interval graph $\underline
    G'$ with $\underline G\subset \underline G'\subseteq G$ as
    follows.  Let $x,y$ be a pair of vertices in $N_{\underline G}(C)$
    such that $I_x\cap I_y=\emptyset$, i.e., $x\not\sim y$ in
    ${\underline G}$.  Without loss of generality, assume that $I_x$
    is to the left of $I_y$, and let $\rho$ be an arbitrary point in
    between, i .e., $\rp{x}<\rho< \lp{y}$.  For every $v\in
    N_{\underline G}(C)$, we extend the interval $I_v$ to include
    $\rho$: if $I_v$ is to the left of $\rho$, we set $\rpp{'}{v}$ to
    be a distinct point in ($\rho, \rho+\epsilon$); if $I_v$ is to the
    right of $\rho$, we set $\lpp{'}{v}$ to be a distinct point in
    ($\rho-\epsilon, \rho$).  We use the graph defined by the set of
    new intervals as $\underline G'$.  To see $\underline G\subseteq
    \underline G'$, note that all intervals are extended only; to see
    $\underline G\ne \underline G'$, note that $xy$ is an edge in
    $\underline G'$ but not in $\underline G$.  Every edge in
    $E(\underline G')\setminus E(\underline G)$ is always incident to
    $N_{\underline G}(C)$, which is a subset of $N_G(M)$, and hence
    exists also in $E(G)$.  Therefore, $\underline G'$ is an interval
    subgraph of $G$ with strictly more edges than $\underline G$,
    which is impossible.  
    \renewcommand{\qedsymbol}{$\lrcorner$}
  \end{proof}

  Let $\underline {G_M}$ be any \misb\ of $G[M]$.  We modify
  $\underline G$ first to make $M$ satisfy the claimed condition.  Let
  $C$ be the component of $\underline G[M]$ such that $|N_{\underline
    G}(C)|$ attains the maximum value among all components of
  $\underline G[M]$.  We have seen that $N_{\underline G}(C)$ induces
  a clique in $\underline G$.  The intersection of all intervals
  $\{I_v: v\in N_{\underline G}(C)\}$ is thus nonempty; let it be
  $[\alpha,\beta]$.  Since $C$ is connected, $\bigcup_{v\in C} I_v =
  [\lp{C}, \rp{C}]$.  The interval $[\lp{C}, \rp{C}]$ intersects
  $[\alpha,\beta]$, and we can choose a common point $\xi$ in them.
  We construct another graph ${\underline G}'$ by projecting an
  interval model for $\underline {G_M}$ into $[\xi-\epsilon,
  \xi+\epsilon]$.  The graph represented by the set of new intervals
  will be the sought-after interval graph $\underline G'$.

  We now verify $\underline G'\subseteq G$ and $||\underline
  G'||\ge||\underline G||$.  On the one hand, $V(G)\setminus M$
  induces the same subgraph in $\underline G'$ and $\underline G$.  On
  the other hand, by assumption, $\underline G'[M] = \underline {G_M}$
  is an interval subgraph of $G[M]$ and has no more edges than
  $\underline G[M]$.  Therefore, it suffices to consider edges between
  $M$ and $V(G)\setminus M$.  For $\underline G'$, there edges are
  $M\times N_{\underline G}(C)$.  Since $N_{\underline G}(C)$ is a
  subset of $N_G(M)$, it holds that $\underline G'\subseteq G$.  By
  selection of $C$ (i.e., $N_{\underline G}(C)$ has the largest size),
  $||\underline G'||\ge||\underline G||$.  

  We have now constructed a \misb\ of $G$ where $M$ satisfies the
  claimed conditions.  Only intervals for vertices in $M$ are changed,
  and thus this operation can be successively applied on the maximal
  strong modules of $G$ one by one.  If a module already satisfies the
  conditions, then it remains true after modifying other modules.
  Therefore, repeating this process will derive a claimed \misb\ of
  $G$.
\end{proof}

As explained below, this settles Thm.~\ref{thm:modules-subgraph} as
well.  Thm.~\ref{thm:modules-subgraph} will not be directly used in
this paper, and thus the reader may safely skip the following proof
without losing track of the development of our algorithms.
\begin{lemma}\label{lem:modules-subgraph-equal}
  Thms.~\ref{thm:modules-subgraph} and \ref{thm:modules-subgraph+} are
  equivalent.
\end{lemma}
\begin{proof}
  We show first that Thm.~\ref{thm:modules-subgraph} implies
  Thm.~\ref{thm:modules-subgraph+}.  Let $G$ be a graph of which every
  $4$-hole is contained in some maximal strong module.  Let $\{M_1,
  \cdots, M_p\}$ be the set of maximal strong modules of $G$, and let
  $Q$ be the quotient graph defined by them.  Let $\underline G$ be a
  \misb\ of $G$; for each $1\le i\le p$, we replace $G[M_i]$ by
  $\underline G[M_i]$, which is an interval subgraph; let $G'$ denote
  the obtained graph.  Clearly, $G'$ is a subgraph of $G$, and
  $\underline G$ is a \misb\ of $G'$.  Moreover, every $M_i$ remains a
  module of $G'$ and thus $Q$ is a quotient graph of $G$.  By
  Lem.~\ref{thm:clique-modules-subgraph+}, $G'[M_i]$ is not a clique
  if and only if $G[M_i]$ is not a clique.  Thus, $G'$ is
  $4$-hole-free, and by Thm.~\ref{thm:modules-subgraph}, there is a
  \misb\ $\underline G'$ of $G'$ such that $\underline G'[M_i] =
  G'[M_i] = \underline G[M_i]$ for each $1\le i \le p$.  This implies
  that $\underline G'[M_i]$ is a \misb\ of $G[M_i]$, and the
  substitutability follows from Cor.~\ref{lem:lig-and-max-modules}.
  Moreover, since $\underline G'$ and $\underline G$ are both
  \misb s of $G'$, they have the same size, which implies that
  $\underline G'$ is a \misb\ of $G$ as well.  This verifies that
  $\underline G'$ satisfies the claimed conditions of
  Thm.~\ref{thm:modules-subgraph+}, and concludes this direction.

  We now verify the other direction.  Let $G$ be a $4$-hole-free
  graph.  Note that every strong module different from $V(G)$ is a
  subset of some maximal strong module $M$, and a strong module in
  $G[M]$ \cite{habib-10-survey-md}.  We first use inductive reasoning
  to show that the assertions i) and ii) of
  Thm.~\ref{thm:modules-subgraph} hold for every strong module of $G$.
  The base case is trivial: the largest strong module is $V(G)$.  The
  inductive steps follow from Thm.~\ref{thm:modules-subgraph+}: since
  every strong module $M$ induces a $4$-hole-free subgraph $G[M]$, its
  quotient graph trivially satisfies the condition of
  Thm.~\ref{thm:modules-subgraph+}.  This settles all strong modules,
  and then we consider modules that are not strong.  Such a module $M$
  is composed of more than one strong modules, and they are either
  pairwise adjacent or pairwise nonadjacent.  In the first case,
  (noting that graph contains no $4$-hole,) at most one of these
  strong modules is nontrivial.  In the second case, they are
  different components of $G[M]$.  Both cases are straightforward.
\end{proof}

\subsection{Modules in \misp s}
Before the proof of Thm.~\ref{thm:modules-supergraph+}, we show a
stronger result on {\em connected modules}, i.e., modules inducing
connected subgraphs, of a graph with respect to its \misp s.  For a
subset $U$ of vertices, we denote by $\mathring N_G(U)$ the set of
{common neighbors} of $U$, i.e., $\mathring N_G(U) := \bigcap_{v\in U}
N(v)$.  Note that $\mathring N_G(U) \subseteq N_G(U)$, and the
equality is attained if and only if $U$ it is a module of $G$.
\begin{theorem}\label{thm:preserving-modules}
  Let $\widehat G$ be a \misp\ of a graph $G$.  Every connected module
  $M$ of $G$ is a module of $\widehat G$, and if $\widehat G[M]$ is
  not a clique, then replacing $\widehat{G}[M]$ by any \misp\ of
  $G[M]$ in $\widehat{G}$ gives a \misp\ of $G$.
\end{theorem}
\begin{proof}
  The statement holds vacuously if $M$ consists of a single vertex or
  a component; hence we may assume $|M| > 1$ and $N_G(M)\ne\emptyset$.
  Let ${\cal I} = \{I_v:v\in V(G)\}$ be a normalized interval model
  for ${\widehat G}$.  We define $\alpha = \min_{v\in M}\rp{v}$ and
  $\beta = \max_{v\in M}\lp{v}$.  Let $x$ and $y$ be the vertices such
  that $\rp{x} = \alpha$ and $\lp{y} = \beta$; possibly $x = y$, which
  is irrelevant in the following argument.  By assumption, $\mathring
  N_G(M) = N_G(M) \subseteq \mathring N_{\widehat G}(M) \subseteq
  N_{\widehat G}(M)$, and the first assertion is equivalent to
  $N_{\widehat G}(M) = \mathring N_{\widehat G}(M)$.  Suppose, for
  contradiction, that there exists $z\in N_{\widehat G}(M)\setminus
  \mathring N_{\widehat G}(M)$, then we modify $\cal I$ into another
  set of intervals ${\cal I}' = \{I'_v: v\in V(G)\}$.  We argue that
  the interval graph $\widehat G'$ defined by $\cal I'$ is a
  supergraph of $G$ and has strictly smaller size than $\widehat G$.
  This contradicts the fact that $\widehat G$ is a minimum interval
  supergraph of $G$, and thus the assertion must be true.
  Since $\cal I$ is normalized, $\alpha\ne \beta$.

  {\em Case 1, $\alpha > \beta$.}  Then $M$ induces a clique of
  ${\widehat G}$.  We have $[\beta,\alpha]\subseteq I_v$ for every
  $v\in M$, and $I_u\cap [\beta,\alpha]\ne\emptyset$ for every $u\in
  \mathring N_{\widehat G}(M)$.  We construct ${\cal I'}$ as follows.
  We keep $\rpp{'}{x} = \alpha$ and $\lpp{'}{y} = \beta$; for $v\in
  M\setminus \{y\}$, we set $\lpp{'}{v}$ to a distinct value in
  $(\beta-\epsilon, \beta)$; and for $v\in M\setminus \{x\}$, we set
  $\rpp{'}{v}$ to a distinct value in $(\alpha, \alpha+\epsilon)$.
  For each $u\in V(G)\setminus M$, we set $I'_u = I_u$.  In the graph
  $\widehat G'$ represented by $\cal I'$, the subgraph induced by $M$
  is a clique; the subgraph induced by $V(G)\setminus M$ is the same
  as $\widehat G - M$; and $M$ is completely connected to
  $N_G(M)\subseteq \mathring N_{\widehat G}(M)$.  This verifies
  $G\subseteq \widehat G'$.  On the other hand, by the construction of
  the new intervals, $I'_v\subseteq I_v$ holds for every vertex $v$,
  and thus $N_{\widehat G'}(v) \subseteq N_{\widehat G}(v)$; it
  follows that $\widehat G'\subseteq \widehat G$.  By assumption that
  $z\not\in \mathring N_{\widehat G}(M)$, the interval $I'_z$($ =
  I_z$) is either to the left of $\beta$ or to the right of $\alpha$,
  and then from the choice of $\epsilon$ we can conclude that it is
  either to the left of $\beta-\epsilon$ or to the right of
  $\alpha+\epsilon$.  As a result, $z\not\sim M$ in $\widehat G'$ and
  thus $\widehat G'\ne \widehat G$; in other words, $\widehat G'$ is a
  proper subgraph of $\widehat G$.  This contradiction verifies the
  first assertion for the case $\alpha>\beta$.

\begin{figure*}[h!]
  \centering  \includegraphics{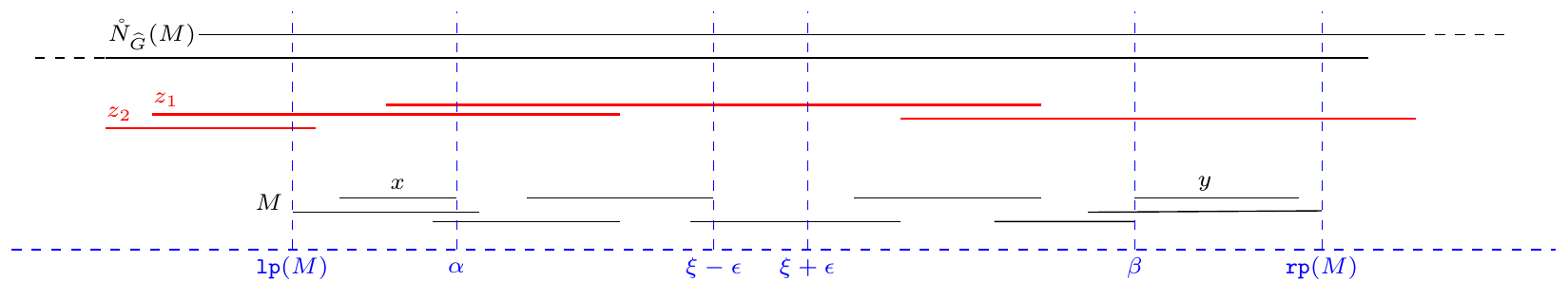} 
  \caption{Case 2 of proof of Thm.~\ref{thm:preserving-modules}.\protect\\
    A thick (red) interval breaks the modularity of $M$.}
  \label{fig:interval-representation}
\end{figure*} {\em Case 2, $\alpha < \beta$.}  Then $x\ne y$ and
$x\not\sim y$.  We have $I_v\cap [\alpha,\beta]\ne\emptyset$ for every
$v\in M$, and $[\alpha,\beta]\subset I_u$ for every $u\in \mathring
N_{\widehat G}(M)$.  (See Figure~\ref{fig:interval-representation}.)
We construct ${\cal I'}$ as follows.  For each point $\rho\in
[\alpha,\beta]$, its thickness $\theta_\rho$ is defined to be the
number of vertices of $V(G)\setminus M$ whose intervals contain this
point, i.e., $\theta_\rho = |\{v\in V(G)\setminus M:\rho\in I_v\}|$.
Let $\xi$ be a point in $[\alpha,\beta]$ that attains the minimum
thickness; without loss of generality, we may assume $\xi$ is
different from any endpoint of intervals in $\cal I$.  For each $v\in
M$, we set $I'_v$ by projecting $I_v$ from $[\alpha,\beta]$ to
$[\xi-\epsilon, \xi+\epsilon]$.  For each $u\in V(G)\setminus M$, we
set $I'_u = I_u$.  Let $\widehat G'$ be represented by $\cal I'$.
The subgraphs induced by $M$ and $V(G)\setminus M$ are the same as
$\widehat G[M]$ and $\widehat G - M$ respectively.  Hence, we only
need to consider edges between $M$ and $V(G)\setminus M$.  For each
$u\in N_G(M)$, the interval $I'_u$($=I_u$) contains $[\alpha,\beta]$
which contains $[\xi-\epsilon, \xi+\epsilon]$ in turn.  Therefore,
$N_G(M) \subseteq \mathring N_{\widehat G'}(M)$, and it follows that
$G\subseteq \widehat G'$.

It remains to verify $||\widehat G'||< ||\widehat G||$, which is
equivalent to 
\begin{equation}
  \label{eq:module}
  |E(\widehat G')\cap (M\times V(G)\setminus M)|< |E(\widehat G)\cap
  (M\times V(G)\setminus M)|.
\end{equation}
By the selection of $\xi$ and $\epsilon$, no interval in $\cal I$ has
an endpoint in $[\xi-\epsilon, \xi+\epsilon]$.  Thus, for each $u\in
V(G)\setminus M$, the interval $I'_u$ ($= I_u$) contains $\xi$ if and
only if $[\xi-\epsilon, \xi+\epsilon] \subset I'_u$.  As a result,
$N_{\widehat G'}(M) = \mathring N_{\widehat G'}(M) = \{u\in
V(G)\setminus M:\xi\in I_u\}$, and the left-hand side of
\eqref{eq:module} is equal to $|M|\cdot \theta_\xi$.  We now consider
the right-hand side of \eqref{eq:module}, i.e., the number of edges
between $M$ and $V(G)\setminus M$ in ${\widehat G}$.  For every $v\in
M$, the interval $I_v$ contains some point $\rho \in [\alpha,\beta]$,
which means $v$ is adjacent to all vertices in $\{u\in V(G)\setminus
M:\rho\in I_u\}$.  By the selection of $\xi$, it holds that
$|N_{\widehat G}(v)\setminus M|\ge \theta_\xi = |N_{\widehat
  G'}(v)\setminus M|$.  For the correctness of \eqref{eq:module}, it
suffices to show that this is strict for at least one vertex in $M$.

As $z\not\in \mathring N_{\widehat G}(M)\cup M$, the interval $I'_z$
($= I_z$) does not contain $[\alpha,\beta]$ (see the thick/red edges
in Figure~\ref{fig:interval-representation}).  If $\alpha < \rp{z} <
\beta$ (see $z_1$ in Figure~\ref{fig:interval-representation}), then
$\theta_{\rp{z}}> \theta_{\rp{z} + \epsilon}\ge \theta_{\xi}$.  As
$\widehat G[M]$ is connected, there exists a vertex $v\in M$ such that
${\rp{z}}\in I_v$, and we are done.  A symmetric argument applies when
$\alpha < \lp{z} < \beta$.  Hence we may assume there exists no vertex
$u\in V(G)\setminus M$ such that $I'_u$ ($=I_u$) has an endpoint in
$[\alpha,\beta]$.  Suppose now $\rp{z} < \alpha$ (see $z_2$ in
Figure~\ref{fig:interval-representation}), and let $v\in N_{\widehat
  G}(z)\cap M$.  By the selection of $\alpha$, it holds that
$\alpha\in I_v$ and $v$ is adjacent to all vertices in $\{u\in
V(G)\setminus M: \alpha\in I_u\}$, which does not contain $z$.
Therefore, $|N_{\widehat G}(v)\setminus M| \ge 1 + \theta_\alpha >
\theta_\xi$.  A symmetric argument applies if $\lp{z} > \beta$.  This
verifies \eqref{eq:module} and finishes the proof of case 2 and the
first assertion.

For the second assertion, we may assume that $\widehat G[M]$ is not a
clique.  Then $N_{\widehat G}(M)$ must be a clique, and thus we can
take the nonempty intersection of all intervals for $N_{\widehat
  G}(M)$; let it be $[\alpha,\beta]$.  Then replacing intervals
$\{I_v: v\in M\}$ by an interval model of any minimum interval
supergraph of $G[M]$ projected to $[\alpha,\beta]$ will makes another
interval model.  Since $N_{G}(M)\subseteq N_{\widehat G}(M)$, which
remains common neighbors of $M$, the graph defined by this new model
is clearly an interval supergraph of $G$.  It is easy to verify that
its size is no larger than $\widehat G$.  This completes the proof.
\end{proof} 

{Thm.}~\ref{thm:preserving-modules} will ensure preservation of any
connected module in perpetuity.  Observing that a graph $\widehat G$
is a minimum interval supergraph of $G$ if and only if it is a minimum
interval supergraph of any graph $G'$ satisfying $G \subseteq G'
\subseteq \widehat G$, it can be further strengthened to:
\begin{corollary}\label{lem:preserving-modules}
  Let $\widehat G$ be a minimum interval supergraph of graph $G$.  A
  connected module $M$ of any graph $G'$ satisfying $G \subseteq G'
  \subseteq \widehat G$ is a module of $\widehat G$.
\end{corollary}

We are now ready to prove Thm.~\ref{thm:modules-supergraph+}, which is
restated below.
\paragraph{Theorem~\ref{thm:modules-supergraph+} (restated).}
  There is a \misp\ $\widehat G$ of $G$ such that every maximal strong
  module $M$ of $G$ is a module of $\widehat G$, and if $\widehat
  G[M]$ is not a clique, then replacing $\widehat G[M]$ by any \misp\
  of $G[M]$ in $\widehat G$ gives a \misb\ of $G$.
\begin{proof}
  It suffices to consider maximal strong modules that are not
  connected in $G$.  Let $M$ be such a module and let $C$ be a
  component of $G[M]$.  By definition, $N_G(C) = N_G(M) = \mathring
  N_G(M)$, and $C$ is a connected module of $G$.  Thus, by
  Thm.~\ref{thm:preserving-modules}, $C$ remains a module of $\widehat
  G$.  Let $\{I_v: v\in V(G)\}$ be a normalized interval model for
  $\widehat G$.

  Assume first that $\widehat G$ contains no edge between different
  components of $G[M]$.  Let $x,y$ be the vertices in $N_{\widehat
    G}(C)$ such that $\rp{x}$ and $\lp{y}$ are the smallest and
  largest, respectively.  Suppose $x\not\in N_G(C)$, then we can
  delete edges between $x$ and $C$ to obtain an interval supergraph of
  $G$, which is strictly less edges than $\widehat G$: an interval
  model for the new graph can be obtained by setting $\lp{v} = \rp{x}
  +\epsilon$ or $\rp{v} = \lp{y} - \epsilon$ for every $v\in C$; here
  we are using the fact that the model is normalized.  A symmetrical
  argument applies to $y$, and thus both of $x$ and $y$ must be in
  $N_G(C)$.  We now argue that $N_{\widehat G}(C)$ must induce a
  clique in $\widehat G$.  Suppose, for contradiction, that
  $N_{\widehat G}(C)$ does not induces a clique, then $x\not\sim y$ in
  ${\widehat G}$.  Since $N_G(C) = N_G(M) = \mathring N_G(M)$, for
  every $v\in M\setminus C$, the interval $I_v$ fully contain
  $[\rp{x}, \lp{y}]$.  However, $v$ is then adjacent to $C$ in
  $\widehat G$, which contradicts the assumption.  Therefore,
  $N_{\widehat G}(C)$ induces a clique of $\widehat G$.  We construct
  another \misp\ $\widehat G'$ of $G$ satisfying the specified
  conditions as follows.  Let $C_0$ be the component of of $G[M]$ such
  that $N_{\widehat G}(C_0)$ has the minimum size.  We choose a point
  $\xi$ contained in all intervals for $N_{\widehat G}(C_0)$, and
  project an interval model of any \misp\ of $G[M]$ to $[\xi
  -\epsilon, \xi +\epsilon]$.  Let $\widehat G'$ be the graph defined
  by this new model.  By construction, $N_{\widehat G'}(M) =
  N_{\widehat G}(C_0)$, which fully contains $N_G(C_0)$, and hence
  $G\subseteq \widehat G'$.  The selection of $C_0$ implies that
  $||\widehat G'|| \le ||\widehat G||$.

  Assume now that $E(\widehat G)$ contains edges between different
  components of $G[M]$.  We may add these edges first, and then
  consider the resulted graph $G'$.  According to
  Cor.~\ref{lem:preserving-modules}, $\widehat G$ is a \misp\ of $G'$,
  and this reduces to the previous case.

  Therefore, if $\widehat G[M]$ is not a clique, then it can be
  replaced by any \misp\ of $G[M]$.  This operation only change
  intervals for vertices in $M$, it can be successively applied on the
  maximal strong modules of $G$ one by one.  After that, the condition
  holds for every of them.
\end{proof}
With a similar argument as Lem.~\ref{lem:modules-subgraph-equal}, we
can derive the equivalence between Thm~\ref{thm:modules-supergraph}
and Thm~\ref{thm:modules-supergraph+}.
\begin{lemma}
  Thms.~\ref{thm:modules-supergraph} and \ref{thm:modules-supergraph+}
  are equivalent.
\end{lemma}

 \section{Large caws and long holes in prime locally interval graphs}
\label{sec:forbidden-subgraphs}
This section is devoted to the study of large caws and long holes in
\lig s; they are both minimal witnesses of asteroidal triples.
Thms.~\ref{thm:1} and \ref{thm:2}, as well as some of their
implications, will be derived as a result of this study.

We give a symbol for each vertex in a large caw, i.e., a $\dag$ or
$\ddag$ (see Fig.~\ref{fig:at}).  Recall that $s$ is the {shallow
  terminal}, and the removal of $N[s]$ from this caw leaves an induced
path.  This path connects the other two terminals $l,r$, called
\emph{base terminals}.  The neighbor(s) $c_1,c_2$ of $s$ are the
\emph{center(s)}, and all other vertices, $\{b_1,\dots, b_d\}$, are
called \emph{base vertices}.  The \stpath{b_1}{b_d} through all base
vertices is called the \emph{base}, denoted by $B$.  The center(s) and
base vertices are called non-terminal vertices.  Note that a $\dag$
has only one center, and both $c_1$ and $c_2$ refer to it; while in a
$\ddag$, centers $c_1$ and $c_2$ are decided in the way that
$c_1\not\sim r$ and $c_2\not\sim l$.  In summary, we use
$(s:c_1,c_2:l,B,r)$ to denote both kinds of large caws.  This uniform
notation will greatly simplify our presentation, and another, possibly
more important, benefit is the structural information it reveals.  For
the sake of notational convenience, we will also use $b_0$ and
$b_{d+1}$ to refer to the base terminals $l$ and $r$, respectively,
even though they are not part of the base $B$.

All indices of vertices in a hole $H$ should be understood as modulo
$|H|$, e.g., $h_{-1} = h_{|H|-1}$.  We define the ordering $h_{-1},
h_0, h_1, \dots$ of traversing $H$ to be \emph{clockwise}, and the
other to be \emph{counterclockwise}.  In other words, vertices $h_1$
and $h_{-1}$ are the clockwise and counterclockwise successors,
respectively, of $h_0$; and edges $h_0 h_{1}$ and $h_0 h_{-1}$ are
clockwise and counterclockwise, respectively, from $h_0$.

In this section, we will use $\hv{v}$ as a shorthand for $N[v] \cap
V(H)$, i.e., the closed neighborhood of $v$ in $H$.  The notation
$B^+\langle v\rangle$, where $B^+$ is the \stpath{l}{r} $l b_1 \cdots
b_d r$ in a large caw, is defined analogously.  Note that $v\in
\hv{v}$ if and only if $v\in H$, and then $|\hv{v}| = 3$.  The
following subroutine will be used to find $\hv{v}$ and similar
indices.

\begin{proposition}\label{lem:find-indices}
  Let $U$ be a subset of ordered vertices.  For any vertex $v$, we can
  in $O(d(v))$ time compute an ordered list that contains $i$ if and
  only if $u_i\in N[v]$.
\end{proposition}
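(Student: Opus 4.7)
The plan is to support each query in $O(d(v))$ time by performing a single $O(|G|+\|G\|)$-time preprocessing of $G$ and maintaining an index lookup array. The crucial idea is to ensure that every adjacency list of $G$ is stored in an order consistent with the given ordering of $U$, so that the output can be emitted already sorted by a single linear scan over $N(v)$.

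For the preprocessing, I would fix a linear order $\prec$ on $V(G)$ that refines the ordering of $U$ (i.e.\ $u_i\prec u_j$ iff $i<j$) and bucket-sort every adjacency list according to $\prec$; this is a standard radix-sort over all edge-endpoints and takes $O(|G|+\|G\|)$ time in total. In addition I allocate an array $\pi:V(G)\to\mathbb{N}\cup\{\bot\}$ with $\pi(u_i)=i$ for each $u_i\in U$ and $\pi(x)=\bot$ otherwise; this takes $O(|U|)$ time given the ordered list of $U$.

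To answer the query for $v$, I would scan $N(v)$ in its stored $\prec$-order and, whenever the current neighbor $x$ satisfies $\pi(x)\neq\bot$, append $\pi(x)$ to the output list. Because $\prec$ agrees with the ordering of $U$, the indices are appended in strictly increasing order. If moreover $v\in U$, I splice $\pi(v)$ into place on the fly, namely immediately before the first appended value that would exceed $\pi(v)$ (or at the end of the list if no such value is ever produced). Each step of the scan is $O(1)$ work, so the query cost is $O(d(v))$ as claimed.

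The only subtle point is the global preprocessing, which is the main obstacle to avoiding a sorting factor: without sorted adjacency lists one can compute the set $\{i:u_i\in N[v]\}$ in $O(d(v))$ time via $\pi$, but producing it \emph{in order} would in general cost an extra $\log$-factor. The observation is that the $\prec$-sort of adjacency lists has to be performed only once and is then reused by every such query in our algorithms, fitting comfortably within the overall linear-time budget.
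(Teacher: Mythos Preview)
Your proof is correct but takes a different route from the paper. The paper simply scans $N(v)$, collects the indices of those neighbors that lie in $U$ (via an implicit $O(1)$ lookup like your $\pi$), and then \emph{radix sorts} the collected indices before returning them; there is no global sorting of adjacency lists. Your approach instead invests a one-time $O(|G|+\|G\|)$ preprocessing to bucket-sort every adjacency list consistently with the ordering of $U$, so that each query is a bare linear scan with the output already in order.

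Both are valid. The paper's version is lighter on preprocessing but leans on radix sort being $O(d(v))$, which requires the usual device of a reusable bucket array of size $|U|$ (reset only the touched cells) or an equivalent assumption; the paper is terse about this. Your version makes the global cost explicit and removes the per-query sort entirely, which is arguably cleaner bookkeeping and also pays off when the proposition is invoked for every vertex. The only caveat is that your preprocessing is tied to a \emph{fixed} $U$; the paper's per-query radix sort works even if $U$ changes between calls, though in the paper's applications $U$ is indeed fixed (typically $V(H)$ or $B^+$), so this is not a practical difference here.
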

\begin{proof}
  Let $U = \{u_0, u_2, \ldots, u_{|U|-1}\}$.  We pre-allocate a list
  {\sf IND} of $d(v)$ slots, initially all empty.  For each neighbor
  of $v$, if it is $u_i$, then add $i$ into the next empty slot of
  {\sf IND}.  After all neighbors of $v$ have been checked, we shorten
  {\sf IND} by removing empty slots from the end, which leaves
  $|N[v]\cap U|$ slots.  We radix sort these indices and return the
  sorted list.
\end{proof}

Another common task of our algorithm is to explore a connected graph
to find a vertex with some specific property; we use breadth-first
search (BFS), which outputs a rooted tree.  Recall that except the
starting vertex $o$, i.e., the root, every other vertex $v$ is
explored after a specific neighbor of $v$ (i.e., its earliest explored
neighbor), which is denoted by {\sf prev}($v$).  Between $v$ and $o$
there is a unique path in the BFS tree, which is a shortest
\stpath{v}{o} and can be retrieved using the {\sf prev} function.  BFS
can be used to find some vertex that has some property and has the
shortest distance to the starting vertex.  As long as the total time
for the property test for all vertices can be done in linear time, the
whole process remains linear-time.  In particular, this is the case
when the property for vertex $v$ can be tested in $O(d(v))$ time.
Another variation of BFS we need is that, instead of starting from a
single vertex, we may start from a subset $X$ of vertices; it can be
viewed as applying BFS on the extended graph with a new vertex
adjacent to all vertices in $X$.  We can then use the {\sf prev}
function to retrieve a shortest path from any vertex to $X$.

\subsection{Shallow terminals of chordal asteroidal witnesses}
\label{sec:shall-terminals}
The most important vertex of a large caw is its shallow terminal $s$,
which is the focus of Theorem~\ref{thm:1}.
\paragraph{Theorem~\ref{thm:1} (restated).}
  Let $W$ be a large caw of a prime graph $G$.  We can in $O(||G||)$
  time find a subgraph of $G$ in $\cF_{LI}$ if the shallow terminal of
  $W$ is non-simplicial in $G$.
\begin{proof}
  Let ($s:c_1,c_2:l,B,r$) be the caw.  Before presenting the main
  procedure of this proof, we introduce two subroutines.  They apply
  to two special structures that arise frequently in the main
  procedure.  The outcomes of both subroutines are always subgraphs in
  $\cF_{LI}$; hence the detection of either of these two structures
  will suffice to terminate the main procedure.  Both structures
  involve some vertex $x\in N(s)\setminus \{c_1,c_2\}$.  Note that
  $c_1,c_2\in N(s)$ and $x\not\in B$.

  In the first structure $x$ is nonadjacent to one or both of $c_1$
  and $c_2$.  We apply subroutine A (Fig.~\ref{fig:thm:1:sa}) when
  $x\not\sim c_2$, and the case $x\not\sim c_1$ can be handled in a
  symmetric way.
  \begin{figure}[h!]  \centering\small
  \setbox4=\vbox{\hsize.9\linewidth \noindent\strut
    \hspace*{0.03\linewidth}\parbox[t]{0.87\linewidth}{ 

      1 \hspace*{2ex} {\bf if} $x\sim b_1$ {\bf then return} $4$-hole
      $x s c_2 b_1 x$;
      \\
      2 \hspace*{2ex} {\bf if} $x\sim b_d$ {\bf then return} $4$-hole
      $x s c_2 b_d x$;
      \\
      2 \hspace*{2ex} {\bf if} $x\sim l$ {\bf then return} $5$-hole
      $x s c_2 b_1 l x$;
      \\
      4 \hspace*{2ex} {\bf if} $x\sim r$ {\bf then return} $5$-hole
      $x s c_2 b_d r x$ or $4$-hole $x s c_2 r x$; \comment{$\dag$
        or $\ddag$ respectively.}
      \\
      5 \hspace*{2ex} {\bf if} $x \sim c_1$ {\bf then return} whipping
      top $\{r,c_2,s,x,c_1,l,b_1\}$; \hfill$\setminus\!\!\setminus$
      {\em only $\ddag$.}
      \\
      6 \hspace*{2ex} {\bf return} long claw $\{x,s,c,b_1,l,b_d,r\}$
      or net $\{x,s,l,c_1,r,c_2\}$; \comment{$\dag$ or $\ddag$
        respectively.}  

    }\strut} $$\boxit{\box4}$$ \vspace*{-6mm}
  \caption{Subroutine A for the proof of Thm.~\ref{thm:1}.}
  \label{fig:thm:1:sa}
  \end{figure}

  The second structure has a maximal sub-path ($b_p\cdots b_q$) of
  $B^+\langle x\rangle$ that contains at most two vertices of $B$,
  i.e., $q-p\le 2$ and the equality can only be attained when $p=0$ or
  $q=d+1$.  (Recall that $b_0$ and $b_{d+1}$ do not belong to $B$.)
  The maximality implies that if $b_p$ (resp., $b_q$) is not an end of
  $B^+$, i.e., $p\ge 1$ (resp., $q\le d$), then $x\not\sim b_{p-1}$
  (resp., $x\not\sim b_{q+1}$).  We apply subroutine B
  (Fig.~\ref{fig:thm:1:sb}).

  \begin{figure}[h!]
  {\centering\small
  \setbox4=\vbox{\hsize.9\linewidth \noindent\strut
    \hspace*{0.03\linewidth}\parbox[t]{0.87\linewidth}{ 

      0 \hspace*{2ex} {\bf if} $x$ is nonadjacent to $c_1$ or $c_2$
      {\bf then call} subroutine A;
      \\
      1 \hspace*{2ex} {\bf if} $p=0$ {\bf then}
      \\
      \hspace*{7ex} {\bf if} $q=0$ {\bf then return} $4$-hole ($x l
      b_1 c_2 x$);
      \\
      \hspace*{7ex} {\bf return} sun $\{s, x, c_2, l, b_1, b_2\}$ or
      rising sun $\{s, x, c_2, l, b_1, b_2, b_3\}$;
      \comment{$q = 1,2$.}
      \\
      2 \hspace*{2ex} {\bf if} $p=1$ {\bf then}
      \\
      \hspace*{7ex} {\bf if} $x\sim b_{q+2}$ {\bf then return}
      $4$-hole ($x b_q b_{q+1} b_{q+2} x$);
      \\
      \hspace*{7ex} {\bf if} $q=1$ {\bf then return} whipping top
      $\{l, b_1, x, s, c_2, b_3, b_2\}$;
      \\
      \hspace*{7ex} {\bf return} $\{s, x, l, b_1\ b_{2}, b_3\}$;
      \comment{$q = 2$.}
      \\
      3 \hspace*{2ex} {\bf if} $q=d$ or $d+1$ {\bf then} {\sf
        symmetric as steps 1,2};
      \\
      4 \hspace*{2ex} {\bf else}  \comment{$1<p\le q< d$.}
      \\
      \makebox[7ex][l]{} {\bf if} $x\sim b_{q+2}$ {\bf then return}
      $4$-hole ($x b_q b_{q+1} b_{q+2} x$);
      \\
      \hspace*{7ex} {\bf if} $q=p$ {\bf then return} long claw
      $\{b_{p-2}, b_{p-1}, b_p, s, x, b_{p+2}, b_{p+1}\}$;
      \\
      \makebox[7ex][l]{} {\bf return} net $\{s,x,b_{p-1},b_p,
      b_{q},b_{q+1}\}$.

    }\strut} $$\boxit{\box4}$$\vspace*{-6mm}
}
\caption{Subroutine B for the proof of Thm.~\ref{thm:1}.}
\label{fig:thm:1:sb}
\end{figure}

It is easy to verify that both subroutines correctly return in time
$O(||G||)$ a subgraph in $\cF_{LI}$.  Now we are ready to present the
main procedure (Fig.~\ref{fig:non-simplicial-shallow-terminal}).

  \begin{figure}[h!]  \centering \small
  \setbox4=\vbox{\hsize.9\linewidth \noindent\strut
    \hspace*{0.03\linewidth}\parbox[t]{0.87\linewidth}{ 
      \vspace*{1mm}
      1 \hspace*{2ex} {\bf for each} $x\in N(s)\setminus \{c_1,c_2\}$
      {\bf do} \comment{initially $C = \{c_1,c_2\}$.}
      \\
      \makebox[7ex][l]{1.1} {\bf if} $B^+\langle x\rangle = \emptyset$ 
      {\bf then goto} 1;
      \\
      \makebox[7ex][l]{1.2} {\bf if} $x\not\sim c_1$ or $c_2$ {\bf
        then call} subroutine A with $W$ and $x$;
      \\
      \makebox[7ex][l]{1.3} find the maximal sub-path ($b_i\cdots
      b_j$) of $B^+\langle x\rangle$ such that $i$ is minimum;
      \\
      \makebox[7ex][l]{1.4} {\bf if} $j-i\le 1$ or $j=2$ or $i=d-1$
      {\bf then} call subroutine~B with $W$ and $x$;
      \\
      \makebox[7ex][l]{1.5} \parbox[t]{.8\textwidth}{ {\bf if} $i\ge
        1$ {\bf then} $c_2 =x;\;p=i$; {\bf else} $p = 1$;
        \\
        {\bf if} $j\le d$ {\bf then} $c_1 = x;\;q=j$; {\bf else} $q =
        d$;}
      \\
      \makebox[7ex][l]{1.6} $C = C \cup \{x\}$; $l = b_{p-1}$; 
      $B = b_{p}\cdots b_q$; $r = b_{q+1}$;
      \\
      2 \hspace*{2ex} {\bf if} there are $x,y\in 
      C$ such that $x\not\sim y$ {\bf then return} $4$-hole $s x b_1 y s$;
    \\
    3 \hspace*{2ex} from $s$, apply BFS in $G - C$ to find the
    first vertex $x$ such that $x\sim B^+$ or $x\not\sim y\in C$
    \\
    \hspace*{4ex} $v={\sf prev}(x)$; {\bf if} $v\ne s$ {\bf
      then} $u={\sf prev}(v)$;
    \\
    4 \hspace*{2ex} {\bf if} $x\sim B^+$ {\bf then}
    \\
    \makebox[7ex][l]{4.1} find the maximal sub-path $b_i\cdots
      b_j$ of $B^+\langle x\rangle$ such that $i$ is minimum;
    \\
    \makebox[7ex][l]{4.2} {\bf if} $j-i\le 1$ or $j=2$ or $i=d-1$
      {\bf then} call subroutine~B with ($v:c_1,c_2:l,B,r$) and $x$;
    \\
    \makebox[7ex][l]{4.3} {\bf if} $i = 0$ and $j = d+1$ {\bf then}
    \\
    \hspace*{10ex} {\bf return} whipping top $\{s,c_1,l,b_1,x,b_d,r\}$
    or sun $\{s,c_1,l,x,r,c_2\}$; \comment{$\dag$ or $\ddag$ respectively.}
    \\[-1ex]
    \makebox[7ex][l]{4.4} {\bf call} subroutine A with $W'$ and $u$,
    where $W'=$\parbox{2in}{ \[ \begin{cases} (v: x,c_2: l, b_1\cdots
        b_j, b_{j+1}) & \text{if } i = 0,
        \\
        (v: c_1,x:b_{i-1}, b_i\cdots b_d, r) & \text{if } j = d+1,
        \\
        (v: x,x: b_{i-1}, b_i\cdots b_j, b_{j+1}) &
        \text{otherwise}; \end{cases} \] }
  \\[-1ex]
  5 \hspace*{2ex} {\bf else} \comment{$x\not\sim B^+$.}
  \\
  \makebox[7ex][l]{5.1} {\bf if} $x$ is nonadjacent to $c_1$ or $c_2$
 {\bf then} call subroutine~A with ($v:c_1,c_2: l,B,r$) and $x$;
  \\
  \makebox[7ex][l]{5.2} {\bf if} $y$ is a common neighbor of $B^+$
 {\bf then}
  \\
  \hspace*{10ex} {\bf return} whipping top $\{x,c_1,l,b_1,y,b_d,r\}$ or
  sun $\{x,c_1,l,y,r,c_2\}$; \comment{$\dag$ or $\ddag$ respectively.}
  \\[-1ex]
  \makebox[7ex][l]{5.3} \mbox{call} subroutine~A with $W'$ and $x$,
  where $W'=$\parbox{2in}{
    \[
    \begin{cases}
      (v: y,c_2: l, b_1\cdots b_j, b_{j+1}) & \text{if }l\sim y,\\
      (v: c_1,y: b_{i-1}, b_i\cdots b_d, r) & \text{if }r\sim y,\\
      (v: y,y: b_{i-1}, b_i\cdots b_j, b_{j+1}) & \text{otherwise}.
    \end{cases}
    \]
  }

}\strut} $$\boxit{\box4}$$ \vspace*{-6mm}
\caption{Main procedure for the proof of Thm.~\ref{thm:1}.}
\label{fig:non-simplicial-shallow-terminal}
  \end{figure}

  Let us verify the correctness of the main procedure.  Step~1
  searches for a special large caw in a local and greedy way.  Some of
  its iterations might update the caw $W$, and when it ends, the
  following conditions are satisfied by $W$:
  \begin{inparaenum}[(\itshape 1\upshape)]
  \item its shallow terminal is still $s$;
  \item its base is a subset of the base of the original caw given in
    the input; and
  \item if a vertex $x\in N(s)$ is adjacent to the base, then it is a
    common neighbor of it.
  \end{inparaenum}
  In the progress of this step, (1) and (2) will always be satisfied
  by the current caw, while (3) is satisfied by the set $C$ of
  vertices in $N(s)$ that have been explored: in the progress, $C$ is
  the set of common neighbors of $s$ and the current base $B$.  It is
  clear that this holds true initially, when only the center(s) of $W$
  are explored, both in $C$.  Each iteration of the for-loop explores
  a new vertex $x$ in $N(s)$.  A vertex nonadjacent to $B^+$ satisfies
  all three conditions vacuously; hence omitted (step 1.1).  If one of
  two aforementioned structures is found, the procedure calls either
  subroutine A (step 1.2) or subroutine B (step 1.4).  Otherwise it
  updates $W$ accordingly (steps 1.5 and 1.6).  It is easy to verify
  that $W$ is a valid caw.  Moreover, after each update of $W$, the
  new base is a subset of the previous one; hence (3) remains true for
  all explored vertices with respect to the new caw.  After step 1,
  every neighbor of $s$ is either in $C$, which are common neighbors
  of $\{s\} \cap B$, or nonadjacent to $B$.  Step~2 runs a check to
  ensure that $C$ induces a clique, which is straightforward.

  Step 3 then applies BFS to find vertex $x$ that is either adjacent
  to $B^+$ or nonadjacent to some vertex $y\in C$.  The existence of a
  vertex nonadjacent to some vertex in $C$ can be argued by
  contradiction.  Let $M$ be the component of $G - C$ that contains
  $s$.  Suppose, for contradiction, that $M$ is completely adjacent to
  $C$, then $M$ is a module of $G$.  Since $G$ is prime, we must have
  $M =\{s\}$, and then $s$ is simplicial in $G$, contradicting the
  assumption.  This ensures that the desired vertex can be found, but
  if a neighbor of $B^+$ is met first, then it will be $x$; it might
  be adjacent to all vertices in $C$.  Clearly, $x$ cannot be $s$;
  hence ${\sf prev}(x)$ is well-defined.  This verifies step 3.  Based
  on which condition $x$ satisfies, the procedure enters one of steps
  4 and 5.  Note that $x$ is not in $C$; hence if it is adjacent to
  $B^+$, then $v$ cannot be $s$.  In other words, $u$ is defined in
  step 4.  By assumption, $v$ is adjacent to every vertex in $C$ but
  nonadjacent to $B^+$, which means that ($v: c_1, c_2: l,B,r$) is a
  caw isomorphic to $W$.  Steps 4.1-4.3 and 5.1-5.3 are
  straightforward.  For step 4.4, note that $u\not\sim x$.

  We now analyze the running time of the main procedure.  Note that
  subroutines $A$ and $B$ can be called at most once, which terminate
  the procedure.  The dominating step in the for-loop of step 1 is
  finding the sub-path (step 1.3), which takes $O(d(x))$ time for each
  $x$ (Prop.~\ref{lem:find-indices}).  In total, step 1 takes
  $O(||G||)$ time.  The condition of step 3, i.e., whether $x$ is
  adjacent to $B^+$ and a common neighbor of $C$, can be checked in
  $O(d(x))$ time; hence step 3 can be done in $O(||G||)$ time.  Steps
  2, 4, and 5 are straightforward and all can be done in $O(||G||)$
  time.  This completes the time analysis and the proof.
\end{proof}

As a result, if a prime \lig\ $G$ is chordal, then $G - SI(G)$ is an
interval graph.

\subsection{Holes}\label{sec:holes}
For holes in a prime graph, we focus on their relation with other
vertices.  Let $H$ be a given hole of a prime graph $G$; we may assume
$|H|\ge 6$.  We start from characterizing \hv{v} for every $v\in G$:
we specify some forbidden structures not allowed to appear in a prime
\lig, and more importantly, we show how to find a subgraph of $G$ in
$\cF_{LI}$ if one of these structures exists.
\begin{lemma}\label{lem:non-consecutive}
  For every vertex $v$, we can in $O(d(v))$ time decide whether or
  not $\hv{v}$ induces a (possibly empty) sub-path of $H$, and if yes,
  in the same time find the ends of the path.  Otherwise, we can in
  $O(||G||)$ time find a subgraph of $G$ in $\cF_{LI}$.
\end{lemma}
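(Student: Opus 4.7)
My plan is to first compute, using Proposition~\ref{lem:find-indices} applied to the ordered set $V(H)$, the sorted list $I_v$ of indices $i$ with $h_i\in N[v]\cap V(H)$; this costs $O(d(v))$ time. If $v=h_k\in V(H)$ then $I_v$ is automatically the consecutive block $\{k-1,k,k+1\}$ and the endpoints $h_{k-1},h_{k+1}$ are returned, so I may assume $v\notin V(H)$. A single scan of $I_v$ decides whether its entries form a cyclically consecutive block of $\{0,1,\ldots,|H|-1\}$, recording the two endpoints in the affirmative case. This phase takes $O(|I_v|)=O(d(v))$ time overall and establishes the first half of the statement.

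If $I_v$ is not cyclically consecutive, then $\hv{v}$ decomposes into at least two maximal arcs separated by gaps. During the same scan I record the shortest gap $h_{a+1},\ldots,h_{a+g}$ (so $v\sim h_a$, $v\sim h_{a+g+1}$, and $v\not\sim h_{a+1},\ldots,h_{a+g}$), the arc $A$ ending at $h_a$, and an arbitrary neighbor $h_c$ of $v$ lying in a different arc. When $g=1$, the cycle $v h_a h_{a+1} h_{a+2} v$ is induced and is a $4$-hole; when $g=2$, the cycle $v h_a h_{a+1} h_{a+2} h_{a+3} v$ is an induced $5$-hole. In both cases the only non-edges that need to be checked are $h_a h_{a+g+1}$ (absent since $|H|\ge 4$) and $vh_{a+i}$ for $1\le i\le g$ (absent by the gap definition), so these holes are returned in $O(1)$ additional time.

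When $g\ge 3$ every gap has length at least three, and I would split into three subcases according to $|A|$. If $|A|=1$, the seven vertices $\{h_{a-2},h_{a-1},h_a,h_{a+1},h_{a+2},v,h_c\}$ induce a long claw centered at $h_a$ with arms $h_a h_{a-1} h_{a-2}$, $h_a h_{a+1} h_{a+2}$, and $h_a v h_c$. If $|A|=2$, so $A=\{h_{a-1},h_a\}$, then $\{h_{a-2},h_{a-1},h_a,h_{a+1},v,h_c\}$ induces a net whose triangle is $\{v,h_{a-1},h_a\}$, with pendants $h_{a-2},h_{a+1},h_c$ attached to $h_{a-1},h_a,v$ respectively. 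If $|A|\ge 3$, writing $A=\{h_{a'},\ldots,h_a\}$ with $a-a'\ge 2$ and choosing $h_c$ to be an endpoint of its own arc so that some neighbor $h_{c'}$ of $h_c$ in $H$ lies in a gap, the seven vertices $\{h_{a'-1},h_{a'},h_a,h_{a+1},v,h_c,h_{c'}\}$ induce a long claw centered at $v$ with arms $v h_{a'} h_{a'-1}$, $v h_a h_{a+1}$, and $v h_c h_{c'}$; here $h_{a'}$ and $h_a$ are non-adjacent in $H$ since they lie at distance $a-a'\ge 2$. In every subcase the listed non-edges follow from $|H|\ge 6$ together with the gap hypothesis (each non-neighbor of $v$ that I use lies inside a gap of length at least three), and the witness is produced after $O(||G||)$ work on the adjacency lists of $v$ and the $O(1)$ relevant vertices of $H$.

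The main obstacle is the case analysis for $g\ge 3$: one must ensure that the arc endpoint $h_a$ and the auxiliary neighbor $h_c$ are chosen so that the resulting small constellation is genuinely induced; in particular, the two vertices picked from $A$ must be non-adjacent in $H$, and $h_c$ must have at least one $H$-neighbor in a gap. Pre-recording all arc endpoints during the initial scan of $I_v$ suffices for these choices, after which each subcase reduces to a bounded number of adjacency tests and the total bound $O(||G||)$ follows.
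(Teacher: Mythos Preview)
Your argument for the non-consecutive case is essentially sound and close in spirit to the paper's subroutine~A, but you have overlooked one case entirely: when $\hv{v}=V(H)$, i.e.\ $v$ is a common neighbor of the whole hole. In that situation $I_v$ is the full set $\{0,\dots,|H|-1\}$, which your scan would declare ``cyclically consecutive'', yet $\hv{v}$ does \emph{not} induce a sub-path of $H$; it induces all of $H$. There are no gaps, so none of your $g=1$, $g=2$, or $g\ge 3$ branches applies, and you produce no witness in $\cF_{LI}$.

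This omission is not a triviality that can be patched locally. When $v$ dominates $H$, no subgraph in $\cF_{LI}$ can be read off from $v$ and $H$ alone (indeed $G[\{v\}\cup V(H)]$ is an interval graph, a ``fan''). The paper handles this case by a genuinely different and global argument: it collects the set $C$ of all common neighbors of $V(H)$, uses primality of $G$ to guarantee the existence of a vertex reachable from $H$ in $G-C$ that is nonadjacent to some $x\in C$, finds such a vertex by BFS, and only then assembles a whipping top, sun, long claw, net, or $4$-hole from $x$, the BFS path, and a short stretch of $H$. This global BFS is precisely why the second time bound in the lemma is $O(\|G\|)$ rather than $O(d(v))$; your proposal never invokes anything beyond the local structure around $v$ and $H$, so it cannot cover this case.
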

\begin{proof}
  We call Prop.~\ref{lem:find-indices} to fetch the ordered list {\sf
    IND} of indices of \hv{v} in $H$.  If {\sf IND} is empty,
  i.e. $v\not\sim V(H)$, then we return a empty path with no vertex.
  Hereafter {\sf IND} is assumed to be nonempty.  We consider first
  the case where {\sf IND} contains at most $|H| - 1$ elements.  Let
  $p$ and $q$ be the first and last elements, respectively, of {\sf
    IND}.  Starting from the first element $p$, we traverse {\sf IND}
  to the end for the first $i$ such that ${\sf IND}[i+1] > {\sf
    IND}[i] + 1$.  If no such $i$ exists, then we return ($h_p\cdots
  h_q$) as the path $P$.  In the remaining cases, we may assume that
  we have found the $i$; let $p_1:={\sf IND}[i]$ and $p_2 := {\sf
    IND}[i+1]$.  We continue to traverse from $i+1$ to the end of {\sf
    IND} for the first $j$ such that ${\sf IND}[j+1] > {\sf IND}[j] +
  1$.  This step has three possible outcomes:
  \begin{inparaenum}[(1)]
  \item if $j$ is found, then $p_3:={\sf IND}[j]$ and $p_4:={\sf
      IND}[j+1]$;
  \item if no such $j$ is found, and at least one of $q<|H|-1$ and
    $p>0$ holds, then $p_3:=q$ and $p_4:= p + |H|$; and
  \item otherwise ($p=0$, $q=|H|-1$, and $j$ is not found).
  \end{inparaenum}
  In the third case, we return ($h_{p_2}\cdots h_{|H| - 1} h_0 \cdots
  h_{p_1}$) as the path induces by \hv{v}.  In the first two cases,
  $p_3$ and $p_4$ are defined, and $p_4 > p_3 +1$.  In other words, we
  have two nontrivial sub-paths, $h_{p_1} h_{p_1+1}\dots h_{p_2}$ and
  $h_{p_3} h_{p_3+1}\dots h_{p_4}$, of $H$ such that $v$ is adjacent
  to their ends but none of their inner vertices.  We then call
  subroutine A (Fig.~\ref{fig:non-consecutive:sa}), whose correctness
  is straightforward.

\begin{figure}[h!]
\setbox4=\vbox{\hsize28pc \noindent\strut
\begin{quote}
  \vspace*{-5mm} \small
  1 \hspace*{2ex} {\bf if} $p_2 < p_1 + 3$ {\bf then return} $4$- or
  $5$-hole $v h_{p_1} h_{p_1+1} \cdots h_{p_2} v$;
  \\
  2 \hspace*{2ex} {\bf if} $p_4 < p_3 + 3$ {\bf then return} $4$- or
  $5$-hole $v h_{p_3} h_{p_3 + 1} \cdots h_{p_4} v$;
  \\
  3 \hspace*{2ex} {\bf if} $p_3 = p_2$ {\bf then return} long claw
  $\{h_{p_1},v,h_{p_2 - 2}, h_{p_2 - 1}, h_{p_2}, h_{p_2 + 1}, h_{p_2
    + 2}\}$;
  \\
  4 \hspace*{2ex} {\bf if} $p_3 = p_2 + 1$ {\bf then return} net
  $\{h_{p_1},v,h_{p_2 - 1},h_{p_2}, h_{p_3}, h_{p_3 + 1}\}$;
  \\
  5 \hspace*{2ex} {\bf return} long claw $\{h_{p_1 + 1} ,h_{p_1}, v,
  h_{p_2-1}, h_{p_2 - 1}, h_{p_3}, h_{p_3 + 1}\}$. \comment{$p_3\ge
    p_2 + 2$.}
\end{quote} \vspace*{-6mm} \strut} $$\boxit{\box4}$$
\vspace*{-9mm}
\caption{Subroutine A for the proof of
  Lem.~\ref{lem:non-consecutive}.}
\label{fig:non-consecutive:sa}
\end{figure}

Assume now that {\sf IND} contains all $|H|$ elements
$\{0,1,\cdots,|H|-1\}$.  Let $C$ be the set of common neighbors of
$V(H)$, which is nonempty.  We find $C$ by using
Prop.~\ref{lem:find-indices} to check each vertex for its neighbors in
$H$.  Starting from $H$, we apply BFS in $G - C$ to find the first
vertex $u$ such that $u\not\sim x$ for some vertex $x\in C$.  The
existence of such a pair of vertices can be argued by contradiction.
Let $M$ be the component of $G - C$ that contains $H$.  Suppose, for
contradiction, that $M$ is completely adjacent to $C$, then $M$ is a
nontrivial module of $G$, which is impossible.

Let $u_0\cdots u_q$ be the searching path that leads from $u_0 \in H$
to $u_q = u$; that is, $u_i = {\sf prev}(u_{i+1})$ for $0\le i< q$.
Note that $u_1\not\in C$.  We find the path induced by $\hv{u_1}$,
which is nonempty and proper; otherwise we can use the previous case.
Based on the value of $q$, we proceed as follows
(Fig.~\ref{fig:non-consecutive:sb}).
  \begin{figure}[h!]
  \centering\small
  \setbox4=\vbox{\hsize.9\linewidth \noindent\strut
    \hspace*{0.03\linewidth}\parbox[t]{0.87\linewidth}{ %

  1 \hspace*{2ex} {\bf if} $q = 1$ {\bf then} 
  \\
  \hspace*{7ex} {\bf if} $\hv{u_1} = \{h_i\}$ {\bf then return} whipping top $\{x,
  h_{i-2}, h_{i-1}, h_{i}, h_{i+1}, h_{i+2}, u_1\}$;
  \\
  \hspace*{7ex} {\bf if} $\hv{u_1} = \{h_i, h_{i+1}\}$ {\bf then return}
  sun $\{x, h_{i-1}, h_{i}, h_{i+1}, h_{i+2}, u_1\}$;
  \\
  \hspace*{7ex} {\bf return} $4$-hole  $x h_{\tail{u_1}} u h_{\head{u_1}} x$;
  \\
  2 \hspace*{2ex} {\bf else} \comment{$q\ge 2$ and $u_1\sim x$.}
  \\
  \hspace*{7ex} {\bf if} $\hv{u_1} = \{h_i\}$ {\bf then return} long claw 
  $\{u_2, u_1, h_{i-2}, h_{i-1}, h_{i}, h_{i+1}, h_{i+2}\}$;
  \\
  \hspace*{7ex} {\bf if} $\hv{u_1} = \{h_i, h_{i+1}\}$ {\bf then return}
  net $\{u_2, u_1, h_{i-1}, h_{i}, h_{i+1}, h_{i+2}\}$;
  \\
  \hspace*{7ex} let $\hv{u_1} = \{h_i, h_{i+1}, \ldots, h_j\}$, {\bf return}
  whipping top $\{x, h_{i-1}, h_{i}, u_1, h_{j}, h_{j+1}, u_2\}$.
  
}\strut} $$\boxit{\box4}$$ \vspace*{-6mm}
\caption{Subroutine B for the proof of  Lem.~\ref{lem:non-consecutive}.}
\label{fig:non-consecutive:sb}
\end{figure}

The list {\sf IND} can be constructed in $O(d(v))$ time using
Prop.~\ref{lem:find-indices}.  We can traverse it and find its ends in
the same time if it induces a path.  We now consider the other
situations, and analyze the running time of finding subgraphs in
$\cF_{LI}$.  When $|\hv{v}|<|H|$ the detection of a subgraph in
$\cF_{LI}$ can also be done in $O(d(v))$ time: the main step is to
traverse {\sf IND} to obtain the indices $p_1,p_2,p_3,p_4$, which can
be done in $O(d(v))$ time, while the rest uses constant time.  The
dominating step of the last case is the construction of $C$, which
takes $O(||G||)$ time: the test of $|\hv{u}| = |H|$ for each vertex
(again using Prop.~\ref{lem:find-indices}) takes $O(d(u))$ time.  All
other steps use constant time.  This concludes the time analysis and
completes the proof.
\end{proof}

Now let $v$ be a vertex such that \hv{v} induces a path $P$.  We can
assign a direction to $P$ in accordance to the direction of $H$, and
then we have clockwise and counterclockwise ends of $P$.  For
technical reasons, we assign canonical indices to the ends of the path
$P$ as follows.

\begin{definition}
  For each vertex $v$ with nonempty $\hv{v}$, we denote by \tail{v}
  and \head{v} the indices of the counterclockwise and clockwise,
  respectively, ends of the path induced by \hv{v} in $H$ satisfying
  \begin{itemize}
  \item $- |H| < \tail{v} \le 0\le \head{v} < |H|$ {if } $h_0\in \hv{v}$; or
  \item $0< \tail{v}\le \head{v}< |H|$, {otherwise}.
  \end{itemize}
\end{definition}

It is possible that $\head{v} = \tail{v}$, when $|\hv{v}| = 1$.  In
general, $\head{v} - \tail{v} = |\hv{v}| - 1$ and $v=h_i$ or $v\sim
h_i$ for each $i$ with $\tail{v}\le i\le \head{v}$.  The indices
$\tail{v}$ and $\head{v}$ can be easily retrieved from
Lem.~\ref{lem:non-consecutive}, and with them we can check the
adjacency between $v$ and any vertex $h_i\in H$ in constant time, even
when $v\not\sim V(H)$.  (For example, with the definition of \tail{v}
and \head{v}, we may represent the fact $v\not\sim V(H)$ by
$\tail{v}>\head{v}$.)  One should be warned that ${\head{h_i}}$ may or
may not be ${i+1}$.

If $v$ is adjacent to $|H|-2$ or $|H|-1$ vertices in $H$, then it is
trivial to find a short hole in $V(H)\cup\{v\}$.  In the rest of this
paper, whenever we meet a hole and a vertex such that that $|\hv{v}|
\ge |H| - 2$ or $\hv{v}$ is nonconsecutive, we either return a short
hole or call Lem.~\ref{lem:non-consecutive}.  To avoid making the
paper unnecessarily ponderous, we will tacitly assume otherwise.  We
now turn to the vertices that are nonadjacent to $V(H)$. The following
lemma, together with Lem.~\ref{lem:non-consecutive} and the discussion
above, concludes the proof of Thm.~\ref{thm:2}.
\begin{lemma}\label{lem:c-star}
  Given a non-simplicial vertex $v$ that is nonadjacent to $H$, we can
  in $O(||G||)$ time find a subgraph of $G$ in $\cF_{LI}$.
\end{lemma}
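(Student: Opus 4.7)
The plan is to run breadth-first search from $v$ in $G$, which is connected since it is prime, stopping when a vertex of $V(H)$ is first reached. This produces a shortest path $v = v_0, v_1, \ldots, v_q$ with $v_q \in V(H)$; since $v$ is nonadjacent to $V(H)$ we have $q \ge 2$, so $u := v_{q-1}$ is adjacent to $V(H)$ while $w := v_{q-2}$ is not (otherwise the path would not be shortest). Applying Lem~\ref{lem:non-consecutive} to $u$ either produces a subgraph in $\cF_{LI}$ directly, or certifies that $\hv{u}$ is a consecutive sub-path $\{h_i, h_{i+1}, \ldots, h_{i+k-1}\}$ of $H$; following the tacit assumption that $|\hv{u}| < |H| - 2$ (otherwise a short hole is immediate), we may assume $1 \le k \le |H| - 3$.

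The argument now splits on $k$. For $k = 1$, the vertex set $\{w, u, h_{i-2}, h_{i-1}, h_i, h_{i+1}, h_{i+2}\}$ induces a long claw with center $h_i$ and legs $h_i u w$, $h_i h_{i-1} h_{i-2}$, and $h_i h_{i+1} h_{i+2}$. For $k = 2$, the vertex set $\{w, u, h_{i-1}, h_i, h_{i+1}, h_{i+2}\}$ induces a net with triangle $u h_i h_{i+1}$ and pendants $w$, $h_{i-1}$, and $h_{i+2}$. For $k \ge 3$, I produce a third pointer $w'$ that is a neighbor of $w$ but not of $u$: if $q = 2$ (so $w = v$) the non-simpliciality of $v$ yields some $w' \in N(v) \setminus N[u]$; if $q \ge 3$ I take $w' := v_{q-3}$, which by shortest-path optimality is adjacent to $w$ but not to $u$. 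Whenever $w' \not\sim V(H)$, the set $\{u, w, w', h_{i-1}, h_i, h_{i+k-1}, h_{i+k}\}$ induces a long claw with center $u$ and legs $u w w'$, $u h_i h_{i-1}$, and $u h_{i+k-1} h_{i+k}$; the inequality $k \le |H| - 3$ ensures that all the required non-edges along $H$ hold.

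The one remaining sub-case is $k \ge 3$, $q = 2$, and $w' \sim V(H)$, where two mutually nonadjacent vertices $u$ and $w'$ both reach $V(H)$ and are joined through $v$. Here I apply Lem~\ref{lem:non-consecutive} to $w'$ as well, obtaining a consecutive $\hv{w'}$, and a bounded sub-case analysis on whether $\hv{u}$ and $\hv{w'}$ are disjoint, touch, overlap, or nest on $H$ locates a $4$-hole, $5$-hole, net, long claw, whipping top, or rising sun; this interplay of two distinct reachings into $H$ is the main technical obstacle, because the cross-adjacencies between $\hv{u}$, $\hv{w'}$, $u$, $w'$, and $v$ have to be dissected carefully to isolate the right seven-or-fewer vertices. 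The running time is $O(||G||)$: the BFS is $O(||G||)$, the at most two calls to Lem~\ref{lem:non-consecutive} cost $O(d(u)) + O(d(w'))$, and inspecting a constant-size induced subset is $O(1)$.
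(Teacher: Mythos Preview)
Your setup (BFS to reach $H$, then splitting on $|\hv{u}|$) matches the paper's, and your $k=1,2$ constructions coincide with the paper's long claw and net. The divergence is at $k\ge 3$.

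There is a genuine gap in the $q=2$ sub-case. You write that ``the non-simpliciality of $v$ yields some $w' \in N(v)\setminus N[u]$'', but this is false: non-simpliciality only guarantees two \emph{nonadjacent} neighbours of $v$, and both of them can lie in $N[u]$. Concretely, take $N(v)=\{u,x,y\}$ with $x,y\in N(u)$ and $x\not\sim y$; then $v$ is non-simplicial yet $N(v)\subseteq N[u]$, so no such $w'$ exists. Your subsequent long-claw construction and the ``remaining sub-case'' analysis both presuppose this $w'$, so the argument does not go through as stated. (The $q\ge 3$ branch, where $w'=v_{q-3}$ comes from the BFS path, is fine.)

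The paper sidesteps this difficulty entirely: for $k\ge 3$ it observes that $(v:u,u:\,h_{\tail u-1},\,h_{\tail u}\cdots h_{\head u},\,h_{\head u+1})$ is a $\dag$-caw whose shallow terminal is the non-simplicial vertex $v$, and then invokes Thm.~\ref{thm:1}. That theorem already encapsulates the delicate search for a witness when $N(v)\subseteq N[u]$ (this is exactly what its BFS in $G-C$ handles), so the lemma reduces to a one-line call. If you want to rescue your direct approach, the missing case---$N(v)\subseteq N[u]$ with $v$ non-simplicial---would force you to reproduce a nontrivial portion of the proof of Thm.~\ref{thm:1}; it is cleaner to just cite it.
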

\begin{proof}
  We may assume without loss of generality that some neighbor $u$ of
  $v$ is adjacent to $V(H)$: otherwise we can find (by BFS) a shortest
  path from $v$ to $H$ and take the last two inner vertices from this
  path as $v$ and $u$, respectively; in particular, as an inner vertex
  of a chordless path, the new vertex $v$ is necessarily
  non-simplicial.  We return long claw $\{v,u, h_{\tail{u}-2}, \ldots,
  h_{\tail{u}+2}\}$ if $|\hv{u}| = 1$; or net $\{v, u, h_{\tail{u}-1},
  h_{\tail{u}}, h_{\head{u}}, h_{\head{u}+1}\}$ if $|\hv{u}| = 2$.
  Otherwise, $|\hv{u}|\ge 3$, and we can call Thm.~\ref{thm:1} with
  large caw ($v: u,u: h_{\tail{u}-1}, h_{\tail{u}} \cdots
  h_{\head{u}}, h_{\head{u}+1}$).  Here we are using the assumption
  that $\hv{u}$ induces a path of at most $|H| - 3$ vertices and the
  fact that $v$ is not simplicial in $G$.  The dominating step is
  finding the appropriate vertices $v,u$, which takes $O(||G||)$ time.
\end{proof}

Now consider the neighbors of more than one vertices in $H$.  Here
$\hv{U} := \bigcup_{v\in U}\hv{v} = (\bigcup_{v\in U}N[v])\cap V(H)$.
\begin{lemma}\label{lem:non-consecutive-2}
  Given a set $U$ of vertices such that $G[U]$ is connected and
  $\hv{U}$ is not consecutive in $H$, we can in $O(||G||)$ time find a
  subgraph of $G$ in $\cF_{LI}$.
\end{lemma}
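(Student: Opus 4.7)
My plan is to reduce everything to Lem.~\ref{lem:non-consecutive} and Lem.~\ref{lem:c-star}. First I scan $U$ and invoke Lem.~\ref{lem:non-consecutive} on every vertex, at cost $O(d(u))$ each and $O(||G||)$ in total; if any $\hv{u}$ fails to be a sub-path of $H$ we are done, so henceforth every $\hv{u}$ is a (possibly empty) consecutive sub-path. Let $U^{\ast}=\{u\in U:\hv{u}\ne\emptyset\}$; because $\hv{U}=\hv{U^{\ast}}$ is non-consecutive, it decomposes into at least two disjoint arcs of $H$, and each $u\in U^{\ast}$ is naturally labelled by the unique arc containing $\hv{u}$. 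Fixing an arc group $U^{\ast}_{A_1}$, I run a single BFS in $G[U]$ from that source, stopping at the first vertex $u_p\in U^{\ast}\setminus U^{\ast}_{A_1}$; connectivity of $G[U]$ guarantees such a $u_p$ exists. Along the returned shortest path $u_0,u_1,\ldots,u_p$ the minimality of the BFS distance forces $\hv{u_i}=\emptyset$ for $1\le i\le p-1$ (any other $u_i$ would be at distance $0$ or would be a closer target than $u_p$), and it also forces $u_0\not\sim u_2$ when $p\ge 2$ (else $u_0 u_2 u_3\cdots u_p$ would be shorter in $G[U]$).

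If $p\ge 2$ then $u_1$ is nonadjacent to $H$ and non-simplicial, so Lem.~\ref{lem:c-star} produces a subgraph in $\cF_{LI}$ in $O(||G||)$ time. Otherwise $p=1$: write $u:=u_0$, $v:=u_p$, $\hv{u}=[p_1,q_1]$, $\hv{v}=[p_2,q_2]$ for the two disjoint arcs, and let $g_1$ (from $h_{q_1}$ to $h_{p_2}$) and $g_2$ (from $h_{q_2}$ to $h_{p_1}$) denote the gap sizes, each $\ge 1$. Three sub-cases remain: (i) if $\min(g_1,g_2)=1$, say $g_1=1$, then $\{u,h_{q_1},h_{q_1+1},h_{p_2},v\}$ induces a $5$-hole; (ii) if $g_1,g_2\ge 2$ and $|\hv{u}|=|\hv{v}|=1$, then $\{u,v,h_{p_1-2},h_{p_1-1},h_{p_1},h_{p_1+1},h_{p_1+2}\}$ induces a long claw centred at $h_{p_1}$---chord-freeness of $H$ together with both gaps being $\ge 2$ rules out every spurious adjacency and $h_{p_2}$ stays outside the set; (iii) otherwise $g_1,g_2\ge 2$ and, up to swapping $u$ and $v$, $|\hv{v}|\ge 2$, and we re-route $H$ through the edge $uv$ to obtain the hole $C=u\,h_{q_1}\,h_{q_1+1}\cdots h_{p_2}\,v$ of length $g_1+4\ge 6$ (chord-free because non-consecutive pairs in $C$ are either hole pairs in $H$ or excluded by arc disjointness); the gap-$g_2$ vertex $h_{q_2+1}$ is then nonadjacent to $C$ (its $H$-neighbours $h_{q_2}$ and $h_{q_2+2}$ lie off $C$ since $q_2>p_2$ and the other neighbour stays in the gap) and non-simplicial (since $h_{q_2}\not\sim h_{q_2+2}$ in $H$), so Lem.~\ref{lem:c-star} applied to $h_{q_2+1}$ with the new hole $C$ returns a subgraph in $\cF_{LI}$.

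I expect sub-case (iii) to be the most delicate step, since the re-routed cycle $C$ is merely a long hole and so not in $\cF_{LI}$ by itself; the trick is that Lem.~\ref{lem:c-star} is indifferent to the hole's length, so a single gap vertex whose $H$-neighbours avoid $C$ converts $C$ into a bona fide small forbidden subgraph. All ingredients---the initial scan, the BFS, and the calls to Lem.~\ref{lem:non-consecutive} and Lem.~\ref{lem:c-star}---run in $O(||G||)$ time.
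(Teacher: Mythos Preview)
Your reduction to the case $p=1$ (adjacent $u,v$ with disjoint nonempty $\hv u,\hv v$) is correct and matches the paper. The divergence is in what happens next.

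The paper finishes in one line: renumber so that $\tail u=0$, replace the sub-path $h_0\cdots h_{\head u}$ of $H$ by $h_0\,u\,h_{\head u}$ to obtain a new hole $H'$, and observe that $v$'s neighbours in $H'$ are $u$ together with $h_{\tail v},\ldots,h_{\head v}$---two blocks separated in $H'$ by $h_{\head u}\notin N(v)$. So $v$ has non-consecutive neighbourhood in $H'$, and Lem.~\ref{lem:non-consecutive} applies directly. No case split is needed.

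Your three-case argument is a legitimate alternative, but it has an indexing bug. From ``length $g_1+4\ge 6$'' in case (iii) it is clear that you intend $g_i$ to count the \emph{vertices} strictly between the two arcs. With that convention, the assertion ``each $\ge 1$'' is simply false: the arcs can abut ($p_2=q_1+1$), and that situation falls into none of your three cases. The repair is trivial---$u\,h_{q_1}\,h_{p_2}\,v\,u$ is then a $4$-hole---but as written the proof has a hole of its own. (If instead you meant $g_i$ to count edges, then ``each $\ge 1$'' is correct but case (ii) breaks at $g_1=2$: then $p_2=p_1+2$, so $h_{p_2}=h_{p_1+2}$ sits inside your seven-vertex set and $v\sim h_{p_1+2}$ spoils the long claw.) Once you shift the boundary to ``$\min(g_1,g_2)\le 1$ gives a $4$- or $5$-hole; $g_1,g_2\ge 2$ goes to (ii)/(iii)'' with the vertex-count convention, all three cases go through exactly as you describe.

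So: right strategy, right running time, one missed boundary case. The paper's single reroute-and-invoke-Lem.~\ref{lem:non-consecutive} trick buys you freedom from this bookkeeping altogether; your explicit construction buys nothing extra, but is perfectly sound once the off-by-one is patched.
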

\begin{proof}
  For two disjoint sub-paths of $\hv{U}$, we can find a pair of
  vertices $u_1,u_2\in U$ such that they are adjacent to the paths
  respectively.  Traversing an induced \stpath{u_1}{u_2} in $G[U]$, we
  will obtain either a vertex nonadjacent to $H$, or a pair of
  adjacent vertices $u,v$ such that $\hv{u}$ and \hv{v} are both
  nonempty and disjoint.  In the first case we call
  Lem.~\ref{lem:c-star}.  We consider then the second case.  Clearly,
  neither of $u$ and $v$ can be in $H$.  We may renumber vertices of
  $H$ such that $\tail{u} = 0$, and then $\head{u} < \tail{v} <
  \head{v} < |H|$.  The neighborhood of $v$ in the hole $h_0 u
  h_{\head{u}} \cdots h_{|H| -1} h_0$ is non-consecutive, and thus we
  can call Lem.~\ref{lem:non-consecutive}.
\end{proof}

In particular, for any pair of adjacent vertices $u,v$, if neither
\hv{u} nor \hv{v} is a subset of the other, then at least one of
$h_{\head{v}}$ and $h_{\tail{v}}$ needs to be in $\hv{u}$.  The
following lemma asserts that they cannot be both.
\begin{lemma}\label{lem:non-helly}
  Given a set $U$ of two or there pairwise adjacent vertices such that
  $\hv{U} = V(H)$, we can in $O(||G||)$ time find a subgraph of $G$ in
  $\cF_{LI}$.
\end{lemma}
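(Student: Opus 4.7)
My plan is to case-split on $|U|\in\{2,3\}$ and, in each case, either exhibit a small forbidden subgraph directly or reduce to a previously-established lemma. Throughout I assume, by the tacit short-hole shortcut and Lem.~\ref{lem:non-consecutive}, that every $v\in U$ satisfies $|\hv{v}|\le |H|-3$ and has $\hv{v}$ induce a proper consecutive sub-path of $H$. A short counting check then shows that if any $v\in U$ were itself in $V(H)$, its partner(s) in $U$ would be forced to have $|\hv{\cdot}|\ge |H|-2$ by $\hv{U}=V(H)$ together with pairwise adjacency in $U$, contradicting the assumption; hence $U\cap V(H)=\emptyset$.

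For $|U|=2$, write $U=\{u,v\}$. The arcs $\hv{u}$ and $\hv{v}$ are two consecutive proper sub-paths of $H$ whose union is $V(H)$. If they admit a \emph{direct transition}, i.e.\ consecutive hole vertices $h_i\in\hv{u}\setminus\hv{v}$ and $h_{i+1}\in\hv{v}\setminus\hv{u}$ --- which must occur whenever the two arcs are disjoint or overlap on only one side of the cycle --- then $u\,h_i\,h_{i+1}\,v\,u$ is an induced $4$-hole and we return it. The remaining sub-case is two-sided overlap. Writing $\hv{u}=[h_a,h_b]$, two-sided overlap forces both endpoints $h_a$ and $h_b$, as well as the entire complement of $\hv{u}$ in $V(H)$, to lie in $\hv{v}$. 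The cycle $H':=u\,h_b\,h_{b+1}\cdots h_{a-1}\,h_a\,u$ is then an induced hole of length $|H|-|\hv{u}|+3$, which is at least $6$ yet strictly smaller than $|H|$, and $v$ is adjacent to every vertex of $H'$. I then invoke Lem.~\ref{lem:non-consecutive} on the pair $(v,H')$; since $\hv{v}\cap V(H')=V(H')$, its universal-coverage branch produces a subgraph in $\cF_{LI}$ by exploiting the primeness of $G$.

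For $|U|=3$, write $U=\{u,v,w\}$. If some pair already satisfies the $|U|=2$ hypothesis, namely that the pair's arcs cover $V(H)$, recurse into the $|U|=2$ case on that pair. Otherwise each pair misses a hole vertex, so the single-color regions
\[
U^{\circ}:=\hv{u}\setminus(\hv{v}\cup\hv{w}),\quad V^{\circ}:=\hv{v}\setminus(\hv{u}\cup\hv{w}),\quad W^{\circ}:=\hv{w}\setminus(\hv{u}\cup\hv{v})
\]
are all non-empty. If any two of these regions contain hole vertices adjacent in $H$, say $h_i\in U^{\circ}$ and $h_{i+1}\in V^{\circ}$, return the $4$-hole $u\,h_i\,h_{i+1}\,v\,u$. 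Otherwise any representatives $x\in U^{\circ}$, $y\in V^{\circ}$, $z\in W^{\circ}$ are pairwise non-adjacent in $H$, and together with the triangle $\{u,v,w\}$ they induce a net.

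All index computations run in $O(||G||)$ time via Prop.~\ref{lem:find-indices}, the transition scans are a single walk around $H$, and the at most one call to Lem.~\ref{lem:non-consecutive} is also linear, meeting the claimed time bound. The main obstacle is the two-sided-overlap sub-case for $|U|=2$: $G[V(H)\cup U]$ itself need not contain any subgraph in $\cF_{LI}$, and the key idea is to manufacture a strictly shorter hole $H'$ on which $v$ becomes universal, then hand off to the universal-coverage machinery of the previous lemma, which in turn uses primeness to unearth the forbidden subgraph elsewhere in $G$.
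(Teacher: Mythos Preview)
Your approach is essentially correct and, for $|U|=2$, closely parallels the paper's: both detect a $4$-hole when the two arcs abut on one side, and otherwise pass to the hole $H'=u\,h_b\,h_{b+1}\cdots h_a\,u$ on which $v$ becomes universal, handing off to Lem.~\ref{lem:non-consecutive}. (Your side remark that $H'$ is ``strictly smaller'' is in fact true: two-sided overlap forces $V(H)\setminus\hv{v}$, of size $\ge 3$, into the interior of $\hv{u}$, so $|\hv{u}|\ge 5$.) For $|U|=3$ you take a genuinely different and more direct route than the paper: instead of building a second hole through $u_1,u_2$ and invoking Lem.~\ref{lem:non-consecutive} on $u_3$ (the paper's steps~3--5), you exhibit a $4$-hole or a net outright from the three single-colour regions. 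This is cleaner and avoids the paper's additional case analysis.

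One small gap: your ``short counting check'' that $U\cap V(H)=\emptyset$ is correct for $|U|=2$ but fails for $|U|=3$. If $u=h_i$, the remaining two arcs can still jointly cover $V(H)\setminus\hv{u}$ while each stays within size $|H|-3$ (e.g.\ $|H|=10$, $\hv{v}=\{h_0,\dots,h_4\}$, $\hv{w}=\{h_5,\dots,h_9,h_0\}$). Fortunately your constructions do not need the global claim. In the ``no pair covers'' branch where the net and the $4$-hole live, $U^\circ\ne\emptyset$ already forces $u\notin V(H)$: if $u=h_i$ then $h_i\in\hv{v}\cap\hv{w}$, and $h_{i\pm1}\in U^\circ$ would mean some $\hv{v}$ or $\hv{w}$ is consecutive, contains $h_i$ and $h_{i\pm2}$, yet omits $h_{i\pm1}$, hence wraps to size $|H|-1$. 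So the conclusion holds exactly where you use it; just replace the blanket counting sentence by this argument (or simply drop the claim for $|U|=3$, since distinctness of the six net vertices follows anyway from $x,y,z$ lying in single-colour regions that exclude $u,v,w$).
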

  \begin{figure}[h!]
  {\centering\small
  \setbox4=\vbox{\hsize.9\linewidth \noindent\strut
    \hspace*{0.03\linewidth}\parbox[t]{0.87\linewidth}{ %

      1 \hspace*{2ex} {\bf if} $u_2\not\sim h_{\head{u_1}}$ {\bf then
        return} a $4$-hole;
      \\
      2 \hspace*{2ex} {\bf if} $u_2\sim h_{\tail{u_1}}$ {\bf then}
      \comment{must be the case when $|U| = 2$.}
      \\
      \hspace*{7ex} {\bf call} Lem.~\ref{lem:non-consecutive} with
      $u_1 h_{\head{u_1}} h_{\head{u_1} + 1} h_{\tail{u_1}}
      u_1$ and $u_2$;
      \\
      3 \hspace*{2ex} {\bf if} $u_2\sim \{h_{\tail{u_1}-1},
      h_{\tail{u_1}-2}\}$ {\bf then return} a short hole;
      \\
      4 \hspace*{2ex} {\bf if} $u_3\not\sim h_{\tail{u_1}}$or
      $u_3\not\sim h_{\head{u_2}}\}$ {\bf then return} a short hole;
      \\
      5 \hspace*{2ex} {\bf call} Lem.~\ref{lem:non-consecutive} with
      $u_1 u_2 h_{\head{u_2}} h_{\head{u_2} + 1} h_{\tail{u_1}}$ and 
      $u_3$.

    }\strut} $$\boxit{\box4}$$\vspace*{-6mm}
}
\caption{Procedure for the proof of Lem.~\ref{lem:non-helly}.}
\label{fig:lem:non-helly}
\end{figure}
\begin{proof}
  We start from an arbitrary vertex $u_1$ of $U$.  Without loss of
  generality, we may assume $u_1\sim H$, and in particular,
  Lem.~\ref{lem:non-consecutive} returns a proper sub-path; otherwise
  we are done.  There must be another vertex in $U$ that is adjacent
  to $h_{\head{u_1}+1}$; let it be $u_2$.  We proceed as
  Fig.~\ref{fig:lem:non-helly}.  The correctness and running time of
  this procedure are straightforward.
\end{proof}

 \section{Characterization and decomposition of prime \lig s}
\label{sec:olive-ring}
This section presents the details for algorithm decompose
(Fig.~\ref{fig:recognize-lig}).  If $G$ is chordal, then it suffices
to check whether $G - SI(G)$ is an interval graph or not: if yes, a
caterpillar decomposition for $G$ can be easily built; otherwise, we
can use Thm.~\ref{thm:1} to find a subgraph in $\cF_{LI}$.  Therefore,
in this section we are mainly concerned with non-chordal graphs, where
a hole can be found in linear time
\cite{tarjan-85-certifyig-chordal-recognition}.  Section~\ref{sec:mho}
gives the details on the construction of the auxiliary graph
$\mho(G)$, which is used in Section~\ref{sec:negative} to prove
Thm.~\ref{thm:3}, and in Section~\ref{sec:olive-ring-decomposition} to
build the olive-ring decomposition for $G$.

With the customary abuse of notation, the same symbol $K$ is used for
a maximal clique of $G$ and its corresponding bag in a clique
decomposition ${\cal K}$ for $G$.  A complete graph on all maximal
cliques of a graph gives a trivial {clique decomposition} for the
graph, which is uninteresting.  We are only interested in clique
decompositions that can be stored and manipulated in linear time.
Every clique decomposition $\cal K$ in this paper will satisfy
({\itshape 1}) $||{\cal K}||\le |{\cal K}|\le |G|$; and ({\itshape 2})
each vertex $v\in G$ appears in at most $d(v)$ bags.  Since $\cal K$
is connected, it either is a tree or has a unique cycle.

For example, a chordal graph $G$ has at most $|G|$ maximal cliques
\cite{dirac-61-chordal-graphs}, which can be arranged as a tree such
that for every $v\in G$, the set of maximal cliques containing $v$
induces a subtree \cite{buneman-1974-rigid-circuit-graphs}.  Interval
graphs are chordal, and thus admit clique tree decompositions as well.
Fulkerson and Gross \cite{fulkerson-65-interval-graphs} showed that an
interval graph always has a clique path decomposition.  Also of
interest in this paper are clique decompositions that are holes,
caterpillars, and olive rings, which are called, for the sake of
brevity, hole decompositions, caterpillar decompositions, and
olive-ring decompositions, respectively.  A path of at least four bags
can be made an hole by adding an edge connecting its end bags: adding
an extra edge to the path does not break any condition in the
definition of clique decomposition.  For the same reason, a
caterpillar decomposition whose central path has at least four bags
can be viewed as an olive-ring decomposition.

We point out that clique tree decompositions for chordal graphs have
different formulations, all of which, as shown by Blair and Peyton
\cite{blair-91-chordal-graphs-clique-trees}, are equivalent.  The
definition we use here, not relying on the fact that any pair of bags
is connected by a unique path, is easier to be generalized.  In this
paper, no vertex will be allowed to occupy every bag of the cycle of
$\cal K$ (when it has a cycle), and thus a graph $G$ that has an
olive-ring decomposition can also be viewed as the intersection graph
of subtrees of an olive ring.

For any simplicial vertex $v$, the clique induced by $N[v]$ must be
maximal.  This observation allows us to find all simplicial vertices
of a graph by traversing its clique decomposition: we count the
occurrences of all vertices, and return those vertices with number
$1$.  This approach runs in time $\sum_{K\in \cal K} |K| = O(||G||)$,
and works for all aforementioned classes.  In particular, a clique
tree for a chordal graph can be built in linear time.
\begin{lemma}\label{lem:find-simplicial-vertices}
  We can in $O(||G||)$ time find all simplicial vertices of a chordal
  graph $G$.
\end{lemma}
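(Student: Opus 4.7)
The plan is to combine two observations already stated earlier in the paper: first, for any simplicial vertex $v$ the set $N[v]$ induces a maximal clique, so $v$ belongs to exactly one maximal clique of $G$; second, a clique tree decomposition $\mathcal{K}$ of a chordal graph $G$ can be constructed in $O(||G||)$ time and satisfies $\sum_{K \in \mathcal{K}} |K| = O(||G||)$. Together these suggest the following algorithm: build $\mathcal{K}$, walk through every bag $K \in \mathcal{K}$, and for each vertex $v$ maintain a counter $c[v]$ that is incremented each time $v$ is encountered; at the end, output all $v$ with $c[v] = 1$.

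The first step I would justify is the equivalence \emph{$v$ is simplicial in $G$ if and only if $v$ lies in exactly one bag of $\mathcal{K}$}. The forward direction: if $N[v]$ is a clique, then it is contained in some maximal clique $K^*$, and any other maximal clique $K$ containing $v$ must satisfy $K \subseteq N[v] \subseteq K^*$, forcing $K = K^*$ by maximality. Conversely, suppose $v$ lies in a unique maximal clique $K$. Every neighbor $u$ of $v$ shares an edge with $v$, hence lies in some maximal clique together with $v$, and by uniqueness this clique is $K$; thus $N[v] \subseteq K$, so $N[v]$ is a clique and $v$ is simplicial.

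With this equivalence in hand, the algorithm is straightforward: traverse $\mathcal{K}$ bag by bag, incrementing $c[v]$ for each $v$ in each bag encountered, and finally collect the vertices with $c[v] = 1$. The running time is $O(|\mathcal{K}|) + \sum_{K \in \mathcal{K}} |K|$ for the traversal. Using the property of clique decompositions given at the start of Section~\ref{sec:olive-ring} (each vertex $v$ appears in at most $d(v)$ bags, plus every isolated/simplicial vertex appears in at least one bag), one has $\sum_{K \in \mathcal{K}} |K| \le \sum_{v \in G} d(v) + |G| = 2||G|| + |G| = O(||G||)$, where we use the standing assumption $|G| = O(||G||)$. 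Building $\mathcal{K}$ itself is done in $O(||G||)$ time by the standard chordal-graph routines already cited (e.g.\ \cite{tarjan-85-certifyig-chordal-recognition} combined with maximum cardinality search), so the overall running time is $O(||G||)$.

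There is really no main obstacle here, since both ingredients are already available: the linear-time construction of a clique tree for a chordal graph and the counting traversal are both explicitly discussed in the paragraphs immediately preceding the lemma. The only thing to verify carefully is the equivalence between simpliciality and occurring in a unique bag, which is immediate from the maximality condition on cliques of $\mathcal{K}$ and the definition of a clique decomposition.
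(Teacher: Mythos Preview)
Your proposal is correct and follows essentially the same approach as the paper: build a clique tree in linear time, count how many bags contain each vertex, and return those with count~$1$. You actually supply more detail than the paper does, since you verify both directions of the equivalence ``$v$ is simplicial $\iff$ $v$ lies in a unique maximal clique,'' whereas the paper only states the forward direction explicitly.
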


\subsection{The auxiliary graph $\mho(G)$}
\label{sec:mho}
If a vertex $v$ is adjacent to four or more consecutive vertices in a
hole $H$, i.e., $\head{v}-\tail{v} > 2$, then $v\not\in H$.  We can
use $h_{\tail{v}} v h_{\head{v}}$ as a short cut for the sub-path
induced by the neighbors of $v$ in $H$, thereby yielding a strictly
shorter hole.  To simplify the later presentation, we would like that
$h_0$ cannot be bypassed as such.  The following lemma formally states
this condition and gives a procedure for finding a hole satisfying it.

\begin{lemma}\label{lem:hole-conditions}
  In $O(||G||)$ time, we can find either a subgraph of $G$ in
  $\cF_{LI}$, or a hole $H$ such that $N[v]\subseteq N[h_0]$ holds for
  every vertex $v\in N(h_{-1})\cap N(h_{1})$.
\end{lemma}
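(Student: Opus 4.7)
The plan is to first find any hole $H$ of $G$ in $O(\|G\|)$ time via the Tarjan--Yannakakis chordality algorithm, then preprocess $H$ so that every $v\notin V(H)$ has consecutive $\hv{v}$ of size at most $3$. The preprocessing has two stages: using Prop.~\ref{lem:find-indices} we compute $\hv{v}$ for every vertex $v$, and whenever $\hv{v}$ is non-consecutive we invoke Lemma~\ref{lem:non-consecutive} to return a subgraph in $\cF_{LI}$, while whenever $|\hv{v}|\ge |H|-2$ we return a short hole inside $V(H)\cup\{v\}$; then, as long as some $v$ has $|\hv{v}|\ge 4$, we apply the shortcut $h_{\tail{v}}\,v\,h_{\head{v}}$, which yields a strictly shorter hole. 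Charging each shortcut to the hole vertices it discards keeps the cumulative cost at $O(\|G\|)$.

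After preprocessing, any $v\in N(h_{-1})\cap N(h_1)$ with $v\neq h_0$ satisfies $\{h_{-1},h_1\}\subseteq \hv{v}$ of size at most $3$ and consecutive, which forces $\hv{v}=\{h_{-1},h_0,h_1\}$; in particular $v\sim h_0$. Moreover, any two non-adjacent members of $N(h_{-1})\cap N(h_1)$ give a $4$-hole together with $h_{-1},h_1$, so this set may be assumed to be a clique. We then pick $h_0$ to be the element of $\{h_0\}\cup\bigl(N(h_{-1})\cap N(h_1)\bigr)$ with $|N[h_0]|$ maximum --- a single linear-time scan --- and, if it differs from the current $h_0$, swap it into the hole (which is legal because $\hv{\cdot}=\{h_{-1},h_0,h_1\}$ for every candidate, so the resulting cycle is still a valid hole). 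If the condition now holds for all $v$ in $N(h_{-1})\cap N(h_1)$, we are done; otherwise some violator $v$ admits $u\in N(v)\setminus N[h_0]$, and maximality of $|N[h_0]|$ forces the existence of a witness $u'\in N(h_0)\setminus N[v]$, with $u,u'\notin N(h_{-1})\cap N(h_1)$ by the clique structure.

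The main obstacle is the remaining case analysis: classifying $u$ and $u'$ by their adjacency pattern with $\{h_{-1},h_0,h_1,v\}$ and with $V(H)$ (exploiting the bounds $|\hv{u}|,|\hv{u'}|\le 3$), and in each of the finitely many configurations producing an explicit element of $\cF_{LI}$. Since each of $u,u'$ must miss at least one of $h_{-1},h_1$, there are only nine possibilities for the pair of restricted neighborhoods on $\{h_{-1},h_1\}$, each of which is further refined by a constant number of sub-cases on $\hv{u}$ and $\hv{u'}$. In each configuration we exhibit either a short hole, a direct small caw (net, sun, rising sun, long claw, or whipping top), or a single delegation to Lemma~\ref{lem:non-consecutive}, Lemma~\ref{lem:c-star}, or Lemma~\ref{lem:non-helly}. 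Once this enumeration is settled, the algorithmic side is immediate, since every subgraph involved is constructible in constant time from the pre-computed values of $\tail{\cdot}$ and $\head{\cdot}$.
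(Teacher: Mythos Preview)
Your route is genuinely different from the paper's. The paper does not iterate: it makes a single pass over all vertices to find an inclusion-maximal interval $[a,b]$ with $a\le -1<1\le b$ such that some vertex has $\hv{\,\cdot\,}=\{h_a,\dots,h_b\}$, collects the set $C$ of all vertices realising exactly this range, checks that $C$ is a clique (else a $4$-hole on $h_a,h_b$ and two members of $C$) and that no vertex realises the complementary arc $\{h_b,\dots,h_a\}$ (else Lem.~\ref{lem:non-helly} applies), picks $h\in C$ of maximum degree, and outputs the hole $h\,h_b\,h_{b+1}\cdots h_a$. A violator $v\in C$ of $N[v]\subseteq N[h]$, together with witnesses $x\in N[v]\setminus N[h]$ and $y\in N[h]\setminus N[v]$, then yields a net on $\{h_{a-1},h_a,h,v,x,y\}$. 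Every step is a single $O(\|G\|)$ scan, so the time bound is immediate.

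Your iterative shortening has a real gap in the running-time analysis. Charging to discarded hole vertices covers only removals; inserting the shortcut vertex $v$ also costs $O(d(v))$ to update $|\hv{u}|$ for all $u\in N(v)$ and to re-check consecutiveness of each such $\hv{u}$ (adding $v$ to the hole can turn a previously consecutive $\hv{u}$ non-consecutive whenever $u\sim v$ but $\hv{u}$ lies away from $v$'s position). Summed over all shortcuts this is $\sum_{v\ \text{inserted}}d(v)$, and you have not argued that each vertex is inserted at most once --- the pattern ``insert $v$; later shortcut over $v$; still later reinsert $v$'' is not obviously forbidden. Without that, the bound can degrade to $\Theta(|H_0|\cdot\Delta)$. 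Maintaining $\tail{\cdot}/\head{\cdot}$ through a hole whose indexing shifts at every shortcut is also non-trivial and left unspecified.

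A second gap is the final case analysis, which you only promise. Beyond incompleteness, one of your asserted bounds can fail: after swapping $v^{\ast}$ in as the new $h_0$, if $u'\sim v^{\ast}$ but $u'\not\sim h_0^{\text{old}}$ then $|\hv{u'}|$ with respect to the new hole increases by one and can equal $4$, so ``$|\hv{u'}|\le 3$'' need not survive the swap. The paper's single maximal-range step sidesteps this: maximality of $[a,b]$ forces every vertex of $N(h_a)\cap N(h_b)$ into $C$, so the non-adjacencies needed for the net come for free and no further casework is required.
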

\begin{figure}[h]
\setbox4=\vbox{\hsize28pc \noindent\strut
\begin{quote}
  \vspace*{-5mm} \small

  $[\star]$ \emph{If a subgraph in $\cF_{LI}$ is found in any step
    {then return} it.}
  \\[1ex]
  0 \hspace*{2ex} $a = -1$; $b = 1$; $C=\{h_0\}$;
  \\
  1 \hspace*{2ex} {\bf for each} $v\in V(G)\setminus V(H)$ {\bf do}
  \\
  1.1 \hspace*{4ex} compute \tail{v} and \head{v} in $H$;
  \\
  1.2 \hspace*{4ex} {\bf if} \big($\tail{v}< a$ and $\head{v} \ge
  b$\big) or \big($\tail{v}= a$ and $\head{v} >b$\big) {\bf then}
  \\
  \hspace*{10ex} $a = \tail{v}$; $b = \head{v}$; $C =\{v\}$;
  \comment{$\tail{v}< 0 < \head{v}$.}
  \\
  1.3 \hspace*{4ex} {\bf else if} $\tail{v}= a$ and $\head{v} = b$
  {\bf then} $C = C\cup \{v\}$;
  \\
  2 \hspace*{2ex} {\bf if} $C$ has a pair of nonadjacent vertices
  $v_1,v_2$ {\bf then return} $4$-hole $v_1 h_a v_2 h_b v_1$;
  \\
  3 \hspace*{2ex} let $h$ be the vertex in $C$ that has the maximum
  degree;
  \\
  4 \hspace*{2ex} {\bf if} there exists $v$ such that $\tail{v} = b$
  and $\head{v} = |H| + a$ {\bf then}
  \\
  4.1 \hspace*{4ex} {\bf if} $v\not\sim h$ {\bf then return} $4$-hole
  $v h_a h h_b v$;
  \\
  4.2 \hspace*{4ex} {\bf else call} Lem.~\ref{lem:non-helly} with
  $\{v,h\}$;
  \\
  5 \hspace*{2ex} {\bf if} $a + |H| - b\le 3$ {\bf then return} $h h_b
  h_{b+1} \cdots h_a h$ as a short hole;
  \\
  6 \hspace*{2ex} {\bf for each} $v\in C\setminus \{h\}$ {\bf do}
  \\
  6.1 \hspace*{4ex} {\bf if} there exists $x\in N[v]\setminus N[h]$
  {\bf then}
  \\
  \hspace*{10ex} find $y\in N[h]\setminus N[v]$; {\bf return} net
  $\{h_{a-1}, h_a, h, v, x, y\}$;
  \\
  7 \hspace*{2ex} {\bf return} $h h_{b} h_{b + 1} \cdots h_{a} h$
  where $h$ is the new $h_0$.  
\end{quote} \vspace*{-3mm} \strut} $$\boxit{\box4}$$
\vspace*{-7mm}
\caption{Procedure for finding the hole for Lem.~\ref{lem:hole-conditions}.}
\label{fig:compress-hole}
\end{figure}
\begin{proof}
  We apply the procedure given in Fig.~\ref{fig:compress-hole}.  As
  noted by remark [$\star$], several steps of the procedure might end with a
  subgraph in $\cF_{LI}$, and in this case, we terminate it by
  returning this subgraph.  The set $C$ stores all explored vertices
  $v$ satisfying $\tail{v} = a$ and $\head{v} = b$: initially, $a=-1$
  and $b=1$, and vertices in $H$ are considered explored, hence $C =
  \{h_0\}$.  Step~1 greedily extends $[a,b]$ in either or both
  directions such that $\{h_a, h_{a+1},\dots, h_b\}$ is the maximal
  neighborhood in $H$ among all explored vertices.  Note that $a<0<b$
  always holds as index $a$ is non-increasing while index $b$ is
  nondecreasing.  Each iteration of step~1 checks a vertex $v$ that
  has not been explored.  If either condition of step 1.2 is
  satisfied, then $N[v]\cap V(H)$ properly contains
  $\{h_a,h_{a+1},\dots,h_b\}$, and $a$ and $b$ are updated to be
  \tail{v} and \head{v} respectively.  After this update, no vertex in
  $C$ is adjacent to all of $\{h_a,\dots,h_b\}$, and hence they are
  purged from $C$.  No previously explored vertex can be adjacent to
  all of $\{h_a,h_{a+1},\dots,h_b\}$ either.  Therefore, now $C$
  consists of only $v$.  Step 1.3 puts into $C$ those vertices whose
  closed neighborhood in $H$ is precisely $\{h_a,h_{a+1},\dots,h_b\}$.

  The correctness of steps 2-4 is clear; after they have been passed,
  $C$ must induce a clique, and $h h_{b} h_{b + 1} \cdots h_{a} h$ is
  a hole.  If one or both of $a$ and $b$ have been updated in step 1,
  then this hole is strictly shorter than $H$; we only proceed when
  its length is at least $6$ (step~5).  For the correctness of step 6,
  notice that the existence of $y$ is ensured by the selection of $h$
  (having the maximum degree in $C$ and hence no less than $d(v)$) and
  the condition on $x$ (adjacent to $v$ but not $h$).  Step 7 returns
  the new hole, and number it in a way that the vertices $h$ and $h_b$
  are the new $h_0$ and $h_1$ respectively.  Implicitly from step 1.1,
  the neighborhood of every vertex in $H$ is consecutive.  Moreover,
  as step~4 has been passed, $v\in N(h_a)\cap N(h_b)$ if and only if
  $v\in C$; such a vertex satisfying $N[v]\subseteq N[h_0]$.

  Let us now analyze the running time.  What dominates step 1 is finding
  \tail{v} and \head{v} for all vertices (step~1.1), which takes
  $O(d(v))$ time for each vertex $v$ and $O(||G||)$ time in total.
  Steps 2 and 3 take $O(||G||)$ time.  If Lem.~\ref{lem:non-helly} is
  called in step 4.2, then it takes $O(||G||)$ time and the procedure
  is finished; otherwise it passes step 4 in $O(|G|)$ time.  Step 5
  takes constant time.  Step 6 takes $O(d(v))$ time for each vertex
  $v\in C$ and $O(||G||)$ time in total.  Therefore, the whole
  procedure can be implemented in $O(||G||)$ time.
\end{proof}

Recall that $\oo := N[h_0]$ and $\overline{\oo} := V(G)\setminus \oo$.
By Lem.~\ref{lem:hole-conditions}, every common neighbor of $h_1$ and
$h_{-1}$ is in $\oo$, and is adjacent to neither $h_{-2}$ nor $h_2$.
Let $v\in \oo$ and $u\in \overline \oo$ be a pair of adjacent
vertices; note that $u$ is adjacent to at most one of $h_1$ and
$h_{-1}$.  We may assume that neither of
Lems.~\ref{lem:non-consecutive} and \ref{lem:non-consecutive-2}
applies to $v$ and $u$, as otherwise we have already found a subgraph
of $G$ in $\cF_{LI}$.  In particular, we can exclude the possibility
that $v$ is adjacent to neither $h_{-1}$ nor $h_1$:
Lem.~\ref{lem:non-consecutive-2} applies when $u\sim V(H)$ , while we
can return long claw $\{v,u,h_{-2},h_{-1},h_{0},h_{1},h_{2}\}$
otherwise.  We are therefore left with three cases based on the
adjacency between $v$ and $h_{-1},h_1$, and the direction of edge $u
v$ is accordingly determined as follows.
\begin{itemize}
\item If $\tail{v} < \head{v} = 0$ (i.e., $N[v]$ contains $h_{-1}$ but
  not $h_1$), then $v u$ is \emph{counterclockwise} from $\oo$;
\item if $\tail{v} = 0 < \head{v}$ (i.e., $N[v]$ contains $h_{1}$ but
  not $h_{-1}$), then $v u$ is \emph{clockwise} from $\oo$; or
\item otherwise, $\tail{v} = -1, \head{v} = 1$ (i.e., $v$ is adjacent
  to both $h_{-1}$ and $h_1$), then $u$ is adjacent to either $h_{-1}$
  or $h_1$ (Lems.~\ref{lem:hole-conditions} and
  \ref{lem:non-consecutive-2}), and the edge $v u$ is
  \emph{counterclockwise} or \emph{clockwise} from $\oo$ respectively.
\end{itemize}
Put it simply, the direction of $v u$ is decided by $v$ when ``$v$
lies at one side of $h_0$,'' and by $u$ otherwise.  In particular,
$h_{-1} h_{-2}\in \ecc$ and $h_1 h_2\in\ec$.  It is easy to verify
that $h_{-1}\in N[v]$ (resp., $h_{1}\in N[v]$) always holds true when
$v u\in \ecc$ (resp., $v u\in \ec$).  We now have all the details for
the definition of the auxiliary graph $\mho(G)$, which is restated
below.
\paragraph{Definition~\ref{def:mho} (restated).}
  The vertex set of $\mho(G)$ consists of $\overline{\oo}\cup L\cup
  R\cup \{w\}$, where $L$ and $R$ are distinct copies of $\oo$, i.e.,
  for each $v\in \oo$, there are a vertex $v^l$ in $L$ and another
  vertex $v^r$ in $R$, and $w$ is a new vertex distinct from $V(G)$.
  For each edge $u v\in E(G)$, we add to the edge set of $\mho(G)$
\begin{itemize}
\item an edge $u v$ if neither $u$ nor $v$ is in $\oo$;
\item two edges $u^l v^l$ and $u^r v^r$ if both $u$ and $v$ are in
  $\oo$; or
\item an edge $u v^l$ or $u v^r$ if $uv\in \ec$ or $uv\in \ecc$
  respectively ($v\in \oo$ and $u\in \overline\oo$).
\end{itemize}
Finally, we add an edge $w v^l$ for every $\{v\in \oo: u v \in \ecc\}$.

\medskip

By definition, $L$ and $R$ are nonadjacent, and for any $v\in \oo$, the
two vertices $v^l$ and $v^r$ derived from $v$ have no common
neighbors.  The following lemma takes care of other situations when 
the distance between $L$ and $R$ is small.
\begin{lemma}\label{lem:one-side}
  We can in $O(||G||)$ time find a subgraph of $G$ in $\cF_{LI}$ if
  given
  \begin{enumerate}[(1)]
  \item a vertex $u\in\overline{\oo}$ as well as edges $u v_1\in \ecc$
    and $u v_2\in \ec$; or
  \item a pair of adjacent vertices $u_1,u_2\in \overline \oo$ as well
    as edges $u_1 v_1\in \ecc$ and $u_2 v_2\in \ec$.
  \end{enumerate}
\end{lemma}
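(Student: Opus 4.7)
The hypotheses supply strong adjacency information for free. From the definitions of $\ec$ and $\ecc$, an edge $u v_1 \in \ecc$ forces $v_1 \sim h_{-1}$ and an edge $u v_2 \in \ec$ forces $v_2 \sim h_1$; since $v_1, v_2 \in \oo = N[h_0]$, both are also adjacent to $h_0$, while the outside vertex $u$ (respectively $u_1, u_2$) is nonadjacent to $h_0$. Finally, since Lem.~\ref{lem:hole-conditions} delivers a hole of length at least $6$, we have $h_{-1} \not\sim h_1$. These facts already determine most edges of the small forbidden subgraph we shall exhibit.

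For part~(1), first note that $v_1 \ne v_2$: a single edge $u v$ cannot simultaneously lie in $\ec$ and in $\ecc$. Thus $u v_1 h_0 v_2 u$ is a $4$-cycle; if $v_1 \not\sim v_2$, it is an induced $4$-hole and we return it. Otherwise enlarge the scope to the six-vertex set $\{u, v_1, v_2, h_0, h_{-1}, h_1\}$: $\{h_0, v_1, v_2\}$ is a triangle, and each of $u$, $h_{-1}$, $h_1$ is adjacent to exactly two of its vertices ($u$ to $v_1, v_2$; $h_{-1}$ to $h_0, v_1$; $h_1$ to $h_0, v_2$). A short case analysis on the remaining at-most-six adjacencies now handles all possibilities. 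If $u \sim h_{-1}$ and $v_2 \not\sim h_{-1}$, then $u h_{-1} h_0 v_2 u$ is an induced $4$-hole; symmetric $4$-holes appear when $u \sim h_1$. If $v_1 \sim h_1$ or $v_2 \sim h_{-1}$, the extra cross-edge either yields a short hole on $\{v_1, h_{-1}, h_0, h_1, v_2\}$ directly, or it lets us invoke Lem.~\ref{lem:non-consecutive}, Lem.~\ref{lem:non-consecutive-2}, or Lem.~\ref{lem:non-helly}. In the only surviving configuration---$u \not\sim h_{\pm 1}$, $v_1 \not\sim h_1$, $v_2 \not\sim h_{-1}$---the three outer vertices are pairwise nonadjacent (using $h_{-1} \not\sim h_1$), and the six vertices induce a sun.

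For part~(2) I analyse the candidate $5$-cycle $u_1 v_1 h_0 v_2 u_2 u_1$. Since $u_1, u_2 \not\sim h_0$, its only possible chords are $u_1 v_2$, $u_2 v_1$, and $v_1 v_2$. If none is present, the cycle is an induced $5$-hole. If $v_1 \sim v_2$ but $u_1 \not\sim v_2$ and $u_2 \not\sim v_1$, then $u_1 u_2 v_2 v_1 u_1$ is an induced $4$-hole. If instead $u_1 \sim v_2$ (symmetrically $u_2 \sim v_1$), I inspect the direction of this new edge: when it lies in $\ec$ the single vertex $u_1$ already witnesses the hypotheses of part~(1) with the pair $(v_1, v_2)$, and we recurse; when it lies in $\ecc$ I replace $v_2$ by a $v_2'$ reached from $u_2$ (still in $\ec$) and repeat the triangle-plus-outer-vertices analysis of part~(1), or, failing that, invoke Lem.~\ref{lem:non-consecutive-2} on $\{u_1, u_2\}$, whose combined neighbourhood in $H$ is nonconsecutive.

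Every branch performs a constant number of adjacency queries and either returns a constant-size forbidden subgraph or hands off to a previously-proven subroutine that runs in $O(\|G\|)$ time, so the overall running time is $O(\|G\|)$. The genuine obstacle is not complexity but bookkeeping: the Cartesian product of edge directions, chord patterns, and optional cross-edges produces many subcases, and one must be careful that every surviving configuration actually exhibits a member of $\cF_{LI}$. The organising template---the sun on $\{u, v_1, v_2, h_0, h_{-1}, h_1\}$---makes this tractable: every deviation either shortens a hole, creates a direct chord giving a $4$-hole, or pushes us into the hypotheses of one of Lems.~\ref{lem:non-consecutive}--\ref{lem:non-helly}.
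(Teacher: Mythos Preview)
Your plan is correct and, for part~(1), actually cleaner than the paper's argument---but you are doing more work than necessary because you missed one consequence of Lem.~\ref{lem:hole-conditions}, which is a prerequisite for the very definition of $\ec$ and $\ecc$. That lemma guarantees $N[v]\subseteq\oo$ for every $v\in N(h_{-1})\cap N(h_1)$. Since $u v_1\in\ecc$ forces $v_1\sim h_{-1}$ and $u\in\overline\oo$ is a neighbour of $v_1$, we cannot also have $v_1\sim h_1$; symmetrically $v_2\not\sim h_{-1}$. This kills several of your branches outright: the ``cross-edge'' cases $v_1\sim h_1$, $v_2\sim h_{-1}$ in part~(1) are vacuous; in part~(2) the case $v_1=v_2$ is impossible (it would force $v_1\sim h_{\pm1}$); and whenever $u_1\sim v_2$ the edge $u_1v_2$ is automatically in $\ec$ (since $\tail{v_2}=0<\head{v_2}$), so your ``$u_1v_2\in\ecc$'' branch never occurs. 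With these observations, part~(1) collapses to exactly three outcomes---the $4$-hole $u v_1 h_0 v_2 u$ when $v_1\not\sim v_2$, the $4$-hole $u h_{\mp1} h_0 v_{\pm} u$ when $u\sim h_{\mp1}$, and the sun on $\{u,v_1,v_2,h_0,h_{-1},h_1\}$ otherwise---and part~(2) reduces cleanly to part~(1) or to an explicit $4$- or $5$-hole.

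Compared with the paper: for part~(1) the paper splits on whether $u\sim V(H)$ and, in the affirmative case, hands the work to Lems.~\ref{lem:non-consecutive} and~\ref{lem:non-helly}; your approach avoids those heavier subroutines entirely and always exhibits the forbidden subgraph directly on six vertices, which is a genuine simplification. For part~(2) the two arguments coincide once the impossible branches are pruned. (A minor aside: the paper writes ``net'' for the six-vertex configuration in part~(1); as you correctly identified, it is a sun.)
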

\begin{proof}
  (1) Consider first that $u\sim V(H)$.  Then $u v_1\in \ecc$ and $u
  v_2\in \ec$ imply that either $u$ is adjacent to all of
  $\{h_{\head{v_2}},\dots,h_{\tail{v_1}}\}$, then we can apply
  Lem.~\ref{lem:non-helly}; or the neighbors of $u$ in $H$ is not
  consecutive, then we can apply Lem.~\ref{lem:non-consecutive}.
  Assume now $u\not\sim V(H)$, then according to
  Lem.~\ref{lem:hole-conditions}, $v_1\not\sim h_1$ and $v_2\not\sim
  h_{-1}$.  We can return net $\{u,v_1,v_2,h_{-1},h_0,h_1\}$ if
  $v_1\sim v_2$ or $4$-hole $u v_1 h_0 v_2 u$ otherwise.

  (2) Consider first that $v_1 = v_2$.  Then by the definition of
  \ecc\ and \ec, $v_1$ is adjacent to both $h_{-1}$ and $h_1$, and it
  follows that $u_1\sim h_{-1}$ and $u_2\sim h_1$.  Observing that
  $u_1\sim u_2$ but neither of $u_1$ and $u_2$ is adjacent to $h_0$,
  at least one of Lems.~\ref{lem:non-consecutive} and
  \ref{lem:non-helly} applies.  Assume now that $v_1\ne v_2$, and
  without loss of generality, $u_1\not\sim v_2$ and $u_2\not\sim v_1$
  (otherwise, we are already in case (1)).  Then we can return $u_1
  v_1 v_2 u_2 u_1$ or $u_1 v_1 h_0 v_2 u_2 u_1$ as a short hole.
\end{proof}

Let \occ\ (resp., \oc) denote the subset of vertices of $\oo$ that are
incident to edges in \ecc\ (resp., \ec).  We have mentioned that
$\{\ecc,\ec\}$ partitions edges between $\oo$ and $\overline \oo$, but
a vertex in $\oo$ might belong to both \occ\ and \oc, or neither of
them.  Clearly, $\occ\subset N[h_{-1}]$ and $\oc\subset N[h_1]$, on
which we have the following result.
\begin{lemma}\label{lem:O}
  Given a pair of nonadjacent vertices $u,x\in \occ$ (or \oc), we can
  in $O(||G||)$ time find a subgraph of $G$ in $\cF_{LI}$.
\end{lemma}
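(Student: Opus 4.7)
The hypotheses already supply a rigid skeleton. From $u, x \in \occ$, there exist $u', x' \in \overline{\oo}$ with $u u', x x' \in \ecc$. The direction rule for $\ecc$ admits only ``case~1'' ($\tail{v} < \head{v} = 0$) and ``case~3'' ($\tail{v} = -1, \head{v} = 1$) for the endpoint in $\oo$; in both cases $v \sim h_{-1}$, so $u \sim h_{-1}$ and $x \sim h_{-1}$. Combined with $u, x \sim h_0$, this makes $\{u, x, h_0, h_{-1}\}$ a $K_4$ missing only the edge $ux$. As is tacit in this section, for every vertex and connected subset we examine, we may assume $\hv{\cdot}$ is consecutive and of size at most $|H| - 3$; otherwise Lem.~\ref{lem:non-consecutive}, \ref{lem:non-consecutive-2}, or \ref{lem:non-helly} already hands us a subgraph of $G$ in $\cF_{LI}$ in $O(||G||)$ time.

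I would next dispatch the cases in which a short hole drops out immediately. (i) If $u' = x'$, then $u h_0 x u' u$ is an induced $4$-hole, using $u' \not\sim h_0$ (since $u' \in \overline{\oo}$) and $u \not\sim x$ (by hypothesis). (ii) If $u' \ne x'$ and $u \sim x'$, then $u h_0 x x' u$ is an induced $4$-hole; symmetrically when $x \sim u'$. (iii) If $u \not\sim x'$, $x \not\sim u'$, and $u' \sim x'$, then $u u' x' x h_0 u$ is an induced $5$-hole, all chords being excluded by (i)--(ii) and by $u', x' \in \overline{\oo}$. Each witness can be read off in $O(d(u) + d(x))$ time after inspecting the adjacency lists of $u$ and $x$.

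The remaining case---$u' \ne x'$ with all of $u x'$, $x u'$, $u' x'$ absent---is the main obstacle; here we must look at how $u'$ and $x'$ meet $H$. If $u' \sim h_1$, then $u$ cannot be in case~3 (else $u' \in N(h_{-1}) \cap N(h_1) \subseteq N[h_0]$ by Lem.~\ref{lem:hole-conditions}, contradicting $u' \in \overline{\oo}$), so $u$ is in case~1 with $u \not\sim h_1$, and $u h_0 h_1 u' u$ is an induced $4$-hole; symmetrically when $x' \sim h_1$. Otherwise $\hv{u'}$ and $\hv{x'}$ both avoid $\{h_0, h_1\}$, and the consecutivity of $\hv{\{u, u'\}}$ together with $\{h_{-1}, h_0\} \subseteq \hv{u}$ and $h_0 \notin \hv{u'}$ forces $\hv{u'}$ to be either empty, contained in $\hv{u}$, or a strictly counterclockwise extension of $\hv{u}$ past $h_{\tail{u}}$; the same trichotomy holds for $x'$. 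In each resulting sub-branch I would either (a) invoke Lem.~\ref{lem:non-consecutive-2} on a connected set such as $\{u, u', x, h_{-1}\}$ whose combined $H$-neighborhood has a gap (this is what happens when, for instance, $\hv{u'}$ extends strictly past $h_{\tail{u}}$ while $\hv{x'}$ does not), or (b) extract a net, whipping top, or short hole directly from the local configuration on $\{u, x, u', x', h_{-2}, h_{-1}, h_0\}$; the six or seven vertices involved are already in hand, so each such witness is built in constant time. Summing over the bounded number of adjacency checks per vertex and the single call to one of Lem.~\ref{lem:non-consecutive}/\ref{lem:non-consecutive-2}/\ref{lem:non-helly}, the total running time is $O(||G||)$. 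The argument for $u, x \in \oc$ is identical after exchanging $h_{-1}$ with $h_1$ and counterclockwise with clockwise.
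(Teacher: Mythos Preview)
Your initial case distinctions (i)--(iii) and the disposal of $u' \sim h_1$ are correct. The problem is the final paragraph: it is a sketch, not a proof, and the concrete claim you make in option~(b) is actually false.

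Take the sub-case where $\hv{u} = \hv{x} = \{h_{-1}, h_0\}$ (so $u, x \not\sim h_1, h_{-2}$) and $\hv{u'} = \hv{x'} = \emptyset$. None of Lemmas~\ref{lem:non-consecutive}, \ref{lem:non-consecutive-2}, or~\ref{lem:non-helly} fires here---all the relevant $H$-neighbourhoods are consecutive and small---so your option~(a) is unavailable. For option~(b), the induced subgraph on your proposed set $\{u, x, u', x', h_{-2}, h_{-1}, h_0\}$ is precisely a $\dag_3$: take $s = h_{-2}$, $c = h_{-1}$, base $u\,h_0\,x$, base terminals $u', x'$. But $\dag_3$ is a \emph{large} caw, not a member of $\cF_{LI}$; it contains no net, whipping top, or short hole. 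So neither of your two options applies in this sub-case, and the ``six or seven vertices already in hand'' do not suffice.

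The underlying issue is directional. Both $u, x \in \occ$ are forced to be adjacent to $h_{-1}$, so on the counterclockwise side the three ``arms'' out of $h_0$ cannot be made independent. The paper looks clockwise instead: since $u, x \sim h_{-1}$ and $h_{-1} \not\sim h_1$, at most one of $u, x$ can be adjacent to $h_1$ (else $u\,h_{-1}\,x\,h_1\,u$ is a $4$-hole). With, say, $x \not\sim h_1$, the three length-$2$ paths $h_0 h_1 h_2$, $h_0 u v$, $h_0 x y$ (the paper's $v, y$ are your $u', x'$) either collapse into a short hole---if two of $v, y, h_2$ coincide or are adjacent---or form a long claw centred at $h_0$; when $u \sim h_1$ one gets a net on $\{h_0, h_1, u, h_2, v, x\}$ instead. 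Your counterclockwise $\dag_3$ could in principle be rescued by invoking Thm.~\ref{thm:1} (its shallow terminal $h_{-2}$ is non-simplicial in $G$), but that detour is exactly what the clockwise orientation avoids.
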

\begin{proof}
  In this proof we consider the set \occ, and a symmetrical argument
  applies to \oc.  By definition, we can find edges $u v,x y\in \ecc$,
  where $v,y\in \overline\oo$.  We have three (possibly intersecting)
  chordless paths $h_0 h_1 h_2$, $h_0 u v$, and $h_0 x y$.  If both
  $u$ and $x$ are adjacent to $h_1$, then we can return $4$-hole $u
  h_{-1} x h_1 u$.  Hence we may assume, without loss of generality,
  $x\not\sim h_1$.

  Assume first that $u\sim h_1$.  We consider the subgraph induced by
  $X_1 := \{h_0, h_1, h_2, u, v, x\}$, which is clearly distinct.
  Here the only uncertain adjacencies are between $v, x$, and $h_2$:
  by assumption, $h_0, h_1$, and $u$ are pairwise adjacent; $x$ is
  adjacent to neither $u$ nor $h_1$; $h_2$ is adjacent to neither
  $h_0$ nor $u$ (because Lem.~\ref{lem:hole-conditions}: by
  assumption, $N[u]\cap V(H) = \{h_{-1}, h_0, h_1\}$ and $v\sim
  h_{-1}$); and $v$ is adjacent to neither $h_0$ nor $h_1$.  If $v,
  x$, and $h_2$ are pairwise nonadjacent, then we return $G[X_1]$ as a
  net.  Otherwise, there is at least one edge among $v, x$, and $h_2$,
  then we return a $4$-hole, e.g., $v x h_0 u v$ when the edge is
  $vx$.

  In the remaining cases, $u$, $x$, and $h_1$ are pairwise
  nonadjacent.  If any two of $\{v, y, h_2\}$ are identical or
  adjacent, then we return a $4$- or $5$-hole, e.g., $h_0 u v x h_0$
  or $h_0 u v y x h_0$ when $v=y$ or $v\sim y$ respectively.
  Otherwise, $v, y$, and $h_2$ are distinct and pairwise nonadjacent,
  and we return long claw $\{h_0, h_1, h_2, u, v, x, y\}$.

  Edges $u v$ and $x y$ can be found in $O(||G||)$ time, and only a
  small constant number of adjacencies are checked in this procedure;
  it thus takes $O(||G||)$ time in total.  
\end{proof}

\begin{lemma}\label{lem:construct-mho}
  The order and size of $\mho(G)$ are upper bounded by $2|G|$ and
  $2||G||$ respectively.  Moreover, an adjacency list representation
  of $\mho(G)$ can be constructed in $O(||G||)$ time.
\end{lemma}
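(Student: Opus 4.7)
The plan is to verify the two cardinality bounds by a direct count and then describe a linear-time construction.

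For the order bound, by definition $V(\mho(G)) = \overline{\oo}\cup L\cup R\cup \{w\}$ with $|L|=|R|=|\oo|$, so $|\mho(G)| = |G|+|\oo|+1$. Since the hole $H$ produced by Lem.~\ref{lem:hole-conditions} has length at least six, the vertices $h_2,h_3,\ldots,h_{|H|-2}$ all lie in $\overline{\oo}$, so $|\oo| \le |G|-3$ and therefore $|\mho(G)| \le 2|G|-2$.

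For the size bound, partition $E(G)$ into $E_1$ (both endpoints in $\overline\oo$), $E_2$ (both endpoints in $\oo$), and $E_3$ (cross edges). Inspecting Def.~\ref{def:mho}, each edge of $E_1\cup E_3$ contributes exactly one edge to $\mho(G)$ and each edge of $E_2$ contributes exactly two; the only remaining edges of $\mho(G)$ are those incident to $w$, one per vertex of $\occ$. Since each vertex of $\occ$ is by definition incident to at least one edge in $\ecc\subseteq E_3$, we have $|\occ|\le |E_3|$, whence $||\mho(G)||\le |E_1|+2|E_2|+2|E_3| \le 2||G||$.

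For the construction I would proceed in four stages:
\begin{inparaenum}[\itshape i\upshape)]
\item scan $N[h_0]$, $N(h_{-1})$, and $N(h_1)$ once each to produce a membership array for $\oo$ and, for every vertex $x$, two boolean flags recording whether $x\sim h_{-1}$ and whether $x\sim h_{1}$;
\item allocate the vertex $w$ and the two copies $v^l, v^r$ for every $v\in\oo$;
\item traverse the edge set of $G$ once and route each edge to its image(s) in $\mho(G)$ according to Def.~\ref{def:mho}, while recording which vertices of $\oo$ land in $\occ$; and
\item add the edge $wv^l$ for every vertex $v$ recorded in $\occ$.
\end{inparaenum}
The only nontrivial decision is the direction of a cross edge $uv$ with $v\in\oo$ and $u\in\overline{\oo}$; it is resolved locally from the precomputed flags. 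If $v\not\sim h_{-1}$ the edge is clockwise and we add $uv^r$; if $v\not\sim h_1$ it is counterclockwise and we add $uv^l$; and in the remaining case $v$ is adjacent to both $h_{-1}$ and $h_1$, so by Lem.~\ref{lem:hole-conditions} together with Lem.~\ref{lem:non-consecutive-2} applied to $u$ exactly one of $u\sim h_{-1}$, $u\sim h_1$ holds and picks the direction. Every lookup is $O(1)$ and each edge of $G$ is processed a constant number of times, so the total running time is $O(||G||)$.

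The main obstacle is really only the correctness of the direction rule in the borderline case $\tail{v}=-1$, $\head{v}=1$: it rests on the discussion immediately preceding Def.~\ref{def:mho}, where the direction is forced by $u$'s unique adjacency with $h_{-1}$ or $h_1$. Once this is granted, the whole construction reduces to a single pass over the adjacency lists with constant-time bookkeeping per edge.
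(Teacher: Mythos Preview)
Your counting arguments for $|\mho(G)|$ and $\|\mho(G)\|$ match the paper's exactly (with a marginally sharper constant $2|G|-2$ for the order).  The edge partition $E_1\cup E_2\cup E_3$ together with $|\occ|\le|E_3|$ reproduces the paper's bound $\|G\|+\|G[\oo]\|+|\occ|<2\|G\|$.

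For the construction you take a slightly more streamlined route than the paper.  You precompute only the two adjacency flags $\sim h_{-1}$ and $\sim h_1$ and route every edge in one pass; the paper's procedure (Fig.~\ref{fig:construct-omega-G}) instead computes full $\tail{\cdot}$, $\head{\cdot}$ values and interleaves the construction with verification: step~2.1.2 traps the degenerate case $\tail{v}=\head{v}=0$, steps~2.2--2.3 invoke Lem.~\ref{lem:one-side}, and step~3 invokes Lem.~\ref{lem:O}.  None of these checks is needed to establish the present lemma \emph{as stated}, but they are precisely what makes Prop.~\ref{lem:property-mho} hold ``a posteriori''---so with your bare construction that proposition would have to be argued separately (or those checks added as a second linear pass).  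That is the trade-off: your version isolates the lemma cleanly, the paper's version bundles it with the certificates needed downstream.

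One small point worth making explicit: your two clauses ``$v\not\sim h_{-1}$'' and ``$v\not\sim h_1$'' overlap when $v\in\oo$ is adjacent to $h_0$ only.  You are tacitly relying on the discussion just before Def.~\ref{def:mho}, which rules this out (such a $v$ with a neighbour in $\overline\oo$ already yields a subgraph in $\cF_{LI}$ via Lem.~\ref{lem:non-consecutive-2} or a long claw); saying so would remove the apparent ambiguity in your case split.
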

\begin{figure}[h!]
  \vspace*{-5mm}
  \setbox4=\vbox{\hsize28pc \noindent\strut
  \begin{quote}
  \vspace*{-5mm} \small

  {\sc input}: a prime graph $G$ and a hole $H$ satisfying conditions
  of Lem.~\ref{lem:hole-conditions}.
  \\
  {\sc output}: the auxiliary graph $\mho(G)$ or a subgraph of $G$ in
  $\cF_{LI}$.$^{[\star]}$
  \\[1ex]
  0 \hspace*{2ex} {\bf for each} $v\in V(G)$ {\bf do} compute \tail{v}
  and \head{v};
  \\
  1 \hspace*{2ex} {\bf for each} $v\in \oo$ {\bf do}
  \\
  1.1 \hspace*{3ex} add vertices $v^l$ and $v^r$;
  \\
  1.2 \hspace*{3ex} {\bf for each} $u\in N(v)$ {\bf do}
  \\
  1.2.1 \hspace*{5ex} {\bf if} $u\in \oo$ {\bf then} add $u^l$ to
  $N(v^l)$ and $u^r$ to $N(v^r)$;
  \\
  1.2.2 \hspace*{5ex} {\bf else if} $u$ is not marked {\bf then} mark
  $u$ and put it into $N(\oo)$;
  \\
  2 \hspace*{2ex} {\bf for each} $u\in N(\oo)$ {\bf do}
  \\
  2.1 \hspace*{3ex} {\bf for each} $v\in N(u)$ {\bf do}
  \comment{$\tail{v}\le 0\le \head{v}$.}
  \\
  2.1.1 \hspace*{5ex} {\bf if} $v\not\in \oo$ {\bf then goto} 2.1;
  \comment{henceforth $v\in X$.}
  \\
  2.1.2 \hspace*{5ex} {\bf if} $\tail{v}= 0= \head{v}$ {\bf then
    return} a subgraph in $\cF_{LI}$;
  \\
  2.1.3 \hspace*{5ex} {\bf if} $\head{v} = 0$ {\bf then} replace $v$
  by $v^l$ in $N(u)$ and add $u$ to $N(v^l)$; \comment{$u v\in \ecc$.}
  \\
  2.1.4 \hspace*{5ex} {\bf if} $\tail{v} = 0$ {\bf then} replace $v$
  by $v^r$ in $N(u)$ and add $u$ to $N(v^r)$; \comment{$u v\in \ec$.}
  \\
  \hspace*{5ex} $\setminus\!\!\setminus$ in the remaining cases
  $N[v]\cap V(H) = \{h_{-1}, h_0, h_1\}$ and $|N[u] \cap \{h_{-1},
  h_1\}| = 1$.
  \\
  2.1.5 \hspace*{5ex} {\bf if} $\head{u} = |H| - 1$ {\bf then} replace
  $v$ by $v^l$ in $N(u)$ and add $u$ to $N(v^l)$; \comment{$u v\in
    \ecc$.}
  \\
  2.1.6 \hspace*{5ex} {\bf if} $\tail{u} = 1$ {\bf then} replace $v$
  by $v^r$ in $N(u)$ and add $u$ to $N(v^r)$; \comment{$u v\in \ec$.}
  \\
  2.1.7 \hspace*{5ex} {\bf if} $u v\in \ecc$ and $v$ is not marked as
  \occ\ {\bf then} mark $v$ and put it into \occ;
  \\
  2.1.8 \hspace*{5ex} {\bf if} $u v\in \ec$ and $v$ is not marked as
  \oc\ {\bf then} mark $v$ and put it into \oc;
  \\
  2.2 \hspace*{3ex} {\bf if} $u$ is incident to edges in both \ecc\
  and \ec\ {\bf then call} Lem.~\ref{lem:one-side}(1);
  \\
  2.3 \hspace*{3ex} {\bf for each} $u'\in N(u)\setminus \oo$ {\bf do}
  \\
  2.3.1 \hspace*{5ex} {\bf if} $u$ and $u'$ are incident to edges in
  \ecc\ and \ec\, respectively, {\bf then call}
  Lem.~\ref{lem:one-side}(2);
  \\
  3 \hspace*{2ex} {\bf if} \oc\ or \occ\ does not induce a clique {\bf
    then call} Lem.~\ref{lem:O};
  \\
  4 \hspace*{2ex} add vertex $w$;
  \\
  5 \hspace*{2ex} {\bf for each} $v\in \occ$ {\bf do} put $w$ into
  $N(v^l)$ and $v^l$ into $N(w)$;
  \\
  6 \hspace*{2ex} remove $\oo$.
  \\[2ex]
  $[\star]$ \hspace*{2ex} {\bf if} a subgraph in $\cF_{LI}$ is found in any
  step {\bf then return} it.
\end{quote} \vspace*{-3mm} \strut} $$\boxit{\box4}$$
\vspace*{-7mm}
\caption{Procedure for constructing $\mho(G)$
  (Lem.~\ref{lem:construct-mho}).}
\label{fig:construct-omega-G}
\end{figure}
\begin{proof}
  The vertices of the auxiliary graph $\mho(G)$ include $\overline \oo$,
  two copies of $\oo$, and $w$, i.e., $|\mho(G)| = 2|\oo| + |\overline{\oo}|
  + 1 = |G| + |\oo| + 1 \le 2|G|$.  In $\mho(G)$, there are two edges
  derived from every edge of $G[\oo]$ and one edge from every other edge
  of $G$.  All other edges are incident to $w$, and there are \occ\ of
  them.  Therefore, $\|\mho(G)\| = ||G|| + ||G[\oo]|| + |\occ| \le
  ||G|| + ||G[\oo]|| + |\ecc| < 2||G||$.  This concludes the first
  assertion.

  For the construction of $\mho(G)$, we use the procedure described in
  Fig.~\ref{fig:construct-omega-G} (some peripheral bookkeeping
  details are omitted for the sake of clarity).  Step~1 adds vertex
  sets $L$ and $R$ (step 1.1) as well as those edges induced by them
  (step 1.2.1), and finds $N(\oo)$ (step 1.2.2).  Step~2 adds edges in
  $\ecc$ and $\ec$, and detect \occ\ and \oc.  Steps~2.2, 2.3, and 3
  verify that neither of Lems.~\ref{lem:one-side} and \ref{lem:O}
  applies; information required in these verifications can be obtained
  in step 2.1 and stored.  Steps~4 and 5 add vertex $w$ and edges
  incident to it.  Step~6 cleans $\oo$.  The main steps are 1 and 2,
  each of which checks every edge at most once, and hence the total
  time is $O(||G||)$.
\end{proof}

By steps 2.2 and 2.3 (Lem.~\ref{lem:one-side} as well as the
discussion preceding it), and step 3 (Lem.~\ref{lem:O}), a posteriori,
the following properties hold for $\mho(G)$.
\begin{proposition}\label{lem:property-mho}
  In the auxiliary graph $\mho(G)$, any path between $L$ and $R$ has
  length at least $4$, and the vertex $w$ is simplicial.
\end{proposition}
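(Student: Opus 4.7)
The plan is to read off both claims directly from the construction procedure in Figure~\ref{fig:construct-omega-G}, exploiting the fact that the construction explicitly invokes Lemmas~\ref{lem:one-side} and~\ref{lem:O} before it is allowed to return $\mho(G)$. Two structural observations keep things simple: by Definition~\ref{def:mho}, the vertex sets $L$ and $R$ are nonadjacent and their neighborhoods in $\overline{\oo}$ are decided by the partition $\{\ecc,\ec\}$ (an edge $uv\in\ecc$ contributes $uv^l$, an edge $uv\in\ec$ contributes $uv^r$); and $w$ has neighbors only inside $L$.

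For the first assertion, I would consider a shortest path $P$ between $L$ and $R$ in $\mho(G)$. Because no edge runs between $L$ and $R$ and because $w$ has no neighbor outside $L$, the shortest such $P$ can be written in the form $v_0^l, u_1, u_2, \ldots, u_k, v_{k+1}^r$ with $u_1,\ldots,u_k\in \overline{\oo}$, $u_1 v_0\in \ecc$, and $u_k v_{k+1}\in \ec$; any detour inside $L$ or $R$, or through $w$, could be shortened away. The case $k=1$ would mean that the single vertex $u_1\in\overline{\oo}$ is incident to edges in both $\ecc$ and $\ec$, which is exactly what step~2.2 of the construction rules out via Lemma~\ref{lem:one-side}(1). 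The case $k=2$ would demand adjacent $u_1,u_2\in\overline{\oo}$ with $u_1 v_0\in\ecc$ and $u_2 v_{k+1}\in\ec$; this is excluded by step~2.3.1 via Lemma~\ref{lem:one-side}(2). Hence $k\ge 3$, and the length of $P$ is at least $k+1\ge 4$.

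For the second assertion, step~5 of the construction says $N(w)=\{v^l:v\in\occ\}$. Two copies $u^l,v^l$ are adjacent in $\mho(G)$ if and only if $uv\in E(G[\oo])$, so $N(w)$ induces a clique in $\mho(G)$ if and only if $\occ$ induces a clique in $G$. But step~3 invokes Lemma~\ref{lem:O} precisely to certify the latter (otherwise a subgraph in $\cF_{LI}$ would already have been returned). Therefore $N(w)$ is a clique in $\mho(G)$ and $w$ is simplicial.

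The only point that requires a little care is the reduction of an arbitrary $L$--$R$ path to the canonical shape above; but this is immediate from the two structural observations in the opening paragraph, so the proof is essentially a bookkeeping exercise distributing the work between Lemmas~\ref{lem:one-side} and~\ref{lem:O}.
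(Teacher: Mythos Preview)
Your proof is correct and follows exactly the approach the paper itself sketches: the paper's ``proof'' is the single sentence preceding the proposition, pointing to steps 2.2, 2.3 (Lemma~\ref{lem:one-side}) and step 3 (Lemma~\ref{lem:O}) of the construction, and your argument is precisely the detailed unpacking of that reference. The only cosmetic point is that your pairing of $\ecc$ with $v^l$ and $\ec$ with $v^r$ matches the comments in Figure~\ref{fig:construct-omega-G} but is reversed relative to the restated Definition~\ref{def:mho}; since Lemma~\ref{lem:one-side} is symmetric in $\ec$ and $\ecc$, this does not affect the argument.
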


\subsection{Proof of Thm.~\ref{thm:3}: Detection of subgraphs in
  $\cF_{LI}$}
\label{sec:negative}
Each vertex $x$ of $\mho(G)$ different from $w$ is uniquely defined by
a vertex of $G$, which is denoted by $\og{x}$.  We say that $x$ is
\emph{derived from} $\og{x}$.  By the definition of $\mho(G)$, if a
pair of vertices $x$ and $y$ (different from $w$) is adjacent in
$\mho(G)$, then their original vertices in $G$ must be adjacent as
well.  However, the converse may not hold true, e.g., $h_0\sim h_1$ in
$G$ but $h^l_0\not\sim h^r_1$ in $\mho(G)$.  A pair of vertices $x,y$
of $\mho(G)$ is a \emph{broken pair} if $\og{x}\sim \og{y}$ in $G$ but
${x}\not\sim {y}$ in $\mho(G)$.

For example, $\og{v^l} = \og{v^r} = v$ for $v\in \oo$.  By abuse of
notation, we will use the same symbol for a vertex $u\in
\overline{\oo}$ of $G$ and the unique vertex of $\mho(G)$ derived from
$u$; its meaning is always clear from the context.  In other words,
$\og{u} = u$ for $u\in \overline \oo$, and in particular, $\og{h_i} =
h_i$ for $i = 2, \dots, |H|-2$.  Any pair of adjacent vertices $u,v\in
\oo$ gives two broken pairs in $\mho(G)$, namely $\{u^l, v^r\}$ and
$\{u^r, v^l\}$, and any edge $u v \in \ec$ (resp., $u v \in \ecc$),
where $v\in\oc$ (resp., $v\in\occ$), gives a broken pair $\{u, v^r\}$
(resp., $\{u, v^l\}$).  Every broken pair is in one of these cases,
and thus at least one vertex of it is in $L\cup R$.  We can mark
$\og{x}$ for each vertex during the construction of $\mho(G)$, and it
can be generalized to a set $U$ of vertices that does not contain $w$,
i.e., $\og{U} := \{\og{v}: v\in U\}$; we point out that possibly
$|\og{U}|< |U|$.
\begin{proposition}\label{lem:isomorphism}
  Let $X$ be a set of vertices of $\mho(G)$ that does not contain $w$
  or both $\{v^l,v^r\}$ for any $v\in \oo$.  Then $\mho(G)[X]$ is a
  subgraph of $G[\og{X}]$, and they are isomorphic if and only if $X$
  contains no broken pairs.
\end{proposition}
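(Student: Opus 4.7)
The plan is to use the map $\phi$ restricted to $X$ as the candidate isomorphism and verify its properties directly from the construction of $\mho(G)$ given in Definition~\ref{def:mho}.

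First I would argue that $\phi|_X$ is a well-defined injection from $X$ to $\og{X}$. The exclusion of $w$ from $X$ ensures $\phi$ is defined on every element of $X$. For injectivity, the only way two distinct vertices of $\mho(G)\setminus\{w\}$ can map to the same vertex of $G$ is the pair $\{v^l,v^r\}$ for some $v\in\oo$, since all other vertices of $\mho(G)\setminus\{w\}$ correspond one-to-one with vertices of $\overline{\oo}$ or with a single copy of some vertex of $\oo$. The hypothesis rules out this case, so $|X|=|\og{X}|$ and $\phi|_X$ is a bijection onto $\og{X}$.

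Next I would show the subgraph inclusion by inspecting the edge rules in Definition~\ref{def:mho}. Every edge of $\mho(G)$ other than those of the form $w v^l$ is introduced from some edge $u v \in E(G)$; hence for any $xy\in E(\mho(G)[X])$ we have $\og{x}\og{y}\in E(G)$, giving $\mho(G)[X]\subseteq G[\og{X}]$ via $\phi|_X$. For the equivalence, observe that $\phi|_X$ induces an isomorphism precisely when the edge inclusion is in fact an edge equality, i.e., when every edge $\og{x}\og{y}$ of $G[\og{X}]$ with $x,y\in X$ already appears as $xy$ in $\mho(G)[X]$. By definition, a pair $x,y\in X$ with $\og{x}\og{y}\in E(G)$ but $xy\notin E(\mho(G))$ is exactly a broken pair. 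So if $X$ contains no broken pair, then $\phi|_X$ is an isomorphism; conversely, if $\{x,y\}\subseteq X$ is a broken pair, the edge $\og{x}\og{y}$ of $G[\og{X}]$ has no preimage in $\mho(G)[X]$, so $||\mho(G)[X]||<||G[\og{X}]||$ and the two graphs on $|X|$ vertices cannot be isomorphic via any bijection either.

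There is no real obstacle here; the proposition is essentially bookkeeping on top of Definition~\ref{def:mho}. The only subtle point to articulate carefully is that isomorphism is really tested through the canonical map $\phi|_X$: once injectivity is established, a strict edge loss under $\phi|_X$ immediately prevents any isomorphism, since both graphs have the same vertex count and one is literally contained in the other through $\phi|_X$.
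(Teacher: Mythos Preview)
Your proposal is correct and follows essentially the same approach as the paper: use $\phi|_X$ as the candidate bijection, observe that every non-$w$ edge of $\mho(G)$ comes from an edge of $G$ (giving the subgraph inclusion), and then note that the definition of broken pair is exactly the obstruction to edge equality. The paper's own proof is a terse three-sentence version of what you wrote; your added detail on injectivity and the final edge-count argument are sound elaborations, not a different method.
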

\begin{proof}
  By assumption, there is a one-to-one mapping between $X$ and \og{X}.
  If $X$ is free of broken pairs, then this mapping also gives an
  isomorphism between $\mho(G)[X]$ and $G[\og{X}]$.  On the other
  hand, if $X$ contains broken pairs, then $\mho(G)[X]$ has strictly
  less edges than $G[\og{X}]$, and thus they cannot be isomorphic.
\end{proof}

This observation enables us to prove the following lemma, which is
crucial for the identification of simplicial vertices of $G$.  Here we
use \lcc\ and \lc\ to denote the subset of vertices of $L$ derived
from \occ\ and \oc, respectively, i.e., $\lcc := \{v^l: v\in \occ\}$
and $\lc := \{v^l: v\in \oc\}$.
\begin{lemma}\label{lem:simplicial-vertices}
  A vertex $x$ different from $\{w\}\cup R$ is simplicial in $\mho(G)$
  if and only if $\og{x}$ is simplicial in $G$.
\end{lemma}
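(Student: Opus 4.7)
The plan is to prove both directions by analyzing $N_{\mho(G)}[x]$ according to the type of $x$, relying on Lem.~\ref{lem:one-side} to rule out mixed orientations on $\overline{\oo}$ vertices, and on the fact that $h_0$ has no neighbor in $\overline{\oo}$, whence $h_0 \notin \occ \cup \oc$.

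First I treat $x \in \overline{\oo}$, where $\og{x} = x$. By Lem.~\ref{lem:one-side}(1) the edges from $x$ into $\oo$ lie entirely in $\ec$ or entirely in $\ecc$; without loss of generality, take the former, so $N_{\mho(G)}(x) = (N_G(x)\cap\overline{\oo}) \cup \{u^l : u \in N_G(x)\cap\oo\}$. I claim the obvious bijection $\phi : N_G[x] \to N_{\mho(G)}[x]$ that sends $u \in \oo$ to $u^l$ and fixes $\overline{\oo}$ is an isomorphism. Adjacencies within $\overline{\oo}$ and within $\oo$ translate directly by the construction. For a mixed edge $yu$ in $G[N_G(x)]$ with $y \in \overline{\oo}, u \in \oo$, the corresponding $\mho(G)$-edge is $yu^l$ iff $yu \in \ec$; if instead $yu \in \ecc$, then the adjacent pair $x, y \in \overline{\oo}$ together with $xu \in \ec$ and $yu \in \ecc$ would trigger Lem.~\ref{lem:one-side}(2), contradicting that $\mho(G)$ was built successfully. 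Hence $\phi$ is an isomorphism of $G[N_G[x]]$ and $\mho(G)[N_{\mho(G)}[x]]$, and the equivalence follows.

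Next, take $x = v^l \in L$, so $\og{x} = v$. The vertex $h_0$ is not simplicial in $G$ (it lies on $H$) and, as just noted, $h_0 \notin \occ \cup \oc$, so the biconditional holds trivially when $v = h_0$; hence assume $v \ne h_0$, which gives $v \sim h_0$. I split into three subcases. \emph{(a)} If $v \notin \oc \cup \occ$, then no $\ec$ or $\ecc$ edge touches $v$, so $N_G(v) \subseteq \oo$ and $N_{\mho(G)}(v^l) = \{u^l : u \in N_G(v)\}$ (no $\overline{\oo}$- or $w$-neighbor); the map $u \mapsto u^l$ is again an isomorphism and the equivalence is immediate. \emph{(b)} If $v \in \occ$, pick $u_0 \in \overline{\oo}$ with $u_0v \in \ecc$; then $\{u_0, h_0\} \subseteq N_G(v)$ but $u_0 \not\sim h_0$, so $v$ is not simplicial in $G$; on the $\mho(G)$ side $v^l \sim w$ (because $v\in\occ$) and $v^l \sim h_0^l$ (because $v\sim h_0$), but $w \not\sim h_0^l$ since $h_0 \notin \occ$, so $v^l$ is not simplicial either. \emph{(c)} If $v \in \oc \setminus \occ$, choose $u_0 \in \overline{\oo}$ with $u_0v \in \ec$; then $\{u_0,h_0\} \subseteq N_G(v)$ with $u_0 \not\sim h_0$ shows $v$ is not simplicial, and in $\mho(G)$ the pair $u_0, h_0^l \in N_{\mho(G)}(v^l)$ is non-adjacent (otherwise $u_0h_0 \in \ec$, contradicting $u_0 \not\sim h_0$), so $v^l$ is not simplicial.

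The main obstacle is case (b): the vertex $w$ is outside the scope of Prop.~\ref{lem:isomorphism}, so the clean isomorphism argument of case (a) does not apply. The trick is to sidestep isomorphism entirely and instead exhibit, on both sides of the equivalence, an explicit pair of non-adjacent neighbors of $v^l$ (respectively $v$), which simultaneously certifies failure of simpliciality. The whole argument hinges on the geometric fact that $h_0$ is buried inside $\oo$ with no external edges, so that $h_0 \notin \occ \cup \oc$ and $w$ never attaches to $h_0^l$.
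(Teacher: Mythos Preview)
Your proof is correct. The only cosmetic wrinkle is the separate treatment of $v=h_0$: the phrase ``holds trivially'' does not by itself explain why $h_0^l$ fails to be simplicial in $\mho(G)$. In fact $h_0\notin\oc\cup\occ$, so $h_0$ simply falls under your subcase~(a), where the isomorphism $u\mapsto u^l$ already does the work; you could absorb it there and drop the aside.

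Your route differs from the paper's. The paper first disposes of $x\in\lcc$ (your subcase~(b)) and then treats all remaining $x\in\overline{\oo}\cup(L\setminus\lcc)$ uniformly: it observes a bijection $N_{\mho(G)}(x)\leftrightarrow N_G(\og{x})$, gets the ``only if'' direction from Prop.~\ref{lem:isomorphism}, and argues the ``if'' direction by contradiction via \emph{broken pairs} and the distance bound of Prop.~\ref{lem:property-mho}. You instead split finer: an explicit isomorphism for $x\in\overline{\oo}$ (using Lem.~\ref{lem:one-side} directly to force all cross-edges to one side), an explicit isomorphism for subcase~(a), and for subcases~(b)/(c) you exhibit concrete non-adjacent neighbor pairs on both sides rather than proving an isomorphism. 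This buys you a self-contained argument that never invokes the broken-pair machinery or Prop.~\ref{lem:property-mho}; the paper's version is shorter and more uniform once those tools are in hand.
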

\begin{proof}
  Every vertex in $\lcc$ is adjacent to both $h^l_0$ and $w$, and thus
  cannot be simplicial in $\mho(G)$.  Likewise, a vertex in $\occ$ is
  adjacent to $h_0$ and $\overline \oo$, and thus cannot be simplicial
  in $G$.  Therefore, we may assume $x\not\in \lcc$; hence $x\not\sim
  w$ and $\og{x}\not\in \occ$.  For such a vertex $x$, any edge of $G$
  incident to \og{x} has a corresponding edge of $\mho(G)$ incident to
  $x$.  In other words, there is a one-to-one mapping between the
  neighbors of $x$ in $\mho(G)$ and the neighbors of \og{x} in $G$.
  By Prop.~\ref{lem:isomorphism}, if $N_{\mho(G)}(x)$ induces a clique
  (noting that it contains no $\{v^l,v^r\}$ for any $v\in \oo$ as they
  are nonadjacent), then $N_{G}(\og{x})$ induces a clique as well.
  This verifies the ``only if'' direction.

  Suppose, for contradiction, that the ``if'' direction is false, then
  $x$ must be adjacent to some broken pair; let it be $y,z$.  They
  cannot be both in $L$ or both in $R$; on the other hand, since they
  have distance $2$, Prop.~\ref{lem:property-mho} rules out the
  possibility that one of them in $L$ and the other in $R$.  Thus, at
  least one of $y,z$ is in $\overline{\oo}$; without loss of
  generality, let $y\in\overline{\oo}$.  Then $x$ is in
  $\overline{\oo}$ as well; otherwise, $x\in L$, and $\og{x}$ is
  adjacent to two nonadjacent vertices $y$ and $h_0$.  Now \og{z} is
  adjacent to both $x$ and $y$.  The fact $y\not\sim z$ in $\mho(G)$
  implies $\og{z} x$ and $\og{z} y$ are in $\ecc$ and $\ec$ but not
  the same.  Without loss of generality, let $\og{z} x\in \ecc$ and
  $\og{z} y\in \ec$.  Thus, $z\in\oc\cap\occ$, and $x$ and $y$ are
  adjacent to $h_{-1}$ and $h_1$, respectively.  However, \og{x} has a
  pair of neighbors $h_1$ and $y$ that is nonadjacent to each other.
  This contradiction concludes the ``if'' direction and the proof.
\end{proof}

It is worth noting that even a vertex $v\in \oo$ is not simplicial in
$G$, it is still possible that $v^r$ is simplicial in $\mho(G)$.
Consider, for example, the graph of a $6$-hole.  One may want to
verify that such a vertex has to be in \oc.  This reveals the first
purpose of introducing $w$: for every vertex $v\in \occ$, the vertex
$v^l$ is adjacent to both $h^l_0$ and $w$, thereby excluding the
possibility of deriving a simplicial vertex in $L$ from a
non-simplicial vertex in $\oo$.

Consider a broken pair $x,y$.  If $x = v^l$ for some $v\in \oo$, then
$y$ is adjacent to $v^r$, and thus any \stpath{x}{y} $P$ can be
extended to a \stpath{v^l}{v^r} of length $||P|| + 1$, i.e., $|P|$.
As a result of Prop.~\ref{lem:property-mho}, the distance between a
broken pair is at least three.  This is strengthened by the following
lemma.
\begin{lemma}\label{lem:bad-pair}
  Let $X$ be a set of vertices of $\mho(G)$ that contains a broken
  pair and induces a connected subgraph.  We can in $O(||G||)$ time
  find a subgraph of $G$ in $\cF_{LI}$ if
  \begin{inparaenum}[(\itshape 1\upshape)]
  \item $|X|\le 5$, or
  \item there exists a vertex $h\in H$ nonadjacent to $\og{X}$.
  \end{inparaenum}
\end{lemma}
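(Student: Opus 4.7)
The plan is to pull a shortest path in $\mho(G)[X]$ between the broken pair back to a walk in $G$, recover from it a hole $C\subseteq G[\og X]$, and then either declare $C$ a member of $\cF_{LI}$ directly (in case (1)) or combine $C$ with the non-simplicial vertex $h$ via Lem.~\ref{lem:c-star} (in case (2)). Let $\{x,y\}$ be the broken pair of $X$; by the case analysis of broken-pair types at least one of $x,y$ lies in $L\cup R$, so assume $x\in L\cup R$. Let $P=x_0 x_1\cdots x_t$ be a shortest $x$–$y$ path in the connected graph $\mho(G)[X]$, found by BFS. Using Prop.~\ref{lem:property-mho} and the observation made just before the lemma statement, $t\ge 3$; if both $x\in L$ and $y\in R$, then $t\ge 4$.

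Next I would show that the $v_i:=\og{x_i}$ are pairwise distinct and form an induced cycle of $G$ of length $t+1$. Consecutive pairs $v_i v_{i+1}$ are adjacent in $G$ by construction, and $v_0 v_t$ is an edge because $\{x,y\}$ is a broken pair. A collision $v_i=v_j$ is possible only when $\{x_i,x_j\}=\{u^l,u^r\}$ for some $u\in\oo$, since $\og{}$ is injective on each of $\overline\oo$, $L$, and $R$ separately; Prop.~\ref{lem:property-mho} then forces $|i-j|\ge 4$. In case (1), $t\le 4$ leaves only $(i,j)=(0,t)$ with $t=4$, and this is ruled out because $v_0\ne v_t$ (they are adjacent in $G$). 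A chord $v_i v_j$ with $2\le j-i\le t-1$ either corresponds to an edge of $\mho(G)$—contradicting the shortness of $P$—or is itself a broken pair of distance $\ge 3$ in $\mho(G)$; in the latter event I would restart the argument with the smaller broken pair $\{x_i,x_j\}$ and connected set $\{x_i,\ldots,x_j\}\subsetneq X$, which preserves the hypothesis of the lemma in whichever case we are in. Since $|X|$ strictly decreases and a broken pair is always at distance $\ge 3$, the iteration halts with a genuine hole $C\subseteq G[\og X]$ of length $\ge 4$.

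For case (1), the bound $t\le 4$ forces $|C|\in\{4,5\}$, so $C$ is already a short hole in $\cF_{LI}$ and can be returned. For case (2), since $C\subseteq G[\og X]$ and $h$ is nonadjacent to $\og X$, $h$ is nonadjacent to $C$; moreover $h\in H$ is non-simplicial in $G$ because its two $H$-neighbors are nonadjacent. Invoking Lem.~\ref{lem:c-star} with $h$ and the hole $C$ therefore returns a subgraph of $G$ in $\cF_{LI}$ in $O(||G||)$ time. The total cost is that of one BFS in $\mho(G)[X]$ (of size $O(||G||)$ by Lem.~\ref{lem:construct-mho}), constant-time bookkeeping per iteration, plus a single call to Lem.~\ref{lem:c-star}.

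The hard part will be making the case analysis in the second paragraph fully airtight, because $\og{}$ fails to be injective on $L\cup R$ and broken-pair edges are invisible in $\mho(G)$, so the lift of $P$ to $G$ is not automatically an isomorphism onto $G[\og P]$. I would need to make the ``restart on a smaller broken pair inside a smaller $X$'' loop precise enough that each pass strictly shrinks $|X|$ while preserving both the connectivity of $X$ and, in case (2), the property that $h$ is nonadjacent to $\og X$; once that is arranged, the final hole drops out and both cases close uniformly.
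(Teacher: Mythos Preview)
Your approach is essentially the paper's: pull the path back to a hole in $G$ and return it (case~(1)) or feed it with $h$ to Lem.~\ref{lem:c-star} (case~(2)). The paper organizes the same idea more economically by declaring at the outset ``it suffices to assume $X$ is minimal, i.e., it induces a path whose ends are the \emph{only} broken pair in $X$.'' That single sentence is exactly the fixed point of your restart loop, and once you adopt it the chord case disappears: any non-end broken pair would contradict minimality, so for all non-consecutive $x_i,x_j$ we have $x_i\not\sim x_j$ in $\mho(G)$ \emph{and} $\og{x_i}\not\sim\og{x_j}$ in $G$, hence $G[\og X]$ is an induced cycle.

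Two small points are missing from your write-up. First, you must exclude $w$ from the path $P$: it is in no broken pair (so not an end) and is simplicial in $\mho(G)$ by Prop.~\ref{lem:property-mho} (so not an inner vertex of a shortest path, since its two neighbours on $P$ would be adjacent). Without this, $\og{x_i}$ is undefined at $x_i=w$. Second, your ``hard part'' about collisions is in fact forced by the broken-pair restart you already have: if $x_i=u^l$ and $x_j=u^r$ with $j-i\ge4$, then $x_{j-1}\sim u^r$ in $\mho(G)$ gives $\og{x_{j-1}}\sim u$ in $G$, while $x_{j-1}\not\sim x_i=u^l$ (the path is induced and $j-1>i+1$); hence $\{x_i,x_{j-1}\}$ is a strictly closer broken pair inside $\{x_i,\dots,x_{j-1}\}\subsetneq X$, and your restart applies. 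Under the paper's minimality phrasing this reads: a collision would produce a second broken pair, contradicting minimality; so $\og{}$ is injective on the minimal $X$ and $G[\og X]$ is genuinely a hole. With these two additions your argument is complete and coincides with the paper's.
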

\begin{proof}
  It suffices to assume that $X$ is minimal, that is, it induces a
  path whose ends $x,y$ are the only broken pair in $X$.  Then
  $w\not\in X$: it does not participate in any broken pair, hence not
  an end of the path, and it is simplicial in $\mho(G)$, hence not an
  inner vertex of any induced path.  Recall that at least one of $x$
  and $y$ is in $L\cup R$; without loss of generality, let $x = v^l\in
  L$, then $y$ is adjacent to the vertex $v^r$.  By the definition of
  broken pair, $\og{X}$ induces a cycle in $G$; on the other hand, by
  Prop.~\ref{lem:property-mho}, $|X|\ge 4$.  Therefore, $G[\og{X}]$ is
  a hole.  We return it as a short hole in case (1) or call
  Lem.~\ref{lem:c-star} with \og{X} and $h$ in case (2).
\end{proof}

Now we are ready to prove the rest of Thm.~\ref{thm:3}, which is
separated into two statements.
\begin{lemma}\label{lem:mho-is-chordal}
  If $\mho(G)$ is not chordal, then we can in $O(||G||)$ time find a
  subgraph of $G$ in $\cF_{LI}$.
\end{lemma}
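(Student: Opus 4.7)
The plan is to find a shortest hole $C$ in $\mho(G)$ (linear time by Tarjan--Yannakakis) and translate it into a forbidden induced subgraph of $G$ in $\cF_{LI}$. Since $w$ is simplicial in $\mho(G)$ by Prop.~\ref{lem:property-mho}, $w \notin V(C)$, so the derivation only has to handle how $V(C)$ sits among $L$, $R$, and $\overline\oo$. The key bridge is Prop.~\ref{lem:isomorphism}: whenever $V(C)$ contains no broken pair and no $\{v^l, v^r\}$ pair, $G[\og{V(C)}]$ is a hole of $G$ isomorphic to $C$. Consequently, when $V(C)\subseteq \overline\oo$ this always applies; if the resulting hole has length at most $5$ we return it, and otherwise we invoke Lem.~\ref{lem:c-star} with $h_0$, which is nonadjacent to $\overline\oo$ by construction and non-simplicial because it sits on the original hole $H$.

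Now suppose $V(C)$ meets $L\cup R$. If $V(C)$ contains a broken pair, I would take a minimal connected subset $X\subseteq V(C)$ witnessing one and hand it to Lem.~\ref{lem:bad-pair}; minimality forces $X$ to be an induced path whose ends form the only broken pair, so either $|X|\le 5$ (triggering part (1)) or $|X|\ge 6$, in which case I argue that some vertex of the original hole $H$ escapes $\og{X}$ by a short counting argument using $|H|\ge 6$ together with Prop.~\ref{lem:property-mho} (a short connected $X$ cannot dominate $H$ without introducing another broken pair). If $V(C)$ has no broken pair but contains $\{v^l, v^r\}$ for some $v$, then $v^l$ and $v^r$ are joined in $C$ by two paths of length at least $4$ each by Prop.~\ref{lem:property-mho}; taking the shorter one and closing it with a shortcut in $\mho(G)$ either produces a shorter hole (contradicting the choice of $C$) or a broken pair in a small connected set, reducing to an already-handled case.

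The remaining case is $V(C)$ broken-pair-free and $\{v^l, v^r\}$-free but meeting $L\cup R$, where Prop.~\ref{lem:isomorphism} again yields a hole $C'$ of $G$ of length $|C|$. Short $C'$ is returned outright. For long $C'$ I examine $h_0$'s neighbors on $C'$, which are exactly $\og{V(C)\cap(L\cup R)}$; since $L$ and $R$ are at distance at least $4$ in $\mho(G)$, a non-empty $V(C)\cap R$ forces the $L$- and $R$-parts of $V(C)$ to appear in $C$ as arcs separated by $\overline\oo$-arcs, and then $h_0$'s neighbors on $C'$ are non-consecutive, so Lem.~\ref{lem:non-consecutive} applies. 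The residual configuration in which $V(C)\subseteq L\cup \overline\oo$ and $V(C)\cap L$ is a single arc is the main obstacle, because then $h_0$'s neighbors on $C'$ are consecutive; here I exploit the direction of $\ec/\ecc$ through the identity that an $\ecc$-edge $uv$ forces $v\sim h_{-1}$, so the two boundary $L$-vertices of the arc are both neighbors of $h_{-1}$. This either makes $h_{-1}$'s adjacency on $C'$ non-consecutive (applying Lem.~\ref{lem:non-consecutive}) or compresses the hole enough ($b\le 2$ in the notation $V(C)\cap L$ of size $a$, $V(C)\cap\overline\oo$ of size $b$) for Thm.~\ref{thm:2}(2) to fire via $h_0$. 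All invoked subroutines run in $O(\|G\|)$ time, so the entire procedure does too.
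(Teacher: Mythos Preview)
Your overall plan—case on whether $V(C)$ carries a broken pair, then on whether it contains some $\{v^l,v^r\}$—is natural, but several branches do not close.

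The sharpest gap is the broken-pair branch with $|X|\ge 6$. You assert that ``some vertex of $H$ escapes $\og{X}$'' by a counting argument, but no argument is supplied, and none is available at this level of generality: the arc $X$ may wander through $\overline\oo$ and acquire neighbours of every $h_i$. The paper does not try to bound the reach of an arbitrary broken-pair arc. It first observes that $\lc$ is a clique separator between $L\setminus\lc$ and $\overline\oo$, so $V(C)\cap L\subseteq\lc$ and $|V(C)\cap L|\le 2$; then it \emph{fixes a target} $h\in\{h_{-2},h_{-1}\}$ in advance (depending on whether $C$ meets $N(h_{-1}^l)$) and traverses $C$ from a designated $x\in V(C)\cap L$ until a vertex $y$ adjacent to that target appears. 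If no such $y$ exists, the \emph{entire} cycle $C$ avoids the neighbourhood of that $h$, and Lem.~\ref{lem:bad-pair}(2) or Lem.~\ref{lem:c-star} fires with that specific witness. If $y$ exists, the paper either exhibits an $L$--$R$ walk of length at most $5$ (Lem.~\ref{lem:bad-pair}(1)) or builds a second hole all of whose vertices are adjacent to a single $h_a$, feeding Lem.~\ref{lem:non-consecutive}. The witness vertex of $H$ is chosen beforehand, not conjured post hoc.

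Two further problems. First, you rely on a \emph{shortest} hole of $\mho(G)$, but Tarjan--Yannakakis returns only some hole; finding a shortest one in linear time is triangle-hard (cf.\ Section~\ref{sec:hardness-find-fis}), so your $\{v^l,v^r\}$ reduction (``shorter hole, contradiction'') is unsupported—nor is there any shortcut in $\mho(G)$ between $v^l$ and $v^r$ to close the shorter arc. Second, in the final sub-case you claim the boundary $L$-vertices of the arc are neighbours of $h_{-1}$ via ``an $\ecc$-edge forces $v\sim h_{-1}$''. But by Definition~\ref{def:mho} the $L$--$\overline\oo$ edges of $\mho(G)$ come from $\ec$, not $\ecc$; the boundary $L$-vertices therefore have originals in $\oc\subset N[h_1]$, and your $h_{-1}$-analysis does not apply. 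Combined with $|V(C)\cap L|\le 2$, the Thm.~\ref{thm:2}(2) fallback you propose ($b\le 2$) is never reached once $|C'|\ge 6$.
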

\begin{proof}
  We find a hole $C$ of $\mho(G)$.  Note that $w\not\in C$ as $w$ is
  simplicial in $\mho(G)$.  Let us first take care of two trivial
  cases.  In the first case, $C$ is disjoint from both $L$ and $R$,
  and \og{C} is a hole of $G$ (Prop.~\ref{lem:isomorphism}).  This
  hole is nonadjacent to $h_0$ in $G$, which enables us to call
  Lem.~\ref{lem:c-star}.  In the other case, all vertices of $C$ are
  from $L$ or $R$, and $\og{C}$ is a hole of $G$.  This hole has a
  common neighbor $h_0$, which enables us to call
  Lem.~\ref{lem:non-consecutive}.  Since $L$ and $R$ are nonadjacent,
  the second case above must hold if $C$ is disjoint from
  $\overline{\oo}$.  Henceforth we assume that $C$ intersects
  $\overline{\oo}$ and, without loss of generality, $L$; it might
  intersect $R$ as well, but this fact is irrelevant in the following
  proof.

  By the definition of $\mho(G)$ and Lem.~\ref{lem:O}, $\lc$ is a
  clique separator of $L\setminus \lc$ and $\overline{\oo}$.
  Therefore, $C$ contains at most two vertices of $\lc$ and is
  disjoint from $L\setminus \lc$.  We define two configurations based
  on whether $C$ is adjacent to $h^l_{-1}$ or not:
  \begin{itemize}
  \item [I.]  If $C$ contains some $x\in N(h^l_{-1})$, (pick either one
    if there are two such vertices,) then $\tail{\og{x}} = -1$ and
    $\head{\og{x}} = 1$.  Starting from $x$, we traverse $C$ (in
    either direction) till the first vertex $y$ that is adjacent to
    $h_{-2}$.
  \item [II.] Otherwise, we pick $x$ to be the vertex in $V(C)\cap L$
    such that $\head{\og{x}}$ is smaller.  Note that $\tail{\og{x}} =
    0$.  Starting from $x$, we traverse $C$ (in either direction) till
    the first vertex $y$ that is adjacent to $h^r_{-1}$.
  \end{itemize}
  We need to explain what to do if the vertex $y$ is not found after
  $C$ has been exhausted, which means that $h_{-2}$ or $h^r_{-1}$ is
  nonadjacent to the hole $C$.  We check whether $C$ contains a broken
  pair or not.  If yes, then we call Lem.~\ref{lem:bad-pair}(2);
  otherwise, \og{C} induces a hole in $G$
  (Prop.~\ref{lem:isomorphism}).  This hole is nonadjacent to $h_{-2}$
  (I) or $h_{-1}$ (II), which allows us to call Lem.~\ref{lem:c-star}.
  In the following we may assume that we have found the vertex $y$.

  Let $a := \head{\og{x}}$; note that $a<|H| - 2$ (otherwise \og{x} is
  adjacent to at least $|H| - 2$ vertices in $H$).  If $y\sim h^l_a$
  (when $a = 1$) or $y\sim h_a$ (when $a > 1$), then we have an
  \stpath{h^l_0}{h^r_0} $h^l_0 h^l_1 y h_{-2} h^r_{-1} h^r_0$ (I) or
  $h^l_0 x h_a y h^r_{-1} h^r_0$ (II), which enables us to call
  Lem.~\ref{lem:bad-pair}(1).

  Starting from $x$, we traverse both directions of the hole $C$ till
  the first vertices that are adjacent to $h_{a+1}$.  Let them be
  $x_1$ and $x_2$, and let $P_1$ and $P_2$ be the resulting paths $x
  \cdots x_1$ and $x \cdots x_2$, respectively.  If
  Lem.~\ref{lem:non-consecutive} or \ref{lem:non-consecutive-2}
  applies in the traversal, then we are done.  Otherwise, for every
  inner vertex in $P_1$ and $P_2$, its neighbors in $\{h^l_{-1},
  h^l_0\cdots, h^r_0, h^r_1\}$ are subsets of $\{h^l_{-1},
  h^l_0\cdots, h_a\}$, and both $x_1$ and $x_2$ are adjacent to $h_a$.
  Therefore, $y$ appears in neither path, which implies that $x_1\ne
  x_2$ and $x_1\not\sim x_2$.  We take the hole $C' := x \cdots x_1
  h_{a+1} x_2 \cdots x$.  Every vertex in $C'$ is adjacent to $h_a$
  (or $h^l_a$), and there is no broken pair in $C'$; thus \og{C'} is a
  hole of $G$ (Lem.~\ref{lem:isomorphism}).  Noting that $h_a$ is a
  common neighbor of this hole, we can call
  Lem.~\ref{lem:non-consecutive}.

  The main step is traversing $C$, which can be done in $O(||G||)$
  time.  
\end{proof}

Now we may assume that $\mho(G)$ is chordal, and we use
Lem.~\ref{lem:find-simplicial-vertices} to find the set $S$ of
simplicial vertices of $\mho(G)$.  According to
Lem.~\ref{lem:simplicial-vertices}, $\og{S\setminus (\{w\}\cup R)}$
gives the set of simplicial vertices of $G$.  We can then build the
subgraph $\mho(G - SI(G))$; using definition one can verify that it is
precisely $\mho(G) - \{v\in V(\mho(G)): \og{v}\in SI(G)\}$, hence a
subgraph of $\mho(G)$.  As we have pointed out, it is different from
$\mho(G) - SI(\mho(G))$.

\begin{lemma}\label{lem:negative-certificate}
  If $\mho(G - SI(G))$ is not an interval graph, then we can in
  $O(||G||)$ time find a subgraph of $G$ in $\cF_{LI}$.
\end{lemma}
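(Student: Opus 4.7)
The plan is to exploit that $\mho(G - SI(G))$, being an induced subgraph of the chordal auxiliary graph $\mho(G)$ (cf.~Lemma~\ref{lem:mho-is-chordal}), is itself chordal; combined with the non-interval hypothesis, Lekkerkerker--Boland's theorem guarantees it contains a chordal asteroidal witness $W$. Using the linear-time caw detection in chordal graphs referenced in Section~\ref{sec:hardness-find-fis}, such $W$ can be found in $O(\|G\|)$ time. The rest of the proof converts $W$ into a subgraph of $G$ belonging to $\cF_{LI}$.

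Let $s$ denote the shallow terminal of $W$, which is simplicial in $\mho(G - SI(G))$. The clean scenario is that $s \notin \{w\} \cup R$, that $V(W)$ avoids $w$ and both copies $\{v^l, v^r\}$ of any $v \in N[h_0]$, and that $V(W)$ contains no broken pair. Under these assumptions, Prop.~\ref{lem:isomorphism} gives $\mho(G)[V(W)] \cong G[\og{V(W)}]$, so $\og{W}$ is a caw of $G$ with shallow terminal $\og{s}$. Since $\og{s}$ survives in $G - SI(G)$ by construction, Lemma~\ref{lem:simplicial-vertices} forces $\og{s}$ to be non-simplicial in $G$. Therefore, either $\og{W}$ already belongs to $\cF_{LI}$ and we return it, or it is a large caw with non-simplicial shallow terminal, in which case Theorem~\ref{thm:1} delivers a subgraph of $G$ in $\cF_{LI}$ in linear time.

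The bulk of the work is handling the degenerate configurations: broken pairs inside $V(W)$, the presence of $w$ or of both copies $\{v^l, v^r\}$ of some $v \in N[h_0]$, and the possibility that $s$ itself lies in $R \cup \{w\}$. For a broken pair $\{x, y\} \subseteq V(W)$, I would take a shortest induced path between them inside $W$; if this path has at most five vertices, Lemma~\ref{lem:bad-pair}(1) applies. Otherwise, the minimality of $W$ localizes the broken pair near the ``seam'' $L \cup R \cup \{w\}$, and one can exhibit a hole vertex nonadjacent to $\og{V(W)}$, invoking Lemma~\ref{lem:bad-pair}(2). The vertex $w$ is simplicial in $\mho(G)$ with $N(w) = \lcc$ a clique, so it can appear in $W$ only as a terminal; moreover, the distance-four bound between $L$ and $R$ in $\mho(G)$ (Prop.~\ref{lem:property-mho}) sharply restricts how $\{v^l, v^r\}$ can both occur in a caw. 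Each such configuration either admits a local rerouting making $W$ ``clean,'' or directly reveals a short hole, net, sun, long claw, or whipping top of $G$ through a bounded number of adjacency checks inside $H$ and its immediate neighborhood.

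The main obstacle is the residual case $s \in \{w\} \cup R$: here Lemma~\ref{lem:simplicial-vertices} cannot be invoked, so $\og{s}$ may well be simplicial in $G$, and Theorem~\ref{thm:1} is not directly applicable. I would address this by exploiting the near-symmetry of the $L$/$R$ construction together with the greedy choice of $h_0$ from Lemma~\ref{lem:hole-conditions}: either $W$ can be transformed into a caw of $\mho(G - SI(G))$ whose shallow terminal escapes $\{w\} \cup R$, reducing to the clean case; or the forced local structure around $s$, $w$, and the hole vertices $h_{-1}, h_0, h_1$ exhibits a small caw or short hole of $G$ directly, in the spirit of the subroutines in the proofs of Theorems~\ref{thm:1} and~\ref{thm:2}. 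All case checks are local and the entire procedure runs in $O(\|G\|)$ time.
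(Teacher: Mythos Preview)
Your high-level plan matches the paper's: find a caw $W$ in the (chordal) graph $\mho(G-SI(G))$, dispose of broken pairs via Lem.~\ref{lem:bad-pair}, and in the clean case pull $W$ back to a caw of $G-SI(G)$ via Prop.~\ref{lem:isomorphism} and invoke Thm.~\ref{thm:1}. Two things, however, are off.

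\textbf{The case split is wrong.} Your ``main obstacle'' $s\in\{w\}\cup R$ is a red herring. Every vertex of $\mho(G-SI(G))$ other than $w$ is, by construction, derived from a vertex of $G-SI(G)$ and hence from a \emph{non}-simplicial vertex of $G$; this holds for vertices in $R$ just as well as for those in $L\cup\overline\oo$. You never need Lem.~\ref{lem:simplicial-vertices} here. Thus if $w\notin V(W)$ and $V(W)$ has no broken pair, Prop.~\ref{lem:isomorphism} already yields a caw in $G-SI(G)$ whose shallow terminal is non-simplicial in $G$, regardless of whether that terminal came from $R$. The paper's only case distinction is $w\in V(W)$ versus $w\notin V(W)$. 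Relatedly, your broken-pair paragraph overcomplicates: every caw has diameter at most~$4$ (the centers are adjacent to all base vertices), so Lem.~\ref{lem:bad-pair}(1) always applies directly; there is no ``length $>5$'' subcase requiring Lem.~\ref{lem:bad-pair}(2).

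\textbf{The genuine hard case is under-specified.} The real work is $w\in V(W)$, which you acknowledge but then hand-wave as ``local rerouting or a direct small subgraph''. The paper's argument is concrete. First, the absence of broken pairs forces $V(W)$ to be disjoint from $R$ (any $v^r$ with $v\in\occ$ would pair with one of $w$'s neighbours in $\lcc$). Next, one searches for a single vertex $u\in\overline\oo$ with $u\,\og{x}\in\ecc$ for every neighbour $x$ of $w$ in $W$. If such $u$ exists, either $u$ is close enough to $V(W)$ to trigger Lem.~\ref{lem:bad-pair}(1), or $\og{V(W)\setminus\{w\}}\cup\{u\}$ induces a caw of $G-SI(G)$ isomorphic to $W$, and Thm.~\ref{thm:1} finishes. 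If no such $u$ exists, then $w$ has two neighbours $x_1,x_2\in\lcc$ in $W$ with \emph{distinct} witnesses $y_1,y_2\in\overline\oo$ (so $\og{x_i}y_i\in\ecc$ but $\og{x_1}\not\sim y_2$, $\og{x_2}\not\sim y_1$); a short explicit case analysis on the adjacencies among $\{\og{x_1},\og{x_2},y_1,y_2,h_{-1},h_0,h_1,h_2\}$ then exhibits a $4$-hole, a net, or a rising sun. This last step is the crux and is not something one can leave to ``a bounded number of adjacency checks''; the specific choice of $y_1,y_2$ and the role of Lem.~\ref{lem:hole-conditions} (forcing $\head{\og{x_i}}\le 1$) are what make the case analysis terminate.
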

\begin{proof}
  By assumption, we can find a subgraph of $\mho(G - SI(G))$ in
  $\cF_I$; let $X$ be its vertex set.  If it is a hole, then we call
  Lem.~\ref{lem:mho-is-chordal}.  Hence we may assume that $\mho(G) -
  SI(G)$ is chordal, and we have a caw of $\mho(G - SI(G))$.  Since
  the largest distance between any pair of vertices in a caw is $4$,
  if $X$ contains a broken pair, then we can call
  Lem.~\ref{lem:bad-pair}(1).  Now that $X$ is free of broken pairs,
  if $w\not\in X$, then by Prop.~\ref{lem:isomorphism}, \og{X} induces
  a caw in $G - SI(G)$.  We can return \og{X} if it is small, or call
  Thm.~\ref{thm:1} otherwise.  Therefore, in the remaining cases, $X$
  contains $w$, and then $X$ must intersect \lcc.  The nonexistence of
  broken pairs then implies that $X$ is disjoint from $R$: it is
  connected, and if it intersects $R$, then it contains some vertex
  $v^r$ for $v\in \occ$; since \occ\ induces a clique, we have a
  broken pair.  The simplicial vertex $w$
  (Prop.~\ref{lem:property-mho}) has to be one terminal of $X$ and has
  either one or two neighbors in it (see Fig.~\ref{fig:at}).  We
  search for a vertex $u\in \overline{\oo}$ such that $u\og{x}\in\ecc$
  for every neighbor $x$ of $w$ in $X$.  We break the rest of the
  proof into two cases based on whether there exists such a vertex.

  Assume first that such a vertex $u$ is found.  Note that this is the
  only case when $|N(w)\cap X| = 1$.  It is easy to verify that $u$
  and $x\in N(w)\cap X$ make a broken pair, which means $u\not\in X$.
  By assumption, $u$ is adjacent to $R$, and thus nonadjacent to $L$
  (Lem.~\ref{lem:one-side}).  If $X$ contains a neighbor $u'$ of $u$,
  then we consider the shortest \stpath{x}{u'} $P$ in the caw.  Since
  $x$ is a neighbor of $w$, it cannot be a terminal.  From
  Fig.~\ref{fig:at} it can be observed that $P$ consists of at most
  $4$ vertices.  We can extend $P$ by adding edge $u' u$, and this
  results in an \stpath{x}{u} of length at most $4$, which allows us
  to call Lem.~\ref{lem:bad-pair}(1).  Otherwise, ($X$ is disjoint
  from $N_{\mho(G-SI(G))}[u]$,), the subgraph of $G-SI(G)$ induced by
  $\og{X\setminus \{w\}'}\cup \{u\}$ must be isomorphic to the
  subgraph induced by $\mho(G)[X]$, hence a caw.  We can either return
  it as a small caw, or call Thm.~\ref{thm:1}.

  Assume now that $w$ has two neighbors $x_1$ and $x_2$ in
  $X\cap\lcc$, and we have two distinct vertices $y_1, y_2\in
  \overline{\oo}$ such that $\og{x_1} y_1, \og{x_2} y_2\in \ecc$.  By
  assumption, $\og{x_1}\not\sim y_2$ and $\og{x_2}\not\sim y_1$ in $G$
  (otherwise we have already been in the previous case).  Note that
  $y_1$ and $y_2$ are nonadjacent; otherwise, $\{y_1, y_2\}$ and the
  counterparts of $\{{x_1}, x_2\}$ in $R$ induce a hole of $\mho(G)$,
  which is impossible (we have assumed that it is chordal).  Since
  $\og{x_1},\og{x_2} \in \occ$, they are nonadjacent to $h_2$
  (Lem.~\ref{lem:hole-conditions}), i.e., $\head{\og{x_1}}$ and
  $\head{\og{x_2}}$ are either $0$ or $1$.  We proceed as follows
  (Fig.~\ref{fig:lem:negative-certificate}).

\begin{figure}[h!]
\setbox4=\vbox{\hsize28pc \noindent\strut
\begin{quote}
  \vspace*{-5mm} \small

  1 \hspace*{2ex} \parbox[t]{0.95\linewidth}{ {\bf if} $y_1\sim
    h_{\head{\og{x_1}} + 1}$ {\bf then return} $4$-hole ($y_1 \og{x_1}
    h_{0} h_1 y_1$) or ($y_1 \og{x_1} h_{1} h_2 y_1$);
    \\
    {\bf if} $y_2\sim h_{\head{\og{x_2}} + 1}$ {\bf then} {\sf
      symmetric as above};}
  \\
  2 \hspace*{2ex} {\bf if} $\head{\og{x_1}} = \head{\og{x_2}}$ {\bf
    then}
  \\
  \hspace*{6ex} {\bf return} net $\{y_1, \og{x_1}, y_2, \og{x_2},
  h_{\head{\og{x_2}}}, h_{\head{\og{x_2}}+1}\}$;
  \comment{Fig.~\ref{fig:hole-2}.}
  \\
  \hspace*{2ex} $\setminus\!\!\setminus$ assume from now that
  $\head{\og{x_1}} = 1$ and $\head{\og{x_2}} = 0$.
  \\
  3 \hspace*{2ex} {\bf if} $y_{2}\not\sim h_{-1}$ {\bf then return}
  net $\{y_1, h_{-1}, y_2, \og{x_2}, h_0,
  h_1\}$; \comment{Fig.~\ref{fig:hole-4}.}
  \\
  4 \hspace*{2ex} {\bf else return} rising sun $\{y_1, h_{-1},
  \og{x_1}, y_2, \og{x_2}, h_0, h_1\}$.  \comment{Fig.~\ref{fig:hole-3}.}
  \\[-2mm]
\end{quote} \vspace*{-6mm} \strut} $$\boxit{\box4}$$
\vspace*{-9mm}
\caption{Subroutine for the proof of
  Lem.~\ref{lem:negative-certificate} ($w\in X$).}
\label{fig:lem:negative-certificate}
\end{figure}

\begin{figure}[h!]
  \centering\footnotesize
  \begin{subfigure}[b]{0.32\textwidth}
    \centering \includegraphics{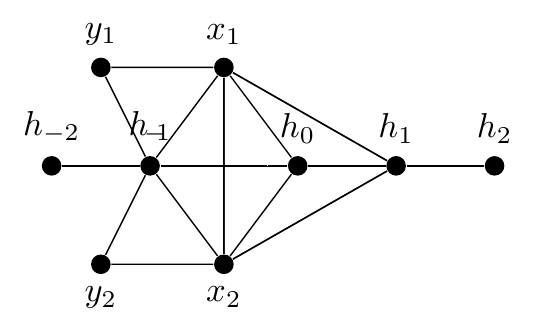}
    \caption{}
    \label{fig:hole-2}
  \end{subfigure}
  \,
  \begin{subfigure}[b]{0.32\textwidth}
    \centering \includegraphics{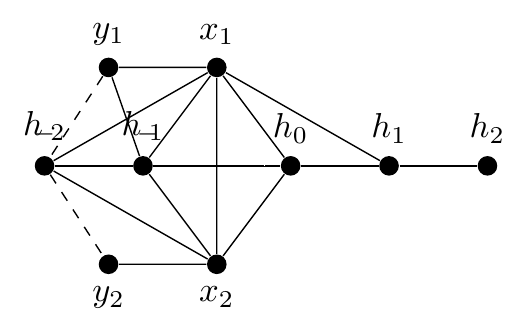}
    \caption{}
    \label{fig:hole-4}
  \end{subfigure}
  \,
  \begin{subfigure}[b]{0.32\textwidth}
    \centering \includegraphics{5-2.pdf}
    \caption{}
    \label{fig:hole-3}
  \end{subfigure}
  \caption{Structures used in the proof of
    Lem.~\ref{lem:negative-certificate}. ($\phi()$ is omitted.)}
\label{fig:negative-certificate-2}
\end{figure}

  We now verify the correctness of this subroutine.  All adjacencies
  used below are in $G$.  Step 1 considers the case where $y_1\sim
  h_{\head{\og{x_1}} + 1}$.  By construction
  (Lem.~\ref{lem:non-consecutive-2}) and noting that $\og{x_1}
  y_1\in\ecc$, it holds that $y_1\not\sim h_{\head{\og{x_1}}}$.  Thus,
  $y_1 \og{x_1} h_{1} h_2 y_1$ is a hole.  A symmetric argument
  applies when $y_2\sim h_{\head{\og{x_2}} + 1}$.  Now that the
  conditions of step~1 do not hold true, step~2 is clear from
  assumption.  Henceforth we may assume without loss of generality
  that $\head{\og{x_1}}> \head{\og{x_2}} \ge 0$.  According to
  Lem.~\ref{lem:hole-conditions}, the only possibility to make this
  true is $\head{\og{x_1}} =1$ and $\head{\og{x_2}}= 0$.
  Consequently, $\head{{y_1}} = |H| - 1$
  (Lem.~\ref{lem:non-consecutive-2}).  Steps~3 and 4 are clear from
  the assumptions above.

  The dominating steps are finding $X$ and calling appropriate
  subroutines, all of which can be done in $O(||G||)$ time.
\end{proof}

Lems.~\ref{lem:simplicial-vertices}, \ref{lem:mho-is-chordal}, and
\ref{lem:negative-certificate} together conclude Thm.~\ref{thm:3}.

\subsection{The olive-ring decomposition}
\label{sec:olive-ring-decomposition}
If we have not found a subgraph in $\cF_{LI}$, then $\mho(G - SI(G))$
must be an interval graph, which has a clique path decomposition.  We
now demonstrate how to employ it to build an olive-ring decomposition
for $G$.  This boils down to two steps, the first of which builds a
hole decomposition for $G - SI(G)$.  It is worth noting that $G -
SI(G)$ may or may not be prime, and the following proof does not use
any property of prime graphs.
\begin{lemma}\label{lem:build-hole-decomposition}
  Given a clique path decomposition for $\mho(G - SI(G))$, we can in
  $O(||G||)$ time construct a hole decomposition for $G - SI(G)$.
\end{lemma}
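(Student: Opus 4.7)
The plan is to \emph{fold} the clique path decomposition $\mathcal{P} = (K_1, \ldots, K_t)$ of $\mho(G - SI(G))$ back into a clique hole decomposition of $G - SI(G)$, inverting the \emph{unfolding} used to define $\mho$. I will delete the bookkeeping vertex $w$, identify $v^l$ with $v^r$ for every $v \in \oo$ via the map $\phi$, and close the resulting bag sequence into a cycle.

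First I orient $\mathcal{P}$ so that $w \in K_1$; since $w$ is simplicial in $\mho(G - SI(G))$ by Prop.~\ref{lem:property-mho}, it lies in a unique bag and this orientation is unambiguous. Because the distance between $L$ and $R$ is at least $4$ in $\mho(G - SI(G))$ (again Prop.~\ref{lem:property-mho}), no bag meets both, and the bags meeting $L$ form an initial subpath of $\mathcal{P}$ while those meeting $R$ form a final subpath. For each $i$ set $\tilde K_i := \phi(K_i \setminus \{w\})$. I then show that the sequence $\tilde K_1, \ldots, \tilde K_t$ together with a wrap-around edge $\tilde K_t \tilde K_1$ is a clique hole decomposition of $G - SI(G)$. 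No bag $K_i$ contains both $v^l$ and $v^r$ (they are nonadjacent in $\mho$) or any broken pair (broken pairs must couple an $L$-vertex with an $R$-vertex, or an $\overline{\oo}$-vertex with a $v^l$ or $v^r$ on the wrong side of the $\ec$/$\ecc$ orientation), so by Prop.~\ref{lem:isomorphism} each $\tilde K_i$ is a clique of $G - SI(G)$; maximality of $\tilde K_i$ transfers from maximality of $K_i$ because any extending vertex in $G - SI(G)$ lifts back through $\phi$ to one in $\mho(G - SI(G))$.

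The main obstacle is verifying the arc property: for every $v \in V(G) \setminus SI(G)$ the bags $\tilde K_i$ containing $v$ must form a contiguous arc of the cycle. For $v \in \overline{\oo}$ this is just the subpath property of $\mathcal{P}$. For $v \in \oo$, the bags of $\mathcal{P}$ containing $v^l$ form a subpath within the initial $L$-segment and the bags containing $v^r$ form a subpath within the final $R$-segment; I must argue, using the $\ec$/$\ecc$ orientation together with the fact that $h_0$ is a common neighbor of all of $\oo$ in $G$, that the $v^l$-subpath is anchored to the end of the $L$-segment nearest the wrap-around seam and symmetrically for $v^r$, so that the two subpaths fuse across the seam $\tilde K_t \tilde K_1$ into a single arc. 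A bookkeeping step merges the finitely many consecutive bags that coincide after removing $w$ or after the $v^l \leftrightarrow v^r$ identifications, ensuring the final decomposition has pairwise distinct bags and at least four of them. The entire construction is a single linear pass over $\mathcal{P}$ and runs in $O(\sum_i |K_i|) = O(||\mho(G - SI(G))||) = O(||G||)$ time by Lem.~\ref{lem:construct-mho}.
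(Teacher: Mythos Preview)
Your overall plan—apply $\phi$ to the bags of $\mathcal{P}$, drop $w$, and close up into a cycle—is the same idea the paper uses, but your implementation has a genuine gap at the seam.

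Note that $h_0$ has no neighbours in $\overline\oo$ (its closed neighbourhood is exactly $\oo$), so $h_0\notin\occ$. Consequently your first bag $\tilde K_1=\phi(K_1\setminus\{w\})=\occ$ does \emph{not} contain $h_0$, and in particular it is not a maximal clique of $G-SI(G)$: one can always add $h_0$. Now look at the arc condition for $h_0$. The bags of $\mathcal{P}$ containing $h_0^l$ are exactly the $L$-only bags (those entirely inside $L$), and the bags containing $h_0^r$ are exactly the $R$-only bags; neither $h_0^l$ nor $h_0^r$ lies in $K_1$. After your wrap-around the cyclic order near the seam is
\[
\cdots,\tilde K_{t-1},\tilde K_t,\ \tilde K_1,\ \tilde K_2,\tilde K_3,\cdots
\]
with $h_0\in\tilde K_t$ and $h_0\in\tilde K_2$ but $h_0\notin\tilde K_1$, so the set of bags containing $h_0$ is already disconnected. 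Your ``merge consecutive bags that coincide'' step does not repair this, because $\tilde K_1$ is a \emph{proper} subset of $\tilde K_2$, not equal to it.

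Even if you strengthen the cleanup to delete bags contained in a neighbour, a second problem remains: the $L$-only bags and the $R$-only bags are each a full clique path of $G'[\oo]$, so after $\phi$ every maximal clique of $G'$ lying inside $\oo$ occurs \emph{twice} in your cycle. Both of these subpaths are anchored on their $\occ$-side (the $L$-only segment to $K_1=N[w]$, the $R$-only segment to the mixed $\overline\oo$–$R$ bags), so they run in the \emph{same} direction, not reversed. Hence across the seam you see $\ldots,A,B,C,\,\tilde K_1,\,A,B,C,\ldots$ rather than $\ldots,C,B,A,\,A,B,C,\ldots$, and the duplicates are not consecutive; your merging cannot eliminate them.

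The paper avoids both issues by discarding bags rather than merging: it throws away the $w$-bag \emph{and} every bag containing $h_0^r$ (i.e.\ all $R$-only bags), keeps only $K_1,\ldots,K_\ell$, applies $\phi$, and then closes $K_1$ to $K_\ell$. This directly yields each maximal clique of $G-SI(G)$ exactly once, and the arc property for $v\in\occ$ follows because the surviving $v^r$-bags are precisely the mixed $\overline\oo$–$R$ bags at the right end of the kept path, meeting the $v^l$-bags at the left end across the new edge. Your sketch would be fixed by adopting this discard step in place of the merging heuristic.
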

\begin{proof}
  For notational convenience, let $G'$ denote $G - SI(G)$.  Recall
  that $G$ is connected; since no inner vertex of a shortest path can
  be simplicial, $G'$ is connected as well.  Every vertex in $G'$ must
  be adjacent to $V(H)$, as otherwise Thm.~\ref{thm:1} should have
  been called during the construction of $\mho(G)$.

  Let $\cal P$ be the clique path decomposition for $\mho(G')$.  by
  the definition of clique decompositions, if $\mho(G')$ remains
  connected after the deletion of vertices in a bag $K$, then $K$ must
  be an end of $\cal P$.  Since $w$ is simplicial, its closed
  neighborhood $N[w]$, i.e., $\{w\}\cup \lcc$, is a maximal clique of
  $\mho(G')$.  We argue first that $\mho(G') - N[w]$ is connected by
  showing that every neighbor $x$ of $\{w\}\cup \lcc$ is connected to
  $h^l_0$ in $\mho(G') - N[w]$: note that if the deletion of a vertex
  set $X$ separates a connected graph, then there must be at least two
  neighbors of $X$ in different components of the resulting graph.
  Every vertex $x\in L\setminus \lcc$ is adjacent $h^l_0$, every
  vertex $x\in \overline\oo$ is adjacent to one of
  $h^l_1,h_2,\cdots,h^r_{-1}$, and every $x\in R$ is adjacent to
  $h^r_0$.  Since $h^l_0 h^l_1 h_2\cdots h^r_{-1} h^r_0$ remains a
  path in $\mho(G') - N[w]$, it must be connected.  Thus $N[w]$ is an
  end bag of $\cal P$; without loss of generality, let it be $K_0$.
  On the other hand, $\{v^r: v\in \occ\}$ is a minimal separator for
  $L\cup \overline \oo\cup\{w\}$ and $R\setminus \{v^r: v\in \occ\}$.
  Thus, a bag contains $h^r_0$ if and only if it appear to the right
  of this separator (we have agreed that $w$ is to the left of it).
  Let $\ell$ be the largest index such that $h^r_0\not\in K_\ell$.  We
  build the hole decomposition $\cal C$ as follows:
  \begin{inparaenum}[(\itshape 1\upshape)]
  \item take the sub-path ${\cal P}' := \{K_1\cdots K_\ell\}$ of $\cal
    P$,
  \item replace every vertex $x\in K_i$ with $1\le i\le \ell$ by
    \og{x}, and
  \item add an edge to connect bags $K_1$ and $K_\ell$.
  \end{inparaenum}

  The rest of the proof is to show that $\cal C$ is a clique
  decomposition for $G'$.  We verify first that the bags of $\cal C$
  are precisely the set of maximal cliques of $G'$, that is,
  \begin{inparaenum}[(1)]
  \item every maximal clique of $G'$ appears exactly once in
    $\cal C$, and
  \item every bag of $\cal C$ is a maximal clique of $G'$.
  \end{inparaenum}
  For any $1\le i\le \ell$, the bag $K_i$ does not contain $w$, and
  thus there exists some maximal clique $X$ of $\mho(G')$ such that
  $\og{X} = K_i$.  By Prop.~\ref{lem:isomorphism}, $K_i$ is a clique
  of $G'$.  Therefore, for condition (2), it suffices to show that
  every bag of $\cal C$ is maximal, and hence conditions (1) and (2)
  are equivalent to the following statement.
  \begin{claim}
    Let $K$ be a maximal clique of $G'$.  It holds that
  \begin{inparaenum}[(1)]
  \item $K$ appears exactly once in $\cal C$, and
  \item for any set $X$ of vertices of $\mho(G')$ with $\og{X} \subset
    K$, it is not a maximal clique of $\mho(G')$.
  \end{inparaenum}
  \end{claim}
  \begin{proof}
    We consider the intersection between $K$ and $\oo$, which has
    three cases.  In the first case, $K$ is disjoint from $\oo$.  Then
    $K$ induces a clique of $\mho(G')$ as well, and both conditions
    hold trivially.  In the second case, $K\subseteq \oo$. The
    maximality of $K$ implies $h_0\in K$.  Then $\{v^l: v\in K\}$ and
    $\{v^r: v\in K\}$ induce two disjoint cliques in $\mho(G')$, which
    are subsets of $L$ and $R$, respectively.  The bag in $L$ appears
    in $\cal C$, while the bag in $R$ contains $h^r_0$, and thus does
    not appear in $\cal C$.  Therefore, $K$ appears only once in $\cal
    C$, and (1) is satisfied.  For (2), if $X$ intersects both $L$ and
    $R$, then it does not induce a clique; otherwise it is a proper
    subset of $\{v^l: v\in K\}$ or $\{v^r: v\in K\}$, and thus induces
    a clique of $\mho(G')$ which is not maximal.

    In the remaining case we assume that $K$ intersect both $\oo$ and
    $\overline \oo$.  The set $K'$ of vertices of $\mho(G')$ derived
    from $K$ is $(K\setminus \oo) \cup \{v^l, v^r: v\in K\cap \oo\}$.  By
    construction of $\mho(G')$ (and noting Lem.~\ref{lem:one-side}),
    all edges between $K\cap \oo$ and $K\setminus \oo$ must be in either
    \ecc\ or \ec.  Consider first that these edges are in \ecc, then
    $K\cap \oo\subseteq \occ$ and $K\setminus \oo$ is nonadjacent to $L$
    in $\mho(G')$.  The only maximal clique of $\mho(G')$ contained in
    $K'$ is $X = (K\setminus \oo) \cup \{v^r: v\in K\cap \oo\}$, and
    $\og{X} = K$.  Thus, both (1) and (2) are satisfied.  A symmetric
    argument applies when all edges between $K\cap \oo$ and $K\setminus
    \oo$ are in \ec.  
    \renewcommand{\qedsymbol}{$\lrcorner$}
  \end{proof}

  It remains to verify that the hole $\cal C$ satisfies the other
  condition of the definition of clique decomposition, i.e., for every
  $v\in G'$, the bags containing $v$ induces a sub-path of $\cal C$.
  Recall that every vertex of $\mho(G')$ appears in consecutive bags
  of $\cal P$.  For $v\in \overline \oo$, all bags containing $v$
  remain in $\cal P'$, and are consecutive.  For $v\not\in
  \oo\setminus\occ$, a bag of $\cal C$ contains $v$ if and only if the
  corresponding bag in $\cal P$ contains $v^l$ and $h^l_0$, which are
  consecutive.  Assume now that $v\in \occ$.  On the one hand,
  $\occ\subset K_1$; hence, $v\in K_1$, and the bags of $\cal P'$
  obtained from bags of $\cal P$ containing $v^l$ appear consecutively
  in the left end of $\cal P'$.  On the other hand, if $\cal P'$
  contains bags of $\cal P$ containing $v^r$, then $v\in K_\ell$, and
  they appear in the right end of $\cal P'$.  After the edge is added
  between $K_1$ and $K_\ell$, these bags are connected into a sub-path
  in $\cal C$.  Observing $|H| \ge 6$, the decomposition $\cal C$ is
  clearly a hole, which concludes the proof.
\end{proof}

For any pair of adjacent vertices of $G - SI(G)$, bags containing them
are consecutive in its hole decomposition $\cal C$, and by
Lem.~\ref{lem:one-side}, they induce a proper sub-path of $\cal C$.
Note that bags containing $h_0$ and $h_1$ are not subsets of each
other.  We have thus a direction for $\cal C$: bags containing $h_1$
is \emph{clockwise} from those containing $h_0$.  We can number bags
in $\cal C$ such that $K_{i-1}$ (resp., $K_{i+1}$) is
{counterclockwise} (resp., {clockwise}) from $K_i$.  We use \lint{v}
(resp., \rint{v}) to denote the index of the smallest (resp., largest)
index of bags that contain $v$.

We now consider the shortest holes in $G$,  for which we work on the
hole decomposition $\cal C$ constructed in
Lem.~\ref{thm:build-hole-decomposition} (noting that $G$ and $G -
SI(G)$ have the same set of holes).  Since the deletion of vertices in
any bag from $G$ leaves an interval subgraph, a hole pass through all
bags of $\cal C$.  On the other hand, if there is a set of two or
three pairwise adjacent vertices that intersects every bag in $\cal
C$, then Lem.~\ref{lem:one-side} must have been applied in the
construction of $\mho(G)$.  Therefore, an inclusion-wise minimal set
of vertices intersects every bag in $\cal C$ if and only if it defines
a hole.  This observation allows us to decide the length of the
shortest holes.  We have calculated \lint{v} and \rint{v} for all
vertices $v\in V(G)\setminus SI(G)$.  For each bag $K$ in the main
cycle, we can also in linear time find the vertices $v_1$ and $v_2$ in
$K$ that achieves the minimum value for \lint{v_1} and the maximum
value for \rint{v_2}, respectively.
\begin{lemma}\label{lem:find-shortest-hole}
  We can in $O(||G||)$ time find a shortest hole of $G$.
\end{lemma}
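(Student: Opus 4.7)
The plan is to reduce the task to a minimum circular arc cover problem on the hole decomposition $\mathcal{C}$ built in Lem.~\ref{lem:build-hole-decomposition}. By the discussion immediately preceding the lemma, a minimum-cardinality set of vertices hitting every bag of $\mathcal{C}$ is automatically inclusion-minimal and therefore coincides with the vertex set of a shortest hole of $G$. Since every vertex $v\in V(G)\setminus SI(G)$ occupies the consecutive block of bags $K_{\lint{v}},\dots,K_{\rint{v}}$, each such vertex corresponds to a circular arc on $\mathcal{C}$, and the task becomes covering $\mathcal{C}$ by a minimum number of these arcs.

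First, I would annotate each bag $K_i$ with the pair $(\mu(i),v_i^\star)$, where $v_i^\star\in K_i$ maximises $\rint{v_i^\star}$ among all vertices of $K_i$ and $\mu(i):=\rint{v_i^\star}$ (reading bag indices cyclically, starting from $i$). As already noted before Lem.~\ref{lem:find-shortest-hole}, the values $\lint{v},\rint{v}$ are available, and since $\sum_{i}|K_i|=O(\|G\|)$, a single sweep through the bags of $\mathcal{C}$ produces all pairs $(\mu(i),v_i^\star)$ in $O(\|G\|)$ time.

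Next I would apply the standard greedy ``farthest-reach'' procedure for minimum circular arc cover: from a starting bag $K_{i_0}$, repeatedly set $i\leftarrow \mu(i)+1\pmod{\ell}$, collecting the witnesses $v_i^\star$, until the traversed arcs have covered the whole cycle. Applying this naïvely from every bag would cost $\Theta(\ell^2)$; to stay within $O(\|G\|)$ I would use the usual circular-to-linear reduction, namely fix one arc that may be assumed to lie in some optimal cover (for instance, among all arcs covering the prescribed bag $K_1$, take the one whose $\rint{}$ is maximum). Committing to this arc breaks the cycle and leaves a single linear interval cover instance on $O(\ell)$ arcs, which the same farthest-reach greedy solves in $O(\ell)$ time. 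An exchange argument shows that the resulting cover has minimum cardinality globally. Returning the ordered sequence of witnesses $v_i^\star$ then yields a shortest hole of $G$, because the cover is inclusion-minimal and the characterisation of inclusion-minimal covers as holes forces consecutive witnesses to be adjacent in $G$ and non-consecutive ones to be non-adjacent.

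The only real obstacle will be the circular aspect: a purely linear greedy works verbatim once the cycle is cut, so the task reduces to justifying which single arc may be fixed in advance. This is where the exchange argument lives, and it is entirely standard for circular arc covers. All other steps are bookkeeping on the already-computed $\lint{}$ and $\rint{}$ values and cost $O(\|G\|)$ in aggregate.
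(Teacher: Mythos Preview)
Your reduction to a circular arc cover of $\mathcal C$ is sound, and the farthest-reach greedy is the right tool. The flaw is in the circular-to-linear step. You commit to the single arc through $K_1$ whose $\rint{}$ is largest, break the cycle there, and run one linear greedy pass, asserting that ``an exchange argument'' makes this choice safe. No such exchange exists: the arc $A$ maximising $\rint{}$ over $K_1$ need not belong to any minimum cover, because some other arc through $K_1$ may extend much further \emph{counterclockwise}, and replacing it by $A$ can leave a gap on that side that costs an extra arc to close. Concretely, on a $12$-bag cycle with the $4$-hole $O_1=[8,11]$, $O_2=[11,2]$, $O_3=[2,5]$, $O_4=[5,8]$ and the extra arc $A=[1,3]$ (plus length-$2$ fillers so that every bag is a genuine maximal clique), your rule picks $A$ over $O_2$ at bag~$1$ since $\rint A=3>2=\rint{O_2}$; but one checks directly that no four arcs including $A$ cover all twelve bags, so your linear pass returns at least five arcs while the shortest hole has length four. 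The ``usual'' circular reduction is not to fix a single arc---it is to try \emph{every} arc through the chosen bag, which naively costs $|K_1|$ separate linear passes.

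The paper takes exactly that all-starts route, but amortises so that the $|K_0|$ greedy walks together cost $O(\|G\|)$. All vertices of $K_0$ are kept as candidate starts in a list ordered by $\lint u$ and advanced in lockstep with a shared frontier index \textsf{reached}; a candidate is discarded the moment its current vertex fails to push \textsf{reached} strictly forward, because the previous surviving candidate---which started no further clockwise and has already reached at least as far---can only yield a hole that is no longer. Since \textsf{reached} only advances, every bag is charged $O(1)$ work across all walks. Your plan becomes correct once you replace the single committed arc by this parallel sweep over $K_0$ (or by any genuine linear-time minimum circular-cover routine); the shortcut of fixing one arc is the piece that does not hold.
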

\begin{figure}[h!]
\setbox4=\vbox{\hsize28pc \noindent\strut
\begin{quote}
  \vspace*{-5mm} \small

  0 \hspace*{2ex} build a list $U$ of $|K_0|$ slots, each of which
  contains ($u,u$) for a distinct vertex $u\in K_0$;
  \\
  1 \hspace*{2ex} order $U$ such that \lint{u} is nondecreasing; {\sf
    reached}$= 0$;
  \\
  2 \hspace*{2ex} {\bf for each} $(u,v)\in U$ {\bf do}
  \\
  2.1 \hspace*{3ex} {\bf if} $v\sim u$ and $v$ is to the left of $u$
  {\bf then return} $u$; \comment{a shortest hole found.}
  \\
  2.2 \hspace*{3ex} {\bf if} ${v}\not\in K_{\sf reached}$ or $\rint{v}
  = {\sf reached}$ {\bf then} remove ($u,v$) from $U$;
  \\
  2.3 \hspace*{3ex} {\bf else} ${\sf reached} = \rint{v}$; $v=$ the
  vertex of $K_{\sf reached}$ that reaches the rightmost \rint{v};
  \\
  3 \hspace*{2ex} {\bf goto} 2.

\end{quote} \vspace*{-6mm} \strut} $$\boxit{\box4}$$
\vspace*{-9mm}
\caption{Finding a shortest hole in an olive-ring decomposition.}
\label{fig:alg-shortest-hole}
\end{figure}
\begin{proof}
  Starting from any vertex $v$, we can find a hole as follows.  Let
  $v_0 = v$; for each $i=0,\ldots$, we take the vertex in
  $K_{\rint{v_i}}$ that reaches the rightest bag as the next vertex
  $v_{i+1}$; the process stops when the first vertex $v_j$ is adjacent
  to $v_0$ again.  Note that $v_j$ might also be adjacent to $v_1$,
  but by the selection process, never adjacent to $v_2$.  We return
  the hole $v_0\cdots v_j v_0$ or $v_1\cdots v_j v_1$.  It is easy to
  verify that any hole through $v$ cannot be shorter that the this
  hole (note that $v$ may belong to no holes at all).  The removal of
  all vertices in any bag from the graph will make it chordal, which
  means that a hole has to intersect every bag in $\cal C$.
  Therefore, it suffices to find a vertex in bag $K_0$ that is in some
  shortest hole of $G$.  A trivial implementation will take
  super-linear time, and thus the main focus will be an efficient
  implementation.  For each vertex $v\in K_0$, we do the same search,
  but we terminate it as soon as we are sure that there are shortest
  holes avoiding $v$.  The procedure is described in
  Fig.~\ref{fig:alg-shortest-hole}.

  Each iteration of step 2 finds the next vertex for the shortest hole
  starting from a vertex $u\in K_0$.  For the sake of simplicity, only
  the current vertex $v$ of this search is recorded with the starting
  vertex $u$: let ($u,v$) be a pair in $U$ at the end of the $i$th
  step, then $v$ is the $i$th vertex of the shortest hole through $u$
  found by previous subroutine.  Denote by $P[u]$ the set of vertices
  that have been associated with $u$ during the execution of this
  procedure.  Step 2.3 replaces $v_i$ by the next vertex $v_{i+1}$ for
  the starting vertex $v_0 = u$.  Note that this new vertex, if in
  $K_0$, will not be adjacent to $v_1$: it is in $U$ and after $v$, and
  thus in the second run of step 2, $v$ should have been removed from
  $U$.  Therefore, if the condition of step 2.1 is satisfied, then
  $P[u]$ will induce a shortest hole $C$ via $u$.  Clearly, for any
  $u$ that remains in $U$, any shortest hole that passes $u$ is no
  shorter than $C$.  Thus, it suffices to show that this also holds
  true for vertices $u'$ that have been removed from $U$ in step~2.2.
  Suppose there is a shortest hole $C'$ through $u'$.  Then there is
  such a hole through all vertices in $P[u']$.  Since the vertex of
  $u$ preceding $u'$ satisfies $\lint{u}\le\lint{u'}$, replacing
  $P[u']$ by the first $|P[u']|$ vertices of $P[u]$ in $C'$, we must
  obtain a cycle such that the arcs for its vertices cover the whole
  circle.  Thus we obtain a hole through $u$ that is no longer than
  $C'$, i.e., strictly shorter than $C$.  This contradiction justifies
  the correctness of the procedure.

  Using a circular linked list for storing $U$, the procedure can be
  implemented in $O(||G||)$ time: each bag is checked at most once.
\end{proof}

If the hole found by Lem.~\ref{lem:find-shortest-hole} is short, i.e.,
having length four or five, then we return it.  Otherwise we add back
$SI(G)$ to build the olive-ring decomposition for $G$ as follows.  As
promised, here we need to take into consideration the case when $G$ is
chordal.  Recall that $G$ is prime and thus connected.  We find
$SI(G)$ and check $G - SI(G)$.  If it is not an interval graph, we can
either find and return a small caw, or obtain a large caw and call
Thm.~\ref{thm:1}.  If $G - SI(G)$ is a clique, then we have a clique
tree decomposition for $G$ that is a star, which is trivially a
caterpillar.  Otherwise, the clique path decomposition for $G - SI(G)$
has at least two bags, and we can extend it by appending two bags to
its ends respectively.  This gives a path of at least $4$ bags, which
can be treated as a hole decomposition by adding an edge connecting
the end bags of the path.  This allows us to handle the chordal case
and non-chordal case, i.e., the construction of caterpillar
decomposition and olive-ring decomposition, in a unified way.

By definition, the closed neighborhood of each simplicial vertex
defines a maximal clique, which we either insert into the main
cycle/path (replacing an existing bag or as a new bag), or attach to
the main cycle/path as a pendant bag.  We would like to accommodate
the bags in the main cycle/path as long as it is possible, or
equivalently, we prefer to minimize the number of pendant bags.
However, no caw can have a clique hole decomposition, and thus at
least one vertex of it has to be left out of the main cycle; it is of
course one of the terminals.

\begin{lemma}\label{lem:olive-ring-decomposition}
  Given a hole decomposition for $G-SI(G)$, we can in $O(||G||)$ time
  construct a clique decomposition for $G$ that is an olive ring.  A
  vertex of $G$ does not appear in any bag of the main cycle if and
  only if it is a terminal of a caw of which all other vertices are in
  the main cycle.
\end{lemma}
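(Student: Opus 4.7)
The plan is to build the olive-ring decomposition ${\cal K}^\star$ of $G$ by processing each simplicial vertex $v \in SI(G)$ one at a time and inserting the new maximal clique $N[v]$ of $G$ into the input hole decomposition ${\cal C}$ of $G - SI(G)$. Since simplicial vertices of a prime graph are pairwise nonadjacent, $N(v) \subseteq V(G) \setminus SI(G)$ is a clique of $G - SI(G)$. After precomputing $\lint{x}$ and $\rint{x}$ for every $x \in V(G) \setminus SI(G)$ from ${\cal C}$, I locate $N(v)$ by taking $\max_{x \in N(v)} \lint{x}$ and $\min_{x \in N(v)} \rint{x}$, which costs $O(d(v))$ per $v$ and $O(\|G\|)$ in total, and identifies a cycle bag $K$ containing $N(v)$.

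Three placement cases are then handled. If $N(v) = K$, the bag $K$ is no longer maximal in $G$ since $N[v] \supsetneq K$, so I replace $K$ by $N[v]$ in the cycle. If $N(v)$ sits inside the intersection $K_i \cap K_{i+1}$ of two consecutive cycle bags and $N[v]$ can be spliced between them without breaking the sub-arc property of any other vertex, I insert $N[v]$ as a new cycle bag. Otherwise $N(v) \subsetneq K$ for a single cycle bag $K$, and I hang $N[v]$ as a pendant at $K$; when several simplicial vertices share the same $N(v)$, only the first takes a cycle slot and the rest become pendants at the same place. A direct check shows that every non-simplicial vertex retains its sub-arc occurrence and that each newly inserted $v$ lives in only its own new bag, so ${\cal K}^\star$ is an olive ring (a caterpillar when ${\cal C}$ was a path).

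For the iff statement, the forward direction is constructive: if a vertex $v$ avoids every cycle bag it must sit in some pendant $N[v]$ attached to a cycle bag $K$ with $N(v) \subsetneq K$, so picking $u \in K \setminus N(v)$ and walking around the main cycle from one end of the arc of $N(v)$ to the other through a bag avoiding $N[v]$ yields a chordless path that, together with $u$ and $v$, forms a $\dag$ or $\ddag$ with shallow terminal $v$ and all other vertices on the cycle. The main obstacle is the converse. Let $W = (s\!:\! c_1, c_2\!:\! l, B, r)$ be a large caw with $V(W) \setminus \{v\}$ contained in cycle bags, where $v$ is a terminal. When $v = s$, the base $B \cup \{l, r\}$ traces a long contiguous run of cycle bags and $c_1, c_2$ lie in bags hosting $b_1, b_d$; since $s$ is adjacent within $W$ only to $\{c_1, c_2\}$, no cycle bag can contain $N(s)$ together with any base vertex, so $N[s]$ must be hung as a pendant and $s$ stays off the cycle. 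When $v$ is a base terminal $l$ or $r$, Theorem~\ref{thm:1} combined with the assumption that no forbidden subgraph in $\cF_{LI}$ has escaped detection forces $v$ to be simplicial in $G$, and a similar arc-covering argument shows that $N(v)$ is strictly smaller than any cycle bag covering it. The fiddly part of the converse is checking the boundary conditions at the ends of the arc that hosts the base, which I would dispatch using the concrete shapes of $\dag$ and $\ddag$ from Figure~\ref{fig:at}.
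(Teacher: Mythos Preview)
Your construction is essentially the paper's: compute $p=\max_{x\in N(s)}\lint{x}$ and $q=\min_{x\in N(s)}\rint{x}$, then either absorb $N[s]$ into an existing bag, splice it between two consecutive bags when $N(s)=K_\ell\cap K_{\ell+1}$ for some $\ell\in[p,q-1]$, or hang it as a pendant. One caution: your splice condition ``$N(v)\subseteq K_i\cap K_{i+1}$'' is too loose; strict containment would break consecutiveness for the vertices of $(K_i\cap K_{i+1})\setminus N(v)$, so you really need equality, which is what your vague caveat ``without breaking the sub-arc property'' is doing. Also, in a prime graph two distinct simplicial vertices never share the same open neighborhood, so your special-case for repeated $N(v)$ is moot.

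You have the two directions of the biconditional backwards in difficulty, and your argument for the direction you call the ``main obstacle'' does not work. Suppose $v$ is a terminal of a caw $W$ and every other vertex of $W$ lies in some bag of the main cycle. If $v$ were also in a main-cycle bag, then all of $W$ would live in the subgraph induced by main-cycle vertices; but that subgraph has a clique hole decomposition, hence is a normal Helly circular-arc graph, and no caw is a normal Helly circular-arc graph. That is the whole proof of this direction. Your case analysis instead tries to show $v$ is simplicial and then argue about where $N[v]$ could sit; for $v=s$ you conflate $N_W(s)$ with $N_G(s)$, and for $v\in\{l,r\}$ you invoke Theorem~\ref{thm:1}, which concerns the \emph{shallow} terminal only and says nothing about base terminals being simplicial in $G$ (indeed they need not be).

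For the other direction (pendant $\Rightarrow$ terminal of a caw), your sketch is close but a bit too quick. Picking a single $u\in K\setminus N(v)$ is not enough; you need two witnesses $x\in K_{p-1}\setminus K_p$ and $y\in K_{q+1}\setminus K_q$ together with an $x$--$y$ path through the nonempty sets $(K_\ell\cap K_{\ell+1})\setminus N(s)$, so that $\{s,x,y\}$ is an asteroidal triple. The resulting union of three paths need not itself be a caw; one then extracts a minimal caw from it, and since every vertex other than $s$ came from the main cycle, $s$ must be one of its terminals (not necessarily the shallow one).
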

\begin{proof}
  Let $\cal C$ be the hole decomposition for $G-SI(G)$.  We traverse
  all bags in $\cal C$, and record \lint{v} and \rint{v} for each
  $v\in V(G)\setminus SI(G)$.  We also record $|K_i\cap K_{i+1}|$ for
  every $i=0,\dots, |{\cal C}| - 1$.  This can be done as follows: for
  each $v\in V(G)\setminus SI(G)$, we add one to $|K_i\cap K_{i+1}|$
  (modulo $|\cal C|$ when $i<0$) with $i\in[\lint{v}, \rint{v}]$.  For
  every simplicial vertex $s\in SI(G)$, we calculate $p = \max_{v\in
    N(s)}\lint{v}$ and $q = \min_{v\in N(s)}\rint{v}$.  We proceed as follows.
  \begin{enumerate}[(\itshape 1\upshape)]
  \item If $p = q$ and $K_p = N(s)$, then we add $s$ into $K_p$.
  \item If there is an $\ell$ such that $\ell\in [p, q-1]$ and
    $|K_\ell\cap K_{\ell+1}| = |N(s)|$, then we insert $N[s]$ as a new
    bag between $K_\ell$ and $K_{\ell+1}$.  By the selection of $p,q$
    and the fact that $\cal C$ is a clique decomposition, it can be
    inferred that $N(s)\subseteq K_\ell\cap K_{\ell+1}$; they have the
    same cardinality if and only if they are equivalent.
  \item Otherwise we add $N[s]$ as a bag pendant to $K_p$.
  \end{enumerate}
  It is easy to verify that the obtained decomposition is an
  olive-ring decomposition.

  The insertion of a new bag into the hole (the second case) will
  change the indices of bags in the main cycle; if we always update
  $\lint{v}$ and $\rint{v}$ for every $v$ accordingly, it cannot be
  done in linear time.  Thus, for implementation, we may instead mark
  the position for each $N[s]$ and add them after all the positions
  for them have been decided.  This is justified because the insertion
  of a new bag in between does not change the $K_i\cap K_{i+1}$.  The
  first stage, computation of \lint{v} and \rint{v} for all vertices,
  can be done in one run, which takes $O(||G||)$ time.  With these
  indices, we can in $O(||G||)$ time calculate all the sizes $|K_i\cap
  K_{i+1}|$.  The detection of the position for a simplicial vertex
  $s$ takes $O(d(v))$ time, and hence $O(||G||)$ time in total.  In
  summary, the algorithm takes $O(||G||)$ time.

  The ``if'' direction of the second assertion is trivial, and we now
  verify the ``only if'' direction, i.e., for a simplicial vertex $s$
  such that we make $N[s]$ a pendant bag, there exists a caw where $s$
  is the only vertex not in the main cycle.  By assumptions, there are
  vertices $x\in K_{p-1}\setminus K_p$ and $y\in K_{q+1}\setminus
  K_q$.  Since for each $\ell\in[p, q-1]$, the set $(K_\ell\cap
  K_{\ell+1}) \setminus N(s)$ is nonempty, There is an \stpath{x}{y}
  with all inner vertices in them; this path is in the main cycle and
  avoids the neighbors of $s$.  We can thus find an \stpath{s}{x} via
  an $x'\in N(s)$ with $\rint{x'} = p$, and an \stpath{s}{y} via a
  $y'\in N(s)$ with $\lint{y'} = q$.  These three paths witness
  $\{s,x,y\}$ as an at.  The union of vertices in these paths is not
  necessarily a caw, but it must contain some caw.  Since all other
  vertices are from the main cycle, this caw necessarily contains $s$;
  moreover, since $s$ is simplicial in $G$, it is a terminal of this
  caw.  This concludes the proof.
\end{proof}

In the case when $G$ is chordal, the main cycle always has a special
edge that connects two disjoint bags (they may or may not be the
original end bags of the clique path decomposition for $G - SI(G)$).
Detaching them gives a caterpillar decomposition for $G$.
\begin{corollary}
  Given a clique path decomposition for $G-SI(G)$, we can in
  $O(||G||)$ time construct a clique decomposition $\cal K$ for $G$
  that is caterpillar.
\end{corollary}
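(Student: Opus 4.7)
The plan is to reduce to Lem.~\ref{lem:olive-ring-decomposition} by artificially turning the clique path decomposition $\cal P$ for $G - SI(G)$ into a hole decomposition. If $G - SI(G)$ is a clique, then a star consisting of this single bag with a pendant bag $N[s]$ for each $s \in SI(G)$ is a valid clique decomposition and a trivial caterpillar; we dispose of this boundary case first. Otherwise $\cal P$ has at least two bags; append two dummy bags (each a copy of an adjacent end bag) to the two ends so that the resulting path $\cal P'$ has at least four bags, and then add an edge between its two endpoint bags to obtain a cyclic structure $\cal C$. Because intersections of consecutive bags are unchanged and every vertex still occupies a connected arc, $\cal C$ satisfies the definition of a hole decomposition for $G - SI(G)$.

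Next, feed $\cal C$ into the procedure of Lem.~\ref{lem:olive-ring-decomposition}, which in $O(||G||)$ time produces an olive-ring decomposition $\cal K'$ for $G$: each simplicial vertex is either absorbed into a main-cycle bag, inserted as a new main-cycle bag between two adjacent ones whose intersection equals $N(s)$, or attached as a pendant bag. The resulting main cycle of $\cal K'$ carries every bag obtained from $\cal P'$ plus any newly inserted ones.

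The crux is now to show that some edge of the main cycle of $\cal K'$ can be detached without losing any vertex's connectivity, producing a caterpillar. I will argue that since $G$ is chordal, there must exist an edge on the main cycle joining two bags with empty intersection. Otherwise, every pair of consecutive bags on the cycle would share at least one vertex, and by the same selection argument used in the paragraph preceding Lem.~\ref{lem:find-shortest-hole}---that a minimal set of vertices intersecting every bag of a cyclic clique decomposition induces a hole---we would obtain a hole in $G$, contradicting chordality. Detecting such an edge can be done in linear time by scanning $|K_i \cap K_{i+1}|$ along the main cycle (these values are already computed when running Lem.~\ref{lem:olive-ring-decomposition}).

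Finally, remove that empty-intersection edge to break the cycle. Since no vertex $v$ appears in both endpoint bags of the removed edge, the condition ``bags containing $v$ form a connected subgraph'' is preserved. The result is a tree whose main spine (the remnant of the cycle) has all pendant bags attached to it, i.e., a caterpillar decomposition $\cal K$ for $G$. The only subtle point---confirming the existence of an empty-intersection edge---is exactly what chordality gives us, so the argument goes through cleanly in $O(||G||)$ total time.
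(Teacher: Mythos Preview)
Your plan mirrors the paper's exactly: pad the clique path of $G-SI(G)$ to an artificial hole, run Lem.~\ref{lem:olive-ring-decomposition}, then open the main cycle at an edge whose two bags are disjoint. The slip is in your choice of dummy bags. Taking each dummy to be a \emph{copy} of the adjacent end bag does not yield a hole decomposition (the same maximal clique now appears twice), and---more to the point---it can leave \emph{no} disjoint-bag edge on the main cycle. Take $G$ to be the prime chordal $5$-path on $a,b,c,d,e$: the clique path of $G-SI(G)$ has bags $\{b,c\}$ and $\{c,d\}$, your two copies produce a $4$-cycle in which $c$ lies in every bag, so every consecutive intersection contains $c$. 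Running Lem.~\ref{lem:olive-ring-decomposition} only hangs $N[a]$ and $N[e]$ as pendants and leaves the main cycle untouched; your detaching step then has nothing to detach.

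The justification you invoke---the paragraph preceding Lem.~\ref{lem:find-shortest-hole}---does not transfer. That passage explicitly uses Lem.~\ref{lem:one-side} (a by-product of the $\mho$-construction in the non-chordal case) to exclude covers of the cycle by one, two, or three pairwise adjacent vertices; no such guarantee is available for an artificially closed path, as the single vertex $c$ above demonstrates. The simple fix is to append two \emph{empty} dummy bags instead: then the closing edge trivially joins disjoint bags, Lem.~\ref{lem:olive-ring-decomposition} never inserts any $N[s]$ on that edge (since $|N(s)|\ge 1$ in a connected graph), and after detaching and discarding the empty dummies you obtain the caterpillar directly, with no contradiction argument needed.
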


We use $SP(G)$ to denote those vertices contained only in pendent
bags.  Clearly $SP(G)\subseteq SI(G)$, i.e., every vertex is $SP(G)$
is simplicial.

\subsection{Normal Helly circular-arc graphs}
\label{sec:nhcag}
We have now finished constructing the decomposition for prime graphs.
Before closing the first phase, let us have another look at the
decomposition in retrospect.  In the study of interval graphs, we have
used both clique path decompositions and interval models; they are
essentially equivalent and can be transformed to each other
efficiently (though strictly speaking, interval models are more space
efficiently because they need $\Theta(|G|)$ space only).  For example,
a clique path decomposition can be readily read from an interval
model, when it suffices to check the endpoints.  The other
direction is given by the following proposition.  Note that a
simplicial vertex $v$ appears in a single bag, i.e., $\lint{v} =
\rint{v}$, and we use $\pm 1/3$ to force
$\lp{v}\ne\rp{v}$.\footnote{Any positive constant strictly less than
  $1/2$ will serve this purpose.  Without using this adjustment (and
  thus allowing degenerated intervals that are points), we can obtain
  the \emph{standard} interval model
  \cite{korte-89-recognizing-interval-graphs}, which is known to have
  the minimum number of different endpoints.  The other extreme is a
  {normalized} interval mode we have already met, which has the
  maximum number, $2 |G|$, of distinct endpoints.  Clearly, the model
  given by Prop.~\ref{lem:clique-decomposition-interval-model} is not
  normalized in general.}
\begin{proposition}\label{lem:clique-decomposition-interval-model}
  Given a clique path decomposition $\cal K$ for a graph $G$, an
  interval model for $G$ can be obtained by setting $I_v := [{\lint{v}}
  - 1/3, {\rint{v}}+1/3]$.
\end{proposition}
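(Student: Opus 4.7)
The plan is to verify the two directions of the interval-graph criterion, namely that $uv \in E(G)$ iff $I_u \cap I_v \neq \emptyset$, and both directions follow almost directly from the two defining properties of a clique path decomposition.

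First I would handle the easy direction. If $uv \in E(G)$, then the edge $uv$ extends to a maximal clique of $G$, which by definition is a bag $K_i$ of $\cal K$. Since $u,v \in K_i$ we have $\lint{u} \le i \le \rint{u}$ and $\lint{v} \le i \le \rint{v}$, so the integer $i$ belongs to both $I_u$ and $I_v$, which therefore intersect.

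For the converse, the key observation is that because $\cal K$ is a path, the fact that the bags containing a vertex $v$ induce a connected subgraph of $\cal K$ means that $v$ appears in precisely the consecutive bags $K_{\lint{v}}, K_{\lint{v}+1}, \ldots, K_{\rint{v}}$. Suppose $I_u \cap I_v \neq \emptyset$ and, without loss of generality, $\lint{u} \le \lint{v}$. Intersection requires $\lint{v} - \tfrac{1}{3} \le \rint{u} + \tfrac{1}{3}$, i.e.\ $\lint{v} \le \rint{u} + \tfrac{2}{3}$, and since both sides are integers this forces $\lint{v} \le \rint{u}$. Consequently the bag $K_{\lint{v}}$ lies in the range $[\lint{u},\rint{u}]$ as well, so it contains both $u$ and $v$; as $K_{\lint{v}}$ is a clique we conclude $uv \in E(G)$.

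There is essentially no hard step here; the only subtle point is the role of the $\pm \tfrac{1}{3}$ offset. Without it, the intervals of two vertices $u,v$ whose bag ranges just touch (i.e.\ $\rint{u} = \lint{v} - 1$) would share a single endpoint even though no bag contains both, which would give a spurious edge; the offset separates such endpoints by $\tfrac{2}{3} < 1$ and also prevents the degenerate case $\lint{v} = \rint{v}$ from producing a point interval for simplicial vertices. Both directions together show that $I_v$ is a valid interval model for $G$.
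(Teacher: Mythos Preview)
Your two-direction verification is correct and is exactly the standard argument; the paper in fact states this proposition without proof, treating it as immediate from the definition of a clique path decomposition.

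One small inaccuracy in your closing remark, though it does not affect the proof itself: the $\pm\tfrac{1}{3}$ offset is \emph{not} needed to prevent spurious edges. Without any offset, if $\rint{u}=\lint{v}-1$ then $[\lint{u},\rint{u}]$ and $[\lint{v},\rint{v}]$ are already disjoint (the gap is $1$), so no false intersection arises; your integrality argument for the converse direction works equally well with offset $0$. The sole purpose of the offset, as the paper notes just before the proposition and in the accompanying footnote, is the second reason you give: to force $\lp{v}\ne\rp{v}$ so that simplicial vertices (which lie in a single bag, $\lint{v}=\rint{v}$) receive genuine intervals rather than degenerate points. Any constant strictly less than $\tfrac{1}{2}$ works for this.
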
 
One may have noticed that the proofs in
Section~\ref{sec:module-preservation} exclusively use interval models,
while the constructions of this section use merely clique
decomposition.  The reason is that both representations have their
special advantages not shared by the other.  On the one hand, the
distances between immediate neighboring endpoints in a normalized
interval model give us leeway to manipulate.  On the other hand,
clique path decompositions are more convenient to accommodate the leaf
bags, i.e., compatible with caterpillar decompositions.

The hole decompositions suggest yet another way to generalize interval
graphs, i.e., to use a circle and arcs in place of the real line and
intervals respectively, and then a pair of vertices is adjacent if and
only if their corresponding arcs intersect.  Such a model is a
\emph{circular-arc model}, and a graph having a {circular-arc model}
is a \emph{circular-arc graph}.  In a circular-arc model, each vertex
$v$ corresponds to a closed arc $A_v = [\lp{v}, \rp{v}]$, where
$\lp{v}$ and $\rp{v}$, the counterclockwise and clockwise endpoints of
$A_v$ respectively, are assumed to be nonnegative and distinct.  We
point out that possibly $\lp{v}>\rp{v}$; such an arc $A_v$ necessarily
passes through the point $0$.  Similar as
Prop.~\ref{lem:clique-decomposition-interval-model}, we can retrieve a
circular-arc model from a hole decomposition as follows.  (Recall that
$\lint{v}>\rint{v}$ if and only if both $K_0$ and $K_{|{\cal C}| - 1}$
contain $v$, and ${\lint{v}} = {\rint{v}}$ if and only if $v$ is a
simplicial.)
\begin{proposition}\label{lem:clique-decomposition-arc-model}
  Given a hole decomposition $\cal K$ for a graph $G$, a circular-arc
  model with circle length $|\cal K|$ for $G$ is given by
  \[
  A_v := \begin{cases}
    [{\lint{v}} - 1/3, {\rint{v}}+1/3] & \text{if } {\lint{v}}> 0,
    \\
    [|{\cal K}| - 1/3, {\rint{v}}+1/3] & \text{if }
    {\lint{v}}= 0.
  \end{cases}
  \]
\end{proposition}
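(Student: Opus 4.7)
The plan is to mirror the proof of Prop.~\ref{lem:clique-decomposition-interval-model} and reduce the claim to the following assertion: for all $u,v\in V(G)$, the arcs $A_u$ and $A_v$ intersect on the circle of length $|{\cal K}|$ if and only if there is some bag $K_\ell$ containing both $u$ and $v$. Once this is established, adjacency follows from the definition of clique decomposition: $uv\in E(G)$ iff $\{u,v\}$ is contained in some maximal clique, i.e., in some bag $K_\ell$, since every bag of $\cal K$ is a clique.

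First I would check that the formula produces a valid model, i.e., $\lp{v}\ne \rp{v}$ and both endpoints lie in $[0,|{\cal K}|]$. This is immediate: the $\pm 1/3$ shifts avoid coincidence of the two endpoints even when $\lint{v}=\rint{v}$, and the special case $\lint{v}=0$ replaces the would-be negative value $-1/3$ by $|{\cal K}|-1/3$, which is identified with it on the circle. Next I would verify the key geometric fact that the integer points on the arc $A_v$ are exactly the indices of the bags containing $v$. There are three cases to check, matching the shape of the sub-path of $\cal K$ occupied by $v$: (a) $0<\lint{v}\le \rint{v}$, when $A_v$ is the ordinary arc $[\lint{v}-1/3,\rint{v}+1/3]$; (b) $0<\rint{v}<\lint{v}$, when $A_v$ wraps through $0$ and covers integers $\lint{v},\dots,|{\cal K}|-1,0,\dots,\rint{v}$; (c) $\lint{v}=0$, where the formula places $\lp{v}$ just counterclockwise of $0$ (at $|{\cal K}|-1/3$) so that $A_v$ covers $0,1,\ldots,\rint{v}$ but not $|{\cal K}|-1$, as required (since $\lint{v}=0$ excludes membership in the wrap-around subpath).

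The crucial step is to show that two arcs intersect iff their underlying integer point sets share an integer (cyclically). The ``if'' direction is immediate: if $u,v\in K_\ell$, both arcs contain the point $\ell$. For the ``only if'' direction, the $1/3$-fattening is exactly what is needed: the endpoints of every arc lie at positions of the form $k\pm 1/3$ for some integer $k$, and if the integer point sets of $A_u$ and $A_v$ are disjoint, then the circular gap between the closest integers in the two sets is at least $1$, so the total $2/3$-fattening is insufficient to make the arcs touch---the remaining gap is at least $1/3$. Conversely, if the arcs intersect, any point in the intersection lies within $1/3$ of an integer common to both integer intervals, which must then be an index of a bag containing both $u$ and $v$. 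Combined with the fact that each $K_\ell$ is a clique and each edge of $G$ is covered by some bag, this yields $uv\in E(G)$ iff $A_u\cap A_v\ne\emptyset$.

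The main obstacle is not difficulty but bookkeeping: one has to handle the three shape-cases above uniformly and double-check the wrap-around interactions, e.g.\ an arc of type (c) against an arc of type (b) starting near $K_{|{\cal K}|-1}$. In each combination, the computation is routine once one notes that, on the circle, the circular distance between integers in disjoint integer-point sets is at least $1$, which dominates the fattening $2\cdot(1/3)$. No further property of olive rings is needed; indeed the proposition only uses that $\cal K$ is a hole decomposition, in particular that for every vertex the bags containing it form a (circular) sub-path.
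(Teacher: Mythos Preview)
Your proof is correct. The paper does not actually prove this proposition; it is stated as a direct analogue of Prop.~\ref{lem:clique-decomposition-interval-model} (which is likewise left without proof), with only the parenthetical reminder that $\lint{v}>\rint{v}$ iff $v$ lies in both $K_0$ and $K_{|{\cal K}|-1}$. Your argument---reducing to ``arcs share an integer point iff the corresponding bag sets intersect'' and then using that the $1/3$-fattening cannot bridge an integer gap of at least $1$---is exactly the routine verification the paper leaves implicit, and your case split (a)/(b)/(c) matches the three possibilities for how the cyclic sub-path of bags sits relative to the index $0$.
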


However, not every circular-arc graph has a hole decomposition, e.g.,
some circular-arc graph might have an exponential number of maximal
cliques \cite{tucker-80-recognition-cag}.  The class of circular-arc
graphs is far less understood and harder to manipulate than interval
graphs, which can be attributed to some pathologic intersecting
patterns that are only allowed in circular-arc models.  Most notably,
two arcs might intersect at both ends, and three or more arcs might
pairwise intersect but contain no common point.  A circular-arc model
that is free of these two patterns is \emph{normal and Helly}, and a
graph having such a model is a \emph{normal Helly circular-arc graph}.
Trivially, every interval model can be viewed as a circular-arc model
that is normal and Helly, which means that every interval graph is
also a normal Helly circular-arc graph.  The Helly property ensures
that every maximal clique corresponds to some point in the
circular-arc model.  As a consequence, a Helly circular-arc graph $G$
has at most $|G|$ maximal cliques and admits a clique decomposition
that is either a path or a hole \cite{gavril-74-algorithms-cag}.  It
is not hard to check that no caw is a normal Helly circular-arc graph
(see also \cite{cao-14-recognizing-nhcag}), and thus the second
assertion of Lem.~\ref{lem:olive-ring-decomposition} can be
interpreted as follows.
\begin{corollary}\label{lem:hole=no-caw}
  The subgraph $G - SP(G)$ is a normal Helly circular-arc graph, and
  it is maximal in $G$.
\end{corollary}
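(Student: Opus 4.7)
The plan is to handle the two assertions separately. For the first, I would observe that by construction of the olive-ring decomposition $\cal K$ produced in Lem.~\ref{lem:olive-ring-decomposition}, removing the set $SP(G)$ simply strips away the pendant bags: every vertex of $V(G)\setminus SP(G)$ appears in some bag of the main cycle/path, and no bag of the main structure is emptied. Consequently $G-SP(G)$ inherits a clique decomposition that is either a clique path (when $G$ is chordal) or a clique hole (otherwise). Applying Prop.~\ref{lem:clique-decomposition-interval-model} in the first case or Prop.~\ref{lem:clique-decomposition-arc-model} in the second case produces a circular-arc (in fact, interval) model for $G-SP(G)$. I would then verify normality and the Helly property: normality follows from the standing convention that no vertex of $G$ occupies every bag of the main cycle of $\cal K$, so no arc wraps around the whole circle; the Helly property is immediate because every maximal clique of $G-SP(G)$ corresponds to a bag of the decomposition, i.e., to a single point of the circle at which the arcs for its vertices share a common intersection.

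For the second assertion, I would argue that adding any single vertex $v\in SP(G)$ back to $G-SP(G)$ destroys the normal-Helly circular-arc property. Here I would invoke the sharper output of Lem.~\ref{lem:olive-ring-decomposition}: each $v\in SP(G)$ is a terminal of a caw $W_v$ of $G$ all of whose other vertices lie in the main cycle, hence in $V(G)\setminus SP(G)$. Thus $W_v$ is an induced subgraph of $(G-SP(G))+v$. Since the class of normal Helly circular-arc graphs is hereditary (one just deletes the corresponding arc) and no caw admits a normal Helly circular-arc model (noted already in the paragraph above the statement, with a reference to \cite{cao-14-recognizing-nhcag}), the graph $(G-SP(G))+v$ is not a normal Helly circular-arc graph. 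This establishes maximality in $G$.

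The only step that requires genuine care is the verification of normality and the Helly property of the model extracted from the decomposition. The Helly property is essentially a rephrasing of the defining property of clique decompositions, so it is routine. Normality is where I expect a subtle check: one must exclude the possibility that in $G-SP(G)$ some vertex spans all bags of the main cycle; this is precisely the convention imposed earlier on our decompositions, but it should be recorded explicitly in the proof. Everything else is bookkeeping: extracting the sub-decomposition, applying the model-building proposition, and citing the hereditary nature of normal Helly circular-arc graphs together with the fact that caws are excluded from this class.
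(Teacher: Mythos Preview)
Your overall structure and your maximality argument are essentially the paper's: Lem.~\ref{lem:olive-ring-decomposition} supplies, for each $v\in SP(G)$, a caw whose other vertices all lie in the main cycle, and since caws are not normal Helly circular-arc graphs, $(G-SP(G))+v$ cannot be one either. That half is fine.

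The gap is in your normality argument. You propose to deduce normality from the convention that ``no vertex of $G$ occupies every bag of the main cycle,'' i.e., that no single arc wraps around the circle. That is not what normality means: a circular-arc model is \emph{normal} when no \emph{two} arcs together cover the circle (equivalently, no two arcs meet at both ends). Excluding full-circle arcs does not rule this out; for instance, two arcs each covering slightly more than half the circle violate normality while neither wraps around. So the step you flag as ``subtle'' is in fact incorrect as stated.

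The paper handles both properties at once via the characterization that a circular-arc model is normal and Helly if and only if no set of at most three arcs covers the entire circle \cite{mckee-03-restricted-cag,lin-13-nhcag-and-subclasses}. That this holds for the model of Prop.~\ref{lem:clique-decomposition-arc-model} is not a convention but a consequence of the construction: had two or three pairwise adjacent vertices jointly intersected every bag of the hole decomposition, Lem.~\ref{lem:non-helly} (respectively Lem.~\ref{lem:one-side}) would have fired earlier and returned a subgraph in $\cF_{LI}$. This is exactly the observation made in the paragraph preceding Lem.~\ref{lem:find-shortest-hole} and again just before the corollary. You should replace your single-arc argument with this ``no $\le 3$ arcs cover the circle'' reasoning; your separate Helly argument then becomes redundant, since the same characterization already delivers it.
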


In a circular-arc model, the arcs for every hole is minimal in the
sense that their union covers the entire circle; in other words, every
point on the circle is contained in at least one arc of them.  One can
easily see the correlation between the neighbors of vertex $v$ in a
hole $H$ and its arc in a circular-arc model.  Indeed, conditions in
Lem.~\ref{lem:one-side} are simply the normal and Helly properties: a
circular-arc model is normal and Helly if and only if there are no
three or less arcs covering the whole circle
\cite{mckee-03-restricted-cag,lin-13-nhcag-and-subclasses}.
Therefore, the circular-arc model given by
Prop.~\ref{lem:clique-decomposition-arc-model} must be normal and
Helly.  Let $A_u$ and $A_v$ be a pair of intersecting arcs such that
neither of them is contained in the other.  The normal property
ensures that they intersect only in one end; we say that the arc $A_u$
is counterclockwise (resp., clockwise) to $A_v$ if $\rp{u}\in A_v$
(resp., $\lp{u}\in A_v$).  In particular, the arc for $h_{i+1}$ is
clockwise to that for $h_{i}$, which is consistent as the direction of
the vertices themselves.  Moreover, we also say that $h_{i+1}$ is to
the right of $h_{i}$ and $h_{i}$ is to the right of $h_{i+1}$; this
relation can be viewed by an observer placed at the center of the
circle.

In passing we would like to have a final remark on the graph classes
mentioned in this paper.  Although all classes of \lig s, chordal
graphs, at-free graphs, and normal Helly circular-arc graphs contain
interval graphs as a proper subset, they are incomparable to each
other.  This fact can be evidenced by the four graphs depicted in
Fig.~\ref{fig:superclasses}---they belong to prime \lig s, chordal
graphs, at-free graphs, and normal Helly circular-arc graphs,
respectively, but no others.  For a comprehensive treatment and for
references to the extensive literature on various graph classes, one
may refer to Golumbic~\cite{golumbic-2004-perfect-graphs},
Brandst\"{a}dt et al.~\cite{brandstadt-99-graph-classes}, and
Spinrad~\cite{spinrad-03-efficient-graph-representations}.
\begin{figure*}[h] 
  \centering
  \begin{subfigure}[b]{0.22\textwidth}
    \centering \includegraphics{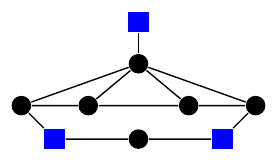} 
    \caption{a \lig}
    \label{fig:net}
  \end{subfigure}%  
  \,
  \begin{subfigure}[b]{0.20\textwidth}
    \centering \includegraphics{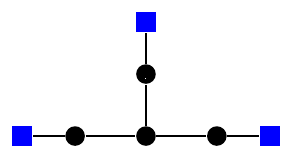} 
    \caption{a chordal graph}
    \label{fig:net}
  \end{subfigure}%  
  \,
  \begin{subfigure}[b]{0.20\textwidth}
    \centering \includegraphics{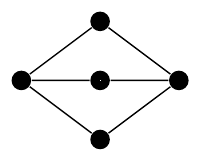} 
    \caption{an at-free graph}
    \label{fig:net}
  \end{subfigure}%  
  \,
  \begin{subfigure}[b]{0.24\textwidth}
    \centering \includegraphics{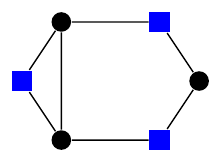} 
    \caption{a circular-arc graph}
    \label{fig:net}
  \end{subfigure}%  
  \caption{Super classes of interval graphs (squared vertices make an at).}
  \label{fig:superclasses}
\end{figure*}

 \section{Large caws}\label{sec:caw}
This section concentrates on the disposal of large caws in the
quotient graph $Q$ defined by the maximal strong modules of the input
graph $G$, conditioned on that
Thms.~\ref{thm:modules-induced-subgraph+} and
\ref{thm:modules-subgraph+} are applicable.  It is also assumed that
every \msm\ induces an interval subgraph.  Recall that the frame of a
large caw ($s: c_1,c_2:l,B,r$) is the subgraph induced by its
terminals as well as their neighbors, i.e., ($s: c_1,c_2:l,b_1;
b_d,r$), where possibly $c_1=c_2$.  When no particular base is of
interest to us, $b_1$ and $b_d$ will be referred to by $l_b$ and
$r_b$.  As explained in Section~\ref{sec:hardness-find-fis}, we cannot
assume the nonexistence of small caws or short holes in $Q$, but
whenever they surface (recall that a subgraph of $Q$ in $\cF_{LI}$
directly translates a isomorphic subgraph of $G$), the current process
is preempted, so they will not concern us.  As a consequence, we may
assume that we have already an olive-ring decomposition $\cal K$ for
$Q$, and every non-simplicial element of $V(Q)$ represents a clique
module of $G$.  Now that we need to discuss three graphs, $G$, $Q$,
and $\cal K$, at different levels, for the sake of clarity, we would
use vertices, modules, and bags to refer to elements in $V(G)$,
$V(Q)$, and $V(\cal K)$, respectively.  In particular, each bag in
$\cal K$ is a maximal clique of $Q$, hence a set of modules of $G$.
Denote by $\#v$ the size of the module $v\in V(Q)$, and $\#X =
\sum_{v\in X} \#v$ for $X\subseteq V(Q)$; note that these modules are
maximal strong modules of $G$, hence disjoint.

We are looking for solutions satisfying
Thms.~\ref{thm:modules-induced-subgraph+}-\ref{thm:modules-supergraph+}.
In such an object interval graph $G'$, every maximal strong module of
$G$ (if not completely deleted in the vertex deletion problem) remains
a module of $G'$, and they define a quotient graph $Q'$ of $G'$.  One
should be noted that these modules are not necessarily strong in $G'$,
and $Q'$ is not necessarily prime either.  For the vertex deletion
problem, only the deletion of an entire maximal strong module is
reflected in $Q'$, while the edge modifications applied between
modules can be viewed as edge modifications in $Q$.  In other words,
$Q'$ is an induced interval subgraph, spanning interval subgraph, and
interval supergraph of $Q$, but not necessarily optimum in general.
Indeed, given a normalized interval model for the object interval
graph $G'$, we can extract a normalized interval model for the
modified quotient graph $Q'$.

Our algorithms find a large caw whose frame satisfies a certain
minimality condition in $Q$, from which they identify a bounded number
of vertex/edge sets that intersect some minimum modification.  Hence
they branch on deleting/adding one of these vertex/edge sets.  To show
the existence of such a minimum modification, we will use a similar
constructive argument as Section~\ref{sec:module-preservation}, i.e.,
given a minimum modification that avoids all those vertex/edge sets,
we construct another minimum modification containing one of them and
of the same size.  To verify that the graph obtained by applying the
new modification is indeed an interval graph, we use
Prop.~\ref{lem:lig-and-modules}, and check all its conditions are
satisfied by the newly constructed graph; in particular, we will use
$Q'$ discussed above as the quotient graph.  According to
Thm.~\ref{thm:modules-induced-subgraph+},
Lem.~\ref{thm:clique-modules-subgraph+}, and
Thm.~\ref{thm:preserving-modules}, we have to keep clique modules
(with the only exception that it is completely deleted) intact.
Recall that non-terminal modules of a caw in $Q$ are necessarily
non-simplicial, which must be clique modules, and hence we are mainly
concerned with the terminals, which might or might not be clique
modules.  The conditions of Prop.~\ref{lem:lig-and-modules} hold
vacuously for the vertex deletion problem, while for the edge
modification problems, we need to make sure that we do not add edges
to connect two non-clique modules, or delete edges from the
neighborhood of a non-clique module.

Besides the operation ``project'' defined at the beginning of
Section~\ref{sec:module-preservation}, we will need the following
operation on circular-arc models.  Given a pair of distinct positive
numbers $\alpha,\beta$ and a pair of nonadjacent vertices $u,v$, we
can always reverse, rotate, and/or re-scale all arcs such that $\lp{u}
= \alpha$ and $\rp{v} = \beta$.    

\subsection{The definition and detection of minimal frames}
\label{sec:minimal-frames}
Let $\cal K$ be the olive-ring decomposition of $Q$ and let $\cal C$
be the unique cycle in $\cal K$.  Recall that $SP(Q)$ denotes the set
of simplicial modules of $Q$ that do not appear in any bag of $\cal
C$.  We may assume in this section that $SP(Q)$ is nonempty, as
otherwise according to Prop.~\ref{lem:hole=no-caw}, $Q$ contains no
caws and we have nothing to do.  Non-terminal modules of a caw $W$ are
not simplicial in $Q$ and hence appear in $\cal C$; we consider the
bags in $\cal C$ containing these non-terminal modules of $W$.  Since
they induce a connected subgraph of $Q$ with diameter one or two, they
are contained in a proper and consecutive fraction of $\cal C$, i.e.,
a sub-path.  There is an inclusion-wise minimal sub-path of $\cal C$
that contains all non-terminal modules of $W$; clearly, this is
uniquely determined by the frame of $W$, and is the same for any caw
with this frame.

Here come the formal definitions.  Recall that for $v\in V(Q)\setminus
SP(Q)$, \lint{v} and \rint{v} are indices of the leftmost and
rightmost, respectively, bags in $\cal C$ containing $v$.  The
\emph{container} of a frame $F$, denote by $[F]$, is defined to be the
set of bags with indices $\rint{l_b}, \ldots, \lint{r_b}$, which is
clearly the inclusion-wise minimal set of consecutive bags whose union
contains all non-terminal modules of $F$.  On the other hand, the set
of \emph{inner modules} of $F$ is defined to be $IN(F) :=
\bigcup_{\ell\in [\rint{l_b}+1,\lint{r_b}-1]} K_{\ell}\setminus
N_Q(s)$, which is disjoint from $V(F)$.
\begin{definition}\label{def:min-frame}
  A frame $F$ is \emph{minimal} if there exists no other frame $F'$
  such that $[F']\subset [F]$.  
\end{definition}

\begin{SCfigure}[][h]
  \centering  \includegraphics{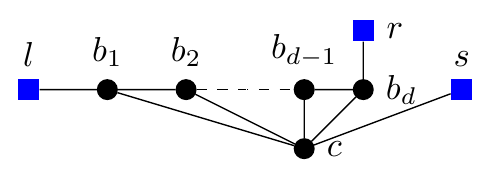} 
  \caption{A shallow terminal $s$ might appear in the main circle
    (path) of $\cal K$, here it is $\{l, b_1\}$, $\{c, b_1,b_2\}$,
    $\cdots$, $\{c, b_{d-1},b_d\}$, $\{c, s\}$.}
  \label{fig:s-in-C}
\end{SCfigure}

According to Lem.~\ref{lem:olive-ring-decomposition}, every module in
$SP(Q)$ is a terminal of some caw in $Q$.  Therefore, a natural
starting point for finding a caw in $Q$ is an $s\in SP(Q)$.  As it
will become clear in due course, we would prefer the starting module
$s$ to be either a terminal of a small caw or the shallow terminal of
a large caw.  However, a module in $SP(Q)$ might satisfy neither of
them; see, e.g., Fig.~\ref{fig:s-in-C}.  For this particular example,
we observe that bags containing $N_Q(r)$ are a proper subset of those
containing $N_Q(s)$, and we can make an alternative caterpillar
decomposition whose main path contains $V(Q)\setminus \{s\}$.  This
observation suggests us to give priority to those simplicial modules
that are adjacent to less bags in the main cycle during the
construction of $\cal K$.

We revisit the construction of the caterpillar/olive-ring
decomposition in Section~\ref{sec:olive-ring-decomposition} (the proof
of Lem.~\ref{lem:olive-ring-decomposition}) and use the following
order for handling simplicial modules.  Recall that at the onset we
have the set $SI(Q)$ of simplicial modules and a hole decomposition
for $Q - SI(Q)$.  For each simplicial module $s$, we calculate
$\max_{v\in N_Q(s)} \lp{v}$ and $\min_{v\in N_Q(s)} \rp{v}$; all these
$2|SP(Q)|$ indices can be computed in linear time.  We build an array
of size $|\cal C|$, of which the $i$th slot stores all modules $v$
with $\max_{v\in N_Q(s)} \lp{v} = i$.  Modules in each slot are sorted
in the nondecreasing order of the indices $\min_{v\in N_Q(s)} \rp{v}$.
With this array we can retrieve simplicial modules in order $s_1,
s_2,\ldots$ such that $N_Q(s_j)\not\subset N_Q(s_i)$ for every $i <
j$.  The construction used in the proof of
Lem.~\ref{lem:olive-ring-decomposition} does not assume any particular
order of $SI(Q)$, thereby still working; the running time remains
linear.  Hereafter, we may assume that $\cal K$ is constructed with
the above stipulated order.

For every module $v$ in the main cycle $\cal C$ of $\cal K$, we can
easily calculated \lint{v} and \rint{v}, which indicate the position
of $v$ in $\cal C$.  We can also assign positions to modules $s\in
SP(Q)$ by setting $\lint{s} := \max_{v\in N_Q(s)} \lint{v}$ and
$\rint{s} := \min_{v\in N_Q(s)} \rint{v}$; thus for every module $v\in
N_Q(s)$ it holds that $[\lint{s}, \rint{s}]\subseteq [\lint{v},
\rint{v}]$.  In total, we have $2|Q|$ indices for the $|Q|$ modules,
which can be calculated in linear time.  Let $s$ be the very first
module that is put into $SP(Q)$ during the construction of the clique
decomposition $\cal K$.  The order we use to handle simplicial modules
ensures us that there is no other module $s'\in SP(Q)$ satisfying
$[\lint{s'}, \rint{s'}]\subset [\lint{s}, \rint{s}]$.  We now describe
in Fig.~\ref{fig:find-minimal-caw} the procedure that, starting from
$s$, finds a minimal frame or a small caw.  For notational
convenience, we may assume that $0\not\in [\lint{s},\rint{s}]$; this
is achievable by circularly renumbering indices of bags in $\cal C$.
\begin{lemma}\label{lem:find-minimal-caw}
  The procedure in Fig.~\ref{fig:find-minimal-caw} finds in $O(||G||)$
  time a small caw or a minimal frame ($s: c_1,c_2:l,l_b; r_b,r$).  This
  frame either is in a $\dag$ or $\ddag$ whose base consists of three
  modules, or has the following properties:
  \begin{enumerate}[(\tt C1)]
  \item\label{can:1} no module in $SP(Q)\setminus \{s\}$ is adjacent
    to both $c_1$ and $c_2$;
  \item\label{can:2} $N_Q[c_1]\cap N_Q[c_2]\subseteq N_Q[v]$ for every $v\in
    N_Q(s)$;
  \item\label{can:3} $N_Q[v]\subseteq N_Q[c_1]\cap N_Q[c_2]$ for every $v\in
    IN(F)$;
  \item\label{can:4} $s$ is the only module of the frame in $SP(Q)$;
    and
  \item\label{can:5} $N_Q[l_b]\cap N_Q[c_2]\subset N_Q[c_1]$ and $N_Q[r_b]\cap
    N_Q[c_1]\subset N_Q[c_2]$.
  \end{enumerate}
\end{lemma}
\begin{figure*}[t]
\setbox4=\vbox{\hsize28pc \noindent\strut
\begin{quote}
  \vspace*{-5mm} \small

  1 \hspace*{2ex} \parbox[t]{0.85\linewidth}{ find $c_1,c_2\in N_Q(s)$
    such that $\rint{c_1} = \rint{s}$ and $\lint{c_2} = \lint{s}$;
    \\
    {\bf if} $\lint{c_1} = \lint{c_2}$ {\bf then} $c_2 = c_1$; }
  \\
  2 \hspace*{2ex} \parbox[t]{0.85\linewidth}{ find $l_b$ from
    $K_{\lint{c_2}-1}\setminus N_Q(s)$ such that \rint{l_b} attains the
    maximum value in it;
    \\
    find $r_b$ from $K_{\rint{c_1}+1}\setminus N_Q(s)$ such that
    \lint{r_b} attains the minimum value in it;}
  \\
  3 \hspace*{2ex} {\bf if} $\rint{l_b} \ge \rint{s}$ or $\lint{r_b} \le
  \lint{s}$ {\bf then}
  \\
  3.1 \hspace*{4ex} {\bf if} $\rint{l_b}> \rint{s}$ {\bf then}
  \\
  \hspace*{10ex} choose $x$ from $(K_{\lint{l_b}-1}\cap
  K_{\lint{l_b}})\setminus N_Q(s)$ and $x'$ from
  $K_{\lint{l_b}-1}\setminus K_{\lint{l_b}}$;
  \\
  \hspace*{10ex} choose $y$ from $(K_{\rint{l_b}+1}\cap
  K_{\rint{l_b}})\setminus N_Q(s)$ and $y'$ from
  $K_{\rint{l_b}+1}\setminus K_{\rint{l_b}}$;
  \\
  \hspace*{10ex} {\bf return} long claw $\{s, c_1, l_b, x, x', y, y'\}$
  or sun $\{s, c_1, c_2, l_b, x, y\}$;
  \\
  3.2 \hspace*{4ex} {\bf if} $\rint{l_b}= \rint{s}$ {\bf then}
  \\
  \hspace*{10ex} choose $x$ from $K_{\lint{s}-1}\setminus
  K_{\lint{s}-1}$;
  \\
  \hspace*{10ex} choose $y$ from $(K_{\rint{s}+1}\cap
  K_{\rint{s}})\setminus N_Q(s)$ and $y'$ from $K_{\rint{s}+1}\setminus
  K_{\rint{s}}$;
  \\
  \hspace*{10ex} {\bf return} net $\{s, c_1, x, l_b, y, y'\}$ or rising
  sun $\{s, c_1, c_2, x, l_b, y, y'\}$;
  \\
  3.3 \hspace*{4ex} {\bf else} {\sf symmetric as step 4.2};
  \comment{i.e., $\lint{r_b} = \lint{s}$.}
  \\
  4 \hspace*{2ex} \parbox[t]{0.85\linewidth}{ find $l$ to be any
    module with $\rint{l} = \lint{l_b}$;
    \\
    find $r$ to be any module with $\lint{r} = \rint{r_b}$;}
  \\
  5 \hspace*{2ex} \parbox[t]{0.85\linewidth}{ {\bf if} $\rint{l} <
    \lint{c_1}$, then set $c_2 = c_1$;
    \\
    {\bf if} $\lint{r} > \rint{c_2}$, then set $c_1 = c_2$;}
  \\
  6 \hspace*{2ex} {\bf if} $\rint{l_b} \ge \lint{r_b}$ {\bf then return}
  $\{s,c_1,c_2,l_b,r_b,l,r\}$ as a net or rising sun; \comment{$c_1=c_2$
    or not.}
  \\
  7 \hspace*{2ex} {\bf else return} ($s:c_1,c_2:l, l_b; r_b, r$) as a
  minimal frame.

\end{quote} \vspace*{-6mm} \strut} $$\boxit{\box4}$$
\vspace*{-9mm}
\caption{Finding a minimal frame (Lem.~\ref{lem:find-minimal-caw}).}
\label{fig:find-minimal-caw}
\end{figure*}
\begin{proof}
  Step 1 finds center(s) $c_1,c_2$, whose existence is ensured by the
  definition of $\lint{s}$ and $\rint{s}$.  It is worth noting that
  $c_1$ and $c_2$ might or might not refer to the same module; neither
  case will be assumed by the rest of the procedure.  Step 2 finds
  $l_b, r_b$, whose existence is ensured by
  Cor.~\ref{lem:hole=no-caw}.  If $(K_{\lint{c_2}-1}\cap
  K_{\lint{c_2}}) \setminus N_Q(s) = \emptyset$, then $N_Q[s]$ can be
  put between bags $K_{\lint{c_2}-1}$ and $K_{\lint{c_2}}$,
  contradicting Cor.~\ref{lem:hole=no-caw}; a symmetric argument
  applies to $r_b$.  By their selection and noting that there is no
  short hole, we can conclude $\rint{l_b} \ge {\lint{c_2}}$ and then
  $l_b\sim c_2$; likewise, $c_1\sim r_b$.  In step 3, the existence of
  $x$ and $y$ can also be argued by Cor.~\ref{lem:hole=no-caw}, and
  the existence of $x',y'$ is by the definition of clique
  decomposition.  The condition $\lint{l_b}< \lint{s}$ or
  $\rint{s}<\rint{r_b}$ then implies a path of length $4$ that
  connects $K_{\lint{s}-1}$ and $K_{\rint{s}+1}$ and avoids $N_Q(s)$,
  which, together with $\{s,c_1,c_2\}$, contains a small caw.  Step 4
  finds base terminals $l,r$, whose existence is ensured by the fact
  that $K_{\lint{l_b}}$ and $K_{\rint{r_b}}$ are maximal cliques, and
  hence $K_{\lint{l_b}}\setminus K_{\lint{l_b}+1} \ne\emptyset$ and
  $K_{\rint{r_b}}\setminus K_{\lint{r_b} - 1} \ne\emptyset$.  

  Step 5 is clear.  To verify steps 6 and 7, we need to check first
  the pairwise adjacency relation.  By construction, both $l l_b c_2
  r_b r$ and $l l_b c_2 r_b r$ are paths of length $4$.  Since the
  length of a shortest hole is at least $6$
  (Lem.~\ref{lem:find-shortest-hole}), $\rint{l}\le \lint{c_2} - 1$
  implies that $l\not\sim c_2,r_b,r$.  Likewise, we conclude
  $r\not\sim c_1,l_b$.  Therefore, if $l_b\sim r_b$, then
  $\{s,c_1,c_2,l,l_b, r_b,r\}$ is a net (when $c_1 = c_2$) or a rising
  sun (when $c_1 \ne c_2$).  This justifies step~6.  Otherwise,
  $l_b\not\sim r_b$, then ($s:c_1,c_2:l, l_b; r_b, r$) is a frame,
  witnessed by any \stpath{l_b}{r_b} through $\bigcup_{\rint{l_b}<
    \ell <\lint{r_b}}K_\ell \setminus N_Q(s)$ (noting that
  $(K_{\ell}\cap K_{\ell+1}) \setminus N_Q(s)$ is nonempty for every
  $\ell$ in this range).

  If we have found a small caw in the previous steps, then we are
  done, and hence in the rest of the proof we assume that the output
  is a frame $F$.  We argue first that $F$ is minimal.  Suppose for
  contradiction, that there exists a frame $F' = $ ($s': c'_1, c'_2:
  l', l'_b; r'_b, r'$) such that $[F']\subset [F]$.  Clearly,
  $[\lint{s'},\rint{s'}] \subseteq [\lint{l'_b},\rint{r'_b}] \subset
  [\lint{l_b},\rint{r_b}] = [\lint{s},\rint{s}]$, and thus from our
  selection of $s$, we can infer that $s'\not\in SP(Q)$.  But then at
  least one of $l'$ and $r'$ must be simplicial and belong to $SP(Q)$,
  which should be processed before $s$, contradicting our selection of
  $s$.  We have thus concluded the minimality of $F$, and then proceed
  to the second assertion, for which we check the conditions one by
  one.

  \myref{1}.  We first check whether there exists some module $s'\in
  SP(Q)\setminus \{s\}$ that is adjacency to both $c_1$ and $c_2$.  If
  not, \myref{1} is satisfied.  Otherwise, by definition,
  $\lint{s'}\ge \lint{c_2} = \lint{s}$ and $\rint{s'}\le
  \rint{c_1}=\rint{s}$.  Neither inequality can be strict, as
  otherwise $s'$ should have been put into $SP(Q)$ before $s$, which
  is impossible.  Noting that $N_Q(s')\ne N_Q(s)$ as $Q$ is prime,
  there must be some module $x$ that is adjacent to precisely one of
  $s$ and $s'$.  If $x\not\sim s$, then we can return ($s:
  c_1,c_2,l,l_b,x,r_b,r$) as the caw: here note that $x$ is adjacent to
  neither $l$ nor $r$, as otherwise we have entered step 3.  The
  situation is symmetric when $x\sim s'$.

  \myref{2}.  Let $x$ be any module in $N_Q[c_1]\cap N_Q[c_2]$
  different from $s$.  As a result of \myref{1}, $x\not\in SP(Q)$, and
  is contained in $\bigcup_{\lint{c_2}\le\ell\le\rint{c_1}}K_{\ell}$.
  By definition, $\lint{v}\le\lint{c_2}$ and $\rint{v}\ge\rint{c_1}$,
  implying $v\sim x$.

  \myref{3} and \myref{4} are immediate from the selection of
  the center(s) $c_1,c_2$ and the terminal $s$, respectively.

  \myref{5}.  We check whether $N_Q[l_b]\cap N_Q[c_2]\setminus
  N_Q[c_1]$ is empty, and if not, find a module $x$ from it.  From
  $x\not\sim c_1$ we can infer $s\not\sim x$.  Since $Q$ does not
  contain $4$-hole, we must have $x\sim l$, and then we return sun
  $\{s, l, l_b, c_1, c_2, x\}$.  We then do a symmetric check to
  $N_Q[l_b]\cap N_Q[c_2]\setminus N_Q[c_1]$, and fail to detect a
  subgraph in $\cF_{LI}$ only when \myref{5} is satisfied.

  The runtime is clearly $O(||G||)$.  This concludes the proof.
\end{proof}

Recall that we have renumbered indices of bags in $\cal C$ to exclude
$K_0$ from the detected minimal frame; as a result,
$\lint{l}\le\lint{l_b}\le\rint{l}<\lint{c_2}\le\rint{l_b} <\lint{r_b}\le
\rint{c_1}< \lint{r} \le \rint{r_b}\le \rint{r}$.
We point out that the frame of a caw with the minimum number of
modules is unnecessarily minimal; see, e.g.,
Fig.~\ref{fig:minimal-frame}.  In the following subsections, $F$ will
be the minimal frame found by Lem.~\ref{lem:find-minimal-caw}.
\begin{SCfigure}[][h]
  \centering  \includegraphics{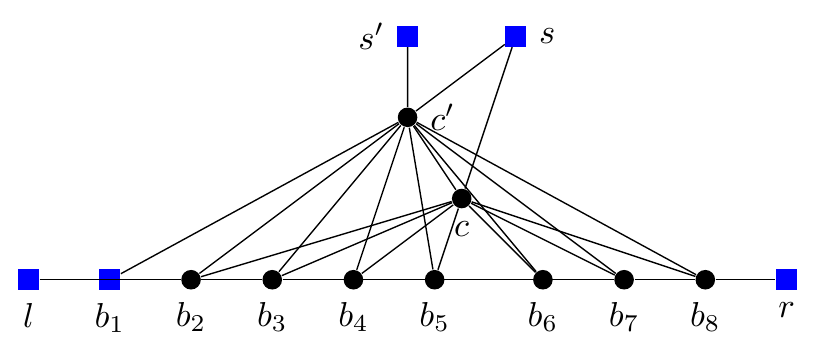} 
  \caption{The minimal frame is ($s: c,c: b_1, b_2; b_8, r$),
    witnessed by caw ($s: c,c: b_1, b_2 \cdots b_8, r$), but caw
    ($s':c',c':l, b_1 b_2 c b_8, r$) has less vertices.}
  \label{fig:minimal-frame}
\end{SCfigure}

\subsection{Breaking large caws by vertex deletions}
Recall that if a module $v$ is not simplicial in $Q$, then it must be
a clique module; it either is completely deleted, or remains intact in
a \miib.

\begin{lemma}\label{lem:interval-deletion-main}
  Let $G[U^*]$ be a \miib\ of $G$ such that $U^*$ intersects every
  module in $F$.  Define $S_\imath := K_{\imath}\cap
  K_{\imath+1}\setminus N_Q(s)$ for $\rint{l_b}\le\imath<\lint{r_b}$,
  and let $\ell$ be an index such that $\#S_{\ell}$ is minimum among
  all of them.  There exists a \miib\ $ G[U]$ disjoint from all
  modules in $S_{\ell}$.
\end{lemma}
\begin{proof}
  Let $V^*_- := V(G)\setminus U^*$, and by abuse of notation, we use
  $Q - V^*_-$ to denote the subgraph of $Q$ induced by those maximal
  strong modules that are not fully contained in $V^*_-$.  Clearly, $Q
  - V^*_-$ is a quotient graph of $G - V^*_-$, hence an interval
  graph.  A module might be partially deleted; by
  Thm.~\ref{thm:modules-induced-subgraph+}, such a module cannot be a
  clique.  Replacing all non-simplicial modules of $IN(F)$ in $V^*_-$
  by $S_{\ell}$ gives a new set $V_-$; we show that $G - V_-$ is the
  claimed \miib, i.e. $U = V(G)\setminus V_-$.  In every
  \stpath{l_b}{r_b} through $IN(F)$ in $Q$, at least one module is
  completely contained in $V^*_-$; otherwise, this path, together with
  $F$, makes a caw in $Q - V^*_-$, which is impossible.  Therefore,
  $V^*_-$ has to contain all modules in a minimal \stsep{l_b}{r_b} in
  $IN(F)$, i.e., $S_\imath$ for some $\rint{l_b}\le\imath<\lint{r_b}$.
  Noting that all different modules in $V^*_-$ and $V_-$ are cliques
  and fully contained in them, by the selection of $S_\ell$, we can
  conclude $|V_-|\le |V^*_-|$.

  To show that $G[U]$ is indeed an interval graph, it suffices to
  verify that $Q - V_-$ is an interval graph: given a normalized
  interval model for $Q - V_-$, we can build a (not necessarily
  normalized) interval model for $G[U]$ as follows.  Note that every
  module in $Q - V_-$ induces an interval subgraph of $G[U]$; it is a
  clique if and only the module is a clique in $G$ and then it must be
  disjoint from $V_-$.  For a clique module in $Q - V_-$, we use the
  same interval for all vertices in this module.  For a non-clique
  module $v\in Q - V_-$, let $M_v$ be the set of vertices in the
  module not in $V_-$.  We find some point $\rho$ that is contained in
  all intervals for $N_{Q - V_-}(v)$ but not an endpoint itself, and
  project an interval model for $G[M_v]$ to $[\rho - \epsilon, \rho +
  \epsilon]$.  For the existence of $\rho$, note that $N_{Q - V_-}(v)$
  is, if nonempty, a clique of $Q$ and every module in it is a clique
  module of $G$.  It is easy to verify that this gives an interval
  model for $G - V_-$.  Thus, the rest of the proof is devoted to
  constructing an interval model for $Q - V_-$.

  Let ${\cal I}^* = \{I^*_v: v\in V(Q - V^*_-)\}$, where $I^*_v =
  [\lpp{^*}{v}, \rpp{^*}{v}]$, be a normalized interval model for the
  interval graph $Q - V^*_-$.  Let ${\cal A} = \{A_v: v\in
  V(Q)\setminus SP(Q)\}$, where $A_v = [\lpp{^c}{v},\rpp{^c}{v}]$, be
  the circular-arc model for $Q - SP(Q)$ obtained using
  Prop.~\ref{lem:clique-decomposition-arc-model}.  We make the
  following assumptions on ${\cal I}^*$ and ${\cal A}$.
  \begin{enumerate}[(1)]
  \item The interval for ${l_b}$ is to the left of that for ${r_b}$ in
    ${\cal I}^*$, i.e., $\rpp{^*}{l_b}<\lpp{^*}{r_b}$.  Since
    $l_b\not\sim r_b$, this assumption is clear.  As a consequence of
    this assumption, when $c_1\ne c_2$, all endpoints of intervals for
    $c_1$ and $c_2$ are ordered as follows (see
    Fig.~\ref{fig:fixed-frame}):
    $$\lpp{^*}{c_1}< \rpp{^*}{l}< \lpp{^*}{c_2} <\rpp{^*}{c_1}
    <\lpp{^*}{r}< \rpp{^*}{c_2}.$$ 
  \item The models ${\cal I}^*$ and $\cal A$ are aligned such that
    $[\lpp{^*}{c_2}= \lpp{^c}{c_2}]$ and $[\rpp{^*}{c_1}=
    \rpp{^c}{c_1}]$.  We will use the same values for \lp{c_2} and
    \rp{c_1} respectively in $\cal I$.  As a result, we can refer to
    \lp{c_2} and \rp{c_1} without specifying which model.
  \item For every $v\in N_{Q-V^*_-}(s)$, the interval $I^*_v$ fully
    contains $[\lp{c_2}, \rp{c_1}]$.  By (\myref{2}), $N_Q[c_1]\cap
    N_Q[c_2]\subseteq N_Q[v]$, and thus we can always extend $I^*_v$
    to cover $[\lp{c_2}, \rp{c_1}]$ without breaking the model.
  \item For every $v\in IN(F)$ that remains in $Q - V^*_-$, the
    interval $I^*_v$ is fully contained in $[\lp{c_2}, \rp{c_1}]$.  By
    (\myref{3}), $N_Q[v]\subseteq N_Q[c_1]\cap N_Q[c_2]$, and thus we
    can always shrink $I^*_v$ to satisfy this assumption.
  \end{enumerate}
  Note that assumptions (3) and (4) do not conflict with each other.

  Let $\rho$ be a point in the circular-arc model such that $\rho \in
  A_v$ if and only if $v\in K_{\ell}\cap K_{\ell+1}$.  The existence
  of $\rho$ is clear from the construction of $\cal A$, e.g., $\ell
  +1/2$ in the original model given by
  Prop.~\ref{lem:clique-decomposition-arc-model}.  The center(s) $c_1,
  c_2$ belong to both $K_{\ell}$ and $K_{\ell+1}$, and thus $\rho\in
  [\lp{c_2}, \rp{c_1}]$.  We are now ready to present the new interval
  model ${\cal I}$
  as follows.  If a module $v$ remains in $Q - V^*_-$ and $I^*_v$ is
  disjoint from or fully contains $[\lp{c_2}, \rp{c_1}]$, then $I_v =
  I^*_v$; both $c_1$ and $c_2$ are in this case.  The interval $I_s =
  [\rho-\epsilon, \rho+\epsilon]$.  For every remaining module $v$ in
  $IN(F)$, we use $I_v = A_v$.  If a module $v$ belongs to none of
  above, i.e., $I^*_v$ intersects only part of $[\lp{c_2}, \rp{c_1}]$,
  then we set
  \[
  \lp{v} =
  \begin{cases}
    \lpp{^*}{v} & \text{if } \lpp{^*}{v} < \lp{c_2},
    \\
    \lpp{^c}{v} & \text{otherwise,}
  \end{cases}
  \quad \text{ and }
  \rp{v} =
  \begin{cases}
    \rpp{^*}{v} & \text{if } \rpp{^*}{v} > \rp{c_1},
    \\
    \rpp{^c}{v} & \text{otherwise.}
  \end{cases}
  \]

  Since both the given interval model $\cal I^*$ and circular-arc
  model $\cal A$ represent only part of $V(Q)$, we need to verify that
  $I_v$ is well-defined for every module $v$ in $Q- V_-$.  That is,
  the construction does not use any nonexistent interval $I^*_v$
  (i.e., a non-simplicial module in $IN(F)$) or nonexistent arc $A_v$
  (i.e., in $SP(Q)$).  For a module $v\in IN(F)$, we used $I_v = A_v$.
  By construction, a module $v\in SP(Q)$ remains in $Q - V_-$ if and
  only if it is in $Q- V^*_-$; for such a module $v$ different from
  $s$, we use $I_v = I^*_v$ as $I^*_v$ is disjoint from $[\lp{c_2},
  \rp{c_1}]$ (\myref{1}).

  We now verify that the interval model ${\cal I}$ represents $Q -
  V_-$.  Note that by construction, if an interval $I_v$ is not a
  subset of $[\lp{c_2}, \rp{c_1}]$, then $I_v\setminus [\lp{c_2},
  \rp{c_1}] = I^*_v\setminus [\lp{c_2}, \rp{c_1}]$.  First, given a
  pair of intersecting intervals $I_u$ and $I_v$ in $\cal I$, we show
  $u\sim v$; note that all subgraphs mentioned here are induced
  subgraphs of $G$.  Let $\rho$ be a point in $I_u\cap I_v$.  If
  $\rho\not\in [\lp{c_2}, \rp{c_1}]$, then $\rho$ is in both $I^*_u$
  and $I^*_v$ as well.  If one of $u$ and $v$ is $s$, then the other
  is in $N_Q(s)$.  Otherwise, $\rho$ is in both $A_u$ and $A_v$.  In
  each case, we can conclude $u\sim v$.  Second, given a pair of
  adjacent modules $u,v$ in $Q - V_-$, we show $I_u\cap
  I_v\ne\emptyset$.  If neither of them is in $IN(F)$, then intervals
  $I^*_u$ and $I^*_v$ are defined and intersect.  If $I^*_u\cap I^*_v$
  contains a point $\rho\not\in [\lp{c_2}, \rp{c_1}]$, then $\rho$ is
  also contained in both $I_u$ and $I_v$.  Otherwise, both $I^*_u$ and
  $I^*_v$ intersects $[\lp{c_2}, \rp{c_1}]$; by (\myref{1}),
  $u,v\not\in SP(Q)$.  Thus, $A_u$ and $A_v$ are defined and intersect
  at some point $\rho\in [\lp{c_2}, \rp{c_1}]$, which is contained in
  both $I_u$ and $I_v$.  Assume now that, without loss of generality,
  $u\in IN(F)$.  Then $I_u = A_u\subseteq [\lp{c_2}, \rp{c_1}]$.  By
  (\myref{1}) and (\myref{3}), $v\not\in SP(Q)$, and thus $I_u\cap I_v
  = A_u\cap A_v$.  In summary, a pair of modules $u,v$ in $Q - V_-$ is
  adjacent if and only if $I_u\cap I_v\ne\emptyset$; thus, $\cal I$ is
  an interval model for $Q - V_-$.  This implies that $G - V_-$ is an
  interval graph and concludes the proof.
\end{proof}
Therefore, there always exists a minimum solution that deletes either
one module from the frame, or the $S_\ell$, which can found in linear
time.  It is worth noting that Lem.~\ref{lem:interval-deletion-main}
is stronger than a similar result presented in our previous work
\cite{cao-14-interval-deletion}, which is argued using the
characterization of forbidden induced subgraphs.  The improvement is
not only on the constant, decreased from 10 to 8, but more
importantly, the new proof does not require the graph $G$ to be
chordal.  This relaxation permits us to attend to large caws in a
non-chordal graph.

\subsection{Breaking large caws by edge deletions}
We consider all caws with the given minimal frame $F$ in $Q$.  They
differ only in the bases, and one has the smallest size if and only if
its base $B$ is a shortest \stpath{l_b}{r_b} through $IN(F)$, which
can be found as follows.  Let $b_1 = l_b$.  For each $i=1,\dots$, we
find the next vertex $b_{i+1}$ from $K_{\rint{b_i}}\setminus N_Q(s)$
such that $\rint{b_{i+1}}$ reaches the rightest position.  This
process stops at the first $i$ satisfying $r_b\sim b_i$, and then we
set $d = i+1$ and $b_d = r_b$; note that $b_{i}\not\sim r$ as
otherwise $F$ is not minimal.

The three (unnecessarily induced) paths $s c_1 b_1 l$, $s c_2 b_d r$,
and $l Br$ witness the at $\{s,l,r\}$; therefore, to make a spanning
interval subgraph, at least one edge in these paths has to be deleted.
If the base $B$ found above has a bounded length, then the total
number of edges in the caw is bounded as well, and hence we can try
deleting every of them.  Therefore, we are mainly concerned with the
other case, for which we focus on $IN(F)$.  Consider a bag $K_\imath$
with $\imath \in [\rint{l_b},\lint{r_b}]$ and a nontrivial partition
($X,Y$) of of $K_\imath\setminus N_Q(s)$.  We use $E_{\imath, X,Y}$ to
denote the set of edges to be deleted if we cut in between, i.e.,
\[
(X\times Y) \cup \{v u: v\in X,\,\lint{u} \in [\imath, \rint{v}] \}
\cup \{v u: v\in Y,\,\rint{u}\in [\lint{v}, \imath]\}.
\]
It is a subset of $E(Q)$, and we measure its size by the number of
inter-module edges of $G$ corresponding to the edges in $Q$, i.e.,
$\sum_{u v\in E_{\imath, X,Y}} \#u \cdot \#v$.  We point out that the
bound $8k + 16$ in the following lemma is not tight, and the current
value is chosen for the simplicity of the presentation instead of
efficiency.  The constants can be improved with more careful analysis;
however, how to achieve a sub-linear bound does not seem clear to us
at this moment.
\begin{lemma}\label{lem:interval-edge-deletion-main}
  Let $B$ be a shortest \stpath{l_b}{r_b} in $G[IN(F)\cup\{l_b,r_b\}]$ and
  its length be more than $8k+16$.  Let $\ell \in
  [\rint{l_b},\lint{r_b}]$ and ($X,Y$) be a partition of
  $K_\ell\setminus N_Q(s)$ such that $E_{\ell, X,Y}$ has the minimum
  size among all bags and partitions.  If there is a \misb\
  $\underline G^* := G - E^*_-$ of $G$ such that
  \begin{enumerate}
  \item $|E^*_-| \le k$,
  \item all maximal strong modules of $G$ remains modules of
    $\underline G^*$, which define the quotient graph $\underline
    Q^*$, and
  \item $\underline Q^*$ contains $\{s c_1, s c_2, l_b c_1, r_b c_2\}$
    as well as every $b_i b_{i+1}$ with $i\in \{0, \dots, 4k+7\}\cup
    \{d-4k-7,\dots, d\}$,
  \end{enumerate}
  then there is a \misb\ $\underline G : = G - E_-$ such that
  \begin{enumerate}
  \item all maximal strong modules of $G$ remains modules of
    $\underline G$, which define the quotient graph $\underline
    Q$, and
  \item none of $E_{\ell, X,Y}$ is present in $\underline Q$.
  \end{enumerate}
\end{lemma}
\begin{proof}
  We use $P_L$ and $P_R$ to denote the paths $b_0\cdots b_{4k+8}$ and
  $b_{d-4k-7}\cdots b_{d+1}$ in $Q$.  Let ${\cal I}^* = \{I^*_v: v\in
  V(Q)\}$, where $I^*_v = [\lpp{^*}{v}, \rpp{^*}{v}]$, be a normalized
  interval model for $\underline Q^*$.  By assumption, apart from the
  frame $F$, both paths $P_L$ and $P_R$ also remain intact in
  $\underline Q^*$; they will together set the basic shape of
  intervals in ${\cal I}^*$.  Let ${\cal A} = \{A_v: v\in
  V(Q)\setminus SP(Q)\}$, where $A_v = [\lpp{^c}{v}, \rpp{^c}{v}]$, be
  the circular-arc model for $Q - SP(Q)$ obtained using
  Prop.~\ref{lem:clique-decomposition-arc-model}.  We make following
  assumptions on $\cal I^*$ and $\cal A$.
  \begin{enumerate}[(1)]
  \item The interval for ${l_b}$ is to the left of that for ${r_b}$ in
    ${\cal I}^*$, i.e., $\rpp{^*}{l_b}<\lpp{^*}{r_b}$.  Since
    $l_b\not\sim r_b$ in $\underline Q^*$, this assumption is clear.
    Moreover, as $l\in N_{\underline Q^*}(l_b)\setminus N_{\underline
      Q^*}(c_2)$, it follows that $\lpp{^*}{l_b} < \rpp{^*}{l} <
    \lpp{^*}{c_2}$.
  \item If $c_1\ne c_2$ then $\lpp{^*}{c_1} < \lpp{^*}{c_2} <
    \rpp{^*}{c_1} < \rpp{^*}{c_2}$.  To satisfy the first inequality,
    we can always extend $I^*_{c_1}$ to the left by setting
    $\lpp{^*}{c_1} = \lpp{^*}{c_2} -\epsilon$ to satisfy this
    assumption.  The safeness of this extension can be argued using
    (\myref{5}): since $[\lpp{^*}{c_2}, \lpp{^*}{c_1}]$ is a subset of
    $I^*_{l_b}$ and by the selection of $\epsilon$, any interval
    $I^*_v$ that intersects $[\lpp{^*}{c_2}-\epsilon, \lpp{^*}{c_1}]$
    only if $v\sim c_1$ in ${\underline Q^*}$.  A symmetric extension
    works for the last inequality.  In other words, $\lpp{^*}{c_1} \le
    \lpp{^*}{c_2} < \rpp{^*}{c_1} \le \rpp{^*}{c_2}$ always holds
    true.
  \item For every $v\in N_{\underline Q^*}(s)$, the interval $I^*_v$
    either fully contains $[\lpp{^*}{c_2}, \rpp{^*}{c_1}]$, or is
    disjoint from it.  By (\myref{2}), $N_{\underline Q^*}[c_1]\cap
    N_{\underline Q^*}[c_2]\subseteq N_{\underline Q^*}[v]$, and thus
    we can always extend $I^*_v$ to one or both directions to cover
    $[\lpp{^*}{c_2}, \rpp{^*}{c_1}]$ without breaking the model.
  \item For every $v\in IN(F)$, the interval $I^*_v$ is either fully
    contained in, or disjoint from $[\lpp{^*}{c_2}, \rpp{^*}{c_1}]$.
    By (\myref{3}), $N_{\underline Q^*}[v]\subseteq N_{\underline
      Q^*}[c_1]\cap N_{\underline Q^*}[c_2]$, and thus if $I^*_v$
    intersects $[\lpp{^*}{c_2}, \rpp{^*}{c_1}]$ but is not fully
    contained in it, then we can always shrink $I^*_v$ to satisfy this
    assumption.
  \end{enumerate}

  From these assumptions two properties of $\cal I^*$ can be derived.
  \begin{claim}\label{lem:break-segment-2}
    The interval for any module in $ N_Q(s)$ fully contains
    $[\lpp{^*}{c_2}, \rpp{^*}{c_1}]$.
  \end{claim}
  \begin{proof}
    Suppose, for contradiction, that there is a module $v\in N_Q(s)$
    such that $I^*_v$ does not fully contains $[\lpp{^*}{c_2},
    \rpp{^*}{c_1}]$.  Then by assumption (3), $I^*_v$ is disjoint from
    it.  For every module $x\in IN(F)$, if $I^*_x$ intersects $I^*_v$,
    then by assumption (4), $I^*_x$ is disjoint from $[\lpp{^*}{c_2},
    \rpp{^*}{c_1}]$ as well.  In other words, $x$ is nonadjacent to at
    least one of $c_1$ and $c_2$ in $\underline Q^*$.  Since $x$ is
    adjacent to both $c_1,c_2$, and $v$ in $Q$, we need to delete
    either $\{x c_1, x c_2\}$ or $x v$.  In other words, each module
    in $IN(F)$ is incident to at least one deleted edge, which means
    that the total number of deleted edges is at least $|IN(F)| > k$.
    This contradiction proves this claim.
    \renewcommand{\qedsymbol}{$\lrcorner$}
  \end{proof}
  
  Recall that $I^*_s \subset [\lpp{^*}{c_2}, \rpp{^*}{c_1}]$; as a
  result, $N_Q(s) = N_{\underline Q^*}(s)$.
  \begin{claim}\label{lem:break-segment-2}
    There exist $p$ and $q$ with $1\le p\le 4k + 4$ and $d - 4k - 3\le
    q\le d$ such that $\underline Q^*$ retains all edges incident to
    $N_Q(b_p)\setminus N_Q(s)$ and $N_Q(b_q)\setminus N_Q(s)$.
  \end{claim}
  \begin{proof}
    By symmetry, it suffices to verify the existence of $p$.  For each
    $i$ with $1\le i\le 4k+4$, the sets $N_Q(b_i)\setminus N_Q(s)$ and
    $N_Q(b_{i+4})\setminus N_Q(s)$ are nonadjacent; otherwise, there
    is a shorter base than $B$, contradicting our selection of $B$.
    In other words, an edge cannot be incident to both
    $N_Q(b_i)\setminus N_Q(s)$ and $N_Q(b_{i+4})\setminus N_Q(s)$.
    Suppose that for each $i$ with $1\le i\le 4k+4$, some edge
    incident to $N_Q(b_i)\setminus N_Q(s)$ is delete, then the total
    number of deleted edges is at least $(4k + 4)/4> k$, a
    contradiction.
    \renewcommand{\qedsymbol}{$\lrcorner$}
  \end{proof}

  As a result of Claim~\ref{lem:break-segment-2} and assumption (4),
  intervals for both ${b_p}$ and ${b_q}$ are fully contained in
  $[\lpp{^*}{c_2}, \rpp{^*}{c_1}]$.  By assumption (1), and noting
  that paths $P_L$ and $P_R$ remain in $\underline Q^*$, the interval
  for $b_p$ must be to the left of that for $b_q$, i.e.,
  $\rpp{^*}{b_p}< \lpp{^*}{b_q}$.  Let $U_I := \bigcup_{\lint{b_p}\le
    \imath\le \rint{b_q}}K_{\imath}\setminus N_Q(s)$.  The following
  claim characterizes other modules $v$ satisfying $N_{\underline
    Q^*}(v) = N_Q(s)$.
  \begin{claim}\label{lem:break-segment-3}
    Let $Z$ be the subset of modules in $V(Q)\setminus (U_I\cup
    N_Q(s))$ whose intervals intersect $[\rpp{^*}{b_p},
    \lpp{^*}{b_q}]$.  For each $v\in Z$, the interval $I^*_v$ is fully
    contained in $[\rpp{^*}{b_p}, \lpp{^*}{b_q}]$, and $N_{\underline
      Q^*}(Z) = N_Q(s)$.
  \end{claim}
  \begin{proof}
    By \myref{1}, $v\not\in SP(Q)$, and then a neighbor of $b_p$ or
    $b_q$ must belong to $U_I\cup N_Q(s)$; thus, $v$ is adjacent to
    neither of them, and the first assertion follows.  The direction
    of $N_Q(s) \subseteq N_{\underline Q^*}(Z)$ of the second
    assertion follows from Claim~\ref{lem:break-segment-2}.  Suppose,
    for the contradiction of the other direction, that there is a pair
    of modules $v\in Z$ and $u\in U_I$ such that $v\sim u$ in
    $\underline Q^*$.  Such a module $u$ must be adjacent to $b_p$ or
    $b_q$, and remains so in $\underline Q^*$.  However, since $P_L$
    and $P_R$ are retained, $I^*_v$ must lie between the intervals of
    $b_{4k+8}$ and $b_{d-4k-7}$.  This is impossible, and hence the
    second assertion must hole true.
    \renewcommand{\qedsymbol}{$\lrcorner$}
  \end{proof}
  In particular, $s$ belong to the set specified in
  Claim~\ref{lem:break-segment-3}.  Let $\xi$ be the middle point
  between $\rpp{^*}{b_p}$ and $\lpp{^*}{b_q}$, i.e., $\xi =
  (\rpp{^*}{b_p}+ \lpp{^*}{b_q})/2$.  The new interval model ${\cal I}
  = \{I_v: v\in V(Q)\}$ is constructed as follows.  Consider first a
  module $v\not\in U_I$, which is either in $N_Q[s]$ or nonadjacent to
  $b_p$ or $b_q$.  If $I^*_v$ intersects but not fully contains
  $[\rpp{^*}{b_p}, \lpp{^*}{b_q}]$, then $I^*_v$ must be fully
  contained in $[\rpp{^*}{b_p}, \lpp{^*}{b_q}]$, and we set $I_v$ by
  projecting $I^*_v$ from $[\rpp{^*}{b_p}, \lpp{^*}{b_q}]$ to
  $[\xi-\epsilon, \xi+\epsilon]$.  Otherwise, $I^*_v$ either fully
  contains or is disjoint from $[\rpp{^*}{b_p}, \lpp{^*}{b_q}]$, then
  we set $I_v = I^*_v$.  Note that $s$ and $N_Q(s)$ are in the first
  and second categories respectively.  Consider now module $v\in U_I$.
  Note that since $U_I\subset IN(F)\subset V(Q)\setminus SP(Q)$, the
  arc $A_v$ is defined.  If $\lpp{^*}{v}$ (resp., $\rpp{^*}{v}$) is
  not in $[\rpp{^*}{b_p}, \lpp{^*}{b_q}]$, then $\lp{v} = \lpp{^*}{v}$
  (resp., $\rp{v} = \rpp{^*}{v}$).  Otherwise, $\lp{v}$ (resp.,
  $\rp{v}$) is obtained by projecting $\lpp{^c}{v}$ (resp.,
  $\rpp{^c}{v}$) from $[\rpp{^c}{b_p}, \lpp{^c}{b_q}]$ to
  \[
  \begin{cases}
    [\rpp{^*}{b_p}, \xi-\epsilon] & \text{if } \rint{v}<\ell \text{ or }
    v\in X,
    \\
    [\xi+\epsilon, \lpp{^*}{b_q}]) & \text{if } \lint{v}>\ell \text{ or
    } v\in Y.
  \end{cases}
  \]
  See Fig.~\ref{fig:interval-edge-deletion-main} for an illustration
  of this construction.  The interval graph represented by the
  interval model ${\cal I}$ will be the claimed graph $\underline Q$.
  Note that $\rp{b_p} = \rpp{^*}{b_p}$ and $\lp{b_q} = \lpp{^*}{b_q}$.

  \begin{figure}[h]
  \centering  \includegraphics{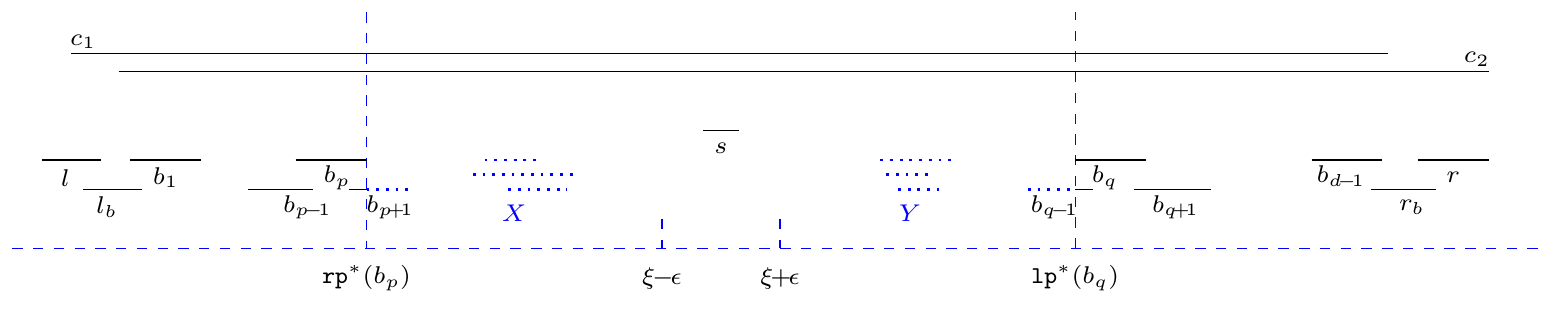} 
  \caption{The interval model $\cal I$ (solid and dotted intervals are
    from $\cal I^*$ and $\cal A$ respectively).}
  \label{fig:interval-edge-deletion-main}
  \end{figure}

  We consider first edges incident to $v\in V(Q)\setminus U_I$.  If
  $I^*_v$ intersects $[\rpp{^*}{b_p}, \lpp{^*}{b_q}]$, then by
  Claim~\ref{lem:break-segment-3} and the definition of projecting
  operation, $v$ has the same neighbors in $\underline Q^*$ and
  $\underline Q$.  Otherwise, $I^*_v$ either fully contains or is
  disjoint from $[\rpp{^*}{b_p}, \lpp{^*}{b_q}]$, and in both cases
  $v$ has the same neighbors in $\underline Q^*$ and $\underline Q$.
  On the adjacency of $U_I\times U_I$, it is easy to verify that
  $\underline Q$ is disjoint from $E_{\ell, X,Y}$.  For another pair
  of vertices, the definition of projecting operation, they are
  adjacent in $\underline Q$ if and only if they are adjacent in $G$.
  Therefore, $\underline Q = \underline Q^* + E(Q[U_I]) - E_{\ell,
    X,Y}$, and it is a subgraph of $Q$.  It follows that $\underline
  G$ is a subgraph of $G$.

  To argue that $\underline G$ is an interval graph as well, we verify
  that every simplicial module of $Q$ remains simplicial in
  $\underline Q$.  It suffices to consider those modules that possibly
  have different neighborhoods in $\underline Q^*$ and $\underline Q$,
  i.e., modules in $U_I$.  Let $v\in U_I$.  If $I_v$ is to the left of
  $\xi-\epsilon$, then $N_{\underline Q}(v)$ comprises $N_Q(s)$ as
  well as those vertices in $N_{\underline Q}(v)\cap U_I$ whose
  intervals are also to the left of $\xi-\epsilon$; the are clearly a
  clique.  Therefore, $\underline G$ is an interval graph.  It remains
  to show that $\underline G$ is maximum, for which it suffices to
  consider $\underline Q^*$ and $\underline Q$.  Recall that $I^*_s$
  must lie between $[\rpp{^*}{b_p}, \lpp{^*}{b_q}]$.  Therefore,
  $b_{4k+4}$ and $b_{d-4k-4}$ are disconnected in the subgraph
  $\underline Q^*- N_{\underline Q^*}(s)$.  In other words, there must
  be some $E_{\imath, X',Y'}$ that are not in $E(\underline G)$, whose
  size is no less than $E_{\ell, X,Y}$ by assumption.  On the other
  hand, all different edges of $\underline Q$ and $\underline Q^*$ are
  incident to $U_I$.  In summary, we have $||\underline G||\ge
  ||\underline G^*||$, and this concludes the proof.
\end{proof}

\subsection{Breaking large caws by edge additions}
Since holes are easy to fill, we may always assume that the graph $G$
is chordal, and the decomposition $\cal K$ is a caterpillar.  Clearly,
$l_b$ and $r_b$ must be in the main path of the caterpillar.  They
together decide a left-right relation with respect to $\cal K$, and
more specifically, separate $\cal K$ into three parts, the left
(modules $v$ with $\rint{v}\le \rint{l_b}$), the right (modules $v$
with $\lint{v}\ge \lint{r_b}$), and the middle (other modules).  As
long as $F$ is unchanged in a \misp\ $\widehat G$, in an interval
model $\cal I$ for $\widehat G$, intervals for $F$ are arranged as
Fig.~\ref{fig:fixed-frame}.  One should note that in $\cal I$,
however, intervals for the right-hand side vertices might intersect
$I_{c_1}\cap I_{c_2}$ or even lie to the left of it.  Or informally,
we might have to bend the right-hand side of $F$ to merge it to the
left (see, e.g., Fig.~\ref{fig:completion-example}), and a symmetric
operation might be applied to the left-hand side as well.  This is the
main source of complication of the disposal of a minimal frame.
\begin{figure*}[h]
  \centering  \includegraphics{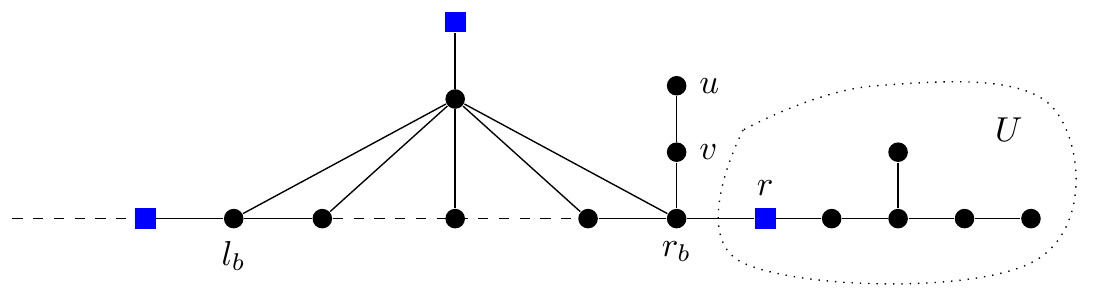} 
  \caption{If $\#u$ and $\#v$ are sufficiently large, then we have to
    add edges to connect $U$ to the ``left'' of $r_b$.}
  \label{fig:completion-example}
\end{figure*}

\begin{lemma}\label{lem:interval-completion-main}
  Let $S_\imath := K_{\imath}\cap K_{\imath+1}\setminus N_Q(s)$ for
  $\rint{l_b}\le\imath<\lint{r_b}$, and let $\ell$ be an index such
  that $\#S_\ell$ is minimum among all of them.  If there is a \misp\
  $\widehat G^*$ of $G$  such that
  \begin{enumerate}
  \item all maximal strong modules of $G$ remains modules of $\widehat
    G^*$, which define the quotient graph $\widehat Q^*$, and
  \item $\widehat Q^*$ does not contain any missing edge of $F$,
  \end{enumerate}
  then there is a \misp\ $\widehat G$ of $G$ that contains all edges
  between $s$ and $S_\ell$.
\end{lemma}
\begin{proof}
  Let $\widehat Q = \widehat Q^* - (\{s\} \times IN(F)) + (\{s\}\times
  S_\ell)$; we show that replacing $\widehat Q^*$ by $\widehat Q$
  gives the claimed \misp\ of $G$.  It is clear there that must be
  some $\ell'$ such that $\{s\}\times S_{\ell'}\subseteq E(\widehat
  Q^*)$.  The selection of $\ell$ then implies $||\widehat Q||\le
  ||\widehat Q^*||$ and $||\widehat G||\le ||\widehat G^*||$.  Also
  clear is that $G \subset \widehat G$.  Therefore, it suffices to
  show that $\widehat G$ is an interval graph.

  Let ${\cal I}^* = \{I^*_v: v\in V(Q)\}$, where $I^*_v =
  [\lpp{^*}{v}, \rpp{^*}{v}]$, be a normalized interval model for
  $\widehat Q^*$ extracted from a normalized interval model for
  $\widehat G^*$.  We now modify it into an interval model for
  $\widehat Q$.  We will use the same values of $\rpp{^*}{l_b}$ and
  $\lpp{^*}{r_b}$ for $\rp{l_b}$ and $\lp{r_b}$.  Let $U_L := \{v:
  \rint{v}\le \rint{l_b}\}$ and $U_R := \{v: \lint{v}\ge \lint{r_b}\}$.
  Observe that $\{U_L, IN(F), U_R, N_Q[s]\}$ partitions $V(Q)$, and each
  of them induces a connected subgraph.  We start from characterizing
  the endpoints of intervals for $U_L$ and $U_R$ that lie between
  $[\rp{l_b}, \lp{r_b}]$.  For this purpose, we make following
  assumptions:
  \begin{enumerate}[(1)]
  \item The interval for ${l_b}$ is to the left of that for ${r_b}$ in
    ${\cal I}^*$, i.e., $\rp{l_b}<\lp{r_b}$.  Since $l_b\not\sim r_b$
    in $\widehat Q^*$, this assumption is clear.
  \item An endpoint $\rho$ of an interval $I^*_v$ is integral if and
    only if one of the following conditions is satisfied:
    \begin{inparaenum}[(i)] 
    \item $v\in IN(F)$; 
    \item $v\in U_L$ and $\rho \le \rp{l_b}$; and
    \item $v\in U_R$ and $\rho \ge \lp{r_b}$.
    \end{inparaenum} 
    These integral endpoints are assumed to be consecutive.
  \end{enumerate}

  By definition, $l_b\in U_L$ and $r_b\in U_R$.  Hence $\lp{r_b}$ and
  $\rp{l_b}$ are both integral, and they separate the real line into
  three parts.  The \emph{thickness} of a nonintegral point $\rho$ is
  defined by
  \[
  \theta_\rho :=
  \begin{cases}
    \#\{v\not\in U_R : \rho \in I^*_v\} & \text{if } \rho < \rp{l_b},
    \\
    \#\{v\not\in U_L : \rho \in I^*_v\} & \text{if } \rho > \lp{r_b},
    \\
    \#\{v\in IN(F) : \rho \in I^*_v\} & \text{otherwise.}
  \end{cases}
  \]
  By assumption (2), all the modules counted in the thickness have
  integral endpoints.  As a result, for every integer $\alpha$, all
  points in $(\alpha, \alpha+1)$ have the same thickness.  The
  following claim characterizes intervals for $U_R$ that reaches
  $\lp{r_b}$ or even $\rp{l_b}$.  If such intervals exist, then
  $\lpp{^*}{U_R} < \lp{r_b}$.
  \begin{claim}
    Let $\alpha := \lpp{^*}{U_R}$.
    \begin{enumerate}
    \item If $\alpha < \rp{l_b}$, then $\theta_\alpha< \theta_\rho$
      holds for any nonintegral point $\rho$ with $\lceil \alpha\rceil
      < \rho <\rp{l_b}$.
    \item If $\rp{l_b}< \alpha < \lp{r_b}$, then $\theta_\alpha<
      \theta_\rho + \#\{v\in U_L: \rho\in I^*_v\}$ holds for any
      nonintegral point $\rho$ with $\max\{\lceil \alpha\rceil,
      \rp{l_b}\} < \rho <\lp{r_b}$.
    \end{enumerate}
  \end{claim}
  \begin{proof}
    By definition, $\alpha$ can be integral only when $\alpha =
    \lp{r_b}$, and hence $\alpha$ is nonintegral in both cases.  The
    right-hand terms of both inequalities are the number of intervals
    for $U_L\cup IN(F)$ that contain $\rho$.  Here we prove the first
    assertion, and the second follows analogously. Suppose, for
    contradiction, that there exists some point $\rho$ with
    $\theta_\alpha\ge \theta_\rho$.  Without loss of generality, we
    may assume that $\rho$ is one of the leftest points having this
    property (there must be a pair of endpoints $\beta_1$ and
    $\beta_2$ such that every point in between has this property).  We
    project all endpoints of intervals for $U_R$ in $[\alpha,\rho]$
    from $[\alpha,\rho]$ to $[\rho-\epsilon, \rho]$, and argue that
    the modified intervals defines an interval supergraph of $Q'$ with
    strictly less edges than $\widehat Q^*$.  Necessarily $\rho$
    avoids any simplicial module of $\widehat Q^*$, and thus every
    module whose interval containing $[\rho-\epsilon, \rho]$ (if and
    only if it contains $\rho$) is a clique.  Therefore, $Q'$ implies
    an interval supergraph $G'$ (having $Q'$ as a quotient graph) of
    $G$.

    We verify first that the interval graph defined by the new model
    is a supergraph of $Q$.  Only intervals for $U_R$ have been
    modified, while adjacency between any pair of modules in $U_R$ is
    retained.  Thus, it suffices to consider a pair of modules $v\in
    V(Q)\setminus U_R$ and $u\in U_R$.  Such a module $v$ is adjacent
    to $U_R$ in $Q$ only if $v\sim r_b$, and thus in the new model, the
    intervals for $u$ and $v$ remains intersecting.  As
    $\theta_\alpha\ge \theta_\rho$, the interval supergraph defined by
    the new model contains strictly less edges than $\widehat Q^*$.
    This contradiction justifies the claim.
    \renewcommand{\qedsymbol}{$\lrcorner$}
  \end{proof}

  Intervals for $U_L$ that reaches $\rp{l_b}$ have a symmetrical
  characterization.  Let $\theta$ be the minimum thickness obtained in
  $[\rp{l_b}, \lp{r_b}]$.  Let $\xi_1$ and $\xi_2$ be the smallest and
  largest integer in $[\rp{l_b}, \lp{r_b}]$ such that
  $\theta_{\xi_1+0.5} = \theta = \theta_{\xi_2+0.5}$.  It is possible
  that $\xi_1 = \xi_2$; in this case, we increase every endpoint that
  is at least $\xi_1+1$ by $1$, and then reset $\xi_2$ to $\xi_1+1$.
  Henceforth we can assume $\xi_1 < \xi_2$.  Note that then the
  integral point $\xi_2$ may not be an endpoint of any interval; this
  fact is immaterial in the argument to follow.  The following claim
  further characterizes the endpoints of these intervals that lie
  between $[\rp{l_b},\lp{r_b}]$.

\begin{figure}[t]
  \centering  \includegraphics{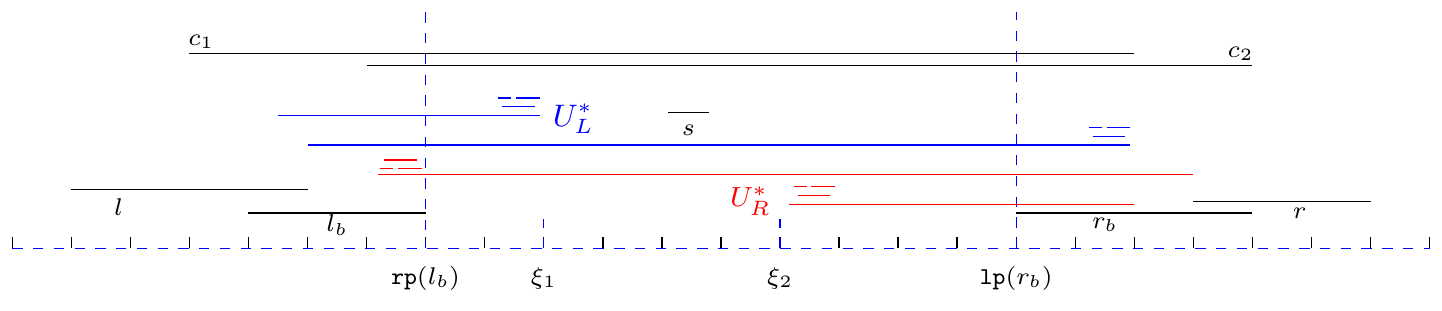} 
  \caption{The interval model used in the proof of
    Lem.~\ref{lem:interval-completion-main}.}
  \label{fig:claim-2}
\end{figure}
  
  \begin{claim}\label{claim:main}
    Let $\alpha := \lpp{^*}{U_R}$ and $\beta := \rpp{^*}{U_L}$.  No interval for
    $v\in U_R$ has an endpoint in $[\lceil \alpha\rceil, \xi_2]$, and no
    interval for $v\in U_L$ has an endpoint in $[\xi_1, \lfloor
    \beta\rfloor]$.
  \end{claim}
  \begin{proof}
    We show by contradiction, that is, we construct a strictly smaller
    supergraph of $G$ than $\widehat G^*$ supposing such an endpoint
    exists.  Let $U^*_L := \{v\in U_L: \rpp{^*}{v} > \rp{l_b}\}$ and
    $U^*_R := \{v\in U_R: \lpp{^*}{v} < \lp{r_b}\}$.  If
    $\alpha\ge\xi_2$ , then $I_v = I^*_v$ for every $v\in U_R$.  In
    the remaining case, $\alpha$ is non-integral.  If
    $\rp{l_b}<\alpha<\xi_2$ then we project all endpoints for $U_R$ in
    $[\alpha, \xi_2+1]$ to $[\xi_2+\epsilon, \xi_2+1]$.  Otherwise,
    $\alpha < \rp{l_b}$ (noting that $\alpha\ne \rp{l_b}$), we take a
    point $\rho$ in $[\alpha, \xi_2]$ such that $\#\{v\in U_R :
    \rho\in I_v\}$ is minimum.  We project all endpoints for $U_R$ in
    $[\alpha,\rho]$ to $[\alpha,\lceil \alpha\rceil]$, and all
    endpoints for $U_R$ in $[\rho, \xi_2+1]$ to $[\xi_2, \xi_2+1]$.
    Note that the operation is based on the thickness, which is
    irrelevant to $U^*_L$ and $U^*_R$.  Therefore, the operations are
    well-defined.  We have given above the details on modifying
    intervals for $U_R$ and $s$.  We apply symmetrical modifications
    to $U_L$.  Meanwhile, we keep intervals for $N_Q(s)$ and $IN(F)$
    unchanged.

    It is easy to verify that this new model defines an interval
    supergraph of $G$; let it be $\widehat G'$.  We now verify its
    size is no more than $||\widehat G^*||$.  Clearly, the adjacency
    between any pair of modules in $U_R$ is not changed.  We calculate
    the three parts, namely, between $U^*_L$ and $V(G)\setminus
    (U^*_L\cup U^*_R)$, between $U^*_R$ and $V(G)\setminus (U^*_L\cup
    U^*_R)$, and between $U^*_L$ and $U^*_R$.

    Consider first edges of $\widehat G'$ between $U^*_R$ and
    $V(G)\setminus (U^*_L\cup U^*_R)$.  It suffices to consider edges
    between $U^*_R$ and $U_0 := \{v: I^*_v \cap (\lfloor
    \alpha\rfloor, \xi_2 + 1)\ne\emptyset\}$.  Let
    \begin{inparaenum}[\itshape i\upshape)]
    \item $U_1$ be the set of modules with intervals in
      $[\alpha,\rho]$ (i.e., its new interval is in $[\alpha, \lceil
      \alpha\rceil]$);
    \item $U_2$ be the set of modules with intervals containing
      $\rho$ (i.e., its new interval contains in $[\lceil \alpha\rceil],
      \xi_2$); and
    \item $U_3$ be the set of modules with intervals in $[\rho,
      \xi_2+1]$ (i.e., its new interval is in $[\xi_2, \xi_2+1]$).
    \end{inparaenum}
    In the new graph, the number of edges between $U^*_R$ and
    $V(G)\setminus (U^*_L\cup U^*_R)$ is:
    \begin{equation}
      \label{eq:8-1}
      \#U_1\cdot \theta_{\alpha} + \#U_2\cdot \#U_0 + \#U_3\cdot \theta      
    \end{equation}
    Consider next edges between $U^*_R$ and $V(G)\setminus (U^*_L\cup
    U^*_R)$ in $\widehat G^*$; to count them, we assign each edge to
    precisely one of its ends:
    \begin{itemize}
    \item If the left endpoint of some $v\in U_0$ is an integer $i$
      with $\alpha<i<\xi_2$, then we assign to $v$ all edges between
      it and $\{u\in U^*_L: i\in I^*_u\}$.
    \item If the right endpoint of some $v\in U_0$ is an integer $i+1$
      with $\alpha<i\le\xi_2$, then we assign to $v$ all edges between
      it and $\{u\in U^*_L: i+1\in I^*_u\}$.
    \item For every $v\in U^*_L$ with $\rpp{^*}{v}\in (i,i+1)$ and
      $\rpp{^*}{v}< \rho$, we assign to it all edges between $v$ and
      $\{u\in U_0: [i,i+1]\subseteq I^*_u\}$.
    \item For every $v\in U^*_L$ with $\lpp{^*}{v}\in (i,i+1)$ and
      $\lpp{^*}{v}> \rho$, we assign to it all edges between $v$ and
      $\{u\in U_0: [i,i+1]\subseteq I^*_u\}$.
    \end{itemize}
    The total number of edges we have assigned is at least
    \eqref{eq:8-1}.  Clearly, edges assigned to different modules are
    disjoint, and thus the number of edges between $U^*_R$ and
    $V(G)\setminus (U^*_L\cup U^*_R)$ in $\widehat G^*$ is no less
    than $\widehat G'$.  The equality holds only when no interval for
    $v\in U_R$ has an endpoint in $[\lceil \alpha\rceil, \xi_2]$.

    A symmetric argument applies to edges between $U^*_L$ and
    $V(G)\setminus (U^*_L\cup U^*_R)$.  It remains to count edges
    between $U^*_L$ and $U^*_R$.  In $\widehat G'$, the number is
    precisely $\sum_{v\in U^*_L}\#v \#(N_{\widehat G^*}(v)\cap
    U^*_R)$.  By the selection of $\rho$ and noting that $\xi_1 <
    \xi_1+1\le \xi_2$, the number in $\widehat G$ is no less than
    this.  Summing the three parts up, we conclude that $||\widehat
    G'||< ||\widehat G^*||$ when either of the claimed conditions is
    not satisfied.  This contradiction concludes the claim.
    \renewcommand{\qedsymbol}{$\lrcorner$}
  \end{proof}

  Claim~\ref{claim:main} implies that the point $\rho\in
  [\rp{l_b},\lp{r_b}]$ where $\#\{v: \rho\in I^*_v\}$ is minimized
  always lies between $\xi_1$ and $\xi_2+1$, and it is completely
  decided by intervals for $IN(F)$.  Adding edges among $IN(F)$ will
  not decrease the thickness of any point.  Therefore, the fact that
  $\widehat G^*$ is minimum implies $\widehat G^*[IN(F)] = G[IN(F)]$.
  By the selection of $S_\ell$, all nonintegral points in $[\alpha,
  \beta] = \bigcap_{v\in S_\ell} I^*_v$ have the minimum thickness;
  they are contained in $(\xi_1, \xi_2+1)$.  Note that all intervals
  containing $[\alpha, \beta]$ are non-simplicial, hence cliques.  We
  can pick any $\rho$ from $(\alpha, \beta)$ and set $I_s$ to $[\rho -
  \epsilon, \rho+\epsilon]$.  Clearly, the model we have obtained
  represents $\widehat Q$, and then $\widehat G$ is an interval graph.
  Therefore, there is a \misp\ satisfying the claimed condition.
\end{proof}

 \section{The algorithms}\label{sec:algs}
A very sketchy outline (Fig.~\ref{fig:alg-interval-deletion}) for our
algorithms has been presented at the end of Section~\ref{sec:outline}.
Now with all the details developed in preceding sections, we are
finally ready to flesh it out.

\subsection{The algorithm for {\sc interval vertex deletion}}
In any circular-arc model, the arcs corresponding to vertices of a
hole have to cover the entire circle.  The converse holds true as well
if the model is normal and Helly: a minimal set of arcs that cover the
circle has size at least four.  Therefore, we have the following
observation on non-chordal graphs.
\begin{proposition}[Lin et al.~\cite{lin-13-nhcag-and-subclasses}]
  \label{lem:normal-and-helly}
  If a normal Helly circular-arc graph $G$ is not chordal, then every
  circular-arc model of $G$ is normal and Helly.
\end{proposition}

We say that a set $V_-$ of vertices is a \emph{hole cover} of $G$ if
$G - V_-$ is chordal.  Recall that we are only concerned with hole
covers of normal Helly circular-arc graphs, where the deletion of a
hole cover actually leaves an interval graph.  For any particular
point in a normal and Helly circular-arc model, the set of vertices
whose arcs contain this point makes a hole cover; more importantly,
any hole cover contains such a set of vertices.  According to
Prop.~\ref{lem:normal-and-helly}, the model given by
Prop.~\ref{lem:clique-decomposition-interval-model} is normal and
Helly, and thus we can derive the following proposition, which bears
striking resemblance with the minimal vertex separators of interval
graphs.
\begin{proposition}\label{lem:hole-cover+}
  Let $\cal K$ be the hole decomposition of a normal Helly
  circular-arc graph $G$.  For any inclusion-wise minimal hole cover
  $V_-$ of $G$, there is a bag $K_\ell$ of $\cal K$ such that $V_- =
  K_\ell\cap K_{\ell + 1}$.
\end{proposition}
In the special case when the graph is already chordal, the only
minimal hole cover is empty, which can be chosen as the intersection
of the two end bags.  But in general, the bag $K_\ell$ in
Prop.~\ref{lem:hole-cover+} is not necessarily unique.  Since there
are at most $|G|$ bags in $\cal K$, a hole cover of the minimum size
can be obtained in linear time.  We would like to point out that
Prop.~\ref{lem:hole-cover+} remains true in a graph with an olive-ring
decomposition; see \cite{cao-14-interval-deletion}.

The algorithm presented in Fig.~\ref{fig:alg-interval-vertex-deletion}
looks for a \miib\ for a ``YES'' instance, i.e., when the input graph
$G$ has an induced interval subgraph on $|G| - k$ vertices.
Otherwise, it terminates by returning ``NO.''
\begin{figure*}[t]
\setbox4=\vbox{\hsize28pc \noindent\strut
\begin{quote}
  \vspace*{-5mm} \small

  {\bf Algorithm interval-vertex-deletion($G,k$)}
  \\
  {\sc input}: a graph $G$ and an integer $k$.
  \\
  {\sc output}: a set $V_-$ of at most $k$ vertices s.t. $ G -
  V_-$ is a \miib\ of $G$; or ``NO.''

  0 \hspace*{2ex} {\bf if} $k<0$ {\bf then return} ``NO''; {\bf if}
  $G$ is an interval graph {\bf then return} $\emptyset$; $V_- =
  \emptyset$;
  \\
  1 \hspace*{2ex} {\bf if} the quotient graph $Q$ defined by maximal
  strong modules of $G$ is a clique {\bf then}
  \\
  1.1 \hspace*{3ex} {\bf if} two maximal strong modules are not
  cliques {\bf then}
  \\
  \hspace*{10ex} find a $4$-hole and {\bf branch} on deleting one
  vertex from it;
  \\
  \hspace*{6ex} $\setminus\!\!\setminus$ {\em Since $G$ is not an
    interval graph, at least one module is nontrivial.}
  \\
  1.2 \hspace*{3ex} {\bf return interval-vertex-deletion}($G[M], k$),
  where $M$ is the only non-clique module;
  \\
  2 \hspace*{2ex} {\bf if} $Q$ is edgeless {\bf then}
  \\
  2.1 \hspace*{3ex} {\bf for each} component $M$ of $G$ {\bf do}
  \\
  \hspace*{10ex} $V_M = $ {\bf interval-vertex-deletion}($G[M], k$);
  \\
  \hspace*{10ex} {\bf if} $V_M = $ ``NO'' {\bf then return} ``NO'';
  \\
  \hspace*{10ex} $V_- = V_-\cup V_M$; $\quad$ $k = k - |V_M|$;
  \\
  2.2 \hspace*{3ex} {\bf return} $V_-$;
  \\
  3 \hspace*{2ex} {\bf call decompose}($Q$);
  \\
  4 \hspace*{2ex} {\bf if} a short hole or small caw is found {\bf
    then branch} on deleting one vertex from it;
  \\
  $\setminus\!\!\setminus$ {\em We have an olive-ring decomposition
    $\cal K$.}
  \\
  5 \hspace*{2ex} {\bf if } a module $M$ represented by $v\in
  V(Q)\setminus SI(Q)$ is not a clique {\bf then}
  \\
  \hspace*{7ex} find a $4$-hole and {\bf branch} on deleting one
  vertex of it;
  \\
  6 \hspace*{2ex} {\bf for each} module $M$ represented by $v\in
  SI(Q)$ {\bf do}
  \\
  \hspace*{7ex} $V_M = $ {\bf interval-vertex-deletion}($G[M], k$);
  \\
  \hspace*{7ex} {\bf if} $V_M = $ ``NO'' {\bf then return} ``NO'';
  \\
  \hspace*{7ex} $V_- = V_-\cup V_M$; $\quad$ $k = k - |V_M|$;
  \\
  7 \hspace*{2ex} {\bf if } $\cal K$ is not a hole {\bf then}
  \comment{here vertices are in $Q$.}
  \\
  7.1 \hspace*{3ex} {\bf call} Lem.~\ref{lem:find-minimal-caw} to find
  a small caw or a minimal frame $F$;
  \\
  7.2 \hspace*{3ex} {\bf if} a small caw is found {\bf then branch} on
  deleting one vertex from it;
  \\
  7.3 \hspace*{3ex} {\bf call} {\bf branch} on deleting either one
  module from $F$ or $S_\ell$ specified in
  Lem.~\ref{lem:interval-deletion-main};
  \\
  8 \hspace*{2ex} {\bf call} Prop.~\ref{lem:hole-cover+}.

\end{quote} \vspace*{-6mm} \strut} $$\boxit{\box4}$$
\vspace*{-9mm}
\caption{Algorithm for \textsc{interval vertex deletion}.}
\label{fig:alg-interval-vertex-deletion}
\end{figure*}

\begin{theorem}\label{thm:alg-interval-deletion}
  Algorithm {\bf interval-vertex-deletion} solves \textsc{interval
    vertex deletion} in $O(8^k\cdot ||G||)$ time.
\end{theorem}
\begin{proof}
  Let us verify first its correctness.  Step 0 gives two trivial exit
  conditions.  Steps 1 and 2 take care of the case where the quotient
  graph $Q$ is not prime.  The algorithm enters step 1 if $Q$ is a
  clique.  Step 1.1 is straightforward, and if it is not the case,
  then there exists a unique \msm\ $M$ that is not trivial (because
  $G$ is not an interval graph).  In other words, all vertices in
  $V(G)\setminus M$ are universal, and hence it suffices to solve the
  problem on the subgraph $G[M]$ with the same parameter $k$ (step
  1.2).  The algorithm enters step 2 if $Q$ is edgeless; its
  correctness is clear.  Now that $Q$ is prime, step~3 applies
  algorithm decompose to it.  Steps 4 and 5 are straightforward.  Step
  6 finds a \miib\ of $G[M]$ for each simplicial module in $Q$; its
  correctness follows from Thm.~\ref{thm:modules-induced-subgraph+}.
  Note that this step does not change the quotient graph $Q$, and in
  the rest of the algorithm we only need to consider $Q$, i.e., a
  maximal strong module $M$ either is deleted or remains intact.  The
  algorithm enters step 7 when the decomposition $\cal K$ is not a
  hole.  Step 7.1 first calls Lem.~\ref{lem:find-minimal-caw} to find
  a small caw or minimal frame $F$, step 7.2 is trivial, and step 7.3
  is justified by Lem.~\ref{lem:interval-deletion-main}.  Step 8
  follows from Prop.~\ref{lem:hole-cover+}.

  Each branching step has at most $8$ sub-instances, in which the
  parameter is decreased by at least one; the total number of
  sub-instances is thus upper bounded by $8^{k}$.  Each sub-instance
  can be produced in $O(||G||)$ time.  Finally, step~8 takes
  $O(||G||)$ time for each sub-instance.  Therefore, the algorithm
  runs in $O(8^k\cdot ||G||)$ time.
\end{proof}

We now show how to adapt algorithm interval-vertex-deletion into an
approximation algorithm for the minimum interval vertex deletion
problem as follows.  The main changes are made in the branching steps,
where we need to make a choice among at most eight sets of vertices.
Whenever a small caw or a short hole is found, instead of branching
into different directions, we simply delete the whole subgraph.  For a
large caw in the quotient graph $Q$, step 7.3 of algorithm
interval-vertex-deletion finds seven or eight vertex sets and branches
on deleting one of them.  Our approximation algorithm will instead
picks from each set an arbitrary vertex to delete.  By
Lem.~\ref{lem:interval-deletion-main}, at least one of these deleted
vertices is in some minimum modification.  Therefore, the number of
minimum modifications of the remaining graph decreases by at least one
(here we are using the hereditary property).  This verifies that the
performance ratio is 8.  Observing that we can delete at most $|G|$
vertices, and each deletion operation can be done in linear time, the
approximation algorithm can be implemented in $O(|G|\cdot ||G||)$
time.
\begin{theorem}
  There is an $O(|G|\cdot ||G||)$-time approximation algorithm for the
  minimum interval vertex deletion problem with performance ratio 8.
\end{theorem}

\subsection{The algorithm for {\sc interval edge deletion}}
Again, we start from the disposal of holes in a normal Helly
circular-arc graph $G$.  We say that a set $E_-$ of edges is an
\emph{edge hole cover} of $G$ if $G - E_-$ is chordal.  Let $\cal K$
be a clique hole decomposition of $G$.  For each pair of adjacent
vertices $v\in K_\ell$ and $u\not\in K_\ell$, we can define a
left-right relation with respect to the bag $K_\ell$.  We write
$v\rightarrow u$ if $\lint{u} \in [\ell+1,\rint{v}]$, and $v\leftarrow
u$ otherwise (i.e., $\rint{u} \in [\lint{v}, \ell - 1,]$).  It should
be noted here that $[\lint{u},\rint{u}]$ might be a subset of
$[\lint{v}, \rint{v}]$, and thus they do not have a left-right
relation in general.  Any (possibly trivial) partition ($X,Y$) of
$K_\ell$ defines the following edge set
\begin{equation}
  \label{eq:1}
  E_- =  (X\times Y) \cup \{v u: v\in X,\, u\not\in K_\ell, v\rightarrow u\} \cup
  \{v u: v\in Y,\, u\not\in K_\ell, v\leftarrow u\}.
\end{equation}
\begin{proposition}\label{lem:edge-hole-cover-1}
  For any bag $K_\ell$ of $\cal K$ and any partition ($X,Y$) of
  $K_\ell$, the set $E_-$ given by \eqref{eq:1} is an edge hole cover
  of $G$.
\end{proposition}
\begin{proof}
  We build an interval model for $G - E_-$.
  Without loss of generality, we may circularly renumber the bags such
  that the bag is $K_0$, and then we take the circular-arc model $\cal
  A$ given by Prop.~\ref{lem:clique-decomposition-arc-model}.  Note
  that an arc $A_v$ contains the point $0$ if and only if $v\in K_0$.
  Let $c$ denote the length of the circle in the model.  We use the
  following intervals:
  \[
  \label{eq:arcs}
  I_v := 
  \begin{cases}
    [\lp{v}, c] & \text{if } v\in X,
    \\
    [0, \rp{v} ] & \text{if } v\in Y,
    \\
    [\lp{v}, \rp{v} ] & \text{otherwise } (v\not\in K_0).
  \end{cases}
  \]
  It is easy to verify this set of intervals represents $G - E_-$.
\end{proof}
More important is the other direction, which, however, is a
little bit of more technical.  Recall that we can compute the length
of the shortest hole in linear time
(Lem.~\ref{lem:find-shortest-hole}).
\begin{lemma}\label{lem:edge-hole-cover+}
  Let $G$ be a normal Helly circular-arc graph that does not contain a
  hole of length less than $3 k + 6$.  Any minimal edge hole cover
  $E_-$ of size at most $k$ is defined by \eqref{eq:1} with some bag
  $K_\ell$ and a partition ($X,Y$) of $K_\ell$.
\end{lemma}
\begin{proof}
  Let $\cal A$ be the circular-arc model for $G$ given by
  Prop.~\ref{lem:clique-decomposition-arc-model}, which must be normal
  and Helly (Prop.~\ref{lem:normal-and-helly}).  It suffices to show
  that any minimal hole cover $E_-$ of size at most $k$ contains some
  set of edges given by \eqref{eq:1}, and then the equality follows
  from the minimality of $E_-$ and Prop.~\ref{lem:edge-hole-cover-1}.
  The case $k=1$ can be easily checked; hence in the following we may
  assume that $k\ge 2$.

  Suppose, for contradiction, that $E_-$ does not contain for any
  partition ($X,Y$) of any bag the edge set defined by \eqref{eq:1}.
  Then we find a hole of $G' = G - E_-$ as follows.  Let $H$ be a
  shortest hole of $G$; denote its vertices by $v_1,\cdots, v_{|H|}$,
  where the indices should be understood as modulo $|H|$.  Note that
  for any $1\le i\le |H|$, the neighborhoods $N[v_i]$ and $N[v_{i+3}]$
  are disjoint, as otherwise there exists a shorter hole than $H$.
  Since $|E_-|\le k$, there are at least two vertices $v_i$ and $v_j$
  with $i+3\le j$ such that $E_-$ contains no edges induced by
  $N[v_i]$ or $N[v_j]$.  We may assume that the arcs for $v_i$ and
  $v_j$ are not contained in other arcs; otherwise, we can simply
  replace $v_i$ or $v_j$ by it.  Since $H$ is the shortest, after the
  replacement the new cycle is still a hole.  Note that $G'$ may or
  may not be a circular-arc graph, and every arc we mentioned below is
  referred to the arc in $\cal A$, the circular-arc model for $G$.

  Starting from $v_i$ as $x$, the next vertex is chosen from
  $N_{G'}(x)$ such that its arc is the rightmost; we continue till the
  first new arc reaches \lp{v_j}.  This gives an induced
  \stpath{v_i}{v_j} $P_{ij}$ in $G'$ with all arcs in $[\lp{v_i},
  \rp{v_j}]$.  Likewise, we can find another induced \stpath{v_j}{v_i}
  $P_{j i}$ in $G'$ with all arcs in $[\lp{v_j}, \lp{v_i}]$.
  Combining these two paths yields a cycle, which, however, might not
  be induced.  In particular, the two neighbors of $v_i$ (or $v_j$) in
  the two paths might be adjacent in $G'$.  In this case, we omit
  $v_i$ (or $v_j$).  Consequentially, we obtain an induced cycle of
  $G'$ whose arcs cover the entire circle in the model $\cal A$, which
  means that it contains at least $|H|$ vertices and is a hole of
  $G'$.  The existence of a hole in $G'$ contradicts that $E_-$ is a
  edge hole cover and concludes the proof.
\end{proof}

Prop.~\ref{lem:hole-cover+} and Lem.~\ref{lem:edge-hole-cover+} are
the technical versions of Lem.~\ref{lem:hole-cover}.  Similar as
Prop.~\ref{lem:hole-cover+}, Lem.~\ref{lem:edge-hole-cover+} can also
be adapted for a graph with an olive-ring decomposition, but the edge
variation is more complicated, and slight modifications are needed to
make it work.

The algorithm presented in Fig.~\ref{fig:alg-interval-edge-deletion}
looks for a \misb\ for a ``YES'' instance, i.e., when the input graph
$G$ has a spanning interval subgraph on $||G|| - k$ edges.  Its first
six steps are similar as algorithm interval-vertex-deletion
(Fig.~\ref{fig:alg-interval-vertex-deletion}), and the main differences
appear in steps 7 and 8.  If the algorithm enters step 7.4, then it
branches into $O(k)$ sub-instances.  Otherwise, to apply
Lem.~\ref{lem:interval-edge-deletion-main}, we need to find the bag
$K_\ell$ and a nontrivial partition ($X,Y$) of it.  Note that if a bag
contains more than $k+1$ vertices, then the edge set $E_{\ell,X,Y}$
decided by any nontrivial partition will be larger than $k$; hence we
only need to check those bags containing at most $k+1$ vertices.  Such
a bag has at most $2^{k+1}$ partitions, and thus we can find a minimum
one in $O(2^{k+1}\cdot |G|)$ time, and then we again branch into
$O(k)$ sub-instances.  Likewise, Lem.~\ref{lem:edge-hole-cover+} (step
9) only needs to check bags of size $k+1$, and hence can be done in
$O(2^{k+1}\cdot |G|)$ time.  The dominating steps are thus 7.4, 7.5,
8, and 9, and the runtime is $k^{O(k)} \cdot ||G||$.  This concludes
the following theorem.
\begin{figure*}[t]
\setbox4=\vbox{\hsize28pc \noindent\strut
\begin{quote}
  \vspace*{-5mm} \small

  {\bf Algorithm interval-edge-deletion($G,k$)}
  \\
  {\sc input}: a graph $G$ and an integer $k$.
  \\
  {\sc output}: a set $E_-$ of at most $k$ edges s.t. $ G - E_-$
  is a \misb\ of $G$; or ``NO.''

  0 \hspace*{2ex} {\bf if} $k<0$ {\bf then return} ``NO''; {\bf if}
  $G$ is an interval graph {\bf then return} $\emptyset$; $E_- =
  \emptyset$;
  \\
  1 \hspace*{2ex} {\bf if} the quotient graph $Q$ defined by maximal
  strong modules of $G$ is a clique {\bf then}
  \\
  1.1 \hspace*{3ex} {\bf if} two maximal strong modules are not
  cliques {\bf then}
  \\
  \hspace*{10ex} find a $4$-hole and {\bf branch} on deleting one edge
  from it;
  \\
  \hspace*{6ex} $\setminus\!\!\setminus$ {\em Since $G$ is not an
    interval graph, at least one module is nontrivial.}
  \\
  1.2 \hspace*{3ex} {\bf return interval-edge-deletion}($G[M], k$),
  where $M$ is the only non-clique module;
  \\
  2 \hspace*{2ex} {\bf if} $Q$ is edgeless {\bf then}
  \\
  2.1 \hspace*{3ex} {\bf for each} component $M$ of $Q$ {\bf do}
  \\
  \hspace*{10ex} $E_M = $ {\bf interval-edge-deletion}($G[M], k$);
  \\
  \hspace*{10ex} {\bf if} $E_M = $ ``NO'' {\bf then return} ``NO'';
  \\
  \hspace*{10ex} $E_- = E_-\cup E_M$; $\quad$ $k = k - |E_M|$;
  \\
  2.3 \hspace*{3ex} {\bf return} $E_-$;
  \\
  3 \hspace*{2ex} {\bf call decompose}($Q$);
  \\
  4 \hspace*{2ex} {\bf if } a short hole or small caw is found {\bf
    then branch} on deleting one edge from it;
  \\
  $\setminus\!\!\setminus$ Hereafter we have an olive-ring
  decomposition $\cal K$.
  \\
  5 \hspace*{2ex} {\bf if } a module $M$ represented by $v\in
  V(Q)\setminus SI(Q)$ is not a clique {\bf then}
  \\
  \hspace*{7ex} find a $4$-hole and {\bf branch} on deleting one edge
  from it;
  \\
  6 \hspace*{2ex} {\bf for each} module $M$ represented by $v\in
  SI(Q)$ {\bf do}
  \\
  \hspace*{7ex} $E_M = $ {\bf interval-edge-deletion}($G[M], k$);
  \\
  \hspace*{7ex} {\bf if} $E_M = $ ``NO'' {\bf then return} ``NO'';
  \\
  \hspace*{7ex} $E_- = E_-\cup E_M$; $\quad$ $k = k - |E_M|$;
  \\
  7 \hspace*{2ex} {\bf if } $\cal K$ is not a hole {\bf then}
  \\
  7.1 \hspace*{3ex} use Lem.~\ref{lem:find-minimal-caw} to find a
  small caw or a minimal frame $F$;
  \\
  7.2 \hspace*{3ex} {\bf if} a small caw is found {\bf then branch} on
  deleting one edge from it;
  \\
  7.3 \hspace*{3ex} find a shortest \stpath{l_b}{r_b} path $B$ through
  $IN(F)$;
  \\
  7.4 \hspace*{3ex} {\bf if} $|B|\le 4k+16$ {\bf then branch} on
  deleting one edge from $F$ or $B$;
  \\
  7.5 \hspace*{3ex} {\bf else branch} using
  Lem.~\ref{lem:interval-edge-deletion-main};
  \\
  8 \hspace*{2ex} {\bf if} the shortest hole of $Q$ has length $\le 3k
  + 6$ {\bf then} branch on deleting one edge from it;
  \\
  9 \hspace*{2ex} {\bf else call} Lem.~\ref{lem:edge-hole-cover+}.

\end{quote} \vspace*{-6mm} \strut} $$\boxit{\box4}$$
\vspace*{-9mm}
\caption{Algorithm for \textsc{interval edge deletion}.}
\label{fig:alg-interval-edge-deletion}
\end{figure*}
\begin{theorem}\label{thm:alg-interval-edge-deletion}
  Algorithm {\bf interval-edge-deletion} solves {\sc interval edge
    deletion} in $k^{O(k)} \cdot ||G||$ time.
\end{theorem}

\subsection{The algorithm for {\sc interval completion}}
It is now well known that holes can be easily filled in.  On the one
hand, the existence of a hole of more than $k+3$ vertices will
immediately imply ``NO'' for the completion problem.  On the other
hand, a hole of a bounded length has only a bounded number of minimal
ways to fill
\cite{kaplan-99-chordal-completion,cai-96-hereditary-graph-modification},
of which an interval supergraph must contain one.
\begin{lemma}[\cite{cai-96-hereditary-graph-modification}]
\label{lem:fill-holes}
  A minimal set of edges that fills a hole $H$ has size $|H| - 3$, and
  the number of such sets is upper bounded by $4^{|H|-3}$.  Moreover,
  they can be enumerated in $O(4^{|H|-3}\cdot ||G||)$ time.
\end{lemma}

In a small caw, the number of edges is $6, 9, 12, 6$, or $10$, and the
number of missing edges (i.e., edges in the complement graph) is $9,
6, 9, 15$, or $11$, respectively.  To fix a small caw by adding edges,
A simpleminded implementation will branch into 15 directions.  We
observe that we do not need to try all of them, and $6$ directions
will suffice.
\begin{proposition}\label{lem:6-enough-completion}
  For each small caw in a graph $G$, there is a set of at most $6$
  edges such that any interval supergraph $\widehat G$ of $G$ contains
  at least one of them.
\end{proposition}
\begin{figure*}[h] 
  \centering
  \begin{subfigure}[b]{0.18\textwidth}
    \centering    \includegraphics{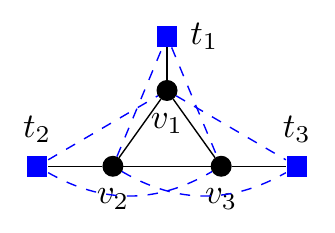} 
  \end{subfigure}%  
  \,
  \begin{subfigure}[b]{0.18\textwidth}
    \centering    \includegraphics{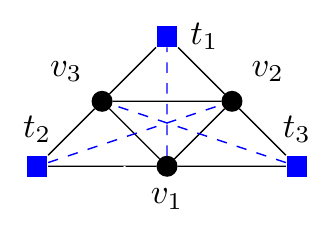} 
  \end{subfigure}%  
  \,
  \begin{subfigure}[b]{0.18\textwidth}
    \centering    \includegraphics{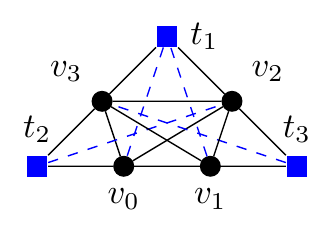} 
  \end{subfigure}
  \,
  \begin{subfigure}[b]{0.18\textwidth}
    \centering    \includegraphics{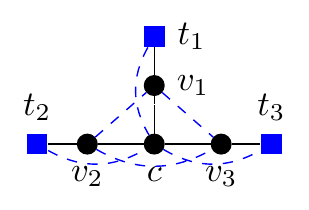} 
  \end{subfigure}%  
  \,
  \begin{subfigure}[b]{0.18\textwidth}
    \centering    \includegraphics{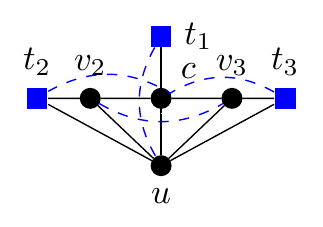} 
  \end{subfigure}%  
  \caption{An interval supergraph must contain some dashed edge.}
  \label{fig:fill-edges-labeled}
\end{figure*}
\begin{proof}
  These edges are depicted as dashed in
  Fig.~\ref{fig:fill-edges-labeled}.  We argue the correctness by
  considering the intervals for these vertices in the object interval
  supergraph $\widehat G$.  Note that for a pair of vertices $v,u$
  such that $I_v\subset I_u$, all neighbors of $v$ must be adjacent to
  $u$ in $\widehat G$; in other words, we need to add edges to connect
  every vertex in $N_G[v]\setminus N_G[u]$ to $u$.

  Net and sun.  The three non-terminal vertices $v_1,v_2$, and $v_3$
  make a triangle.  If there exist $i,j\in \{1,2,3\}$ such that
  $I_{v_i}\subseteq I_{v_j}$, then we must add dashed edges to connect
  the terminal neighbor(s) of $v_i$ to $v_j$.  We assume then
  otherwise, and let $I_{v_i}$ be the middle one among $I_{v_1}$,
  $I_{v_2}$, and $I_{v_3}$.  In the net, we must add a dashed edge to
  connect $t_i$ (the terminal neighbor of $v_i$) to at least one of
  $\{v_1,v_2,v_3\}\setminus \{v_i\}$.  In the sun, we need to add the
  dashed edge $t_i v_i$.

  Rising sun.  If any interval of a non-terminal vertex in
  $\{v_0,v_1,v_2, v_3\}$ contains $I_{v_2}$ or $I_{v_3}$ (besides
  itself), then we need to add one of the dashed edges.  This is also
  the case when $I_{v_0}\subseteq I_{v_2}$ or $I_{v_1}\subseteq
  I_{v_3}$.  Otherwise, $I_{v_0}\cup I_{v_1}$ must properly contain
  $I_{v_2}\cap I_{v_3}$, and then we need to add a dashed edge to
  connect $t_1$ to at least one of $v_0$ and $v_1$.

  Long claw.  The non-terminal vertices make a claw.  If no edge is
  added between them, then there must be $i\in \{1,2,3\}$ such that
  $I_{v_i}\subset I_c$, so we need to add the dashed edge $t_i c$.

  Whipping top.  If we do not add any of $\{v_2 v_3 t_2 c, c t_3\}$,
  then $I_c$ must be properly contained in $I_u$, so we need to add
  the dashed edge $t_1 u$.
\end{proof}

With a similar observation as Prop.~\ref{lem:6-enough-completion}, we
can decrease the number of cases in disposing of a large caw as well.
The frame of a $\dag$ (resp., $\ddag$) has $6$ vertices and $5$ edges
(resp., $7$ vertices and $11$ edges).  Therefore, a frame $F$ always
has $10$ missing edges.  It is worth mentioning that with our general
notation, these $10$ missing edges have exactly the same labels for
\dag s and \ddag s.  As a result, a direct implementation of
Lem.~\ref{lem:interval-completion-main} will branch into 11
sub-instances.  For a frame $F = $ ($s:c_1,c_2:l, l_b; r_b, r$), let
us denote by $E_+(F)$ the set of edges $\{l c_2, c_1 r, l_b r_b, s
l_b, s r_b\}$.

\begin{proposition}\label{lem:fill-long-aw}
  Let $F$ be a frame ($s:c_1,c_2:l, l_b; r_b, r$) of a large caw in a
  graph $G$.  If a \misp\ $\widehat G$ of $G$ contains none of
  $E_+(F)$, then $F$ remains unchanged in $\widehat G$.
\end{proposition}
\begin{proof}
  We need to show that $\widehat G$ cannot contain any of the other
  missing edges of the frame $F$ not in $E_+(F)$, i.e., $\{s l, s r, l
  r, l r_b, l_b r\}$.  Since $\widehat G$ contains neither $s l_b$ nor
  $l c_2$, it cannot contain edge $s l$; otherwise $s c l_b l s$ is a
  $4$-hole.  A symmetric argument excludes $s r$.  Likewise, the
  nonexistence of $l_b r_b$ and $l c_2$ excludes the edge $l r_b$; and
  a symmetric argument excludes $l_b r$.  If $l r$ is an edge of
  $\widehat G$, then there is a $5$-hole $c l_b l r r_b c$ or $4$-hole
  $l c_1 c_2 r l$ depending on whether $c_1 = c_2$ or not.  These
  contradictions conclude the proposition.
\end{proof}

\begin{figure*}[t]
\setbox4=\vbox{\hsize28pc \noindent\strut
\begin{quote}
  \vspace*{-5mm} \small

  {\bf Algorithm interval-completion($G,k$)}
  \\
  {\sc input}: a graph $G$ and an integer $k$.
  \\
  {\sc output}: a set $E_+$ of at most $k$ edges s.t. $G + E_+$
  is a \misp\ of $G$; or ``NO.''

  0 \hspace*{2ex} {\bf if} $k<0$ {\bf then return} ``NO''; {\bf if}
  $G$ is an interval graph {\bf then return} $\emptyset$; $E_+ =
  \emptyset$;
  \\
  1 \hspace*{2ex} {\bf if} $G$ contains a hole $H$ {\bf then branch}
  using Lem.~\ref{lem:fill-holes};
  \\
  2 \hspace*{2ex} {\bf if} the quotient graph $Q$ defined by maximal
  strong modules of $G$ is a clique {\bf then}
  \\
  \hspace*{7ex} {\bf return interval-edge-deletion}($G[M], k$), where
  $M$ is the only non-clique module;
  \\
  3 \hspace*{2ex} {\bf for each} module $M$ represented by $v\in
  SI(Q)$ {\bf do}
  \\
  \hspace*{7ex} $E_M=$ {\bf interval-completion}($G[M], k$);
  \\
  \hspace*{7ex} {\bf if} $E_M$ is ``NO'' {\bf then return} ``NO'';
  \\
  \hspace*{7ex} $E_+ = E_+\cup E_M$; $k = k - |E_M|$;
  \\
  4 \hspace*{2ex} {\bf if} $Q$ is an interval graph {\bf then return}
  $E_+$;
  \\
  5 \hspace*{2ex} {\bf call decompose}($Q$);
  \\
  6 \hspace*{2ex} {\bf if } a small caw is found {\bf then branch}
  using Lem.~\ref{lem:6-enough-completion};
  \\
  7 \hspace*{2ex} use Lem.~\ref{lem:find-minimal-caw} to find a small
  caw or a minimal frame $F$;
  \\
  8 \hspace*{2ex} {\bf if } a small caw is found {\bf then branch}
  using Lem.~\ref{lem:6-enough-completion};
  \\
  9 \hspace*{2ex} {\bf branch} on adding one edge in $E_+(F)$ or the
  edge set specified in Lem.~\ref{lem:interval-completion-main}.

\end{quote} \vspace*{-6mm} \strut} $$\boxit{\box4}$$
\vspace*{-9mm}
\caption{Algorithm for \textsc{interval completion}.}
\label{fig:alg-interval-completion}
\end{figure*}

Therefore, for a large caw, we only need to consider six cases.  We
are now ready to present the parameterized algorithm for the
\textsc{interval completion} problem, which is given in
Fig.~\ref{fig:alg-interval-completion}.
\begin{theorem}\label{thm:alg-interval-completion}
  Algorithm {\bf interval-completion} solves {\sc interval completion}
  in   $O(6^k\cdot ||G||)$ time.
\end{theorem}
\begin{proof}
  Let us verify first its correctness.  Step 0 gives two trivial exit
  conditions.  Step~1 is ensured by Lem.~\ref{lem:fill-holes}, after
  which $G$ is chordal.  Since it is chordal but not an interval
  graph, all but one maximal strong module is trivial, and hence it
  suffices to solve the only non-trivial module $M$; this justifies
  step 2.  The correctness of step 3 is ensured by
  Thm.~\ref{thm:modules-supergraph+}.  In the case where $Q$ is an
  interval graph, the problem has been already solved before step 4.
  Now that $Q$ is a prime non-interval graph, step~5 applies algorithm
  decompose to it.  Step 6 is straightforward; since the graph is
  chordal, we only need to take care of caws.  Now that $Q$ is chordal
  but not an interval graph, the decomposition $\cal K$ is a
  caterpillar.  Step 7 first calls Lem.~\ref{lem:find-minimal-caw} to
  find a small caw or minimal frame $F$, step 8 follows from
  Lem.~\ref{lem:6-enough-completion}, and step 9 is justified by
  Lems.~\ref{lem:interval-completion-main} and \ref{lem:fill-long-aw}.

  According to Lem.~\ref{lem:fill-holes}, step 1 branches into at most
  $4^{|H|-3}$ sub-instances, in each of which the parameter is
  decreased by ${|H|-3}$.  By Lems.~\ref{lem:6-enough-completion},
  \ref{lem:interval-completion-main}, and \ref{lem:fill-long-aw}, each
  other branching step has at most $6$ sub-instances, in which the
  parameter is decreased by at least one.  Therefore, the total number
  of sub-instances is upper bounded by $6^{k}$.  Noting that each
  sub-instance can be produced in $O(||G||)$ time, the algorithm runs
  in $O(6^k\cdot ||G||)$ time.
\end{proof}

\section{Concluding remarks}\label{sec:remark}
We have presented improved algorithms for \textsc{interval vertex
  deletion} and \textsc{interval completion}, running in time $O(8^k
\cdot ||G||)$ and $O(6^k \cdot ||G||)$ respectively, as well as the
first FPT algorithm for \textsc{interval edge deletion}, which,
nevertheless, takes time $k^{O(k)} \cdot ||G||$.  The dominating steps
are the disposal of large caws and long holes, both of which involve
edge separators in an \emph{interval-graph-like} structure.  We
believe that further algorithmic study of them will improve the
runtime to $O(c^k\cdot ||G||)$ for some constant $c$.  In addition,
the trivial 12-way branching scheme for small caws can be improved
with a similar observation as Prop.~\ref{lem:6-enough-completion}: a
net or a long claw has only six edges, and for each of the others,
there are at most seven groups of edges of which one has to be deleted
(as shown in Fig.~\ref{fig:deleted-edges}).  Therefore, for these
small caws, a 7-way branching suffices.
\begin{figure*}[h]
  \centering
  \begin{subfigure}[b]{0.2\textwidth}
    \centering    \includegraphics{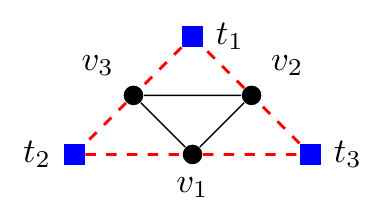} 
  \end{subfigure}
  \quad
  \begin{subfigure}[b]{0.2\textwidth}
    \centering    \includegraphics{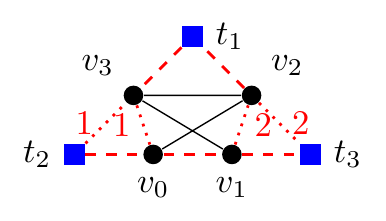}
  \end{subfigure}%  
  \quad
  \begin{subfigure}[b]{0.2\textwidth}
    \centering    \includegraphics{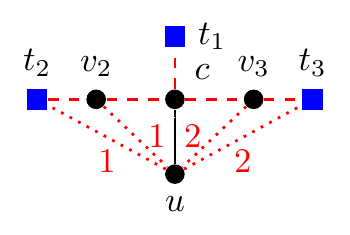}
  \end{subfigure}%  
  \caption{At least one dashed edge or a group of dotted edges is deleted.}
  \label{fig:deleted-edges}
\end{figure*}

We have conducted a comprehensive study of modules in \miib s, \misb
s, and \misp s.  The resulting
Thms.~\ref{thm:modules-induced-subgraph}-\ref{thm:modules-supergraph}
not only are crucial in our algorithms, but find other applications as
well.  Thm.~\ref{thm:modules-supergraph} has been used by Bliznets et
al.~\cite{bliznets-14-interval-completion} to develop the first
subexponential-time parameterized algorithm for {\sc interval
  completion}.  Without a precise analysis of the polynomial factor in
its running time, the authors of
\cite{bliznets-14-interval-completion} pointed out that it must be
high.  An immediate question is then on the possibility of combining
our techniques with their idea to deliver a linear (or low-degree
polynomial) subexponential time algorithm.  Further, since
Thms.~\ref{thm:modules-induced-subgraph}-\ref{thm:modules-supergraph}
hold regardless of $k$, they may also be useful for other purposes,
particularly exact algorithms and kernelization algorithms.  In fact,
the kernelization of these problems has received a lot of attention
and been posed as important open problems by many authors; whether our
observation on modules sheds some light to them is worth further
study.

As we have alluded to, the development of modules and related concepts
is in symbiosis with graph classes.  They were originally proposed by
Gallai \cite{gallai-67-transitive-orientation} in his study of
comparability graphs.  Since it is well known that (\itshape
1\upshape) the complement of an interval graph is a comparability
graph; and (\itshape 2\upshape) a module of a graph is also a module
of its complement graph, it is natural to study modules of interval
graphs.  The relationship between modules and interval graphs has been
built ever since \cite{mohring-85-decomposition}.
Hsu~\cite{hsu-95-recognition-cag} characterized prime interval graphs,
and used this characterization to develop a linear-time recognition
algorithm for interval graphs
\cite{hsu-99-recognizing-interval-graphs}, which is arguably the
simplest among all known recognition algorithms for interval graphs.
A very special kind of modules, clique modules (also known as twin
classes), have been previously used to solve modification problems to
interval graphs \cite{villanger-13-pivd,cao-14-interval-deletion}; in
contrast, the usage of modules in the present work is far more
extensive.

Modules may be used in solving modification problems to other graph
classes as well.  For example, Bessy et
al.~\cite{bessy-10-3-leaf-power-modification} have shown that the
optimum solutions of some graph modification problems preserve maximal
clique modules (also known as critical cliques in literature).  One
can derive from Fomin and Villanger
\cite{fomin-12-subexponential-fill-in} the there exist optimum
solutions to the fill-in problem that preserve all modules.  Another
question of interest is to investigate which other graph modification
problems have this property, and more importantly, how to apply the
similar observations on modules to solve them.  Further, in the
implementation of the meta-approach mentioned in the end of
Section~\ref{sec:major-results}, modules are also useful in detecting
those small forbidden induced subgraphs that are not prime.  The
elimination of 4-holes in the present paper, for example, are
completely based on modules.  In a follow-up work
\cite{cao-14-edge-deletion}, we use modules to facilitate the
detection of other small forbidden induced subgraphs, yielding
improved algorithms for some edge deletion problems.  Other
algorithmic usage of modules in graph problems can be found in surveys
\cite{mohring-85-decomposition,habib-10-survey-md}.

We have correlated interval graphs and normal Helly circular-arc
graphs and used this nontrivial relationship in characterizing \lig
s.\footnote{In an earlier version of this manuscript, we also used the
  resulting Thm.~\ref{thm:decompose-lig} to develop linear-time
  parameterized algorithms for modification problems to unit interval
  graphs (i.e., interval graphs representable by intervals of the same
  length).  A simpler approach was discovered afterward, which applies
  to editing problem as well \cite{cao-14-unit-interval-editing},
  whereupon we drop these results from the current version.}
Interestingly, it can also be used in the other way.  In a follow-up
paper \cite{cao-14-recognizing-nhcag}, we used a similar definition of
the auxiliary graph and pertinent observations to recognize normal
Helly circular-arc graphs.  Our algorithm runs in linear time and is
able to detect a forbidden induced subgraph if the answer is ``NO.''
As pointed out in Section~\ref{sec:nhcag}, normal Helly circular-arc
graphs and \lig s are incomparable to each other.  For example, the
most used subgraphs in $\cF_{LI}$ in
Sections~\ref{sec:forbidden-subgraphs} and \ref{sec:olive-ring} are
short holes, which, however, are not forbidden in normal Helly
circular-arc graphs.  On the opposite, every hole in a normal Helly
circular-arc graph is dominating, but a \lig\ might not have this
property.  Therefore, the structural analysis of
\cite{cao-14-recognizing-nhcag} is significantly different from the
current paper.  More combinatorial and algorithmic applications of
this observation would be interesting topics for future work.  Also
worthwhile is further study of \lig s.  For example, from
Prop.~\ref{lem:lig-and-modules} and Thm.~\ref{thm:decompose-lig} we
can derive the following property of a connected \lig\ $G$ (which is
not required to be prime):
\begin{itemize}
\item if $G$ is not chordal, then there is a shortest hole $H$ such
  that $V(H)$ is a dominating set;
\item otherwise, there is a pair of vertices $u,v$ and a shortest
  \stpath{u}{v} $P$ such that $V(P)$ is a dominating set.
\end{itemize}
Moreover, such a dominating hole or path can be found in linear time.
In fact, a chordal \lig\ is a diametral path graph
\cite{deogun-95-diametral-path-graphs}, a graph class whose definition
was inspired by dominating pairs in at-free graphs
(\cite{corneil-97-at-free}), and the non-chordal ones should be
compared with non-chordal circular-arc graphs, where every hole is
trivially dominating.

The present paper has been focused on modification problems that allow
only a single type of operations.  This constraint is not inherent.
Marx~\cite{marx-10-chordal-deletion} developed an FPT algorithm for
the chordal deletion problem that allows both vertex and edge
deletions (parameterized by the sum $k_1+k_2$ of modifications,
including $k_1$ vertex deletions and $k_2$ edge deletions).  Our
techniques can be revised to work on a similar problem to interval
graphs, but the analysis would become way too complicated to be
discussed here.  More widely studied is the edge editing problem
\cite{natanzon-01-edge-modification,sharan-02-thesis}.  Burzyn et
al.~\cite{burzyn-06-NPC-edge-modification} proved that it is NP-hard
to obtain an interval graph by the minimum number of edge
modifications.  The fixed-parameter tractability of this problem
(parameterized by the total number of modifications), to the best of
our knowledge, is still open.  One could even allow simultaneously all
three types of operations.  In fact,
Cai~\cite{cai-96-hereditary-graph-modification} has formulated the
following general modification problem on a hereditary graph class
\cG: given a graph $G$ and nonnegative integers $k_1$, $k_2$, and
$k_3$, the task is to transform $G$ into a graph in \cG\ by at most
$k_1$ vertex deletions, $k_2$ edge deletions, and $k_3$ edge
additions.  On this formulation two remarks are in order.  First, it
does not make sense to impose a combined quota on the total number of
modifications, as it is then trivially degenerated to the vertex
deletion problem (see the remarks in the end of
Section~\ref{sec:gmp-fpt}).  Second, this formulation generalizes all
three problems studied in this paper as well the two problems
mentioned above.  Another natural extension to the edge deletion
problem and completion problem is the sandwich problem, which, given
two graphs $G_1$ and $G_2$ such that $G_1\subseteq G_2$, asks for a
graph $G\in\cG$ such that $G_1\subseteq G\subseteq G_2$
\cite{golumbic-93-temporal-reasoning,golumbic-94-interval-sandwich,golumbic-95-graph-sandwich}.
On chordal graphs, both the general modification problem and the
sandwich problem (parameterized by $||G|| - ||G_1||$) are known to be
FPT \cite{cao-14-chordal-editing,fomin-12-subexponential-fill-in}.  It
is natural to investigate the their fixed-parameter tractability on
interval graphs; some partial results on the interval sandwich problem
were reported in \cite{kaplan-99-bounded-degree-interval-sandwich,
  bodlaender-11-exact-intervalizing-colored-graphs}.  However, it is
not immediately clear whether there always exists a module-preserving
optimum solution, and how to dispose of large caws in a local way.

It is known that finding minimal non-interval subgraphs is closely
related to finding Tucker sub-matrices, i.e., minimal sub-matrices
that invalidate the consecutive-ones property (C1P).  Again, a Tucker
sub-matrex can be found in linear time
\cite{lindzey-13-find-forbidden-subgraphs}~\cite[Section
3.3]{dom-08-dissertation}.  But as finding Tucker sub-matrices is as
least as hard as finding non-interval subgraphs, there is little hope
to find a minimum one in the same time.  Can our techniques be applied
to solve modification problems related to C1P
(\cite{dom-08-dissertation,narayanaswamy-13-d-cos-r})?

\paragraph{Acknowledgment.}  I am grateful to Paul Seymour and Michel
Habib for helpful discussion on the complexity of finding shortest
holes.  I am indebted to Sylvain Guillemot and Yang Liu for careful
reading of an early version of this manuscript and their penetrating
criticism.  
% I am deeply indebted to
% \href{http://www.youtube.com/watch?v=KOv5aEHbjeU&list=PLHYQROpSSJpGfREcMWKIWpO8xwftHlYrF}{Concert
%   YY - Showcase Of Wyman Wong} for its company during the writing of
% the whole paper.
 
%http://v.youku.com/v_show/id_XNTA0ODkwNjk2.html

{
\small %\printbibliography
\bibliographystyle{plainurl}
\bibliography{../journal,../main}
}
\end{document}